\newtheorem{thm}{Proposition}[section]
\newtheorem{cor}[thm]{Corollary}
\newtheorem{lem}[thm]{Lemma}
\newtheorem{prop}[thm]{Proposition}
\theoremstyle{definition}
\newtheorem{defn}[thm]{Definition}
\theoremstyle{remark}
\newtheorem{rem}[thm]{Remark}
\newtheorem{ex}[thm]{Example}
\def\beq{\begin{eqnarray}}
\def\eeq{\end{eqnarray}}
\def\bsp{\begin{split}}
\def\esp{\end{split}}
\def\d{\mathrm{d}}
\def\diag{\mathrm{diag}}
\def\Tt{\tilde{T}}
\def\T{T}
\def\Rt{R}
\def\Rc{{\cal R}}
\def\a{\alpha}
\def\b{\beta}
\def\diag{\mbox{diag}}
\def\l{\lambda}
\def\bl {\mbox{\boldmath{$\ell$}}}
\def\bn {\mbox{\boldmath{$n$}}}
\def\R{\mathbb{R}}
\def\VC{V^{\mathbb{C}}}
\def\GC{G^{\mathbb{C}}}
\newcommand{\mf}[1]{{\mathfrak #1}}
\newcommand{\mb}[1]{{\mathbb #1}}
\newcommand{\mbold}[1]{\mbox{\boldmath{\ensuremath{#1}}}}
\newcommand{\inn}[2]{\left\langle {#1},{#2}\right\rangle}
\def \bl {\mbox{{$\mbold\ell$}}}
\def \bn {\mbox{{$\bf n$}}}
\def \bm {\mbox{{$\bf m$}}}
\def \bu {\mbox{{$\bf u$}}}
\def \bv {\mbox{{$\bf v$}}}
\newcommand{\be}{\begin{equation}}
\newcommand{\ee}{\end{equation}}
\newcommand{\beqn}{\begin{eqnarray}}
\newcommand{\eeqn}{\end{eqnarray}}
\newcommand{\pa}{\partial}
\newcommand{\ba}{\begin{array}}
\newcommand{\ea}{\end{array}}
\newcommand{\pp}{{\it pp\,}-}
\def \bg {\mbox{\boldmath{$g$}}}
\begin{document}

\title{\Large\textbf{Minimal tensors and purely electric or magnetic spacetimes of arbitrary dimension.}}

\author{{\large Sigbj\o rn Hervik$^\diamond$, Marcello Ortaggio$^\star$ and Lode Wylleman$^{\diamond,\dagger,\natural}$}\\
\vspace{0.05cm} \\
{\small $^\diamond$ Faculty of Science and Technology, University of Stavanger}, {\small  N-4036 Stavanger, Norway}  \\
{\small $^\star$ Institute of Mathematics, Academy of Sciences of the Czech Republic}, {\small \v Zitn\' a 25, 115 67 Prague 1, Czech Republic} \\
{\small $^\dagger$ Faculty of Applied Sciences TW16, Ghent
University},  {\small  Galglaan 2, 9000 Gent, Belgium}\\
{\small $^\natural$ Department of Mathematics, Utrecht University,
Budapestlaan 6, 3584 CD Utrecht, The Netherlands}\\
{\small E-mail: \texttt{sigbjorn.hervik@uis.no,
ortaggio@math.cas.cz, lode.wylleman@ugent.be}} }

\date{\today}
\maketitle
\pagestyle{fancy}
\fancyhead{} % clear all header fields
\fancyhead[EC]{S. Hervik, M. Ortaggio, L. Wylleman}
\fancyhead[EL,OR]{\thepage}
\fancyhead[OC]{Minimal tensors and PE/PM spacetimes}
\fancyfoot{} % clear all footer fields

\begin{abstract}

We consider time reversal transformations to obtain twofold
orthogonal splittings of any tensor on a Lorentzian space of
arbitrary dimension {$n$}. Applied to the Weyl tensor of a spacetime, this
leads to a definition of its electric and magnetic parts relative to
an observer (defined by a unit timelike vector field $\bu$), in any
dimension. We study the cases where one of these parts vanishes in
particular, i.e., purely electric (PE) or magnetic (PM) spacetimes.
We generalize several results from four to higher dimensions and
discuss new features of higher dimensions. For instance, we prove
that the only permitted Weyl types are G, I$_i$ and D, and discuss
the possible relation of $\bu$ with the Weyl aligned null directions
(WANDs); we provide invariant conditions that characterize PE/PM
spacetimes, such as Bel-Debever-like criteria, or constraints on
scalar invariants, and connect the PE/PM parts to the kinematic
quantities of $\bu$; we present conditions under which direct
product spacetimes (and certain warps) are PE/PM, which enables us
to construct   explicit examples. In particular, it is also shown
that all static spacetimes are necessarily PE, while stationary
spacetimes (such as spinning black holes) are in general neither PE
nor PM. Whereas ample classes of PE spacetimes exist, PM solutions
are elusive; specifically, we prove that PM Einstein spacetimes of
type D do not exist, in any dimension. Finally, we derive
corresponding results for the electric/magnetic parts of the Riemann
tensor, which is useful when considering spacetimes with matter
fields, and moreover leads to first examples of PM spacetimes in
higher dimensions. We also note in passing that PE/PM Weyl (or
Riemann) tensors provide examples of {\em minimal tensors}, and we
make the connection hereof with the recently proved alignment
theorem~\cite{Hervik11}. This in turn sheds new light on {{the}}
classification of the Weyl tensors based on null alignment,
providing a further invariant characterization that distinguishes
the (minimal) types G/I/D from the (non-minimal) types II/III/N.

\end{abstract}

\newpage \tableofcontents \newpage

\section{Introduction}\label{sec: Intro}

Decompositions of tensors relative to an observer (identified here
with its normalized time-like four-velocity $\bu$) are of great
import in contemporary theoretical physics. One of the most
notorious insights, coming along with Einstein's Special Relativity
already, is that the separate electric and magnetic (henceforth also
abbreviated to EM) fields in Maxwell's electromagnetism are in fact
the electric and magnetic {\em parts}, relative to an observer
$\bu$, of one unified object, the Maxwell tensor $F_{ab}$.
Conversely, given a Maxwell tensor and any observer $\bu$, one may
split the tensor into its electric and magnetic parts relative to
$\bu$. Although the precise value of the EM components clearly
depends on the observer's frame of reference, the property of a
field of being (or not), e.g., purely electric { (PE)} or purely
magnetic {(PM)} is in fact intrinsic and can be easily
determined using the two Lorentz invariants $F_{ab}F^{ab}$ and
$F_{ab}F^{*ab}$ (see \cite{syngespec}). Furthermore, when one
considers the electromagnetic field generated by an isolated,
bounded source, the associated conserved charges can be computed,
via Gauss' law, as specific surface integrals at infinity, to which
only the leading (``Coulomb'') terms of the corresponding electric
and magnetic parts will contribute.

As the twofold EM splitting can be
performed pointwise at any event and for any $\bu$, the procedure
applies in General Relativity as well. Given that the latter
explains the gravitational interaction through the curved spacetime
structure, one may ask
whether gravitational quantities exist playing a role analogous to
the EM fields in classical electromagnetism, and whether a PE
or PM gravitational field can be given an intrinsic meaning and an
invariant characterization. Matte \cite{Matte53} showed that the
answer to the first question is affirmative, by introducing
the electric and magnetic parts, relative to an observer $\bu$, of
the Riemann tensor of a vacuum metric. For general energy-momentum
content (Ricci tensor) this generalizes to the EM parts of the Weyl
tensor. In terms of these parts, the decomposed trace-free second
Bianchi identities indeed take a form analogous to Maxwell's
equations (see, e.g., \cite{Ellis71}).  A positive answer to the second question was
supplied by the work of McIntosh {\em et al.}~\cite{McIntoshetal94},
who deduced an invariant criterion for deciding whether a given Weyl
tensor has PE or PM character (see Remark~\ref{remark: IJM} below).
 In addition, building on the analogy with the electromagnetic
field, the EM decomposition of the Weyl tensor has proven to be a
very useful and, by now, standard tool in the initial-value
formulation of the gravitational field, as well as in the definition
of conserved charges and asymptotic symmetries (see, e.g.,
\cite{Geroch77,AshHan78,Asthekar80} and references
therein).\footnote{It should be noted that, in this context, the EM
splitting is sometimes meant wrt a {\em spacelike} vector field
\cite{AshHan78,Asthekar80}.} It has also played an important role in the study of cosmological models \cite{Ellis71,Stephanibook}.

With the emergence of higher-dimensional physical theories such as
string theory, the interest in general $n$-dimensional spacetimes
with Lorentzian signature has grown rapidly. In this paper we
propose a general viewpoint to the splitting of tensors, deduced
from the theory of Cartan involutions of a semi-simple Lie group. In
the case of the Lorentz group, these involutions are simply
reflections of unit timelike vectors (``$n$-velocities''), $\bu\mapsto-\bu$. As we will
see this leads to a twofold splitting of any tensor (see
\cite{Senovilla00,Senovilla01} and remark \ref{remark: Senovilla}
for a comparison with Senovilla's approach).
When applied to
the Weyl tensor, the splitting provides a natural definition of its
electric and magnetic parts relative to $\bu$. We show that this
definition is sound, by proving that several four-dimensional
results concerning purely electric (PE) or purely magnetic (PM) Weyl
tensors or spacetimes generalize to higher dimensions.  In
addition to Senovilla's papers mentioned above, a similar splitting
of the Weyl tensor in higher dimensions has also been considered in
the study of asymptotic properties at spatial infinity and of
conserved charges, see, e.g., \cite{ManMar06,TanTanShi09}. Our work
does not overlap with the results of such references.

Recently, one of us proved the {\em alignment theorem}, stating that
(direct sums of) {\em tensors which are {\em not} characterized by
their invariants are precisely the ones of aligned type II or more
special, but not D}~\cite{Hervik11}. In this paper we stress another
equivalent fact, in the realm of the splitting relative to $\bu$:
such tensors are precisely the ones which do {\em not} have a {\em
minimal} tensor relative to $\bu$, in their orbit under the (active)
Lorentz group action on tensors (see below). As we will see, if a
tensor equals one of its parts in the splitting wrt $\bu$, it is
itself minimal wrt $\bu$ (but not viceversa, in general). In
particular, a Weyl tensor which is PE/PM wrt a unit timelike vector
$\bu$ is minimal wrt the same $\bu$, but more stringent conditions
than those based on the alignment theorem will be deduced.

The structure of the paper is as follows.
Preliminary basic results and definitions necessary for our work,
such as theorems about tensors characterized by their invariants,
the twofold splitting of a tensor relative to an arbitrary unit
timelike vector $\bu$, and null alignment theory, are relegated to Appendix~\ref{section: Preliminaries}, since mostly known -- however, this can be the starting section for a reader not familiar with such concepts.
In section~\ref{sec_minimal} we
present an algebraic criterion for a tensor to be minimal wrt $\bu$,
provide sufficient conditions and examples, and make the connection
with the alignment theorem of \cite{Hervik11}. The twofold splitting is applied to the
Weyl tensor in section~\ref{sec_Weyl}, and to the Ricci and Riemann tensors in
section~\ref{PE_PM_Riemann}. In both parts we derive several useful results and
examples of spacetimes for which the tensors in question are
purely electric or magnetic. We end with conclusions and a discussion.
In Appendix~\ref{app_minimal} we present an alternative, more explicit proof of the general Proposition~\ref{th_minimal_ODIG} for the special case of Ricci- and
Maxwell-like rank 2 tensors (namely: they are minimal if and only if they are not of alignment type II (but not D) or more special). In Appendix~\ref{app_congruences} we summarize standard definitions of the
kinematic quantities of a unit timelike congruence, and write parts of the Riemann and Weyl tensors in terms of these.
\\

{\em Notation.} The symbol ${\cal F}_M$ denotes the set of smooth
scalar functions of an $n$-dimensional spacetime $M$.  We will write
$A^\bot$ for the orthogonal complement of a set $A$, and denote a
tensor either in index-free notation ($\T$) or abstract index
notation, with lowercase, possibly numbered Latin letters
$a,\,a_1,\,a_2,\,b,\,c,\,d,\,\ldots$ ($T^{a_1\ldots
a_r}{}_{b_1\ldots b_s}$), or clumping the abstract indices ($T^{\bf
a}{}_{\bf b}$), whatever is more convenient in the context. In the
index-free notation a metric tensor in use will be denoted by $\bg$;
likewise tangent vectors and one-forms will be bolded, and
$\bv\sim\bf{w}$ means that $\bv$ is proportional to $\bf{w}$. The
Riemann, Ricci and Weyl tensors of a spacetime will be denoted by
${R}_{abcd}$, $R_{ab}\equiv R^c{}_{acb}$ and $C_{abcd}$,
respectively, while ${\cal R}\equiv R^a{}_a$ symbolizes the Ricci
scalar.

A component of a tensor $T$ in an unspecified frame
$\{\bm_{\alpha=1,\ldots, n}\}$ of tangent space, with dual frame
$\{\bm^{\alpha=1,\ldots, n}\}$, is denoted  by
$T^{\alpha_1\ldots\alpha_r}{}_{\beta_1\ldots\beta_s}$ (or
$T^{\boldsymbol \alpha}{}_{\boldsymbol\beta}$). An {\em orthonormal}
frame (henceforth, ONF) is of the form $\{{\bf u},{\bf
m}_{i=2,...,n}\}$, where we will use the frame label `$u$' for the
timelike vector $\bu$ (instead of $1$) and $i, j, k, \ldots=2,
\ldots ,n$ for the spacelike frame vectors.  When $\bu$ is a
specific timelike vector we will call any ONF $\{{\bf u},{\bf
m}_{i=2,...,n}\}$ a {\em $\bu$-ONF}. In general
$(T^{\text{sp}})_{a\ldots b}\equiv h_a{}^c\ldots h_b{}^dT_{c\ldots
d}$ denotes the {\em purely spatial part} of a tensor $T$ wrt $\bu$
(see~(\ref{g and theta}) for the definition of the projector
$h_a{}^c$); if $\T=\T^{\text{sp}}$ the tensor is called {\em purely
spatial} (relative to $\bu$), and in any $\bu$-ONF only components
$T_{ij...}$ can be non-zero for such.

When also one of the spacelike vectors of a $\bu$-ONF is
selected or preferred, say ${\bf m}_2$, we shall indicate the
remaining labels with $\hat i, \hat j, \hat k, \ldots=3, \ldots ,n$
instead.  The null vectors
\begin{equation}
\label{null frame corr}
  {\mbold\ell}=\frac{\bu+{\bm}_{2}}{\sqrt{2}} , \qquad {\bf
  n}=\frac{-\bu+{\bm}_{2}}{\sqrt{2}} ,
\end{equation}
are normalized by $l^an_a=1$ and generate the respective null directions of the
timelike plane spanned by $\bu$ and $\bm_2$. The null frame
$\{\bm_0=\bl,\bm_1=\bn,{\bf m}_{\hat i=3,...,n}\}$ will then be
called {\em adapted} to the $\bu$-ONF $\{{\bf u},{\bf
m}_{i=2,...,n}\}$; notice that we use the frame labels `0' for $l^a$
and `1' for $n^a$ (the dual frame consisting of
${m^0}{}_a=n_a,\;{m^1}{}_a=l_a$ and $m^{\hat i}{}_a=(m_{\hat
i})_a,\,{\hat i=3,...,n}$).

\section{Minimal tensors}

\label{sec_minimal}

\subsection{Definition and algebraic criterion}\label{subsec: minimal def}

The definition of a Cartan involution $\theta$ and of the associated Euclidean product (cf. \eqref{def_inn}) and norm are recalled in Appendix~\ref{subsec: Pre Cartan invol}. Now, wrt the Euclidean product associated to the Cartan involution
$\theta$, the standard definition of a {\em minimal vector} of
a tensor space $V$ is the following.

\begin{defn}
A vector (tensor) $T\in V$ is called minimal iff $||g(T)||\geq ||T||$, for all $g\in G$.
\end{defn}

Since the norm $||.||$ is $K$-invariant such a minimal tensor is not
necessarily unique; i.e., if $T$ is minimal, so is $k(T)$ for $k\in
K$. Moreover, for a tensor $T$ the property of being minimal
obviously depends on the norm $||.||$ and thus on the choice of
$\theta$ (i.e., of $\bu$).

An algebraic criterion for when a tensor is minimal was given in
\cite{RicSlo90}. Let us specify it to our situation,  culminating to
Proposition \ref{prop: crit minimal} below.

Recall that for a Lie group, $G$, we can identify the tangent space
of the identity element, $T_{\sf 1}G$ as its Lie algebra, ${\mf g}$;
i.e., $T_{\sf 1}G\cong {\mf g}$. Furthermore, there is an analytic
map, $\exp: {\mf g}\mapsto G$, along with a local inverse
$\exp^{-1}:U\mapsto {\mf g}$, where $U\subset G$ is some
neighbourhood of the identity ${\sf 1}\in G$. This map, along with
its inverse, enables us to write any element $g\in U$ as
$g=\exp(\mathcal{X})$, for some ${\mathcal X}\in{\mf g}$. Moreover,
given any ${\mathcal X}\in{\mf g}$ we can generate a one-parameter
subgroup of $G$ by $g_\tau=\exp(\tau\mathcal{X})$.

In our situation $G=O(1,n-1)$, and we denote the Lie
algebra by $\mf{o}(1,n-1)$. Then the action of an element ${\mathcal{X}}\in \mf{o}(1,n-1)$ on $V$
is defined via the one-parameter subgroup $g_\tau=\exp(\tau\mathcal{X})$; explicitly:
% {\bf on writing $g[T](\tau):=g_\tau(T)$}:
\begin{eqnarray}
\mathcal{X}(T)\equiv
%\frac{d g[T]}{d\tau}(\tau)\equiv
\lim_{\tau\rightarrow 0}\frac{1}{\tau}[g_\tau(T)-T]. \label{X(v)}
\end{eqnarray}
If $({\cal X}^{\alpha}{}_{\beta})$ is the representation matrix of
${\mathcal X}$ acting on tangent space wrt a basis
$\{\bm_{\alpha=1,\ldots, n}\}$ we get by (\ref{g(v)}):
\begin{equation}\label{X action}
{\mathcal
X}(T)^{\alpha_1...\alpha_r}_{\phantom{\alpha_1...\alpha_r}\beta_1...\beta_s}=-\sum_{k=1}^r
{\cal
X}^{\alpha_k}{}_{\alpha_k'}T^{\alpha_1...\alpha_k'...\alpha_r}{}_{\beta_1...\beta_s}+\sum_{l=1}^s
{\cal
X}^{\beta_l'}{}_{\beta_l}T^{\alpha_1...\alpha_r}{}_{\beta_1...\beta_l'...\beta_s}.
\end{equation}
Furthermore, we may split $\mf{o}(1,n-1)$  into eigenspaces of
$\theta$:
\begin{equation}\label{decomp o(1,n-1)}
\mf{o}(1,n-1)=\mf{B}\oplus\mf{K},
\end{equation}
where the $+1$ eigenspace $\mf{K}$ is the Lie algebra
%$\mf{K}^*$
of the maximal compact subgroup $K$, while the $-1$ eigenspace
$\mf{B}$ is the vector space
%$\mf{B}^*$
consisting of the generators of the boosts in planes through
$\bu$.
Moreover, since the elements ${\cal X}\in\mf{o}(1,n-1)$ are
antisymmetric wrt the inner product $\bg$ it follows from remark
\ref{rem selfadjoint} that ${\cal X}_+\in\mf{K}$ and ${\cal
X}_-\in\mf{B}$ are the antisymmetric, respectively, symmetric part
of ${\cal X}$ wrt the inner product $\inn{-}{-}$ (cf.\
(\ref{B-orthonorm}) and (\ref{B-null}) regarding the boost
generators). Hence, $\inn{ {\mathcal{X}}(T)}{T}=0$ for all
$\mathcal{X}\in \mf{K}$.
For $\mathcal{X}\in \mf{B}$ this is not necessarily zero, but

\begin{prop}\label{prop: crit minimal} A covariant tensor $T\equiv T_{a_1\ldots a_m}$ is minimal iff
\begin{eqnarray}
%T\, \text{minimal} ~\Leftrightarrow ~
\inn{\mathcal{X}(T)}{T}=0,\;\; \forall \mathcal{X}\in\mf{B}.
\label{crit_minimal}
\end{eqnarray}
In a $\bu$-ONF $\{\bm_\alpha\}=\{\bm_1=\bu,{\bf m}_{i=2,...,n}\}$
this is equivalent with
\begin{equation}\label{minimal crit bu-ONF}
\sum_{k=1}^m\sum_{\a_1=1}^n\cdots \sum_{\a_{m-1}=1}^n
T_{\a_1..\a_{k-1}i\a_k\ldots \a_{m-1}}T_{\a_1..\a_{k-1}u\a_k\ldots
\a_{m-1}}=0,\quad \forall i=2,...,n.
\end{equation}
\end{prop}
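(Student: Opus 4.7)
I would first reduce minimality of $T$ to a critical-point condition via the standard calculus of the squared-norm function on the orbit, and then use strict convexity along each boost direction to upgrade the critical-point statement to a global minimum; the ONF form then follows by picking a basis of $\mf{B}$ made of elementary boost generators.

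Let $\phi : G \rightarrow \mathbb{R}$, $\phi(g)=||g(T)||^2$. By definition $T$ is minimal iff $\mathbf{1}$ is a global minimum of $\phi$. A necessary condition is that $\mathbf{1}$ be a critical point, i.e.\ $\frac{d}{d\tau}\phi(\exp(\tau\mathcal{X}))|_{\tau=0}=0$ for every $\mathcal{X}\in\mf{o}(1,n-1)$. By symmetry of $\inn{-}{-}$ and (\ref{X(v)}) this derivative equals $2\inn{\mathcal{X}(T)}{T}$. As already noted in the excerpt, $\inn{\mathcal{X}(T)}{T}=0$ automatically for $\mathcal{X}\in\mf{K}$, so in view of the decomposition (\ref{decomp o(1,n-1)}) necessity is equivalent to (\ref{crit_minimal}).

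The main obstacle is sufficiency, since a critical point need not be a minimum. Here I would invoke the Cartan decomposition $G=K\exp(\mf{B})$: by $K$-invariance of $||\cdot||$, minimality over $G$ is equivalent to minimality over $\exp(\mf{B})$, so it suffices to prove $||T||\leq ||\exp(\mathcal{Y})(T)||$ for every $\mathcal{Y}\in\mf{B}$. Since each such $\mathcal{Y}$ is self-adjoint with respect to $\inn{-}{-}$, its induced action on the tensor space is self-adjoint too, hence diagonalizable with real eigenvalues $\{\lambda_j\}$; expanding $T=\sum_j T_j$ in an $\inn{-}{-}$-orthonormal eigenbasis yields
\begin{equation*}
||\exp(\tau\mathcal{Y})(T)||^2 \;=\; \sum_j e^{2\tau\lambda_j}||T_j||^2,
\end{equation*}
a nonnegative combination of convex exponentials, hence convex in $\tau$. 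A critical point of a convex function is a global minimum, so (\ref{crit_minimal}) implies $T$ realises the minimum of $\phi$ along every $\mathcal{Y}$-ray through $\mathbf{1}$; as these rays cover $\exp(\mf{B})$, $T$ is minimal. This reproduces, specialised to $G=O(1,n-1)$, the general criterion of~\cite{RicSlo90}.

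For the explicit form (\ref{minimal crit bu-ONF}) I would choose as basis of $\mf{B}$ the $n-1$ elementary boost generators $\mathcal{X}_i$ ($i=2,\ldots,n$) in the timelike planes $\mathrm{span}(\bu,\bm_i)$, whose representation matrices have $(\mathcal{X}_i)^u{}_i=(\mathcal{X}_i)^i{}_u=1$ with all other entries zero. Substituting into (\ref{X action}) shows that $\mathcal{X}_i(T)_{\boldsymbol\alpha}$ is obtained from $T_{\boldsymbol\alpha}$ by summing, over tensor slots $k$, the components in which the label $u$ in slot $k$ is replaced by $i$ and vice versa. Contracting against $T$ with the Euclidean product on the ONF and renaming dummy indices gives $\inn{\mathcal{X}_i(T)}{T}$ equal to twice the left-hand side of (\ref{minimal crit bu-ONF}); setting this to zero for each $i$ then yields the claimed equivalent form of (\ref{crit_minimal}).
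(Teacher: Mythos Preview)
Your proof is correct. The paper's own proof simply cites \cite{RicSlo90} (Theorem~4.3) for the criterion~(\ref{crit_minimal}) and then obtains the component form~(\ref{minimal crit bu-ONF}) from the boost-generator basis of $\mf{B}$ exactly as you do; your only addition is to spell out the Richardson--Slodowy argument (Cartan decomposition $G=K\exp(\mf{B})$ plus convexity of $\tau\mapsto\|\exp(\tau\mathcal{Y})(T)\|^2$ from self-adjointness of $\mathcal{Y}\in\mf{B}$) rather than cite it, which makes the proof self-contained but is otherwise the same route.
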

\begin{proof} The criterion (\ref{crit_minimal}) was proved in
\cite{RicSlo90} (Theorem 4.3) in a more general context. The
component form (\ref{minimal crit bu-ONF}) follows straightforwardly
from (\ref{innST}), (\ref{X action}) and the fact that $\mf{B}$ is
spanned by the boost generators ${\cal X}_i$ in the
$(\bu,\bm_i)$-planes, ${\cal X}_i^{ab}\equiv 2u^{[a}m_i^{b]}$.
\end{proof}

To write (\ref{minimal crit bu-ONF}) in a covariant way one replaces
$T_{\ldots i\ldots}$ by $h_a^{~b}T_{\ldots b\ldots}$ and $T_{\ldots
u\ldots}$ by $u^c T_{\ldots c\ldots}$ for the free indices $i$ and
$u$, but one has to be careful with the $\a_l$'s since raising $u$
gives a minus sign.

\begin{ex}\label{ex:minimal} For covectors $v_a$ the criterion (\ref{minimal crit
bu-ONF}) becomes simply
\begin{equation}\label{minimal vector}
v_iv_u=0,\;\forall i=2,...,n\qquad\Leftrightarrow\qquad h_a^{~b}v_b
(u^cv_c)=0.
\end{equation}
%in covariant notation.
Hence, ${\bf v}$ is minimal wrt $\bu$ iff, relative to $\bu$, it is
either purely temporal (i.e., proportional to $\bu$) or purely spatial (i.e.,
orthogonal to $\bu$).

For symmetric (Ricci-like) rank 2 tensors $R_{ab}=R_{(ab)}$ we get
\begin{equation}
\label{minimal 2tensor symm}
0=\tfrac 12\sum_{\alpha=1}^n(R_{i\alpha}R_{u\alpha}+R_{\alpha
i}R_{\alpha u})=\sum_{\alpha=1}^n
R_{i\alpha}R_{u\alpha}=R_{ij}R_u{}^j+R_{iu}R_{uu}=R_{ia}R^a{}_u+2R_{uu}R_{iu}.
\end{equation}

Likewise, for antisymmetric (Maxwell-like) rank 2 tensors
$F_{ab}=F_{[ab]}$ (\ref{minimal crit bu-ONF}) reduces to
\begin{equation}
\label{minimal 2tensor antisymm}
0=F_{ij}F_u{}^j=-F_{ia}F^a{}_u\qquad\text{or}\qquad
u_{[a}F_{b]c}F^c{}_d u^d=0,
\end{equation}
i.e., $\bu$ is an eigenvector of $F^a{}_bF^b{}_c$.

Finally, for rank
4 tensors $C_{abcd}$ satisfying the first two parts of the
Riemann-like symmetries (\ref{symm1}) we get
\begin{equation}
\label{minimal 4tensor Riem}
0=\sum_{\a=1}^n\sum_{\b=1}^n\sum_{\gamma=1}^n
C_{i\a\b\gamma}C_{u\a\b\gamma}=C_{ijkl}C_u{}^{jkl}+2C_{ijku}C_u{}^{jk}{}_u=C_{iabc}C_u{}^{abc}+4C_{iabu}C_u{}^{ab}{}_u.
\end{equation}

The above examples already show an interesting analogy in the four cases. Obviously, a covector is minimal iff it is not null (which could be dubbed ``type N'' in the sense of alignment theory). Rewriting the conditions (\ref{minimal 2tensor symm})--(\ref{minimal 4tensor Riem}) in a null frame, one also immediately sees that: if $R_{ab}$ is minimal it can not be of any of the types II (not D), III and N (i.e., only the types G, I and D can be minimal); if $F_{ab}$ is minimal it can not be neither type II (not D) nor N (i.e., only the types G and D can be minimal);  if $C_{abcd}$ is minimal it can not be of any of the types II (not D), III and N (i.e., only the types G, I and D can be minimal). One can show that the converse is also true (i.e., the admitted types are also sufficient conditions to ensure minimality) and that, in fact, a more general such result holds for {\em any} tensor, as we shall show below in Proposition~\ref{th_minimal_ODIG} (see also Appendix~\ref{app_minimal} in the case of $R_{ab}$ and $F_{ab}$).
\end{ex}

\subsection{Sufficient conditions and examples}\label{subsec: minimal suff conds}

Using  $\theta$, any tensor $T$ can be split as (see Appendix~\ref{subsec: Pre Cartan invol} for more details)
\begin{equation}
    \label{orthsplit_text}
 T=T_+ + T_-, \qquad T_{\pm}=\frac 12[T\pm\theta(T)] ,
\end{equation}
which will be used in the following.

Consider a $\bu$-ONF  ${\cal F}_u=\{{\bf u},{\bf m}_{i=2,...,n}\}$.
In such a frame any element ${\cal X}\in\mf{B}$ acting on
$T_pM$ is represented by a symmetric matrix of the form:
\begin{eqnarray} [{\cal X}]_{{\cal F}_u}=({\cal X}^\alpha{}_\beta)=
\begin{bmatrix}
0 & z_2 &  ... & z_n \\
z_2 & 0 &  ... & 0 \\
\vdots & \vdots& & \vdots \\
z_n & 0 & ... & 0
\end{bmatrix}.
 \label{B-orthonorm}
\end{eqnarray}
%In the null-frame $\{{\mbold\ell},{\bf n},{\bf m}_{i=3,...,n}\}$ {\bf~defined by ${\mbold\ell},{\bf n}=(\pm\bu+{\bm}_{2})/\sqrt{2}$},
In the  null frame ${\cal F}'=\{{\mbold\ell},{\bf n},{\bf m}_{\hat
i=3,...,n}\}$ adapted to ${\cal F}_u$ (see (\ref{null frame corr}))
${\cal X}$ is represented by the symmetric matrix (in $1+1+(n-2)$
block-form):
\begin{eqnarray} [{\cal X}]_{{\cal F}'}= \frac{1}{\sqrt{2}}
\begin{bmatrix}
\lambda & 0 &  z^t_{\hat i} \\
0 & -\lambda &   -z^t_{\hat i} \\
z_{\hat i} & -z_{\hat i}  & {\sf 0}
\end{bmatrix},\quad \lambda=\sqrt{2}\,z_2.
\label{B-null}\end{eqnarray}

In what follows boost-weight decompositions will refer to the
adapted null frame ${\cal F}'$,
% (and thus to $\theta$). Further,
and given a tensor $T$ the collection of its components of
boost-weight $b$ will be denoted as $(T)_b$ (see Appendix
\ref{subsec: null alignment}, also for the nomenclature regarding
(null alignment) types of tensors in the subsequent text).

Let us split ${\mathcal X}\in\mf{B}$ using { a} vector space basis $\{
{\mathcal X}_B,{\mathcal X}_{\hat i} \}$ of $\mf{B}$, where
${\cal X}_B$ is the generator of the boost (\ref{boost}). Hence,
eq.~(\ref{B-null}) becomes ${\mathcal X}=\lambda {\mathcal
X}_B+z_{\hat i}{\mathcal X}_{\hat i}$.
%The {\bf infinitesimal} boost
%${\mathcal X}_B$ is thus adapted to this null-frame, {\bf i.e., it
%generates a boost in the plane spanned by $\bl$ and $\bn$}.
We note that the boost-weight decomposition of $T$ is the eigenvalue
decomposition with respect to ${\mathcal X}_B$:
\begin{eqnarray} {\mathcal X}_B(T)=\sum_{b}b(T)_b.
 \label{eigenvalue_b}
\end{eqnarray} This may serve as a definition of the boost-weight $b$
components of $T$: $(T)_b$  is the eigenvector of ${\mathcal X}_B$
with eigenvalue $b$.

The action of ${\mathcal X}_{\hat i}$ on an arbitrary tensor $T$ is
a bit more complicated, but can be derived from (\ref{X action}) and
(\ref{B-null}), with $\lambda=0$ and $z_{\hat k}=0,\,{\hat k}\neq
\hat{i}$. Also, using (\ref{B-null}) we note that ${\mathcal
X}_{\hat i}$ raises and lowers the b.w.\  by 1; i.e.,
\begin{eqnarray}\label{Xi boost weights} {\mathcal X}_{\hat
i}((T)_b)=\left({\mathcal X}_{\hat
i}((T)_b)\right)_{b-1}+\left({\mathcal X}_{\hat
i}((T)_b)\right)_{b+1}. \end{eqnarray}
%Note that the action (\ref{X(v)}) is linear, so the minimal condition (\ref{crit minimal}) can be written:
Since  $\inn{-}{-}$ is bilinear we have \begin{eqnarray}
\inn{{\mathcal X}(T)}{T}= \lambda\inn{ {\mathcal X}_B(T)}{T}+z_{\hat
i}\inn{{\mathcal X}_{\hat i}(T)}{T}.
 \label{minimal_decomp}
\end{eqnarray}
%since  $\inn{-}{-}$ is bilinear. We can
Thus, to check minimality we can consider ${\mathcal X}_B$ and
${\mathcal X}_{\hat i}$ also separately.
%We will now give a few examples of minimal vectors (tensors).
Based on these observations we have
\begin{thm}
\label{th_minimal_ex} Any of the following conditions is sufficient
for a tensor $T\in V$ to be minimal:
%The following conditions are sufficient for a tensor $T\in V$ to be minimal:
\begin{enumerate}
\item{} $T$ is a $\theta$-eigenvector, i.e., $T=T_+$ or $T=T_-$;
\item{} $T$ has the boost-weight decomposition $T=(T)_0$ {(and thus is of type D)}.
\end{enumerate}
\end{thm}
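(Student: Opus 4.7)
The plan is to verify the algebraic criterion of Proposition~\ref{prop: crit minimal}, namely $\inn{\mathcal{X}(T)}{T}=0$ for every $\mathcal{X}\in\mf{B}$, separately in each of the two cases.

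For case (1), I would use that the Cartan involution $\theta$ is self-adjoint with respect to the Euclidean product $\inn{-}{-}$, so its $\pm 1$ eigenspaces on tensor space are mutually orthogonal. Since $\mf{B}$ is the $-1$ eigenspace of $\theta$ on the Lie algebra (see (\ref{decomp o(1,n-1)})), every $\mathcal{X}\in\mf{B}$ anticommutes with the extension of $\theta$ to tensors, i.e.\ $\theta\circ\mathcal{X}=-\mathcal{X}\circ\theta$. Hence, if $\theta(T)=\epsilon T$ with $\epsilon\in\{+1,-1\}$, one obtains $\theta(\mathcal{X}(T))=-\epsilon\,\mathcal{X}(T)$, placing $T$ and $\mathcal{X}(T)$ in opposite $\theta$-eigenspaces and thereby yielding $\inn{\mathcal{X}(T)}{T}=0$.

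For case (2), I would decompose $\mathcal{X}=\lambda\mathcal{X}_B+z_{\hat i}\mathcal{X}_{\hat i}$ in the basis of $\mf{B}$ introduced just above the statement, so that by bilinearity (equation (\ref{minimal_decomp})) it suffices to show $\inn{\mathcal{X}_B(T)}{T}=0$ and $\inn{\mathcal{X}_{\hat i}(T)}{T}=0$ for each $\hat i$. The first vanishes trivially from (\ref{eigenvalue_b}), since $T=(T)_0$ implies $\mathcal{X}_B(T)=0$. For the second, (\ref{Xi boost weights}) gives that $\mathcal{X}_{\hat i}(T)=\mathcal{X}_{\hat i}((T)_0)$ consists only of boost-weight $\pm 1$ pieces, while $T$ has only boost weight $0$, so the pairing is between components of distinct $\mathcal{X}_B$-eigenvalues.

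The only remaining (and mild) obstacle is to justify that boost-weight subspaces of distinct weight are mutually orthogonal with respect to $\inn{-}{-}$. This follows from the self-adjointness of $\mathcal{X}_B\in\mf{B}$ relative to the Euclidean product (cf.\ the paragraph preceding Proposition \ref{prop: crit minimal}): the boost-weight decomposition is precisely the spectral decomposition of $\mathcal{X}_B$ with distinct real eigenvalues $b$, and eigenspaces of a self-adjoint operator for distinct eigenvalues are orthogonal. With this observation both sufficient conditions follow directly.
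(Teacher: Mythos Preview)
Your proof is correct and follows essentially the same route as the paper. For case~(1), the paper records the identity $\inn{\mathcal{X}(T)}{T}=2\inn{\mathcal{X}(T_+)}{T_-}$ (valid for any $T$) and then specializes, while you argue directly via the anticommutation $\theta\circ\mathcal{X}=-\mathcal{X}\circ\theta$; both amount to the same key observation $\mathcal{X}(V_\pm)\subset V_\mp$ combined with orthogonality of $V_+$ and $V_-$. For case~(2), your argument is identical to the paper's, and your explicit justification of the boost-weight orthogonality $\inn{X}{Y}=\sum_b\inn{(X)_b}{(Y)_b}$ via self-adjointness of $\mathcal{X}_B$ (which the paper states without comment) is a welcome addition.
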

\begin{proof}
1. was proven in \cite{RicSlo90}: essentially, for any $T$, we have
$\inn{{\mathcal X}(T)}{T}=2\inn{{\mathcal X}(T_+)}{T_-}$,
${\mathcal{X}}\in\mf{B}$ (using $\mathcal{X}(T_{\pm})\in V_{\mp}$).
Thus, if $T_-=0$ or $T_+=0$ the criterion (\ref{crit_minimal}) is
fulfilled.

\noindent 2. If $T=(T)_0$ then (\ref{eigenvalue_b}) implies ${\cal
X}_B(T)=0$, while (\ref{Xi boost weights}) and
$\inn{X}{Y}=\sum_{b}\inn{(X)_b}{(Y)_b}$ give $\inn{{\mathcal
X}_{\hat i}(T)}{T}=0$. Thus $\inn{{\mathcal X}(T)}{T}=0$ from
(\ref{minimal_decomp}) and again (\ref{crit_minimal}) is fulfilled.
\end{proof}

\begin{rem}
The two conditions of Proposition~\ref{th_minimal_ex} are only
sufficient conditions and they are, in general, independent. An
exception to this statement is the special case $V=T_pM$, for which
 $T=T_\pm$ is also necessary to be minimal, see  (\ref{minimal vector}),
and $T=(T)_0$ is equivalent to $T=T_+$ (i.e., $T$ is a spacelike
vector). However, if $T$ represents a Maxwell-like tensor (bivector,
$T_{ab}=F_{ab}=F_{[ab]}$) we have $F=(F)_0\Leftrightarrow F_{u\hat
i}=0=F_{2\hat i}$, but $F_{u2}$ and $F_{\hat i\hat j}$ can be
non-zero, so that $F_+\neq F\neq F_-$, in general (here we assume we
are in four or higher dimensions). Similarly, it is easy to see that
$F=F_-$ implies $F=(F)_0$ (the direction of ${\bm}_{2}$ being
defined by $F_{ui}$), whereas $F=F_+$ can be of type G if $n$ is odd
%(e.g.\
%$F_{ab}=m^2_{[a}m^3_{b]}+m^4_{[a}m^5_{b]}$ in dimension 5).}
(see Remark \ref{rem: antisymm class} in Appendix~\ref{app_minimal}
for a complete discussion). Moreover, as we shall discuss below
(Proposition~\ref{th_minimal_ODIG}), all Weyl tensors of type G, I
or D contain a minimal Weyl tensor in their orbit, with no need to
satisfy either 1.\ or 2.\ above.
\end{rem}

\begin{ex}
As an example of a more generic minimal tensor,
choose a tensor $T=(T)_{-2} + (T)_0+(T)_{+2}$ where both  $(T)_{-2}$
and $(T)_{+2}$ are non-zero.\footnote{In fact, with no essential
change in the following argument we could more generally also
consider a tensor of the form $T=(T)_{-k} + (T)_0+(T)_{+k}$, where
$k>1$.} First, consider (\ref{B-null}) using ${\mathcal X}_B$; then
we get $\mathcal{X}_B(T)=\sum_{b}b(T)_b=-2(T)_{-2}+2(T)_{+2}$.
Consequently,
\[ \inn{\mathcal{X}_B(T)}{T}=-2\inn{(T)_{-2}}{(T)_{-2}}+2\inn{(T)_{+2}}{(T)_{+2}}, \]
which is in general not zero. However, by a boost of the frame,
\[ \inn{\mathcal{X}_B(T)}{T}=-2e^{-{4\lambda}}\inn{(T)_{-2}}{(T)_{-2}}+2e^{4\lambda}\inn{(T)_{+2}}{(T)_{+2}};\]
therefore, {\em there exists a boost of the frame such that
$\inn{\mathcal{X}_B(T)}{T}=0$ in which case
$\inn{(T)_{-2}}{(T)_{-2}}=\inn{(T)_{+2}}{(T)_{+2}}$. }

Consider next (\ref{B-null}) using ${\mathcal{X}_{\hat i}}$:
$\mathcal{X}_{\hat i}(T)$ has only odd boost-weight, so
$(\mathcal{X}_{\hat i}(T))_b=0$ for $b$ even. Thus, since $T$ has
only even b.w.\ components:
\[ \inn{\mathcal{X}_{\hat i}(T)}{T}=0.\]
Thus we reach the conclusion that for any $T=(T)_{-2} +
(T)_0+(T)_{+2}$ there exists a boost generated by ${\mathcal X}_B$
such that it is minimal. For this minimal vector, we have the
condition $\inn{(T)_{-2}}{(T)_{-2}}=\inn{(T)_{+2}}{(T)_{+2}}$.

We still notice that it is important in this example that both
$(T)_{-2}$ and $(T)_{+2}$ are non-zero (alternatively, both zero for
which $T=(T)_0$ and it falls under the spell of Proposition
\ref{th_minimal_ex}). Indeed, if one of these parts were zero
while the other is not, there would not be any minimal $T$. This is
connected to the fact that tensors  which are of type II or more
special, but not D nor O, do not have a minimal vector in their
orbit (see Proposition~\ref{th_minimal_ODIG}).

Furthermore, we should emphasize that the minimal example
$T=(T)_{-2} + (T)_0+(T)_{+2}$ does not need to fulfill condition 1.\
nor 2.\ in Proposition~\ref{th_minimal_ex} showing, again, that
these conditions are only sufficient.
\end{ex}

\subsection{Minimal tensors and null alignment type}

\label{subsec: minimal alignment}

In this subsection we revisit the `alignment theorem' for tensors
over a Lorentzian space of any dimension proved in \cite{Hervik11},
giving a more streamlined proof and adding the connection with
minimal tensors. Version A is the contrapositive of version B.  The
statements (1) assume a chosen unit timelike vector $\bu$ and
associated Euclidean product, and the abbreviations (Act) and (Pass)
refer to the active and passive viewpoints.
\begin{thm}
\label{th_minimal_ODIG}
For a tensor $T$ the following are equivalent:\\
{[Version A]}
\begin{enumerate}
\item[(1a)] {(Act)} There exists a minimal tensor $v$ in the orbit $\mathcal{O}(T)$;
 (Pass) There exists a possibly different vector $\bu'$ such
that the representation $\tilde T$ of $T$ in a $\bu'$-ONF is minimal
in $\mathcal{O}(\tilde{T})$.
%the orbit of ONF-representations ${\cal O}(\tilde{T})$ of $T$ contains an
%element which is minimal in any $\bu$-ONF}.
\item[(2a)] $T$ is of type O, D, or any other type which is {\em not} type II or
more special.
%~\footnote{Explicitly, in the case of Weyl tensors
%condition 2. means that types II, III and N are forbidden, i.e.,
%only the types O, D, I or G are possible (including subtypes such as
%I$_i$).}
%I/G or more general.
\item[(3a)] $T$ is characterised by its invariants.
\end{enumerate}
[Version B]
\begin{enumerate}
\item[(1b)] {(Act)} There exists {\em no} minimal tensor $v$ in the orbit $\mathcal{O}(T)$;
 (Pass) No ONF-representation $\tilde T$ of $T$ is minimal.
\item[(2b)] $T$ is of type II  or more special, but not D nor O.
%, III, N or simpler (not D nor O).
\item[(3b)] $T$ is \emph{not} characterised by its invariants.
\end{enumerate}
\end{thm}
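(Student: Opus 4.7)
The strategy is to combine the previously-proved alignment theorem of \cite{Hervik11} (which already gives the equivalence $(2)\Leftrightarrow(3)$) with the Kempf-Ness/Richardson-Slodowy machinery linking minimal vectors to closed orbits, thereby folding clause $(1)$ into the existing cycle. I would prove $(3a)\Rightarrow(1a)\Rightarrow(2a)\Rightarrow(3a)$, since this order lets the bulk of the technical work be absorbed by the earlier theorem and by general results about reductive group actions.

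First, I would dispatch the active/passive equivalence inside $(1a)$ (and $(1b)$) as a preliminary. If $g\in G=O(1,n-1)$ sends $\bu$ to $\bu'=g(\bu)$, then the components of $T$ in a $\bu'$-ONF $\{g(\bu),g(\bm_i)\}$ coincide with the components of $g^{-1}(T)$ in the reference $\bu$-ONF. Since Proposition~\ref{prop: crit minimal} formulates minimality relative to a fixed $\bu$, "some $\bu'$-ONF representation of $T$ is minimal" is nothing but "the orbit $\mathcal{O}(T)$ contains a minimal tensor wrt $\bu$", so the two formulations in $(1a)$ are identical.

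For $(3a)\Rightarrow(1a)$, I would invoke the standard GIT dictionary together with Theorem~4.3 of \cite{RicSlo90}: passing to the complexification $G^{\mathbb{C}}$ (as in the argument used throughout \cite{Hervik11}), a real tensor is characterised by its polynomial invariants iff its $G^{\mathbb{C}}$-orbit through the representation is (Zariski-)closed, and a closed orbit of a real reductive group always contains a minimal vector. Hence $(3a)$ delivers a minimal representative in $\mathcal{O}(T)$. Conversely, for $(1a)\Rightarrow(2a)$ I would argue by contrapositive: suppose $T$ is of type II or more special but not D nor O, so in an aligned null frame $\mathcal{F}'$ we have $(T)_b=0$ for $b>0$ while $(T)_{b_0}\neq 0$ for some $b_0<0$. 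Then the boost $g_\tau=\exp(\tau\mathcal{X}_B)$ satisfies $\|g_\tau(T)\|^2=\sum_{b\leq 0}e^{-2b\tau}\|(T)_b\|^2$, which is strictly decreasing as $\tau\to-\infty$ (using that $(T)_{b_0}$ contributes a strictly decaying term and no positive-$b$ terms oppose it). Thus no orbit representative minimises the norm, contradicting $(1a)$; this is essentially the Hilbert-Mumford/Kempf-Ness degeneration of $T$ to the different, more special orbit of $(T)_0$. Finally, $(2a)\Rightarrow(3a)$ is exactly the content of the alignment theorem in \cite{Hervik11}, which I would quote; if a streamlined proof is desired one can, for types D, appeal directly to Theorem~\ref{th_minimal_ex}(2) combined with $(1a)\Rightarrow(3a)$ to produce invariants separating the orbit, and for types G, I use the boost-weight-zero components in an appropriate null frame to build such invariants.

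The hard step is $(1a)\Rightarrow(2a)$: one has to be careful that ruling out a minimum of $\|\cdot\|$ on the full orbit is not spoilt by the null-rotation generators $\mathcal{X}_{\hat i}$, which by~\eqref{Xi boost weights} mix boost weights and could in principle raise the norm along some directions. The cleanest remedy is to use $\mathcal{X}_B$ alone to exhibit a one-parameter subgroup along which the norm is strictly monotone, because the positive boost-weight slots of $T$ are empty; this suffices to preclude attainment of the infimum on the orbit closure by anything in $\mathcal{O}(T)$, giving $(1b)$. Version B then follows by taking contrapositives of the three implications above.
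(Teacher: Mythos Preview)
Your overall architecture is sound and close to the paper's: both use Richardson--Slodowy to tie minimal vectors to closed orbits (your $(3a)\Rightarrow(1a)$ is the paper's $(1a)\Leftrightarrow(3a)$ via eq.~(\ref{closed 2})), and both use the boost one-parameter subgroup $e^{\tau\mathcal{X}_B}$ acting on the boost-weight decomposition to handle the type~II case. The paper, however, organises the cycle differently: it proves $(1)\Leftrightarrow(3)$ directly and then proves \emph{both} directions of $(2)\Leftrightarrow(3)$ from scratch, because the stated purpose of the proposition is precisely to give a streamlined \emph{re-proof} of the alignment theorem of \cite{Hervik11}. Your plan to simply cite \cite{Hervik11} for $(2a)\Rightarrow(3a)$ therefore misses the point of the exercise; the paper instead shows $(3b)\Rightarrow(2b)$ via the Hilbert--Mumford-type statement from \cite{RicSlo90} (a non-closed orbit admits a destabilising $\mathcal{X}\in\mathfrak{B}$), which forces $(T)_b=0$ for all $b>0$ in some frame.

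There is also a genuine gap in your $(2b)\Rightarrow(1b)$ step. Showing that $\|g_\tau(T)\|$ is strictly monotone along the boost through $T$ only shows that $T$ itself is not a minimum of the norm on $\mathcal{O}(T)$; it does not by itself rule out a minimum at some other orbit point $v=g(T)$. The correct completion is to observe that alignment type is a Lorentz-invariant of the orbit, so every $v\in\mathcal{O}(T)$ is again type~II, not D/O, in a suitable null frame, and then $\inn{\mathcal{X}_B^{(v)}(v)}{v}=\sum_{b\le 0} b\,\|(v)_b\|^2<0$ violates the minimality criterion of Proposition~\ref{prop: crit minimal} at $v$. Your ``hard step'' paragraph worries about the null-rotation generators $\mathcal{X}_{\hat i}$, but that is a red herring: the boost generator alone already falsifies~\eqref{crit_minimal}, and the real issue is extending the argument from $T$ to every orbit point. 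The paper sidesteps this entirely by arguing instead that $e^{\tau\mathcal{X}_B}(T)\to(T)_0\notin\mathcal{O}(T)$, so $\mathcal{O}(T)$ is not closed, and then invoking~(\ref{closed 2}) to conclude $(1b)$; this route is cleaner and avoids the per-point check.
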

\begin{proof}

\noindent (1a)$\Leftrightarrow$ (3a): Let $\mf{M}\subset V$ denote the set of minimal vectors. In \cite{RicSlo90} it was proved that
\be\label{closed 2}
\mf{M}\cap {\mathcal{O}}(\Tt)\neq \emptyset \Leftrightarrow  {\mathcal{O}}(\Tt) \in \mf{C}
\ee
and the equivalence follows from corollary \ref{cor: closed is invarchar}.

\noindent (2b)$\Leftarrow$ (3b): From \cite{RicSlo90} we have that,
if $\mathcal{O}(T)$ is not closed, then there exists a vector $v_0$
in the closure $\overline{\mathcal{O}(T)}$ and $\mathcal{X}\in
\mf{B}$ such that $e^{\tau\mathcal{X}}(T)\rightarrow v_0$, as
$\tau\rightarrow+\infty$. By considering the boost-weight
decomposition with respect to the boost
$B(\tau)=e^{\tau\mathcal{X}}$, we get \cite{Hervik11}
\[  e^{\tau\mathcal{X}}(T)=\sum_be^{\tau b}(T)_b.\]
Since $v_0$ is finite we need $b\geq 0$, or $b\leq 0$. By the
isomorphism $b\leftrightharpoons -b$, we can assume $b\leq 0$.
Moreover, type D is ruled out by Proposition~\ref{th_minimal_ex},
and the result follows.

\noindent (2b)$\Rightarrow$ (3b): If the tensor $T$ is of type II or
more special, but not type D nor O, then there exists a boost-weight
decomposition \be\label{type II} T=\sum_{b\leq 0}(T)_b.\ee By the
action of the boost ($\tau\rightarrow+\infty$):
\[
e^{\tau\mathcal{X}}(T)=\sum_{b\leq 0}e^{\tau b}(T)_b\rightarrow (T)_0.
\]
If $v_0\equiv (T)_0$ is in $\mathcal{O}(T)$, then there exists a
frame such that $T=(T)_0$, hence type D or O which is a
contradiction. Thus $v_0$ is not in $\mathcal{O}(T)$, which is
consequently not closed, and corollary \ref{cor: closed is
invarchar} concludes the proof.
\end{proof}

%\begin{rem}
Explicitly, in the case of the  Weyl tensor condition 2a covers type
G, strict type I (including subtypes such as I$_i$), type D and O,
while condition 2b covers strict type II (not D), III (including
subtypes such as III$_i$) and N.

The general result remains valid for a collection (or direct sum) of
tensors instead of a single one. Here a collection $(T_i)$ is called
`of (aligned) type II or more special' if {\em all} $T_i$ are of the
form (\ref{type II}) in {\em the same} null frame. For example, if
the Weyl and Ricci tensors of a metric at a spacetime point are both
type N wrt the same null vector, then the corresponding Riemann
tensor will be of type N as well, as follows from Proposition
\ref{prop: Riemann boost order}. If, however, they are both type N
but wrt different null-vectors, then they are not aligned and there
is a minimal vector: if $R=(R)_{-2}$ and $C=(C)_{+2}$, then we
formally write the Riemann tensor as $T=[R,C]$ and we have
$T=(T)_{-2}+(T)_{+2}$ such that we are back in the example
considered in $\S$ \ref{subsec: minimal suff conds}, $T$ being
minimal in a frame such that $\inn{R}{R}=\inn{C}{C}$.
%\end{rem}

{In appendix~\ref{app_minimal} we give more explicit proofs of Proposition~\ref{th_minimal_ODIG}
in the case of vectors and Ricci- or Maxwell-like rank 2 tensors.}

\section{The Weyl tensor: purely electric (PE) or magnetic (PM) spacetimes}

\label{sec_Weyl}

In the context of General Relativity and its higher dimensional
extensions, the Weyl tensor is a natural object to consider, e.g.,
in the classification of exact solutions {(in particular, of
Einstein spacetimes $R_{ab}={\cal R}g_{ab}/n$)}, in the study of
gravitational radiation, of asymptotic properties of spacetimes,
etc.. We now apply the general orthogonal splitting of tensors
relative to an observer with timelike vector field $\bu$, outlined
in Appendix \ref{subsec: Pre Cartan invol}, to the Weyl tensor
$C_{abcd}$ at a point of a spacetime of dimension $n\geq 4$.
%~\footnote{As well-known, the only rank 4
%tensor satisfying (\ref{symm1}) and (\ref{tracefree}) in $n<4$
%dimensions is the zero tensor.}
This enables us to define purely
electric and magnetic Weyl tensors and spacetimes, to work out
several useful results such as Bel-Debever criteria, the
structure of the associated Weyl bivector operator and null
alignment properties, and
% {\bf its electric and magnetic parts relative to any unit timelike vector, and
% and to work out PE/PM conditions (Bel-Debever criteria),
to provide illustrative examples. We will see that several
well-known results in four dimensions generalize to arbitrary
dimensions.
In the next section we shall apply a similar analysis to the
Ricci and Riemann tensors, which is relevant in the study of
spacetimes which contain matter fields.

\subsection{Electric and magnetic parts}

\label{subsec: Weyl EM parts}

As before, we consider a fixed unit timelike vector $\bu$ and the
corresponding Cartan involution $\theta$.
%For the Weyl tensor to be non-zero we require $n\geq 4$ for the spacetime dimension.

\begin{defn} The tensor $(C_+)_{abcd}$ ($(C_-)_{abcd}$) is
called the {\em electric (magnetic) part} of the Weyl tensor wrt
$\bu$.
\end{defn}

%Recall the definition $h_{ab}\equiv g_{ab}+u_au_b$ (eq.~(\ref{g and
%theta})).

Recall the definition (eq.~(\ref{g and theta})) of the orthogonal projector
\[
    h_{ab}\equiv g_{ab}+u_au_b .
\]

Define the tensor
\begin{equation}\label{Edef}
E_{ab}\equiv C_{aebf}u^eu^f=h_a{}^ch_b{}^dC_{cedf}h^{ef},
\end{equation} where
(\ref{tracefree}) implies the last equality. Obviously, this is a
trace-free symmetric rank~2 tensor which is moreover purely spatial
relative to $\bu$: $E_{ab}=(E^{\text{sp}})_{ab}$. Using (\ref{g and theta}), (\ref{orthsplit}) and the symmetries
(\ref{symm1}) one obtains
\begin{eqnarray}
&&(C_+)^{ab}{}_{cd}=h^{ae}h^{bf}h_c{}^gh_d{}^h C_{efgh}+4u^{[a}u_{[c}C^{b]e}{}_{d]f}u_eu^f= (C^{\text{sp}})^{ab}{}_{cd}+4u^{[a}u_{[c}E^{b]}{}_{d]},\label{C+}\\
&&(C_-)^{ab}{}_{cd}=2h^{ae}h^{bf}C_{efk[c}u_{d]}u^k+2u_ku^{[a}C^{b]kef}h_{ce}h_{df}.\label{C-}
\end{eqnarray}

In any ONF  $\{\bu,\bm_{i=2,...,n}\}$ the non-identically vanishing
electric (magnetic) part accounts for the components of the Weyl
tensor with an even (odd) number of indices $u$ (cf.\ $\S$
\ref{subsec: Pre Cartan invol}.1). The first, purely spatial term of
the Weyl electric part (\ref{C+}) covers the $C_{ijkl}$ components,
of which there are $N_0(n)=(n^2-2n+4)(n+1)(n-3)/12$ independent
ones; the last term covers the $N_2(n)= (n+1)(n-2)/2$ independent
$C_{uiuj}$ components; however, the latter are fully determined by
the former since
\begin{equation}
    \label{N2 vs N0}
    C_{uiuj}=C_{ikj}{}^k,
\end{equation}
which is the component form of the trace-free property
(\ref{tracefree}) also expressed in (\ref{Edef}); thus there are
$N_0(n)-N_2(n)=n(n^2-1)(n-4)/12$ extra independent purely electric
components $C_{ijkl}$ in addition to the $C_{uiuj}$ ones. The Weyl
magnetic part (\ref{C-}) has $N_1(n)=(n^2-1)(n-3)/3$ independent
components $C_{uijk}$.
%~\footnote{We have
%$N_0(n)=R(n-1)-1,\,N_2(n)=S(n-1)-1$ and
%$N_1(n)=(n-1)A(n-1)-\left(\begin{smallmatrix}n-1\\3\end{smallmatrix}\right)-(n-1)$,
%where $R(k)=k^2(k^2-1)/12$, $S(k)=k(k+1)/2$ and $A(k)=k(k-1)/2$ are
%the numbers of independent components of a Riemann-like tensor,
%symmetric matrix and antisymmetric matrix in $k$ dimensions,
%respectively, and where $C_{ij}{}^{ij}=0$ (1 condition),
%$C_{uiu}{}^i=0$ (1 condition), $C_{u[ijk]}=0$
%[$\left(\begin{smallmatrix}n-1\\3\end{smallmatrix}\right)=(n-1)(n-2)(n-3)/6$
%conditions] and $C_{uij}{}^i=0$ ($n-1$ conditions) have been taken
%into account.}
Together these add up to the $(n-3)n(n+1)(n+2)/12$
independent components of the Weyl tensor in $n$ dimensions (see
also \cite{Senovilla01}).

\begin{rem}
%Consider now the four-dimensional case $n=4$.
The already known four-dimensional case $n=4$ has somewhat special
properties, which we now briefly review. One has $N_0(4)=N_2(4)=5$,
such that the relations (\ref{N2 vs N0}) can be inverted to give \be
 C_{ijkl}=2(\delta_{i[k}C_{l]uju}-\delta_{j[k}C_{l]uiu}) \qquad (n=4).
 \label{Cijkl_4D}
\ee Using (\ref{Edef}) this reads $(C^{\text{sp}})^{ab}{}_{cd} =
4h_{[c}^{[a}E^{b]}{}_{d]}$ in covariant form. Thus (\ref{Edef}) and
(\ref{C+}) imply that the tensors $(C_+)_{abcd}$ and $E_{ab}$, both
having 5 independent frame components, are in biunivocal relation:
\be
 \label{C+ 4D} E_{ab}=(C_+)_{acbd}u^cu^d \quad\leftrightarrow\quad
(C_+)^{ab}{}_{cd}=
4\left(h_{[c}^{[a}+u^{[a}u_{[c}\right)E^{b]}{}_{d]}  \qquad (n=4).
\ee
We have $N_1(4)=5$ as well. Define
\begin{equation}
    \label{Hdef} H_{ab}\equiv
\tfrac 12 \varepsilon_{acef}C^{ef}{}_{bd}u^cu^d \qquad (n=4) ,
\end{equation}
where $\varepsilon_{abcd}$ is the volume element. Just as
$E_{ab}$, $H_{ab}$ is a purely spatial, symmetric
%~\footnote{This
%follows from the fact that for the Weyl tensor {in four
%dimensions}
%the left- and right dual coincide:
%$\varepsilon_{abef}C^{ef}{}_{cd}=\varepsilon_{cdef}C^{ef}{}_{ab}$.}
and trace-free rank 2 tensor which thus has 5 independent
components. Then, by virtue of the identity
$\varepsilon^{abef}u_e\varepsilon_{cdfg}u^g=2h^a{}_{[c}h^b{}_{d]}$
(\ref{C-}) can be rewritten, and $(C_-)_{abcd}$ and $H_{ab}$ are in
biunivocal relation: \be\label{C- 4D} H_{ab}\equiv \tfrac 12
\varepsilon_{acef}(C_-)^{ef}{}_{bd}u^cu^d\quad\leftrightarrow\quad
(C_-)^{ab}{}_{cd}=2\varepsilon^{abef}u_e
u_{[c}H_{d]f}+2\varepsilon_{cdef}u^e u^{[a}H^{b]f}\qquad (n=4). \ee
Adding the expressions in (\ref{C+ 4D}) and (\ref{C- 4D}) for $C_+$
and $C_-$, one obtains the well-known formula for the Weyl tensor in
four-dimensional General Relativity
in terms of  $E_{ab}$ and $H_{ab}$~\cite{Stephanibook}, which are
usually referred to as the electric and magnetic parts of the Weyl
tensor. Since they are respectively equivalent with $C_+$ and $C_-$
this justifies the above definition of Weyl electric and magnetic
parts, for general $n$.
\end{rem}

\begin{rem}\label{remark: Senovilla} In \cite{Senovilla00} and \cite{Senovilla01} Senovilla proposed
a construction for generalizing the electro-magnetic decomposition
relative to a unit timelike vector $\bu$, applicable to any tensor
$T$ and based on the consideration of maximal antisymmetric index
slots. If the number of such slots is $r$ then one constructs $2^r$
different tensors from $T$, by taking for each slot a contraction
with either $u^a$ (yielding an electric ``E''-contribution for that
slot) or $u_a\varepsilon^{aa_1\ldots a_{n-1}}$ (yielding a magnetic
``H''-contribution). However, by the antisymmetry of the slots this
is equivalent to contraction (over $b,d,...,f$) with
$u_au^bh_c{}^d\ldots h_e{}^f$ and $h_a{}^bh_c{}^d\ldots h_e{}^f$,
respectively. Then, our $T_+$ ($T_-$) part collects the $2^{r-1}$
tensors constructed in this way with an even (odd) number of
E-parts. For instance, when $T_{a[bc]}\neq T_{[abc]}$ in
(\ref{Tex1})-(\ref{Tex2}) then $r=2$, and the first and second term
in the second of (\ref{Tex1}) represent the associated HH- and
EE-tensors associated to $T_{a[bc]}\neq T_{[abc]}$ ($r=2$),
respectively, while the second of (\ref{Tex2}) contains respective
equivalents of the HE- and EH-tensors. For the Weyl tensor we also
have $r=2$; our magnetic part $C_-$ corresponds to Senovilla's EH
and HE tensors, which are equivalent due to the symmetry
$C_{abcd}=C_{cdab}$; our electric part $C_+$ covers the EE and HH
tensors, where the former can be seen as a part of the latter due to
(\ref{N2 vs N0}). Notice that one has a reversed situation for a
Maxwell field $F_{ab}$ since $r=1$, i.e., $F_+$ ($F_-$) covers the
ONF components $F_{ij}$ ($F_{ui}$) and is the magnetic (electric)
part.  As another example, for symmetric rank 2 tensors $T$ like the
energy-momentum tensor one has $r=2$, and the electric part, $T_+$,
then assembles the stress-pressure two-tensor ($T_{ij}$) and scalar
energy density ($T_{uu}$) as measured by $\bu$, while the magnetic
part $T_-$ represents the heat flux vector. Equivalently for these
situations, the `electric (magnetic) part' collects the $2^{r-1}$
tensors with an even (odd) number of H-parts. This leads us to the
following definition for general tensors, where we thus
refer to the definition of H/E-parts in \cite{Senovilla00,Senovilla01} and the above explanation:\\

\noindent {\bf Definition 3.3.} Let $T$ be any tensor with $r$
maximal antisymmetric index slots.   The {\em electric (magnetic)
part} of $T$ relative to a unit timelike vector $\bu$ is the
collection of the $2^{r-1}$ tensors with an even (odd) number of
H-parts; this part equals $T_+$ ($T_-$) when $r$ is even, and $T_-$
($T_+$) when $r$ is odd.
\end{rem}

\subsection{PE/PM condition at a point}\label{subsec: Weyl Bel-Debever}

\begin{defn}\label{def PE/PM} At a point $p$, the Weyl tensor $C$ is called purely electric (magnetic)
[henceforth, PE (PM)] wrt $\bu$ if $C=C_+\leftrightarrow C_-=0$ ($C=C_-\leftrightarrow C_+=0$).
If such a $\bu$ exists the Weyl tensor is called PE (PM);  it is called {\em properly PE (PM) wrt $\bu$ if $C=C_+\neq 0$ ($C=C_-\neq 0$)}. A spacetime, or an open region thereof, is called (properly) PE (PM)
if the Weyl tensor is (properly) PE (PM), everywhere.
\end{defn}

In any ONF  $\{\bu,\bm_{i=2,...,n}\}$ a non-zero Weyl tensor is PE
wrt $\bu$ iff $C_{uijk}=0$, $\forall\, i,j,k =2,...,n$; in view of
(\ref{N2 vs N0}), it is PM wrt $\bu$ iff $C_{ijkl}=0$, $\forall\,
i,j,k,l =2,...,n$.

In analogy with the Bel-Debever criteria for null
alignment~\cite{Stephanibook,Ortaggio09}, and using the {
properties (\ref{symm1}) and (\ref{tracefree})}, one may rewrite
this in the following covariant way.

\begin{prop}(Weyl PE/PM Bel-Debever criteria) Let $\bu$ be a unit timelike vector,
\be\label{unit}
g^{ab}u_au_b=-1.
\ee
Then a Weyl tensor $C_{abcd}$ is
\begin{itemize}
\item PE wrt $\bu$ iff
\be\label{Bel PE}
    u_ag^{ab}C_{bc[de}u_{f]}=0;
\ee
\item PM wrt $\bu$ iff
\be\label{Bel PM}
    u_{[a}C_{bc][de}u_{f]}=0.
\ee
\end{itemize}
\label{prop_Bel-Debever}
\end{prop}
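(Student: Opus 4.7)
The strategy is to work in a $\bu$-orthonormal frame $\{\bu,\bm_{i=2,\ldots,n}\}$, in which the components of $\bu$ satisfy $u^u=1$, $u_u=-1$, and $u^i=u_i=0$. In such a frame, as noted after Definition~\ref{def PE/PM}, $C$ is PE wrt $\bu$ iff $C_{uijk}=0$ for all spatial $i,j,k$, and PM wrt $\bu$ iff $C_{ijkl}=0$ for all spatial $i,j,k,l$. The proposition will follow by showing that the covariant identities (\ref{Bel PE}) and (\ref{Bel PM}) reduce, respectively, to these component conditions. Both sides being linear in $C$, one may alternatively split $C=C_++C_-$ via (\ref{C+})--(\ref{C-}) and treat the two parts separately, but a direct frame computation is the most transparent route.

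For (\ref{Bel PE}), I would expand the $[de\,f]$-antisymmetrization using the antisymmetry of $C$ in its second index pair:
\begin{equation*}
3\,u^{b}C_{bc[de}u_{f]}=u^{b}\bigl(C_{bcde}u_{f}+C_{bcef}u_{d}+C_{bcfd}u_{e}\bigr).
\end{equation*}
In the ONF the contraction collapses $b\to u$, i.e.\ $u^{b}C_{bcde}=C_{ucde}$, while the covector factors $u_{d},u_{e},u_{f}$ are nonzero only when the corresponding free slot is $u$. A short case analysis on $(c,d,e,f)$ then shows: the case $c=u$ vanishes by the antisymmetry $C_{uu\cdots}=0$; the case with $c,d,e,f$ all spatial vanishes trivially; and the only surviving case, namely $c$ spatial together with exactly one of $d,e,f$ equal to $u$, yields $\pm C_{uklm}$ (with the other two indices spatial). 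Thus (\ref{Bel PE}) is equivalent to $C_{uklm}=0$ for all spatial $k,l,m$, i.e.\ to PE.

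For (\ref{Bel PM}), I would expand the outer $[abc]$-antisymmetrization as
\begin{equation*}
9\,u_{[a}C_{bc][de}u_{f]}=u_{a}\bigl(3\,C_{bc[de}u_{f]}\bigr)+u_{b}\bigl(3\,C_{ca[de}u_{f]}\bigr)+u_{c}\bigl(3\,C_{ab[de}u_{f]}\bigr),
\end{equation*}
and then the inner one as before. Since $u$ has only the $u$-component nonzero, at least one of $a,b,c$ and one of $d,e,f$ must equal $u$ for a contribution to survive. By the antisymmetries of $C$, the only genuinely non-trivial case is exactly one $u$ in each triple; taking WLOG $a=f=u$ with $(b,c,d,e)=(i,j,k,l)$ spatial, this contribution reduces to $u_{u}\,C_{ijkl}\,u_{u}=C_{ijkl}$. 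Configurations with two or more $u$-indices within $(a,b,c)$ or $(d,e,f)$ either vanish by antisymmetry of $C$ in one of its pairs (e.g.\ $C_{uu\cdots}=0$), or cancel in pairs after the bracket is expanded, using relations such as $C_{iuuj}=-C_{uiuj}$ combined with the antisymmetrization sign. Hence (\ref{Bel PM}) is equivalent to $C_{ijkl}=0$ for all spatial $i,j,k,l$, which is PM.

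The main obstacle I anticipate is the combinatorial bookkeeping in the PM case: one must verify carefully that the contributions containing two or more $u$-indices inside the antisymmetrized triples cancel after full expansion. A clean way to organise this is to note that $(C_-)_{abcd}$ carries only components of the form $C_{uijk}$ while $(C_+)_{abcd}$ carries only $C_{ijkl}$ and $C_{uiuj}$; the splitting $C=C_++C_-$ then makes it manifest that (\ref{Bel PM}) automatically annihilates $C_-$ (by the same pair-cancellation mechanism above) and extracts precisely the $C_{ijkl}$-block of $C_+$, thereby giving the desired equivalence with no residual ambiguity.
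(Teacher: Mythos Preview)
Your proposal is correct and follows exactly the route the paper indicates (but does not spell out): the paper merely states that, ``using the properties (\ref{symm1}) and (\ref{tracefree})'', the $\bu$-ONF conditions $C_{uijk}=0$ (PE) and $C_{ijkl}=0$ (PM) can be rewritten covariantly as (\ref{Bel PE}) and (\ref{Bel PM}), and leaves the verification to the reader. Your ONF component analysis is precisely that verification. One small simplification in the PM part: the ``two or more $u$-indices'' cases vanish simply because $u_{[a}C_{bc][de}u_{f]}$ is by construction totally antisymmetric in $(a,b,c)$ and in $(d,e,f)$, so any repeated index in a triple kills the expression outright; there is no need to invoke pairwise cancellations or the split $C=C_++C_-$.
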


These Bel-Debever criteria are covariant tensor equations, only
involving the metric inverse $g^{ab}$, the Weyl tensor $C_{abcd}$
and the one-form $u_a$. The big advantage of this format of the
PE/PM conditions is that one may take {\em any} basis
$\{\bm_{\alpha=1,...,n}\}$ of $T_pM$, with dual basis
$\{\bm^{\alpha=1,...,n}\}$ of $T_p^*M$, and consider the components
$g^{\alpha\beta}$, $C_{\alpha\beta\gamma\delta}$ and $u_\alpha$.
E.g., when the metric is given in coordinates over a neighbourhood
of $p$, one may take the corresponding holonomic frames of
coordinate vector fields and differentials. One then considers
(\ref{unit}) and (\ref{Bel PE}) [(\ref{unit}) and (\ref{Bel PM})] as
a system of quadratic equations in the $n$ unknowns $u_\alpha$; if a
solution to this system exists then the Weyl tensor is PE (PM)
relative to the corresponding $\bu$. However, since $n\geq 4$ the
number $1+N_1(n)=(n-2)(n^2-n-3)/3$ [$1+N_0(n)=n(n-1)^2(n-2)/12$] of
independent equations in this system exceeds $n$, with degree of
overdeterminacy $d_1(n)=(n-1)(n^2-2n-6)/3$
[$d_0(n)=n(n^3-4n^2+5n-14)/12$. For $n=4$ we already have
$d_0(4)=d_1(4)=2$, and we note that
$d_0(n)-d_1(n)=(n+1)(n-2)(n-3)(n-4)/12$  in general, which
increases with $n$. Hence, for a {\em generic} metric and Weyl
tensor no solution $\bu$ to the PE or PM conditions can be found,
not even at a point $p$, and for $n>4$ the situation is worse for PM
(the number of equations then being quartic in $n$ while cubic in
the PE case).

\begin{rem}\label{rem: PM4D} In the case where only the contraction (\ref{Edef}) vanishes ($C_{uiuj}=0$), we will say
that the Weyl tensor is ``PM'' (note that the quotes are part of the name); this is only equivalent to PM for
$n=4$, but gives a weaker condition for $n>4$ dimensions (since there are no restrictions on $C_{ijkl}$).
\end{rem}

In the next paragraph we will meet easily computable necessary
conditions for the above PE and PM equations to have solutions
$\bu$. In $\S$ \ref{subsec: Weyl alignment} we will discuss the
alignment types for PE/PM Weyl tensors and discuss the uniqueness of
solutions $\bu$. In $\S$ \ref{subsec: PE spacetimes} we will see
that ample classes of PE {\em spacetimes} exist, whereas PM
spacetimes are most elusive ($\S$ \ref{subsec: PM spacetimes}).

\subsection{PE/PM Weyl bivector operators}\label{subsec: Weyl bivector}

Consider the  real $N$-dimensional vector space $\wedge^2T_pM$
of contravariant bivectors (antisymmetric tensors $F^{ab}=F^{[ab]}$)
at $p$,  where $N=n(n-1)/2$. In view of the first three
symmetries in (\ref{symm1}) the map
\begin{equation}\label{Weyl operator}
{\sf C}:\quad F^{ab}\mapsto \tfrac 12 C^{ab}{}_{cd}F^{cd}=\tfrac 12 F^{cd} C_{cd}{}^{ab}
\end{equation}
is a linear operator (=endomorphism) of $\wedge^2T_pM$, referred to
as the {\em Weyl bivector operator} \cite{ColHer09}
, which is symmetric (self-adjoint)
wrt the restriction ${\sf g}$ to $\wedge^2 T_p M$ of the inner
product $\bg$ on ${\cal T}^2_0$, cf.\ (\ref{innerprod on V}):
\[
{\sf g(C(F),G)=g(F,C(G))},\qquad {\sf F}, {\sf G}\in \wedge^2T_pM.
%\bg({\sf C(F)},{\sf G})=\bg(F,{\sf C(G)}),\qquad {\sf F}, {\sf G}\in \wedge^2T_pM.
\]

Consider a unit timelike vector $\bu$. Through the tensor structure
of bivector space, the corresponding $\theta$ acts on it by
$F^{ab}\mapsto \theta^a_{~c}\theta^b_{~d}F^{cd}$. We can then repeat
the constructions of $\S$ \ref{subsec: Pre Cartan invol} replacing
$T_p M$ by $\wedge^2T_pM$ and $\bg$ by ${\sf g}$.~\footnote{Notice
that the inner product ${\sf g}$ has now the signature
$\frac{(n-1)(n-2)}{2}-(n-1)=\frac{(n-1)(n-4)}{2}$, cf.\ (\ref{ONF
bivector}). However, the map $\theta$ transforms $\{\bu\wedge
\bm_i,\bm_j\wedge\bm_k\}$ into $\{-\bu\wedge
\bm_i,\bm_j\wedge\bm_k\}$ and the corresponding inner product
$\inn{-}{-}$ on $\wedge^2T_pM$ is again Euclidean.}
%consider the tensor spaces over $\wedge^2T_pM$ instead of $T_p M$, let $\theta$ act on it as a tensor map, split every tensor ${\sf T}$ in its parts ${\sf T}_{\pm}$ and define the corresponding Euclidean inner product, cf.\ $\S$ \ref{subsec: Pre Cartan invol}.
In particular, the Weyl bivector operator ${\sf C}$ is viewed as a
type (1,1) tensor over $\wedge^2T_pM$ and can be decomposed into its
{\em electric} and {\em magnetic parts} ${\sf C}_\pm$, which are
also symmetric wrt ${\sf g}$. Here, ${\sf C}_\pm$ are the
endomorphisms of $\wedge^2T_pM$ obtained by replacing $C_{abcd}$ by
$(C_\pm)_{abcd}$ in (\ref{Weyl operator}). Hence, by Remark \ref{rem
selfadjoint}, ${\sf C}_+$ and ${\sf C}_-$ are the symmetric and
antisymmetric parts of ${\sf C}$ wrt $\inn{-}{-}$, respectively.
Hence, whereas it is cumbersome to say something general about the
eigenvector-eigenvalue structure of Weyl operators (in particular in
the type I/G case), for {\em purely electric} (${\sf C}={\sf C}_+$)
or {\em purely magnetic} (${\sf C}={\sf C}_-$) Weyl operators, we
have the following:

\begin{thm}\label{th_Weyl_PE} A purely electric (PE) or purely magnetic (PM)
Weyl operator is diagonalizable, i.e., a basis of eigenvectors for
$\wedge^2T_pM$ exists. A PE (PM) Weyl operator has only real (purely
imaginary) eigenvalues. Moreover, a PM Weyl
operator has at least  $s=\frac{(n-1)(n-4)}{2}$ zero eigenvalues ($s$ being the signature of ${\sf g}$).
\end{thm}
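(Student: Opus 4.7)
The plan is to handle the PE and PM cases separately, leveraging the observation stated just before the theorem that ${\sf C}_+$ and ${\sf C}_-$ are the symmetric and antisymmetric parts of ${\sf C}$ with respect to the Euclidean product $\inn{-}{-}$ on $\wedge^2 T_p M$.

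In the PE case, ${\sf C}={\sf C}_+$ is real symmetric on the Euclidean space $(\wedge^2 T_p M,\inn{-}{-})$, so the standard spectral theorem immediately yields an $\inn{-}{-}$-orthonormal basis of eigenvectors and a real spectrum. In the PM case, ${\sf C}={\sf C}_-$ is real antisymmetric with respect to $\inn{-}{-}$, hence normal; its complex diagonalisability then follows from the spectral theorem for normal operators, and the purely imaginary nature of the spectrum is obtained e.g.\ by noting that $i\,{\sf C}$ is Hermitian. This settles both diagonalisability statements and the nature of the eigenvalues.

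For the lower bound on the number of zero eigenvalues in the PM case I would exploit the \emph{additional} fact that ${\sf C}_-$ is simultaneously ${\sf g}$-symmetric: the pair-swap symmetry $C_{abcd}=C_{cdab}$ is manifestly preserved by the componentwise action of $\theta$, so both ${\sf C}_+$ and ${\sf C}_-$ inherit it. Writing $\inn{F}{G}={\sf g}(\Theta F,G)$, where $\Theta$ is the involution on $\wedge^2 T_p M$ induced by $\theta$, the combination of ${\sf g}$-symmetry and $\inn{-}{-}$-antisymmetry of ${\sf C}_-$ yields the anticommutation
\[
\Theta\,{\sf C}_-+{\sf C}_-\,\Theta=0.
\]
Consequently ${\sf C}_-$ exchanges the $\pm 1$-eigenspaces $V_+$ and $V_-$ of $\Theta$; by the footnote preceding the theorem these are spanned, respectively, by the purely spatial bivectors $\bm_i\wedge\bm_j$ (of dimension $(n-1)(n-2)/2$) and the mixed bivectors $\bu\wedge\bm_i$ (of dimension $n-1$). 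In block form relative to $V_+\oplus V_-$,
\[
{\sf C}_-=\begin{pmatrix}0 & A\\ B & 0\end{pmatrix},\qquad A:V_-\to V_+,\ B:V_+\to V_-,
\]
and a rank-nullity count on $B$ gives $\dim\ker B\geq \dim V_+-\dim V_-=(n-1)(n-4)/2=s$; since $\ker B\subset\ker{\sf C}_-$, the operator has at least $s$ zero eigenvalues, as claimed.

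The only real subtlety, and the place where I expect to have to be most careful, is checking that the $\theta$-splitting of the Weyl tensor preserves the pair-swap symmetry $C_{abcd}=C_{cdab}$, so that ${\sf C}_\pm$ are ${\sf g}$-symmetric. This is however a short index computation using that $\theta$ acts identically on every tensor slot; once it is verified, the anticommutation with $\Theta$ and the ensuing block-structure argument are completely elementary.
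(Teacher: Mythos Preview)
Your proposal is correct and follows essentially the same line as the paper's proof. Both arguments rest on the fact that ${\sf C}_+$ and ${\sf C}_-$ are respectively symmetric and antisymmetric with respect to the Euclidean product $\inn{-}{-}$, which immediately yields diagonalisability and the real/purely imaginary nature of the spectra; and both obtain the zero-eigenvalue bound from the off-diagonal block structure of ${\sf C}_-$ relative to the splitting $\wedge^2 T_pM = V_+\oplus V_-$ into spatial and mixed bivectors.

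The only presentational difference is that the paper derives this block structure by direct computation in the bivector ONF induced by a $\bu$-ONF, and then applies the singular value decomposition to the off-diagonal block $U$ to exhibit the full eigenvalue list $\{0,\ldots,0,\pm i\lambda_1,\ldots,\pm i\lambda_p\}$ explicitly; you instead derive the block structure from the anticommutation $\Theta\,{\sf C}_- + {\sf C}_-\,\Theta = 0$ (which follows cleanly from combining ${\sf g}$-symmetry with $\inn{-}{-}$-antisymmetry) and then invoke rank--nullity. Your route is slightly more conceptual and avoids choosing a basis, while the paper's SVD argument buys a little extra: it gives the complete eigenvalue structure, not just the kernel bound. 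For the theorem as stated, both are equally adequate.
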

\begin{proof}
The first and second statements follow immediately from the fact
that ${\sf C}_+$ (${\sf C}_-$) are symmetric (antisymmetric) linear
operators wrt a Euclidean inner product on $\wedge^2T_pM$. To make
this more explicit and to prove the third statement, consider the
ONF \be\label{ONF bivector} {\cal B}=\{[ui]\equiv \bu\wedge
\bm_i,[jk]\equiv\bm_j\wedge\bm_k\} \ee of $\wedge^2T_pM$ induced by
the ONF $\{\bu,\bm_{i=2,...,n}\}$ of $T_pM$. Using a $(p+q)$-block
form, where $p=n-1$ and $q=\frac{(n-1)(n-2)}{2}$ ($p\leq q$ for
$n\geq 4$), the matrix representations of $\theta$ and ${\sf C}_\pm$
wrt ${\cal B}$ are
\begin{eqnarray}
[\theta]_{\cal B}=[{\sf
g}]_{\cal B}=\begin{bmatrix} -\mf{1}_p & 0 \\ 0 & \mf{1}_q
\end{bmatrix},\qquad
[{\sf C}_+]_{\cal B}=\begin{bmatrix}
S & 0 \\ 0 & T
\end{bmatrix},
 \qquad [{\sf C}_-]_{\cal B}=\begin{bmatrix}
0 & U \\ -U^t & 0
\end{bmatrix}.
\end{eqnarray}
Here $S$ and $T$ are trace-free symmetric square matrices with
components $S^{[ui]}{}_{[uj]}=-C_{uiuj}=-E_{ij}$ and
$T^{[ij]}{}_{[kl]}=C_{ijkl}$, while $U$ is a $p\times q$ matrix with
components $U^{[ui]}{}_{[jk]}=-C_{uijk}$. In the PE case, the
eigenvalues are the eigenvalues of $S$ and $T$, which are clearly
real.
%Diagonalizing $S$ and $T$ by using matrices $P\in SO(p)$ and $Q\in SO(q)$, respectively, produces a ${\sf C}$-eigenvector basis of $\wedge^2T_pM$ in the PE case ${\sf C=C_+}$.
In the PM case, we note that the matrix $U$ can be decomposed (using
the singular value decomposition) as $U=g_1Dg_2$, where $g_1$ and
$g_2$ are  $SO(p)$ and $SO(q)$ matrices, respectively, and $D$ is a
diagonal $p\times q$ matrix $D=\diag(\lambda_1,\lambda_2,...,
\lambda_p)$. Consequently, a PM Weyl bivector operator has
eigenvalues $\{0,...,0,\pm i\lambda_1,..., \pm i \lambda_p\}$, where
the number of zero-eigenvalues is at least
$N-2p=q-p=s=\frac{(n-1)(n-4)}{2}$. This proves the proposition.
%$M-2p=q-p=s=\frac{(n-1)(n-4)}{2}$. This proves the proposition.
\end{proof}

\begin{rem}\label{remark: IJM} In four dimensions, the original Petrov type classification is
equivalent with the Jordan-Segre classification of the Weyl bivector
operator (see e.g.\ \cite{Stephanibook}), where the latter is
diagonalizable iff the Petrov type is I, D or O. Referring to the
above we have $p=q=3$, the eigenvalues for $S$ and $T$ are the same
(cf.\ $\S$ \ref{subsec: Weyl EM parts}) and the conditions of the
theorem are also sufficient, i.e., if a Weyl operator is
diagonalizable and has only real (purely imaginary) eigenvalues then
it is PE (PM) wrt a certain $\bu$ (see \cite{Stephanibook,WyllVdB06}
and references therein). This can be expressed in terms of
polynomial invariants of the self-dual Weyl operator ${\sf C}_s$
acting on the 3-dimensional {\em complex} space of {\em self-dual}
bivectors: defining the quadratic and cubic invariants $I\equiv
\text{tr}({\sf C}_s^2)$ and $J\equiv \text{tr}({\sf C}_s^3)$ and the
adimensional invariant $M\equiv I^3/J^2-6$, the Weyl operator is properly PE
(PM) iff it is diagonalizable,
%~\footnote{If ${\sf C}\neq 0$ is
%diagonalizable then $(I,J)\neq (0,0)$. Conversely, the latter
%condition is satisfied by all diagonal Petrov types I and D as well
%as the the non-diagonal type II, where type II and D can only be
%distinguished by other Weyl tensor concomittants than $I$ and
%$J$~\cite{Stephanibook}. Petrov type II Weyl tensors for which the
%eigenvalues are real (`PE') or purely imaginary (`PM') were
%considered in~\cite{Hall73}, where the latter possibility was shown
%to be inconsistent in vacuum.}
$M\in\R^+\cup\{\infty\}$ and $I\in\R^+_0\,(\R^-_0)$. Here $M=0$
corresponds to Petrov type D; the Petrov type I cases were
symbolized I$(M^+)$ and I$(M^\infty)$ in the extended
Petrov-classification by Arianrhod and McIntosh, where
$M=\infty\Leftrightarrow J=0$ corresponds to Petrov type I with a
zero eigenvalue~\cite{ArianrhodMcIntosh92,McIntoshetal94}.

In dimension $n>4$, it is no longer sufficient that the eigenvalues
are real (purely imaginary) in order for the Weyl operator to be PE
(PM), even if it is diagonalizable. Counterexamples to the
sufficiency for $n=5$ are provided in \cite{Coleyetal12} in the
type D case (cf.\ also { Proposition \ref{PE_PM_types} below}).
%remark \ref{rem_D_subtypes}).
However, {\em necessary} conditions can be deduced from the fact
that, by virtue of Proposition~\ref{th_Weyl_PE}, the characteristic
equations
%\begin{equation}
%\label{general char eq}
%   \sum_{k=0}^{M}a_kx^{M-k}=0 \qquad (a_0=1)
%\end{equation}
\begin{equation}
\label{general char eq}
    \sum_{k=0}^Na_kx^{{N}-k}=0 \qquad (a_0=1)
\end{equation}
of the operator ${\sf C}$
acting on the full bivector space $\wedge^2T_pM$ are of the form
\begin{equation}\label{char eq PEPM}
\prod_{i=1}^p(x-\lambda_i)\prod_{j=1}^q(x-\mu_j)=0\qquad \mbox{or}
\qquad x^{q-p}\prod_{i=1}^p (x^2+\lambda_i^2)=0,
\end{equation}
in the PE and PM case, respectively, where the $\lambda_i$ and
$\mu_j$ are real. Define
%$A_k\equiv \text{tr}({\sf C}^k)$.
\be
 A_k\equiv \text{tr}({\sf C}^k) .
\ee

\begin{itemize}
\item In the PM case, for instance, one
has $a_{2l+1}=0\Leftrightarrow A_{2l+1}=0$
%($2l+1\leq M$),~\footnote{ That $a_{2l+1}=0$ ($2l+1\leq M$) is immediate
($2l+1\leq {N}$),
%~\footnote{That $a_{2l+1}=0$ ($2l+1\leq {N}$) is immediate
%from (\ref{general char eq}) and the second part of (\ref{char eq
%PEPM}). This is equivalent to $A_{2l+1}=0$ due to Newton's
%identities. Also, $A_{2l+1}=0$ separately follows from the second
%part of Proposition \ref{prop: PM contr}, for all $l$, but it
%suffices to consider $2l+1\leq {N}$ due to the Cayley-Hamilton
%theorem.}
and $a_{2l}=0$ for all $l>p$ ($2l\leq {N}$), where ( cf.\ e.g.\
\cite{MacDonaldbook} or \cite{ColHer11})
\[
a_{2l}=\frac{(-1)^l}{2^ll!}\text{det}
\begin{bmatrix}A_2&2&0&\cdots&0\\A_4&A_2&4&\ddots&\vdots\\A_6&A_4&A_2&\ddots&0\\\vdots&\ddots&\ddots&\ddots&2(l-1)\\A_{2l}&\cdots&A_6&A_4&A_2
\end{bmatrix}.
\]
In addition $A_{4l+2}<0<A_{4l}$ {for properly PM}.
\item In the properly PE case: $A_{2l}>0$.
\end{itemize}
In particular, a {\em nilpotent} Weyl operator is thus neither PE
nor PM. Further necessary conditions on the $A_k$'s can be derived
along the line of \cite{ColHer11}.

\end{rem}

\subsection{Null alignment properties}\label{subsec: Weyl alignment}

In four dimensions, a properly PE or PM Weyl tensor is of Petrov type D or
I~\cite{Trumper65,Haddow95}. This follows immediately from the
Weyl-Petrov classification in terms of the operator
$Q^a{}_{b}=E^a{}_{b}+iH^a{}_{b}$ on tangent space  (defined wrt
{\em any} $\bu$), which has the same Segre type and eigenvalues as
the Weyl bivector operator $\sf{C}$~\cite{Stephanibook}. Indeed, if
a non-zero Weyl tensor is PE (or PM) wrt $\bu$, i.e., if the parts
$H_{ab}$ (or $E_{ab}$) defined wrt this $\bu$ vanish, then in any
$\bu$-ONF $\{\bu,\bm_{i=2,...,n}\}$ the non-zero part $[Q^i{}_j]$ of
the representation matrix is a real symmetric matrix (or a complex unit times
such a matrix). Thus $Q^a{}_b$, whence $\sf{C}$ is diagonalizable
and the Petrov type is I or D (cf.\ remark  \ref{remark: IJM}).

Another classification of the Weyl tensor is the one based on its
Debever-Penrose principal null directions (PNDs) which, in four
dimensions, coincides with the bivector approach. In higher
dimensions, however, both approaches are highly non-equivalent (see
\cite{Coleyetal12} for a detailed verification of this in five
dimensions). The PNDs approach was worked out in \cite{Coleyetal04}
for the Weyl tensor, leading to the concept of Weyl aligned null
directions (WANDs) replacing the PNDs and being part of the {\em
(null) alignment theory} for general tensors~\cite{Milsonetal05},
succinctly revised in section \ref{subsec: null alignment}.

 In this section we deduce the possible null alignment types for
PE/PM Weyl tensors in general $n\geq 4$ dimensions, and the
uniqueness and relative position to possible (multiple) WANDs of the
vectors $\bu$ realizing the PE/PM property. We do this in a direct
way, i.e., without relying on properties of the corresponding Weyl
bivector operator.

\subsubsection{Admitted alignment types}\label{subsub:admittedtypes}

It immediately follows from  Propositions~\ref{th_minimal_ex}
and \ref{th_minimal_ODIG} that a properly PE or PM Weyl tensor is minimal and
thus of one of the null alignment types D, I or G, in any dimension
$n>4$. However, one can be more specific by giving a different proof.

To this end, the following general observation is essential. Given a
unit timelike vector $\bu$, a {\em $\bu$-adapted null frame} is a
null frame $\{\bm_0=\bl,\bm_1=\bn,\bm_{\hat{i}=3,...,n}\}$ for which
we have
\begin{equation}
 \bu=\frac{{\mbold\ell}-\bn}{\sqrt{2}}.
 \label{u_WANDs}
\end{equation}
In {\em any} such frame, the involution (\ref{u reflection}) is
represented by
\begin{equation}\label{ln interchange}
\theta:\quad{\mbold\ell}\leftrightarrow\bn,\quad \bm_{\hat i}\mapsto
\bm_{\hat i},\;\;\forall \hat i =3,...,n,
\end{equation}
and the (passive) action hereof on a tensor $S$ simply interchanges
the frame labels $0$ and $1$. This implies that in the case
$S=S_+\Leftrightarrow \theta(S)=S$ ($S=S_-\Leftrightarrow
\theta(S)=-S$) the components of $S$ in any such frame should be all
invariant (change sign). Notice that if a null vector $\bl$ is
given, satisfying the normalization condition $l^au_a=-1/\sqrt{2}$ (but it can be otherwise arbitrarily chosen),
then (\ref{u_WANDs}) should be read as the definition
$\bn=\bl-\sqrt{2}\bu=\theta(\bl)$ of the time-reflected
$\bl$,
%~\footnote{That $\theta(\bl)=\bl-\sqrt{2}\bu$ can be easily
%seen by considering (\ref{null frame corr}) and (\ref{u
%reflection}), the vectors $\pm\bm_2$ being the unit spacelike vectors
%in $\bu\wedge\bl$ orthogonal to $\bu$.}
being a null vector lying
along the second null direction of the timelike plane
$\bu\wedge\bl$.

Conversely, if the components of a tensor $S$ in a certain null
frame $\{\bl,\bn,{\bf m}_{i=3,...,n}\}$ are invariant (change sign)
under a $0\leftrightarrow 1$ interchange, then $S=S_+$ ($S=S-$) in
the orthogonal splitting wrt the unit timelike vector
(\ref{u_WANDs}).

Applied to the Weyl tensor we obtain the following.

\begin{thm}\label{prop Weyl_PE_PM} If a Weyl tensor is PE/PM wrt $\bu$ then the following component relations hold
in any $\bu$-adapted null frame
$\{\bm_0=\bl,\bm_1=\bn,\bm_{\hat{i}=3,...,n}\}$:
\begin{eqnarray}
\label{PE cond}&\text{PE}:&\quad C_{0\hat i0\hat j}=C_{1\hat i1\hat
j},\qquad C_{0\hat
i\hat j\hat k}=C_{1\hat i\hat j\hat k},\qquad C_{01\hat i\hat j}=0,\\
\label{PM cond}&\text{PM}:&\quad C_{0\hat i0\hat j}=-C_{1\hat i1\hat
j},\qquad C_{0\hat i\hat j\hat k}=-C_{1\hat i\hat j\hat k},\qquad
C_{\hat i\hat j\hat k\hat l}=0.
\end{eqnarray}
Conversely, if a null frame
$\{\bm_0=\bl,\bm_1=\bn,\bm_{\hat{i}=3,...,n}\}$ exists for which
(\ref{PE cond}), respectively, (\ref{PM cond}) are satisfied then
the Weyl tensor is PE, respectively, PM wrt
$\bu=(\bl-\bn)/\sqrt{2}$.
\end{thm}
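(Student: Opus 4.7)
The plan is to exploit the observation (\ref{ln interchange}) that in any $\bu$-adapted null frame the Cartan involution $\theta$ acts on frame labels simply by $0\leftrightarrow 1$, fixing the hatted labels. Consequently $C=C_+$ is equivalent to full componentwise invariance $C_{\alpha\beta\gamma\delta}=C_{\theta(\alpha)\theta(\beta)\theta(\gamma)\theta(\delta)}$, and $C=C_-$ to the analogous sign-flipping condition. The proof then reduces to enumerating the independent Weyl components, organised by boost weight, and checking that each behaves correctly under this swap.

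For the forward implication one applies $\theta$ index by index to each component appearing in the statement. In the PE case, $C_{0\hat i0\hat j}\mapsto C_{1\hat i1\hat j}$ and $C_{0\hat i\hat j\hat k}\mapsto C_{1\hat i\hat j\hat k}$ give the first two equalities, while $C_{01\hat i\hat j}\mapsto C_{10\hat i\hat j}=-C_{01\hat i\hat j}$ (by Weyl antisymmetry) combined with $\theta(C)=C$ forces $C_{01\hat i\hat j}=0$. In the PM case the same two manipulations yield the first two relations with a minus sign, while the purely spatial block $C_{\hat i\hat j\hat k\hat l}$ is $\theta$-fixed and so must vanish because $\theta(C)=-C$.

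The converse is the real content: one must verify that the listed relations, together with the Weyl symmetries (\ref{symm1}) and the tracefree identity (\ref{tracefree}), force \emph{all} remaining components to be $\theta$-(anti)invariant. Working again by boost weight, the b.w.\ $\pm 1$ components not covered by (ii) are $C_{010\hat i}$ and $C_{011\hat i}$; the tracefree contractions $C^c{}_{0c\hat i}=0$ and $C^c{}_{1c\hat i}=0$ yield
\begin{equation*}
C_{010\hat i}=\sum_{\hat k}C_{\hat k0\hat k\hat i},\qquad C_{011\hat i}=-\sum_{\hat k}C_{\hat k1\hat k\hat i},
\end{equation*}
and hypothesis (ii) relates the two right-hand sides so as to give $C_{010\hat i}=\pm C_{011\hat i}$, exactly the $\theta$-(anti)invariance required. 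For the b.w.\ $0$ cross-term $C_{0\hat i1\hat j}$ the first Bianchi identity gives $C_{0\hat i1\hat j}-C_{0\hat j1\hat i}=C_{01\hat i\hat j}$. In the PE case, condition (iii) then yields symmetry in $(\hat i,\hat j)$, equivalent to $\theta$-invariance since $\theta(C_{0\hat i1\hat j})=C_{1\hat i0\hat j}=C_{0\hat j1\hat i}$. In the PM case one combines the Bianchi relation with the tracefree identity $C^c{}_{\hat ic\hat j}=0$, which via the PM version of (iii) collapses to $C_{0\hat i1\hat j}+C_{0\hat j1\hat i}=0$; the trace $\hat i=\hat j$ of this also delivers $C_{0101}=0$.

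The main obstacle is the b.w.\ $0$ layer: the cross component $C_{0\hat i1\hat j}$ is not addressed directly by any listed condition, and its correct symmetry behaviour must be extracted by combining the first Bianchi identity with a suitable tracefree contraction (and, in the PM case, with condition (iii) to kill the $\hat i\hat j\hat k\hat l$-contribution). Once this b.w.\ $0$ analysis is carried out, the remaining verifications amount to sign bookkeeping through the Weyl antisymmetries.
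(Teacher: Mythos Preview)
Your proposal is correct and follows essentially the same approach as the paper: both exploit the observation that in a $\bu$-adapted null frame the involution $\theta$ acts as the label swap $0\leftrightarrow 1$, and both reduce the converse to checking that the ``dependent'' Weyl components (those carrying a $01$ pair in their indices) are determined by the listed independent ones via the tracefree and first Bianchi identities. The paper simply states the five reducing identities $C_{010\hat i}=C_{0\hat k\hat i}{}^{\hat k}$, $C_{0101}=-\tfrac12 C_{\hat i\hat j}{}^{\hat i\hat j}$, $2C_{0(\hat i\hat j)1}=C_{\hat i\hat k\hat j}{}^{\hat k}$, $2C_{0[\hat i\hat j]1}=-C_{01\hat i\hat j}$, $C_{101\hat i}=C_{1\hat k\hat i}{}^{\hat k}$ up front and then observes that among the independent b.w.\ $0$ components, $C_{\hat i\hat j\hat k\hat l}$ is swap-invariant while $C_{01\hat i\hat j}$ changes sign; you instead derive each of these reductions in turn. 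One small slip: your displayed relation should read $C_{010\hat i}=\mp C_{011\hat i}$ (with the upper sign for PE), since $\theta(C_{010\hat i})=C_{101\hat i}=-C_{011\hat i}$; the conclusion you draw from it is nevertheless the correct one.
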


\begin{proof} Due to the properties (\ref{symm1}) and (\ref{tracefree}), the identities \cite{Coleyetal04}
\begin{eqnarray}
C_{010\hat i}=C_{0\hat k\hat i}{}^{\hat k}, \quad
C_{0101}=
%C_{0\hat{i}1}{}^{\hat{i}}=
-\tfrac{1}{2}C_{\hat{i}\hat{j}}{}^{\hat{i}\hat{j}},\quad
2C_{0(\hat{i}\hat{j})1}=C_{\hat{i}\hat{k}\hat{j}}{}^{\hat k},\qquad
2C_{0[\hat{i}\hat{j}]1}=- C_{01\hat{i}\hat{j}},\quad C_{101\hat
i}=C_{1\hat k\hat i}{}^{\hat k}
\end{eqnarray}
hold in any null frame
$\{\bm_0=\bl,\bm_1=\bn,\bm_{\hat{i}=3,...,n}\}$, and the components
of a certain b.w.\ are fully determined by the following ones:
\begin{equation}
\text{b.w.\ 2}:\; C_{0\hat i 0\hat j},\quad \text{b.w.\ 1}:\;
C_{0\hat i \hat j\hat k}, \quad \text{b.w.\ 0}:\; C_{01\hat i \hat
j},\,C_{\hat i\hat j\hat k\hat l}, \quad \text{b.w.\ -1}:\; C_{1\hat
i \hat j\hat k},\quad \text{b.w.\ -2}:\; C_{1\hat i 1\hat j}.
\end{equation}
The thesis follows from the general considerations above and by
observing that under $0\leftrightarrow 1$ the components $C_{\hat
i\hat j\hat k\hat l}$ are invariant while $C_{01\hat i \hat j}$
change sign.
\end{proof}

As a simple consequence we have
\begin{thm}
\label{PE_PM_types}
A Weyl tensor which is properly PE or PM wrt a certain $\bu$ is of alignment
type D, I$_i$ or G. In the type I$_i$ and D cases, the vector $\bu$
``pairs up'' the space of WANDs, in the sense that the second null
direction of the timelike plane spanned by $\bu$ and any WAND is
also a WAND with the same multiplicity.
Furthermore, a type D Weyl tensor is PE iff it is type D(d), and PM
iff it is type D(abc).
\end{thm}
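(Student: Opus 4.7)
The plan is to exploit Theorem~\ref{prop Weyl_PE_PM}, which gives the explicit component form of the PE and PM conditions in any $\bu$-adapted null frame. The key observation is that every WAND $\bl$ may be rescaled positively so that $l^a u_a = -1/\sqrt{2}$; setting $\bn = \bl - \sqrt{2}\,\bu = \theta(\bl)$ then produces a $\bu$-adapted null frame $\{\bl, \bn, \bm_{\hat i}\}$ in which (\ref{PE cond}) or (\ref{PM cond}) applies, with $\bl$ as chosen. This link between ``any WAND'' and ``one vector of a $\bu$-adapted null frame'' is what makes Theorem~\ref{prop Weyl_PE_PM} directly usable for classifying the alignment type.

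For the admitted alignment types and the pairing of WANDs, I would argue as follows. If the (properly) PE/PM Weyl tensor admits no WAND, the alignment type is G and nothing more is to be done. Otherwise, pick a WAND $\bl$. Since $\bl$ being a WAND reads $C_{0\hat i 0\hat j} = 0$ in the adapted null frame, the PE relation $C_{0\hat i 0\hat j} = C_{1\hat i 1\hat j}$ (or the PM variant with an extra sign) immediately gives $C_{1\hat i 1\hat j} = 0$, so $\bn$ is also a WAND. If moreover $\bl$ has multiplicity $\geq 2$, i.e., $C_{0\hat i\hat j\hat k} = 0$, the analogous relation at boost weight $\pm 1$ yields $C_{1\hat i\hat j\hat k} = 0$, so $\bn$ is at least a double WAND too: this forces the alignment type to be D and rules out strict type II. A WAND multiplicity $\geq 3$ would additionally require $(C)_0 = 0$; combined with the vanishing of $(C)_{\pm 1}$ and $(C)_{\pm 2}$ enforced by the pairing, the Weyl tensor would be identically zero, contradicting proper PE/PM. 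This simultaneously excludes types III and N, leaving only G, I$_i$ and D.

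For the refined type D characterisation, only boost-weight zero components can survive, namely $C_{0101}$, $C_{0\hat i 1\hat j}$, $C_{01\hat i\hat j}$ and $C_{\hat i\hat j\hat k\hat l}$. Under the $0\leftrightarrow 1$ swap representing $\theta$, the scalar $C_{0101}$, the symmetric part of $C_{0\hat i 1\hat j}$, and the spatial Weyl-like tensor $C_{\hat i\hat j\hat k\hat l}$ are invariant, whereas $C_{01\hat i\hat j}$ (equivalent up to a factor to the antisymmetric part of $C_{0\hat i 1\hat j}$) changes sign. Hence in type D, (\ref{PE cond}) collapses to the single constraint $C_{01\hat i\hat j} = 0$ --- the subtype D(d) --- while (\ref{PM cond}) forces the three $\theta$-invariant pieces to vanish --- the subtype D(abc). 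The converses are immediate since in type D the higher-boost-weight parts of (\ref{PE cond})/(\ref{PM cond}) hold automatically. The main difficulty I anticipate is purely one of bookkeeping: matching the irreducible $\theta$-invariant and anti-invariant pieces to the conventional labels (a), (b), (c), (d), which is a matter of verifying conventions rather than substantive new work.
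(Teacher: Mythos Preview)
Your proposal is correct and follows essentially the same approach as the paper's own proof: both use the component relations (\ref{PE cond}) and (\ref{PM cond}) from Proposition~\ref{prop Weyl_PE_PM} to show that the $0\leftrightarrow 1$ symmetry forces WANDs to come in pairs of equal multiplicity (ruling out types II, III, N and yielding the pairing statement), and then read off the type D subtype from which boost-weight zero pieces are $\theta$-invariant versus anti-invariant. Your write-up is simply more explicit about the bookkeeping, but the underlying argument is identical.
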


\begin{proof}
From (\ref{PE cond}) and (\ref{PM cond}) it follows that if in a
$\bu$-adapted null frame all b.w.\ +2 components are zero, then also
all b.w.\ -2 components, and similarly for b.w.\ +1/-1 components;
if the b.w.\ 0 components additionally vanished then the Weyl tensor
would be zero (type O). Hence, WANDs of a properly PE/PM Weyl tensor (if
there exist any) must go in pairs: if $\bl$ spans a WAND then so
does $\bn$ and with the same multiplicity,
which is either 1 (type I$_i$) or 2 (type D). This proves the first
two statements. The last one follows from these considerations
and the definition of type D(d) and D(abc) (see section \ref{subsec: null alignment}).
\end{proof}

\begin{rem}
Proposition~\ref{prop Weyl_PE_PM} can be considered as an extension
of the observation in $n=4$ dimensions that the Weyl tensor is PE/PM
iff in a certain Newman-Penrose null tetrad the relations
\begin{equation}\label{PE/PM 4D conds}
\Psi_0=c\overline{\Psi}_4,\qquad \Psi_1=-c\overline{\Psi}_3,\qquad
\Psi_2=c\overline{\Psi}_2
\end{equation}
hold, where $c=+1$ in the PE and $c=-1$ in the PM case (see, e.g.,
\cite{LozCarm02}). For a Petrov type I Weyl tensor one can always
take a Weyl canonical {\em transversal} ($\Psi_0=\Psi_4\neq
0,\,\Psi_1=\Psi_3=0$) or {\em longitudinal}
($\Psi_0=\Psi_4=0,\,\Psi_1=\Psi_3\neq 0$) frame and add these to the
PE/PM conditions (\ref{PE/PM 4D conds}). Regarding type D, the last
part of Proposition \ref{PE_PM_types} is an extension of the
four-dimensional Theorem~4 of \cite{McIntoshetal94}, stating that a
Petrov type D Weyl tensor is PE (PM) iff in a canonical null frame
($\Psi_0=\Psi_1=\Psi_3=\Psi_4=0$) the scalar $\Psi_2\neq 0$ is real
(purely imaginary). Such simplifying choices have been proved
crucial for deducing classification or uniqueness results for
four-dimensional PE or PM spacetimes (see, e.g., \cite{VdBWyll06}).
\end{rem}

\begin{rem}
Spacetimes of type N (such as vacuum type N \pp waves) are usually
understood as describing transverse gravitational waves. The
interpretation of type N fields as ``radiative'' is supported, also
in higher dimensions, by the peeling behavior of asymptotically flat
spacetimes \cite{GodRea12} (in spite of significant differences with
respect to the four dimensional case, see \cite{GodRea12} and
references therein). From Proposition~\ref{prop Weyl_PE_PM} it thus
also follows that a spacetime containing {\em gravitational waves}
necessarily contains both an electric and a magnetic field
component. This resembles a well-known similar property of
electromagnetic waves, and in four dimensions was discussed, e.g.,
in \cite{Matte53,Bel62}. Conversely, we shall show below
(section~\ref{subsubsec_sheartwistfree}) that {\em static} fields
(and thus, in particular, the Coulomb-like field of the
Schwarzschild solution) are PE.
\end{rem}

\subsubsection{Uniqueness of $\bu$}\label{subsub: uniqueness}

The following facts are well known in $n=4$ dimensions (see, e.g.\
\cite{Barnes04,WyllVdB06}):
%; this also follows partly from the
%explanation at the beginning of this section):
\begin{itemize}
\item if a PE/PM Weyl tensor is of Petrov type D, it is
PE/PM precisely wrt any $\bu$ lying in the plane ${\cal L}_2$
spanned by the two double WANDs (then also called principal null
directions (PNDs)~\cite{Stephanibook,Coleyetal04});
\item if a PE/PM Weyl tensor is of Petrov type I, then it is PE/PM precisely wrt the timelike Weyl principal
vector, which is unique up to sign;
\item a Weyl tensor can never
be properly PE and PM at the same time, even wrt different timelike
directions.
\end{itemize}

We shall see (Proposition \ref{prop uniqueness}) that these
properties suitably generalize to any dimension, thus giving further
support to the soundness of our PE/PM definitions. Recall that for
$n>4$, a type D Weyl tensor may have more than two double WANDs
(see, e.g., \cite{GodRea09,Durkee09,DurRea09,OrtPraPra11} for
examples). In \cite{WyllemanWANDs} it is shown that for general $n$,
the set of multiple WANDs of a type D Weyl tensor is homeomorphic to
a sphere ${\cal S}_k$, the dimension $k$ being at most $n-4$. This
is the sphere of null directions of a (proper) Lorentzian subspace
${\cal L}_{k+2}$ (the latter being defined as the space spanned by
all multiple WANDs) of the full space ${\cal L}_n$ (generated by the
full sphere of null directions ${\cal S}_{n-2}$.
%, and  coinciding with the full tangent space).
%~\footnote{When $k=0$ the sphere ${\cal S}_0$ reduces to two points,
%representing the unique pair of PNDs generating ${\cal L}_2$.
%This is in fact the generic situation, in
%any dimension, and the only possibility for $n=4$. If $k>0$ there is
%thus a continuous infinity of multiple WANDs, automatically.  In
%\cite{WyllemanWANDs} necessary and sufficient conditions are deduced
%for a $n=5$ type D Weyl tensor to have a circle of double WANDs
%${\cal S}_1$, and in fact all possible sets of WANDs (single or
%double) are fully classified.}
However, regarding types I$_i$ and
$G$, no analog of the concept of Weyl principal vector is presently
known.

Hence, it is natural to ask whether a PE or PM Weyl tensor of type I$_i$ or G may admit a non-unique $\bu$ when $n>4$. However, we shall show that the
answer is negative.

In order to prove our results we will be considering two timelike
directions spanned by $\bu$ and $\bu'$, where $u^au'_a<0,\,\bu'\neq
\bu$. These vectors define two observers in relative motion in the
timelike plane $\bu\wedge\bu'$. Suppose that $\bl$ and $\bl'$ are
two parallel null vectors spanning the first null direction of this
plane, while the parallel null vectors $\bn$ and $\bn'$ span the
second one, such that
%~\footnote{Conversely, we have
%$\bl=\frac{1}{\sqrt{2}}\left(\bu+\frac{\bu'+(u'_au^a)\bu}{\sqrt{(u'_au^a)^2-1}}\right)$
%and $\bn=\frac{1}{\sqrt{2}}\left(-\bu+\frac{\bu'+(u'_au^a)\bu}{\sqrt{(u'_au^a)^2-1}}\right)$.
%These relations can be obtained from (\ref{null frame corr}) by
%requiring $\bm_2$ to be a linear combination of $\bu$ and $\bu'$,
%with the correct normalizations.}
\begin{equation}\label{two us}
\bu=\frac{\bl-\bn}{\sqrt{2}},\qquad \bu'=\frac{\bl'-\bn'}{\sqrt{2}}.
\end{equation}
Then $\bu'=b_\lambda(\bu)$ for a certain
positive Lorentz boost (\ref{boost}), $\lambda\neq 0$, which
transforms a $\bu$-adapted null frame
${\cal F}=\{\bm_0=\bl,\bm_1=\bn,\bm_{\hat{i}=3,\ldots,n}\}$ into the
$\bu'$-adapted null frame
%${\cal F}'=\{\bm_{0'}=\bl',\bm_{1'}=\bn',\bm_{\hat{i}}\}$.
\begin{equation}\label{bu' boost image}
%\bu'=b_\lambda(\bu),\qquad
{\cal F}'=b_\lambda({\cal F})=\{\bm_{0'}=\bl'=e^{\lambda}\bl,\bm_{1'}=\bn'=e^{-\lambda}\bn,\bm_{\hat{i}=3,...,n}\}.
%,\qquad
%b_\lambda:\;\bl\mapsto \bl'\equiv e^{\lambda}\bl,\;\bn\mapsto
%\bn'\equiv e^{-\lambda}\bn,\;\bm_{\hat i}\mapsto \bm_{\hat i}.
\end{equation}

\begin{thm}\label{prop uniqueness} A Weyl tensor $C$ at a point of a $n$-dimensional spacetime cannot be properly PE and PM at the
same time, even wrt two different timelike directions. If $C$
is properly PE or PM, then it is PE/PM precisely wrt any
$\bu$ belonging to the space ${\cal L}_{k+2}$ spanned by all multiple WANDs in the type D case, and wrt a unique $\bu$
(up to sign) in the type I$_i$ and $G$ cases.
\end{thm}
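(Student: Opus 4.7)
The plan is to tackle the three assertions in sequence, leaning on the results already established for the Weyl bivector operator and on the behaviour of the PE/PM conditions under Lorentz boosts.

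For the simultaneous PE--PM exclusion, I would invoke Proposition~\ref{th_Weyl_PE} together with the fact that the Weyl bivector operator ${\sf C}$ is an intrinsic object (independent of any choice of $\bu$). A properly PE character wrt some $\bu$ would force the spectrum of ${\sf C}$ to be real, whereas a properly PM character wrt some $\bu'$ would force it to be purely imaginary; simultaneously satisfying both would make ${\sf C}$ nilpotent. Remark~\ref{remark: IJM} explicitly rules this out, as the positivity of $A_{2l}$ (PE) or of $A_{4l}$ (PM) would then fail, yielding the desired contradiction.

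For the uniqueness dichotomy, suppose $C$ is PE (the PM case being analogous) wrt two unit timelike vectors with $\bu'\neq\pm\bu$. Then $\bu'=b_\lambda(\bu)$ for some $\lambda\neq 0$ in the timelike plane $\Pi=\bu\wedge\bu'$, and a $\bu$-adapted null frame ${\cal F}=\{\bl,\bn,\bm_{\hat i}\}$ with $\bl,\bn\in\Pi$ is mapped by $b_\lambda$ to the $\bu'$-adapted one ${\cal F}'=\{e^\lambda\bl,e^{-\lambda}\bn,\bm_{\hat i}\}$ (cf.\ (\ref{bu' boost image})). Imposing the PE identities (\ref{PE cond}) in both frames and comparing (namely $C_{0\hat i0\hat j}=C_{1\hat i1\hat j}$ together with $e^{4\lambda}C_{0\hat i0\hat j}=C_{1\hat i1\hat j}$, and similarly for b.w.\ $\pm 1$) forces all b.w.\ $\pm 2$ and $\pm 1$ components in ${\cal F}$ to vanish, so $\bl,\bn$ are double WANDs and $C$ is of type D with $\bu,\bu'\in\Pi\subseteq{\cal L}_{k+2}$. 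This immediately settles part 3 (in types I$_i$ and G no pair of multiple WANDs exists, forcing $\bu'=\pm\bu$), and it also supplies the inclusion $\{\bu\text{ realising PE (PM)}\}\subseteq{\cal L}_{k+2}$ in the type D case.

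For the reverse inclusion in part 2, let $\bu\in{\cal L}_{k+2}$ be unit timelike on the same time-orientation sheet as the reference $\bu_0$ (the opposite sheet is handled by $\bu\mapsto-\bu$ invariance of PE/PM). Decompose $\bu=\cosh(\lambda)\bu_0+\sinh(\lambda)\mathbf{e}$ for some unit spacelike $\mathbf{e}\in{\cal L}_{k+2}$ orthogonal to $\bu_0$, and introduce the null vectors $\bl_0=(\bu_0+\mathbf{e})/\sqrt{2}$ and $\bn_0=(-\bu_0+\mathbf{e})/\sqrt{2}$, which lie in ${\cal L}_{k+2}$ and are therefore multiple WANDs. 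The boost $\bl_0\mapsto e^\lambda\bl_0$, $\bn_0\mapsto e^{-\lambda}\bn_0$ leaves the transverse legs $\bm_{\hat i}$ untouched and carries the $\bu_0$-adapted null frame into a $\bu$-adapted one. Since $C=(C)_0$ for type D and b.w.\ $0$ components are boost-invariant, the defining PE condition $C_{01\hat i\hat j}=0$ (respectively, the PM condition $C_{\hat i\hat j\hat k\hat l}=0$), which holds in the $\bu_0$-frame by Proposition~\ref{PE_PM_types} (type D(d), resp.\ D(abc)), is inherited by the $\bu$-frame, yielding PE (PM) wrt $\bu$. The main subtlety is verifying that the parametrization $(\lambda,\mathbf{e})\mapsto\cosh(\lambda)\bu_0+\sinh(\lambda)\mathbf{e}$, with $\mathbf{e}$ ranging over unit spacelike vectors of ${\cal L}_{k+2}\cap\bu_0^\perp$, sweeps out the whole timelike unit hyperboloid of ${\cal L}_{k+2}$, which is a routine fact from $(k+2)$-dimensional Minkowski geometry.
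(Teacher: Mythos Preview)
Your argument is correct and largely parallels the paper's, with one genuinely different ingredient and one organizational slip.

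\textbf{Different route for the PE/PM exclusion.} The paper runs the boost comparison uniformly over all four sign combinations PE/PE, PM/PM, PE/PM, PM/PE, obtaining type~D in every case; in the mixed case the conclusion $C=0$ then follows (implicitly) from the conjunction of type D(d) and type D(abc). You instead invoke the Weyl bivector operator: properly PE forces $A_2=\mathrm{tr}({\sf C}^2)>0$ while properly PM forces $A_2<0$ (Remark~\ref{remark: IJM}), an immediate contradiction. This is a clean independent check using the machinery of \S\ref{subsec: Weyl bivector}; note, though, that the detour through ``nilpotent'' is unnecessary --- the sign clash of $A_2$ alone (or: diagonalizable with spectrum $\subset\mathbb{R}\cap i\mathbb{R}=\{0\}$, hence ${\sf C}=0$) already finishes it. Since you need the boost argument anyway for PE/PE and PM/PM, the paper's unified treatment is more economical, but yours has the virtue of exhibiting an invariant obstruction.

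\textbf{Forward inclusion in type D.} Your claim that the two-$\bu$ boost argument ``also supplies the inclusion $\{\bu\text{ realising PE}\}\subseteq{\cal L}_{k+2}$'' is a slight overreach: that argument only places a \emph{pair} of PE-realising vectors in ${\cal L}_{k+2}$, not an arbitrary single one. The paper obtains the forward inclusion directly from the WAND-pairing statement in Proposition~\ref{PE_PM_types}. In your layout the forward inclusion becomes available only \emph{after} the reverse inclusion is established: once you know every unit timelike $\bu'\in{\cal L}_{k+2}$ realises PE, pick any such $\bu'\neq\pm\bu$ and run your boost argument to conclude $\bu\in{\cal L}_{k+2}$. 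So either swap the order of your last two steps, or cite Proposition~\ref{PE_PM_types} at that point. Your reverse-inclusion argument itself (boost-invariance of b.w.\ $0$ components between canonical frames in ${\cal L}_{k+2}$, plus frame-independence of the D(d)/D(abc) subtypes) is exactly the paper's.
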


\begin{proof} Suppose that $C$ is PE/PM wrt to different timelike
directions, spanned by $\bu$ and $\bu'$ (where we take $u^au'_a<0$
and consider all 4 possibilities PE/PE, PM/PM, PE/PM and PM/PE).
Define $\bu$- and $\bu'$-adapted null frames ${\cal F}$ and ${\cal F}'$ as  above. By the PE/PM
assumptions we have
\begin{eqnarray}\label{PEPM cond bubu'}
C_{0\hat i0\hat j}=\pm C_{1\hat i1\hat j},\qquad C_{0\hat i\hat
j\hat k}=\pm C_{1\hat i\hat j\hat k},\qquad C_{0'\hat i0'\hat j}=\pm
C_{1'\hat i1'\hat j},\qquad C_{0'\hat i\hat j\hat k}=\pm C_{1'\hat
i\hat j\hat k}.
\end{eqnarray}
However, by (\ref{bu' boost image}) and the definition of boost
weight we also have
\begin{eqnarray}\label{rel bw comps}
C_{0'\hat i0'\hat j}=e^{2\lambda}C_{0\hat i0\hat j}, \qquad
C_{1'\hat i1'\hat j}=e^{-2\lambda}C_{1\hat i1\hat j}, \qquad
C_{0'\hat i\hat j\hat k}=e^{\lambda} C_{0\hat i\hat j\hat k}, \qquad
C_{1'\hat i\hat j\hat k}=e^{-\lambda} C_{1\hat i\hat j\hat k},
\end{eqnarray}
By comparison of (\ref{PEPM cond bubu'}) and (\ref{rel bw
comps}) and the fact that $e^\lambda\neq 1$ we immediately obtain
\begin{equation}
 C_{0\hat i0\hat j}=0=C_{1\hat i1\hat j} , \qquad C_{0\hat i\hat j\hat k}=0=C_{1\hat i\hat j\hat k},
\end{equation}
i.e., the type D condition is fulfilled relative to $\bl$ and $\bn$,
which thus span double WANDs. This already proves uniqueness of the
$\bu$-direction in the type I$_i$ and $G$ cases.

Next, suppose that
$C$ is of type D and PE/PM wrt $\bu$. By the second sentence in
Proposition \ref{PE_PM_types}, such a $\bu$ necessarily lies in a
plane of double WANDs and thus in ${\cal L}_{k+2}$. Conversely,
consider any other timelike direction in ${\cal L}_{k+2}$, spanned
by a vector $\bu'$ ($u^au'_a<0,\,\bu'\neq \bu$). Then, by definition
of the vector space ${\cal L}_{k+2}$, the null directions of the
timelike plane $\bu\wedge \bu'$ are double WANDs. Hence, defining
again $\bu$- and $\bu'$-adapted null frames ${\cal F}$ and ${\cal F}'$ as above, the only non-zero Weyl
components in the ${\cal F}$-frame are comprised in the b.w.\ 0 components
which are invariant or change sign under (\ref{u reflection}),
namely $C_{\hat i\hat j\hat k\hat l}$ (PE case) or $C_{01\hat i\hat
j}$ (PM case). Since the boost $b_\lambda$ leaves these components
invariant (by definition of boost weight), the same holds in the
${\cal F}'$-frame, and thus the Weyl tensor is also PE/PM also wrt $\bu'$.
\end{proof}

\begin{rem}\label{rem: uniqueness} The proof of this proposition can be readily generalized
to arbitrary tensors $S$. We notice that if $S$ is of type D (cf.\
$\S$ \ref{subsec: minimal alignment}) then the set of null directions along which the boost order of $S$ is zero is again homeomorphic to a sphere ${\cal
S}_k$ generating a Lorentzian space
${\cal L}_{k+2}$~\cite{Wyllemancomment}. We obtain that {\em any tensor
$S\neq 0$ cannot be $S_+$ and $S_-$ at the same time, even wrt
two different timelike directions. If $S=S_{\pm}$ wrt a certain
$\bu$, then either $S$ is {\em not} of type II or more
special, in which case $S=S_{\pm}$ is realized by a unique timelike direction, or $S$ is of type D, in which case  $S=S_{\pm}$  is realized by any $\bu\in{\cal
L}_{k+2}$.}
\end{rem}

\begin{rem}

\label{rem:CtypeD_PE}

More specifically for a type D Weyl tensor $C$, it also follows from
the results of \cite{WyllemanWANDs} that {\em if $C$ has more than
two double WANDs (i.e., we have $k\geq 1$ for the dimension of
${\cal S}_k$), then $C=C_+$ wrt any $\bu$ lying in ${\cal L}_{k+2}$,
i.e., $C$ is PE (type D(d)).} Let us emphasize once more that in
this case the PE property is realized precisely by any $\bu\in{\cal
L}_{k+2}$ (i.e., by any $\bu$ lying in any plane spanned by multiple
WANDs, and by no other timelike vectors); hence, since $k\leq n-3$
for any $n$ \cite{WyllemanWANDs}, {\em a Weyl tensor can never be PE
wrt {\em all} timelike directions in ${\cal L}_n$}.  By
contraposition, we have that a type D spacetime that is not PE
admits exactly two multiple WANDs. This is true, in particular, for
a type D PM Weyl tensor, which is thus PM wrt all timelike
directions in the 2-plane $\bl\wedge\bn$, and {\em only} wrt those
(i.e., $k=0$ for PM Weyl in Proposition~\ref{prop uniqueness}).

\begin{comment}
Referring to the previous remark, but unlike the situation there,
such properties do {\em not} generalize to {\em arbitrary} tensors.
For example, for type D (i.e., spacelike) vectors we have ${\bf v}={\bf v}_+$ wrt
any $\bu\bot\bv$ (whence $k=n-3$). Next, considering
arbitrary type D rank 2 tensors $T_{ab}$, one can prove that $k\geq
1$ iff $T_{ab}$ is of the form $t g^{(3)}_{ab}\oplus
T^{(n-3)}_{ab}$, where $g^{(3)}_{ab}$ is the metric of a
3-dimensional Lorentzian subspace $V\preceq {\cal L}_{k+2}$ and
$T^{(n-3)}_{ab}$ lives in $V^\bot$. Thus $k\geq 1\Rightarrow S=S_+$
also holds for rank 1 and 2 tensors. However, if we write
$\epsilon^{(3)}_{abc}$ for the
completely antisymmetric tensor of $V$, any rank 3 tensor of the
form $T_{abc}=t\epsilon^{(3)}_{abc}+g^{(3)}_{ab}v_c$, where $t\in\R$
and $v\in V^\bot$, is a type D tensor for which $k=1$ (any null
direction of $V$ being aligned of boost order 0) but ${T}\neq
{T}_\pm$. Finally notice that the metric tensor $\bg=\bg^{(n)}$
itself and the completely antisymmetric tensor
$\boldsymbol{\epsilon}=\boldsymbol{\epsilon}^{(n)}$ are both of type
D, every null direction being aligned of boost order 0 ($k=n-2$),
but wrt any $\bu$ one has
${\epsilon}={\epsilon}_-$ in contrast with
$\bg=\bg_+$.
\end{comment}

\end{rem}

\begin{rem} For PE or PM type I$_i$ Weyl tensors, the second statement of Proposition
\ref{PE_PM_types} becomes particularly meaningful when combined with
the $\bu$-uniqueness result:
any single WAND is associated to exactly one other single WAND under
the uniquely defined time-reflection $\theta$, the relation being
symmetric and where (\ref{u_WANDs}) should be read as
$\sqrt{2}\bu=\bl-\theta(\bl)=\theta(\theta(\bl))-\theta(\bl)$. This
is exemplified clearly, e.g., by the four single WANDs of static
black rings \cite{PraPra05} (which are PE, see below). In $n=4$ dimensions a Petrov
type I spacetime has always four PNDs, and it was known that these
span a 3-dimensional vector space in the PE and PM
cases~\cite{Trumper65,Narain70,McIntoshetal94}; this is now a simple
consequence of the ``pairing'' property (second statement of
Proposition \ref{PE_PM_types}).
\end{rem}

\subsection{PE spacetimes}
\label{subsec: PE spacetimes}

Large classes of PE spacetimes exist.
%, while PM spacetimes are most elusive. In the present section we treat the PE case, while the
%PM case is discussed in the next section.
It is not our purpose to deduce classifications of, for instance, PE
%or PM
Einstein spacetimes here;
%which satisfy additional geometric properties, such as PE
%Einstein spacetimes.
even in four dimensions this is a very difficult task which is still
far from completion.
%; in particular, four-dimensional PE Einstein
%spacetimes have not been fully classified yet.
Instead, we mention generic conditions which imply that the
spacetimes in question are PE wrt some $\bu$. These generic
conditions hold in arbitrary dimensions and often generalize known
ones in four dimensions. Hence, this again supports the soundness of
the Weyl PE definition, cf.\ Sec.\ \ref{subsub: uniqueness}.
Evidently, all examples remain PE, with the same Weyl alignment
type, when subjected to a conformal transformation (this will be
important in section~\ref{PE_PM_Riemann}).

\subsubsection{Spacetimes with a shear-free normal $\bu$, static metrics and warps with
a one-dimensional timelike factor}

\label{subsubsec_sheartwistfree}

Given a unit timelike vector field $\bu$, we refer to (\ref{def
hab})-(\ref{uab}) of Appendix~\ref{app_congruences} for the usual
definitions of the kinematic quantities of $\bu$. In particular, a
vector field $\bu$ and the timelike congruence of curves it
generates, are called {\em shear-free} if $\sigma_{ab}=0$, and {\em
normal} (or {\em non-rotating} or {\em twist-free} or {\em
hypersurface-orthogonal}) if $\omega_{ab}=0$. We have

%Eq.\ (\ref{Weyl_u}) gives the magnetic part of the Weyl tensor in
%terms of the kinematic quantities. As a consequence hereof we
%immediately get:

\begin{prop}
 \label{prop_shear_twist_free}
All spacetimes admitting a shear-free, normal
%~\footnote{Equivalently: ``rotation-free'' or ``twist-free'' or ``hypersurface-orthogonal''.},
unit timelike vector field $\bu$
%($\sigma_{ab}=\omega_{ab}=0$)
are PE wrt $\bu$. These are precisely the spacetimes which admit a
line element of the form
\begin{equation}
    \label{shearfree normal line element}
    \d s^2=-V^2(t,x^\gamma)\d t^2 + P^2(t,x^\gamma)\xi_{\alpha\beta}(x^\gamma)\d x^\alpha \d x^\beta.
\end{equation}
In these coordinates we have $\bu=\pa_t/V$, and the remaining kinematic
quantities are given by
\begin{equation}\label{kinem quant shearfree}
\tilde\Theta=\frac{1}{V}\, \partial_t \ln P,\qquad \dot{u}_\alpha=
\pa_\alpha \ln V.
\end{equation}
\end{prop}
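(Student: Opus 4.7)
My plan is to treat the three claims -- the metric characterization, the kinematic formulas, and the PE property -- in sequence.

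For the line element, I would first apply Frobenius' theorem: $\omega_{ab}=0$ is equivalent to $u_{[a}\nabla_b u_{c]}=0$, yielding scalars $V,t$ with $u_a=-V\pa_a t$. Taking $t$ as a coordinate and the transverse $x^\alpha$'s as labels on the leaves $t=\text{const}$ kills the cross terms, giving $ds^2=-V^2dt^2+g_{\alpha\beta}(t,x^\gamma)dx^\alpha dx^\beta$ with $\bu=\pa_t/V$. A direct Christoffel computation then yields $\theta_{\alpha\beta}=\tfrac{1}{2V}\pa_t g_{\alpha\beta}$, so $\sigma_{ab}=0$ forces $\pa_t g_{\alpha\beta}=2f(t,x^\gamma)g_{\alpha\beta}$, integrating to $g_{\alpha\beta}=P^2\xi_{\alpha\beta}$. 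The converse is immediate: any line element of the stated form admits $\bu=\pa_t/V$ as a shear-free normal unit timelike field. With the metric in this form, tracing $\theta_{\alpha\beta}$ gives $\Theta=(n-1)V^{-1}\pa_t\ln P$, hence $\tilde\Theta=V^{-1}\pa_t\ln P$, while $u_a=-V\delta_a^t$ substituted into $\dot u_a=u^b\nabla_b u_a$ produces $\dot u_\alpha=\pa_\alpha\ln V$ after the routine Christoffel calculation.

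For the PE property, my idea is to use conformal invariance to reduce to the geodesic case. Under $\hat g_{ab}=V^{-2}g_{ab}$ the metric becomes $-dt^2+(P/V)^2\xi_{\alpha\beta}dx^\alpha dx^\beta$, and $\hat\bu=V\bu=\pa_t$ is geodesic and still shear-free and hypersurface-orthogonal. Since $\hat C^a{}_{bcd}=C^a{}_{bcd}$ and the ONF vectors rescale homogeneously, the magnetic frame components $C_{uijk}$ rescale homogeneously as well, so PE-ness is conformally invariant and it suffices to treat $V=1$. In that case the Codazzi identity on a slice $t=\text{const}$ reads $R_{uijk}=D_jK_{ik}-D_kK_{ij}$ with $K_{ij}=\theta_{ij}=K h_{ij}$, $K=\tilde\Theta$; hence $R_{uijk}=h_{ik}D_j K-h_{ij}D_k K$, and contracting yields $R_{uk}=(n-2)D_k K$. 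Substituting into $C_{uijk}=R_{uijk}+(n-2)^{-1}(h_{ij}R_{uk}-h_{ik}R_{uj})$ (the $g_{u\alpha}=0$ terms of the Weyl decomposition drop out, and the scalar-curvature piece is absent), the two contributions cancel exactly and $C_{uijk}=0$, i.e., $C_-=0$.

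The main obstacle here is not a conceptual one but rather the bookkeeping of sign conventions in the Codazzi equation and in the Weyl decomposition. An alternative route would avoid the conformal reduction entirely by invoking the explicit expressions for the magnetic frame components of the Weyl tensor in terms of the kinematic quantities of $\bu$ that are presumably collected in Appendix~\ref{app_congruences}; specialising those formulas to $\sigma_{ab}=0=\omega_{ab}$ should make every magnetic component manifestly zero. This would be cleaner if the formulas are at hand, at the cost of hiding the geometric content behind pre-computed identities.
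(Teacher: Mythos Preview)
Your argument is correct. The derivation of the line element and of the kinematic quantities matches the paper's proof essentially verbatim (Frobenius, then $\theta_{\alpha\beta}=\tfrac{1}{2V}\partial_t g_{\alpha\beta}$, then integration).

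For the PE claim, however, you take a genuinely different route. The paper simply invokes the pre-computed formula~(\ref{Weyl_u}) of Appendix~\ref{app_congruences}, which expresses the magnetic Weyl components $C^{dg}{}_{bc}u_d h^b{}_e h^c{}_f$ entirely in terms of $\omega_{ab}$, $\sigma_{ab}$ and their spatial derivatives; setting $\sigma_{ab}=\omega_{ab}=0$ makes every term vanish in one stroke. This is precisely the ``alternative route'' you describe in your final paragraph. Your own argument instead performs a conformal reduction to the geodesic case $V=1$ (legitimate, since $\sigma_{ab}=0=\omega_{ab}$ are conformally invariant and the Weyl tensor with one index up is as well), and then uses the Codazzi equation with umbilical second fundamental form $K_{ij}=\tilde\Theta h_{ij}$ to obtain $R_{uijk}=h_{ik}D_j\tilde\Theta-h_{ij}D_k\tilde\Theta$, which cancels exactly against the Ricci piece in~(\ref{Cuijk}). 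Both computations are short; the paper's is slightly quicker because the kinematic identity~(\ref{Weyl_u}) has already absorbed the Codazzi content, while yours is more self-contained and makes the geometric mechanism (umbilicity of the leaves plus the trace-adjustment in the Weyl tensor) transparent. Your caveat about sign conventions in Codazzi is warranted but, as you can check against~(\ref{Cuijk}), the signs do line up.
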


\begin{proof} Eq.\ (\ref{Weyl_u}) gives the magnetic part of the Weyl tensor in
terms of the kinematic quantities. As an immediate consequence, the
existence of $\bu$ for which $\sigma_{ab}=\omega_{ab}=0$ implies
that the magnetic part vanishes and the spacetime is PE wrt $\bu$.
The proof of Since $\bu$ is hypersurface-orthogonal one has
$u_a=-V(t,x^\gamma)\d_at$ and the line-element can be written as $\d
s^2=-V(t,x^\gamma)^2\d t^2+ h_{\alpha\beta}(t,x^\gamma)\d x^\alpha
\d x^\beta$, for certain coordinates $\{t,x^\gamma\}$. Then the
shear-free property of $u^a$ translates to $\tilde\Theta
h_{\alpha\beta}=u_{\alpha;\beta}=\tfrac{1}{2V}\partial_t
h_{\alpha\beta}$ (the labels referring to coordinate components
here), whence
$h_{\alpha\beta}=P(t,x^\gamma)^2\xi_{\alpha\beta}(x^\gamma)$ (and
vice versa; this is a direct extension of the observations in
\cite{Trumper62} from four to arbitrary dimensions). The expressions
(\ref{kinem quant shearfree}) follow by direct computation.
\end{proof}

\begin{rem}\label{rem: PE and shearfree normal} One may ask the converse question: does
every PE spacetime necessarily admit one or more shear-free normal
timelike congruences? In conformally flat (type O) spacetimes the
answer is yes: there are as many of them as in Minkowski spacetime,
since the conditions $\sigma_{ab}=0,\,\omega_{ab}=0$ are conformally
invariant (see, e.g., \cite{Stephanibook}). In four dimensions,
partial answers are known for the other admitted Petrov types (D and
I). In the Petrov type D case, it was shown in \cite{WyllBeke10}
that in PE Einstein spacetimes and aligned Einstein-Maxwell
solutions there is a one-degree freedom of shear-free normal
timelike congruences.
%~\footnote{Recall that $\bu$ lies in the plane
%$\Sigma$ spanned by the principal null directions (PNDs) of the Weyl
%tensor (Proposition \ref{prop uniqueness}). Referring to terminology
%of \cite{WyllBeke10}, one can freely specify the $\Sigma$-boost
%parameter $q$ on a one-dimensional subvariety, and for each such
%choice this gives, via the three first-order equations (6) of
%\cite{WyllBeke10}, a unique shear-free normal congruence over a
%neighborhood, given by (4) of \cite{WyllBeke10}.}
Notorious examples
of PE type D Einstein spacetimes are the Schwarzschild and C metric
solutions (see \cite{WyllBeke10} for a complete survey). For
instance, in the interior (non-static) region $u(r)\equiv 2m/r-1
>0$ of the Schwarzschild solution $\d s^2=-\d
r^2/u(r)+u(r)\d t^2+r^2(\d\theta^2+\sin(\theta)^2\d\phi^2)$, two
particular families of shear-free normal vector fields $\bu$ are
given by
\begin{eqnarray}\label{shearfree normal int_Schw}
\bu=\sqrt{E}r\partial_r+(2m)^{1/3}\left(\frac{1}{\sqrt{u(r)q(r)}}-\frac{\sqrt{E}r}{u(r)}\right)\partial_t, \qquad
q(r)=\frac{Er^2}{u(r)}-1\pm\sqrt{\left(\frac{Er^2}{u(r)}-1\right)^2-1},
\end{eqnarray}
where, for a given $r$, the constant $E>0$ is large enough such that
$Er^2>u(r)$; shear-free normal congruences also exist in the
exterior regions, where they generalize the static observers. In
passing, we note that all Petrov type D perfect fluids with
shear-free normal fluid velocity, comprising the type D PE Einstein
spacetimes as a limiting subcase, were classified by
Barnes~\cite{Barnes73} (see also \cite{WyllBeke10} for a
clarification and a correction). However, the answer to the question
is negative in general. For instance, G\"{o}del's rotating perfect
fluid universe and the Szekeres non-rotating dust models (see
\cite{Stephanibook} and references therein), both of type D, are PE
but do not admit a shear-free normal $\bu$ (since the conditions of
proposition B.1 in \cite{WyllBeke10} are not fulfilled). In the
Petrov type I case the same is true for, e.g., the generic Kasner
vacuum spacetimes and the rotating `silent' dust models of
\cite{Wylleman08}; here the field $\bu$ realizing the PE condition
is unique (Proposition \ref{prop uniqueness}) and one verifies that
it is not shear-free normal, while
Proposition~\ref{prop_shear_twist_free} ensures that there cannot be
any other shear-free normal timelike congruences.
\end{rem}

Special cases of the spacetimes (\ref{shearfree normal line
element}) are the following {\em warped (cases (a) and (b) below),
direct product (case (c)) and doubly-warped (case (d)) spacetimes
with a one-dimensional timelike factor} (see also
\cite{PraPraOrt07}; we add the expressions of the corresponding
expansion scalar and acceleration vector between square brackets, a
prime denoting an ordinary derivative):
\begin{enumerate}[(a)]
\item $V=V(t)$, $P=P(t)$\; [$\tilde\Theta=P'(t)/(P(t)V(t)),\,\dot{u}_a=0$];
\item $V=V(x^\gamma)$, $P=P(x^\gamma)$\; [$\tilde\Theta=0,\,\dot{u}_a=\ln(V)_{;a}$];
\item $V=V(t)$, $P=P(x^\gamma)$\;[$\tilde\Theta=0,\,\dot{u}_a=0$];
\item $V=V(x^\gamma)$ {non-constant}, $P=P(t)$ { non-constant}\;[$\tilde\Theta=P'(t)/(P(t)V(x^\gamma)),\,\dot{u}_a=\ln(V)_{;a}$].
\end{enumerate}
 Notice that if $V=V(t)$, we may rescale the coordinate $t$ such
that $V=1$; if $P=P(x^\gamma)$ we can put $P=1$ by absorption in
$\xi_{\alpha\beta}(x^\gamma)$. Hence, the direct product case (c)
can be considered as a subcase of both (a) and (b). Case (d)
describes doubly-warped spacetimes; see \cite{RamVaz03} for a definition and for a discussion of their
properties in four dimensions.

It is easy to see (cf. appendix~A of \cite{OrtPraPra11} and
references therein) that for {\em Einstein spacetimes} case (a)
reduces to Brinkmann's warp ansatz \cite{Brinkmann25}
\begin{equation}
 \d s^2=-f(t)^{-1}\d t^2+f(t)\d\tilde s^2 , \qquad f(t)=\lambda t^2-2dt-b,
 \label{brinkmann_metric}
\end{equation}
where $\lambda$ is the cosmological constant (up to a positive
numerical factor), $b$ and $d$ are constant parameters and $\d\tilde
s^2$ is any $(n-1)$-dimensional Euclidean Einstein space with Ricci
scalar $\tilde {\cal R}=-(n-1)(n-2)(\lambda b+d^2)$. This can be used to
produce a number of explicit examples (see \cite{OrtPraPra11} for a
recent analysis of such warps).

Case (b) precisely covers the {\em static} spacetimes ($\bu$ being parallel to the
hypersurface-orthogonal timelike Killing vector field $\partial_t$).
In fact, the argument in the proof of  Proposition~\ref{PE_PM_types}
is a simple extension of the one used in \cite{PraPraOrt07} to prove
that static spacetimes can only be of the Weyl types O, D(d), I$_i$
or G. Let us note that in $n>4$ dimensions explicit static {\em
vacuum} solutions of the last three types are known (type O
just giving flat space): the Schwarzschild black hole (type D
\cite{Coleyetal04,Pravdaetal04,PraPraOrt07}), the static black ring
(type I$_i$ \cite{PraPra05}) and the static KK bubble (type G
\cite{GodRea09}). In four dimensions, the static type D vacua were
invariantly classified by Ehlers and Kundt~\cite{EhlersKundt} and
comprise, e.g., the exterior regions of the Schwarzschild and $C$
metrics; static type I examples are comprised in, e.g., the
Harrison metrics (see \cite{Stephanibook}).

\begin{rem}
In four dimensions, and in the line of Remark \ref{rem: PE and
shearfree normal},
the following spacetimes are necessarily {\em static} ($\bu$ being parallel to the
hypersurface-orthogonal timelike Killing vector field $\partial_t$):
\begin{itemize}

\item Petrov type D Einstein spacetimes with a non-rotating rigid $\bu$
(i.e., $\omega_{ab}=0$, $\sigma_{ab}=0=\tilde\Theta=0$)~\cite{WyllBeke10};
%~\footnote{See Theorem 2.2 in ~\cite{WyllBeke10}. This corrected the claim in
%\cite{Barnes73} (Theorem 3 therein, also mentioned on page 73 of \cite{Stephanibook})
%that Petrov type D Einstein spacetimes admitting a shear-free normal
%$\bu$ would {\em all} be static. Referring to Remark \ref{rem: PE and
%shearfree normal} this would mean that all PE type D Einstein
%spacetimes were static, in contradiction, e.g., with the non-staticity of
%the PE interior regions of the C and Schwarzschild metrics, cf.\
%(\ref{shearfree normal int_Schw}).}

\item Petrov type I Einstein spacetimes with a shear-free normal $\bu$~\cite{Trumper62,Trumper65}, and type I perfect fluids with
shear-free normal fluid velocity $\bu$~\cite{Barnes73}.
\end{itemize}
\end{rem}

\begin{rem}{\em Stationary spacetimes.}
\label{rem: stationary} Although stationary PE spacetimes do exist,
and in four-dimensions have been constructed in
\cite{Collins84,Sklavenites85,Senovilla87,Senovilla92}, this is now
not the only possibility (contrary to the static case discussed
above). First, the existence of four-dimensional Petrov type I,
stationary spacetimes with a PM Weyl tensor was shown in
\cite{Arianrhodetal94}.  Moreover, {\em stationary, non-static
spacetimes are in general ``hermaphroditic'', i.e., neither PE nor
PM.} For instance, in four dimensions generic (Petrov type D)
locally rotationally symmetric (LRS) spacetimes of class I in the
Stewart-Ellis classification~\cite{StewartEllis} have this property
%~\footnote{The PE and PM conditions translate into single scalar equations
%which may be easily solved in appropriate
%coordinates~\cite{LozCarm03}.}.
(cf.\ \cite{LozCarm03} for the additional PE and PM conditions). The
same is true for the exterior (Petrov type I or II) vacuum region of
a van Stockum rotating dust cilinder (if the mass per unit length is
large enough), and the (Petrov type D) Kerr metric~\cite{Bonnor95b}.
We additionally point out here that the higher dimensional
generalization of the latter, i.e.\ the black hole solution of Myers
and Perry \cite{MyePer86}, shares a similar property: the components
of the type D Weyl tensor in a canonical null frame are such that
$C_{ijkl}\neq0\neq C_{01ij}$ (see section~6.4 of \cite{PraPraOrt07}
and section~5.5 of \cite{OrtPraPra09}), such that the {spacetime} is
neither PE {[type D(d)]} nor PM {[type D(abc)]}, cf.\
Proposition~\ref{PE_PM_types} (more generally, the same comment
applies to all vacuum Kerr-Schild spacetimes with a twisting
Kerr-Schild null vector, see section~5.5.1 of \cite{OrtPraPra09}).
Moreover, the (generically type I$_i$) five-dimensional spinning
black rings of \cite{EmpRea02prl} (reducing to a Myers-Perry black
hole under an appropriate limit) are also hermaphroditic in the
non-static regions, as can be shown by making use of the Bel-Debever
criteria of Proposition \ref{prop_Bel-Debever}. These thus provide
explicit examples of spacetimes with a minimal Weyl tensor (cf.\
Proposition~\ref{th_minimal_ODIG}) which are, however, neither PE
nor PM.
\end{rem}

\subsubsection{More general direct products and warped spacetimes}

\label{subsubsec_PEwarps}

We have seen above that warped metrics with a one-dimensional
timelike factor are examples of PE spacetimes  and thus can only
be of type G, I$_i$, D(d) or O (Proposition \ref{PE_PM_types}). This latter
result was stated in Proposition 3 of \cite{PraPraOrt07}. Here we
discuss similar properties in the case of other possible warps
$(M,\bg)$ for which, by definition:
\begin{itemize}
\item $M$ is a direct product manifold $M^{(n)}=M^{(n_1)}\times
M^{(n_2)}$ of factor spaces $M^{(n_1)}$ and $M^{(n_2)}$, where
$n=n_1+n_2$, $n_1\geq 2$ and $M^{(n_1)}$ represents the Lorentzian
(timelike) factor;
\item  $\bg$ is conformal to a direct sum
metric,
\begin{equation}
\label{warped metric}
\bg=e^{2\theta}\left(\bg^{(n_1)}\oplus \bg^{(n_2)}\right),
\end{equation}
where $\bg^{(n_i)}$ is a metric on $M^{(n_i)}$ ($i=1,2$) and
$\theta$ is a smooth scalar function on {\em either} $M^{(n_1)}$ or
$M^{(n_2)}$.
\end{itemize}

Since we will be interested in PE/PM Weyl tensors of direct
products, it is useful first of all to recall a known result (see,
e.g., \cite{PraPraOrt07}) that tells us when the Weyl tensor of a
product metric vanishes (and is thus both, trivially, PE and PM):
{\em a product space is conformally flat iff both product spaces are
of constant curvature and}
\begin{equation}
\label{Riccisc_Weyldecomp}
 n_2(n_2-1){\cal R}^{(n_1)}+n_1(n_1-1){\cal R}^{(n_2)}=0.
\end{equation}

In the following analysis we shall mostly rely on the results of \cite{PraPraOrt07}.
First, combining Propositions~4 and 5 (and the explanation on
top of page~4415 of \cite{PraPraOrt07}) with our Proposition
\ref{prop uniqueness} we obtain:

\begin{prop}\label{prop: warp 2D factor} Warped spacetimes with a two-dimensional Lorentzian
factor $(M^{(n_1)},\bg^{(n_1)})$, $n_1=2$, are {at each point} either type
O, or type D and PE wrt {any} unit timelike vector living in
$M^{(n_1)}$, the uplifts of the null directions of the tangent
space to $(M^{(n_1)},\bg^{(n_1)})$ being double WANDs of the complete
spacetime $(M,\bg)$. They include, in particular, all spherically, hyperbolically
or plane symmetric spacetimes.
\end{prop}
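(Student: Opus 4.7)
The plan is to reduce to a direct product and then exploit the very rigid structure of a product with a two-dimensional Lorentzian factor. First, since the algebraic Weyl type and the PE/PM character wrt a given $\bu$ are preserved under conformal rescalings (the Weyl tensor $C^a{}_{bcd}$ being conformally invariant, and the splitting relative to $\bu$ depending only on the $\bu$-direction), I can assume without loss of generality that $\theta=0$ in \eqref{warped metric}, so that $\bg=\bg^{(n_1)}\oplus\bg^{(n_2)}$ is a genuine direct product metric.

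Next, I choose a null frame adapted to the product structure: let $\bl,\bn$ be null vectors lying in $T_pM^{(n_1)}$ with $l^an_a=1$, and let $\bm_{\hat i=3,\ldots,n}$ be an orthonormal basis of $T_pM^{(n_2)}$. Setting $\bu=(\bl-\bn)/\sqrt{2}$ places $\bu$ in the Lorentzian factor. For a direct product, the Riemann tensor is block-diagonal: its only non-vanishing components in this adapted frame are those of ${R}^{(n_1)}_{abcd}$ and ${R}^{(n_2)}_{abcd}$ separately. In a two-dimensional Lorentzian space every tensor of Riemann symmetries is a multiple of $g^{(2)}_{a[c}g^{(2)}_{d]b}$, so ${R}^{(n_1)}_{abcd}=\tfrac12{\cal R}^{(n_1)}(l_an_b-n_al_b)(l_cn_d-n_cl_d)/2$ (up to conventions). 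Decomposing the Weyl tensor $C_{abcd}$ via its standard expression in terms of Riemann, Ricci and Ricci scalar, all frame components with at least one but not all of its indices in the 2D factor vanish, because Ricci on the product is again block-diagonal while the mixed pieces of the Weyl ``trace-subtraction'' are precisely of the form $g_{a[c}R_{d]b}-g_{b[c}R_{d]a}$ and $g_{a[c}g_{d]b}{\cal R}$.

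Reading off boost weights with respect to $\bl,\bn$: any frame component of $C$ with an index $0$ or $1$ which is not fully compensated necessarily has at least one index in the 2D factor and at least one in the $(n_2)$-factor, and all such components vanish by the previous argument. Hence $C_{0\hat i0\hat j}=C_{0\hat i\hat j\hat k}=C_{1\hat i\hat j\hat k}=C_{1\hat i1\hat j}=0$ and also $C_{01\hat i\hat j}=0$, leaving only the purely spatial $C_{\hat i\hat j\hat k\hat l}$ (essentially the Weyl tensor of $M^{(n_2)}$ dressed by the Ricci-scalar contribution of the 2D factor) and the $C_{01 01}$ piece determined by ${\cal R}^{(n_1)}$. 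This boost-weight pattern is exactly type D (or O), with $\bl$ and $\bn$ spanning double WANDs; moreover the vanishing of $C_{01\hat i\hat j}$ together with the vanishing of all $\pm 1,\pm 2$ b.w.\ components is precisely criterion \eqref{PE cond} of Proposition \ref{prop Weyl_PE_PM} for the pair $(\bl,\bn)$, hence $C$ is PE wrt $\bu=(\bl-\bn)/\sqrt{2}$, i.e.\ type D(d) or O.

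Finally, since the two null directions of $T_pM^{(n_1)}$ are intrinsically defined and any other $\bl',\bn'$ in $M^{(n_1)}$ are related to $\bl,\bn$ by a boost in that plane, the construction applies to every timelike direction in $M^{(n_1)}$: each such $\bu$ realizes the PE property with the same two double WANDs, confirming via Proposition~\ref{prop uniqueness} (and Remark~\ref{rem:CtypeD_PE}) that the type D plane of multiple WANDs is exactly the uplift of $T_pM^{(n_1)}$. The examples (spherically, hyperbolically or plane symmetric spacetimes) then follow because in each case the metric is a warped product with a 2D Lorentzian base and a maximally symmetric $(n-2)$-dimensional fiber. The main obstacle in a fully detailed proof is bookkeeping of the Weyl trace-subtractions to verify that no cross components survive; however this computation is the content of the product-space Weyl decomposition recorded in \cite{PraPraOrt07}, so it can be imported rather than redone.
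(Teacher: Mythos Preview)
Your approach is essentially the same as the paper's: reduce to a direct product by conformal invariance, then use the explicit Weyl-component formulae for products from \cite{PraPraOrt07} together with the fact that any two-dimensional factor is automatically Einstein (so that $R^{(2)}_{00}=R^{(2)}_{11}=0$ in a null frame), and finally invoke Proposition~\ref{prop uniqueness} to conclude PE wrt every timelike $\bu$ in the 2D factor. The paper's proof is just a pointer to Propositions~4 and~5 of \cite{PraPraOrt07} plus Proposition~\ref{prop uniqueness}; you have spelled out the same computation.

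One small correction: your sentence ``all frame components with at least one but not all of its indices in the 2D factor vanish'' is too strong. The component $C_{0\hat i 1\hat j}$ has two indices in each factor and is in general non-zero (it carries the $g_{01}R^{(n_2)}_{\hat i\hat j}$ and $g_{\hat i\hat j}R^{(2)}_{01}$ pieces of the trace subtraction). This does not affect your conclusion, since $C_{0\hat i 1\hat j}$ has boost weight~$0$; but your list of surviving b.w.~$0$ components (``only $C_{\hat i\hat j\hat k\hat l}$ and $C_{0101}$'') should also include $C_{0\hat i 1\hat j}$. The type D(d)/PE conclusion only needs the vanishing of the b.w.~$\pm1,\pm2$ components and of $C_{01\hat i\hat j}$, which you do establish correctly.
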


Here and below, a vector at a point of a factor space % field
is said to ``live'' in a factor space %$M^{(n_i)}$ \
if
it is spanned by uplifts of tangent vectors to this space.
For warped products in which the Lorentzian factor is at least
three-dimensional the above proposition does not hold, in general.
However we can find necessary and sufficient conditions for the
product space to be PE. Let us give results in the case of direct
products
%$M^{(n)}=M^{(n_1)}\times M^{(n_2)}$, where $n=n_1+n_2$ and $M^{(n_1)}$ represents the Lorentzian factor.
($\theta=0$ in (\ref{warped metric})). This can be then extended to warped
products {(in fact, to all conformally related spaces)} by
introducing a suitable conformal factor, which does not affect the
properties of the Weyl tensor.  For direct products there is a biunivocal
relation between vectors ${\bf v}$ tangent to $M^{(n_i)}$ and their
uplifts ${\bf v}^*$ living in $M^{(n_i)}$ ($\bf v$ being the
$M^{(n_i)}$-projection of ${\bf v}^*$).
For brevity, we shall  identify
these objects and use the same notation for them; it will be clear
from the context to what quantity we are referring. Also, we
let lowercase Latin letters serve as abstract indices for the full space
as well as for the factor spaces.
We denote by $R^{(n_i)}_{ab}$ the Ricci tensor of $M^{(n_i)}$, and
similarly for other tensors defined in the factor geometries. In
addition, given a unit timelike $\bf U$ tangent to $M^{(n_1)}$
we define a $\bf U$-ONF $\{{\bf U},{\bf m}_{A}\}$ (with frame labels
$A,B,C,\ldots=2,...,n_1$) of $M^{(n_1)}$ and an ONF $\{{\bf
m}_{I}\}$ (with frame labels $I,J,K,\ldots=n_1+1,...,n$) of $M^{(n_2)}$. These in
turn enable us to define a composite ${\bf U}$-ONF $\{{\bf U},{\bf
m}_{i=2,...,n}\}$ of $M^{(n)}$.
Then, using the results of section~4 of \cite{PraPraOrt07} we easily
arrive at
\begin{prop}
 \label{prop_products u-in-M1}
A direct product spacetime $M^{(n)}=M^{(n_1)}\times M^{(n_2)}$ is PE
wrt a unit timelike vector ${\bf U}$ that lives in $M^{(n_1)}$ iff
${\bf U}$ is an eigenvector of $R^{(n_1)}_{ab}$ and $M^{(n_1)}$ is
PE wrt ${\bf U}$, i.e.,
\begin{equation}\label{RPE cond Mn1}
R^{(n_1)}_{UA}=0, \qquad C^{(n_1)}_{UABC}=0.
\end{equation}
Then, ${\bf U}$ is also an eigenvector of the Ricci tensor $R_{ab}$
of $M^{(n)}$ (i.e., $R_{{U}i}=0$).
\end{prop}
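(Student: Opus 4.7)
The plan is to compute the components $C_{Uijk}$ of the Weyl tensor of the product in a composite $\mathbf{U}$-ONF $\{\mathbf{U},\mathbf{m}_A,\mathbf{m}_I\}$ (with $A,B,C$ spatial frame labels on $M^{(n_1)}$ and $I,J,K$ on $M^{(n_2)}$) and to read off the PE condition term by term. First I would recall the identity
\[
C_{abcd}=R_{abcd}-\tfrac{1}{n-2}\bigl(g_{ac}R_{bd}-g_{ad}R_{bc}-g_{bc}R_{ad}+g_{bd}R_{ac}\bigr)+\tfrac{{\cal R}}{(n-1)(n-2)}\bigl(g_{ac}g_{bd}-g_{ad}g_{bc}\bigr),
\]
and exploit the direct-product structure: $R_{abcd}$ and $R_{ab}$ are block-diagonal (mixed $M^{(n_1)}/M^{(n_2)}$ components vanish), the metric splits as $g=g^{(n_1)}\oplus g^{(n_2)}$ with ${\cal R}={\cal R}^{(n_1)}+{\cal R}^{(n_2)}$, and $g_{UA}=g_{UI}=0$ since $\mathbf{U}$ is an ONF vector of $M^{(n_1)}$.

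Next I would enumerate the components $C_{Uijk}$ according to how the three spatial indices distribute between the two factors. Direct substitution shows that $C_{UIJK}$, $C_{UAIJ}$, $C_{UABI}$ and $C_{UIAB}$ vanish identically: in each surviving term, at least one factor (a mixed Riemann or Ricci component, or an off-block metric component) is zero. Only two non-trivial components remain, namely
\begin{align*}
C_{UIAJ} &= -\tfrac{1}{n-2}\,g^{(n_2)}_{IJ}\,R^{(n_1)}_{UA},\\
C_{UABC} &= C^{(n_1)}_{UABC}-\tfrac{n_2}{(n-2)(n_1-2)}\bigl(g^{(n_1)}_{AB}R^{(n_1)}_{UC}-g^{(n_1)}_{AC}R^{(n_1)}_{UB}\bigr);
\end{align*}
the remaining $C_{UIJA}=-C_{UIAJ}$ and $C_{UAIB}=-C_{UABI}$ are then automatically controlled.

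Choosing $I=J$ in the first expression (so that $g^{(n_2)}_{II}=1$) shows that $C_{UIAJ}=0$ for all $I,J,A$ is equivalent to $R^{(n_1)}_{UA}=0$. Once this is imposed, the second expression collapses to $C^{(n_1)}_{UABC}$, so the full PE condition $C_{Uijk}=0$ reduces further to $C^{(n_1)}_{UABC}=0$, i.e., $(M^{(n_1)},g^{(n_1)})$ being PE with respect to $\mathbf{U}$. The converse implication is immediate by substitution. Finally, the direct-sum structure of $R_{ab}$ yields $R_{UI}=0$ automatically, and combining with $R_{UA}=R^{(n_1)}_{UA}=0$ one concludes $R_{Ui}=0$, i.e., $\mathbf{U}$ is an eigenvector of the full Ricci tensor.

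The main obstacle is careful bookkeeping: one must correctly enumerate all eight index-type patterns for $C_{Uijk}$, exploit the Weyl symmetries to identify the non-independent ones, and track which terms in the Weyl formula survive after setting mixed components of $R_{abcd}$, $R_{ab}$ and $g_{ab}$ to zero. Once that is done, the two genuinely surviving components $C_{UIAJ}$ and $C_{UABC}$ neatly encode the two stated conditions, and the proof becomes essentially mechanical, proceeding along lines analogous to the decomposition formulas in section~4 of \cite{PraPraOrt07}.
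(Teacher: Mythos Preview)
Your proposal is correct and follows essentially the same route as the paper's proof: both reduce the PE condition $C_{Uijk}=0$ in a composite $\mathbf{U}$-ONF to the vanishing of the two non-trivial component blocks $C_{UIAJ}$ and $C_{UABC}$, which encode precisely $R^{(n_1)}_{UA}=0$ and $C^{(n_1)}_{UABC}=0$, with the Ricci-eigenvector statement then following from decomposability of the product Ricci tensor. The only cosmetic difference is that the paper cites the needed Weyl-component formulas from eqs.~(8)--(10) of \cite{PraPraOrt07}, whereas you derive them directly from the Weyl decomposition and the block structure of the product.
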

\begin{proof}
By (9) and (10) of \cite{PraPraOrt07}, the requirements
$C_{UIAJ}=0$ and $C_{UABC}=0$ are equivalent to (\ref{RPE cond
Mn1}), while the remaining magnetic Weyl components of $M^{(n)}$
are always identically zero thanks to eq.~(8) of \cite{PraPraOrt07}.
This proves the first part. The second part follows from the
well-known fact that the Ricci tensor of a direct product is a
`product tensor' (i.e., it is decomposable), such that
$R_{UA}=R^{(n_1)}_{UA}=0$ and $R_{UI}=0$.
\end{proof}

\begin{rem}\label{rem: PE direct} The proof makes use of eq.\ (9) of \cite{PraPraOrt07},
which is only valid for $n_1\geq 3$. However, the proposition
remains true for $n_1=1$ or $n_1=2$, since then the spacetime is
always PE (cf.\ above) and the conditions (\ref{RPE cond Mn1}) are
identically satisfied indeed. Further notice that in the case
$n_1=3$ the Weyl tensor of $M^{(n_1)}$ is identically zero, such
that $M^{(n)}$ is PE wrt $\bf U$ iff the Ricci tensor of
$M^{(n_1)}$ has $\bf U$ as an eigenvector.
In general, we shall be able to rephrase this proposition once we have
introduced the concept of Riemann purely electric spacetime in the
next section.
\end{rem}

One may further wonder whether direct products exist which are PE
wrt a vector $\bu$ {\em not} living in $M^{(n_1)}$, i.e., being
inherently $n$-dimensional. Since the $M^{(n_2)}$-projection of
$\bu$ is spacelike, the $M^{(n_1)}$-projection is timelike. Thus we
have
%Without loss of generality, let us write it as
$\bu=\cosh\gamma{\mbox{{$\bf U$}}}+\sinh\gamma{\mbox{{$\bf Y$}}}$,
where ${\mbox{{$\bf U$}}}$ is a unit timelike vector living in
$M^{(n_1)}$, ${\mbox{{$\bf Y$}}}$ a unit spacelike vector living in
$M^{(n_2)}$ {and $\gamma\neq 0$}. We also define the unit
spacelike vector ${\mbox{{$\bf y$}}}=\cosh\gamma{\mbox{{$\bf
Y$}}}+\sinh\gamma{\mbox{{$\bf U$}}}$ and use  a further adapted
$\bu$-ONF $\{\bu,\bm_i\}=\{{\bf u},{\bf m}_{A},{{\bf y}},{\bf
m}_{\tilde I}\}$, where the $(n_1-1)$ ${\bf m}_{A}$ live in
$M^{(n_1)}$ and the $(n_2-1)$ ${\bf m}_{\tilde I}$ in $M^{(n_2)}$.

\begin{prop}
 \label{prop_products u-not-in-M1}
A direct product spacetime $M^{(n)}=M^{(n_1)}\times M^{(n_2)}$ is PE
wrt a unit timelike vector $\bu=\cosh\gamma{\mbox{{$\bf
U$}}}+\sinh\gamma{\mbox{{$\bf Y$}}}$ not living in $M^{(n_1)}$
($\gamma\neq 0$, ${\bf U}$ living in $M^{(n_1)}$ and ${\bf Y}$ in
$M^{(n_2)}$) iff the following relations hold:
\begin{eqnarray}
&&\label{RPE not Mn1 a} C^{(n_1)}_{UABC}=0,\qquad R^{(n_1)}_{UA}=0,\qquad (n_1-1)R^{(n_1)}_{UAUB}=R^{(n_1)}_{UU}\delta_{AB}\\
&&\label{RPE not Mn1 b} C^{(n_2)}_{Y\tilde I\tilde J\tilde K}=0,
\qquad R^{(n_2)}_{Y\tilde I}=0,\qquad (n_2-1)R^{(n_2)}_{Y\tilde I Y \tilde J}=R^{(n_2)}_{YY}\delta_{\tilde I\tilde J}\\
&&\label{RPE not Mn1 c}(n_2-1)R^{(n_1)}_{UU}=(n_1-1)R^{(n_2)}_{YY}.
\end{eqnarray}
In particular, $M^{(n)}$ is PE wrt $\bf U$ and thus belongs to the
class described by Proposition \ref{prop_products u-in-M1}.
Moreover, it is either type O, or type D and PE wrt any $\bu$ in the
plane spanned by $\bf U$ and $\bf Y$, i.e., wrt
$\bu=\cosh\gamma{\mbox{{$\bf U$}}}+\sinh\gamma{\mbox{{$\bf Y$}}}$
for {\em any} $\gamma$.
\end{prop}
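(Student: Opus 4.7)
The plan is to analyze, in the composite $\bu$-ONF $\{\bu,\bm_A,{\bf y},\bm_{\tilde I}\}$, all magnetic Weyl components (i.e., those with an odd number of $\bu$-labels) of $M^{(n)}$, by (i) re-expressing them as linear combinations of Weyl components taken in the ``canonical'' ONF $\{{\bf U},\bm_A,{\bf Y},\bm_{\tilde I}\}$ via the hyperbolic relations
\[
\bu=\cosh\gamma\,{\bf U}+\sinh\gamma\,{\bf Y},\qquad {\bf y}=\sinh\gamma\,{\bf U}+\cosh\gamma\,{\bf Y},
\]
and (ii) substituting the explicit direct-product expressions of the Weyl tensor of $M^{(n)}$ in terms of the Riemann and Ricci tensors of the factors (eqs.~(8)--(10) of \cite{PraPraOrt07}, valid for $n_1\ge 3$; the cases $n_1\le 2$ being already covered by Remark~\ref{rem: PE direct} and Proposition~\ref{prop: warp 2D factor}). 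Setting the resulting expressions to zero will produce the conditions (\ref{RPE not Mn1 a})--(\ref{RPE not Mn1 c}).

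More concretely, I would first inspect the components $C_{u{\bf m}_i{\bf m}_j{\bf m}_k}$ of four distinct types, corresponding to whether the spatial indices live in $M^{(n_1)}$, in $M^{(n_2)}$, or mix the two. The purely-within-$M^{(n_1)}$ components $C_{uABC}$ expand into terms of the form $C^{(n_1)}_{UABC}$ together with contributions involving $R^{(n_1)}_{UA}$, yielding the first two equations in (\ref{RPE not Mn1 a}); the purely-within-$M^{(n_2)}$ components (which will necessarily contain one ${\bf Y}$-factor, since $\bu$ projects onto ${\bf Y}$ in $M^{(n_2)}$) produce the corresponding two equations in (\ref{RPE not Mn1 b}). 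The mixed-type components with one index on each side contribute the ``isotropy'' conditions $(n_1-1)R^{(n_1)}_{UAUB}=R^{(n_1)}_{UU}\delta_{AB}$ and its $M^{(n_2)}$-counterpart, because the Weyl decomposition forces the traceless part of the corresponding Ricci block (restricted orthogonally to ${\bf U}$, resp.\ ${\bf Y}$) to vanish. Finally, matching the scalar ``trace'' left over gives the balance condition (\ref{RPE not Mn1 c}).

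The main obstacle will be the careful bookkeeping: one has to identify, for each original $C$-component in the composite $\bu$-ONF, precisely which direct-product components contribute under the boost-like change of frame, noting that factors of $\cosh\gamma$ and $\sinh\gamma$ combine in specific ways, and then use $\cosh^2\gamma-\sinh^2\gamma=1$ to separate contributions that must vanish individually (this is what makes the conditions hold for \emph{every} $\gamma$, not just the chosen one). A helpful simplification is that many components of $C_{abcd}$ on a direct product are identically zero (the direct product ``product-tensor'' property of the Ricci tensor in particular kills all cross-factor Ricci components), so only a manageable subset of equations must be analyzed in detail.

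For the last assertions: the relations (\ref{RPE not Mn1 a})--(\ref{RPE not Mn1 c}) are manifestly independent of $\gamma$, so they remain valid at $\gamma=0$, whereupon Proposition~\ref{prop_products u-in-M1} immediately yields that $M^{(n)}$ is PE wrt ${\bf U}$. Since the same conditions then give PE-ness wrt the whole one-parameter family $\cosh\gamma\,{\bf U}+\sinh\gamma\,{\bf Y}$, Proposition~\ref{prop uniqueness} forces the Weyl tensor to be either of type O or type D, with double WANDs spanning the timelike plane ${\bf U}\wedge{\bf Y}$ and the PE property holding wrt any unit timelike direction therein.
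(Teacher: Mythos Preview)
Your approach is essentially the same as the paper's: analyze the magnetic Weyl components in the composite $\bu$-ONF $\{\bu,\bm_A,{\bf y},\bm_{\tilde I}\}$, reduce them via the direct-product formulae (8)--(11) of \cite{PraPraOrt07}, and invoke Proposition~\ref{prop uniqueness} for the type D conclusion. One small bookkeeping correction: the paper obtains the first two relations in (\ref{RPE not Mn1 a}) from the \emph{pair} $C_{u\tilde I A\tilde J}=0$ and $C_{uABC}=0$ (not from $C_{uABC}$ alone), and the isotropy conditions together with (\ref{RPE not Mn1 c}) come from $C_{uAyB}=0$ and $C_{u\tilde I y\tilde J}=0$, which yield $C_{UAUB}+C_{YAYB}=0$ and $C_{U\tilde IU\tilde J}+C_{Y\tilde IY\tilde J}=0$ and are then traced; the $\gamma$-independence emerges because the direct-product structure kills the mixed components outright, rather than from a $\cosh^2\gamma-\sinh^2\gamma=1$ separation.
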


\begin{proof}
The proof goes by splitting the equations $C_{uijk}=0$ in the
adapted frame $\{\bu,\bm_i\}=\{{\bf u},{\bf m}_{A},{{\bf y}},{\bf
m}_{\tilde I}\}$ and employing eqs. (8)--(11) of \cite{PraPraOrt07}.
Requiring $C_{u\tilde I A\tilde J}=0$ and $C_{uABC}=0$ one finds
(\ref{RPE cond Mn1}) so that, by Propostion~\ref{prop_products
u-in-M1}, $M^{(n)}$ is PE also wrt to the timelike unit vector field
${\mbox{{$\bf U$}}}$ living in $M^{(n_1)}$. Direct products which
are PE wrt a vector $\bu$ not living in $M^{(n_1)}$ are thus a
subset of those considered in Proposition~\ref{prop_products
u-in-M1}. Since they are PE wrt two distinct timelike vector fields,
by Proposition~\ref{prop uniqueness} they are necessarily
of type D (unless conformally flat) and thus PE wrt any unit
timelike vector in the plane spanned by $\bu$ and ${\bf U}$ (cf.\
Proposition \ref{prop uniqueness}). Dually, $C_{uA\tilde I B}=0$
and $C_{u\tilde I\tilde J\tilde K}=0$ are equivalent to the first
two relations in (\ref{RPE not Mn1 b}). Finally, $C_{uAyB}=0$ and
$C_{u\tilde I y \tilde J}=0$ give $C_{UAUB}+C_{YAYB}=0$ and
$C_{U\tilde I U\tilde J}+C_{Y\tilde I Y\tilde J}=0$, respectively.
Tracing the first relation over $A$ and $B$ (or the second over
$\tilde I$ and $\tilde J$) yields (\ref{RPE not Mn1 c}), and then
the respective relations reduce to the last equations of (\ref{RPE
not Mn1 a}) and (\ref{RPE not Mn1 b}). Under (\ref{RPE not Mn1
a})-(\ref{RPE not Mn1 c}) the remaining Weyl magnetic components
turn out to be identically zero. This proves the proposition.
\end{proof}

Simple examples are given by spacetimes $M^{(n)}=M^{(n_1)}\times
M^{(n_2)}$ with metric $\d s^2=\d s_1^2+\d s_2^2$, with
\begin{equation}\label{ex:PE}
 \d s_1^2=-\d t^2+\d\Xi^2, \qquad \d s_2^2=\d z^2+\d\Sigma^2, \qquad \mbox{$\d\Xi^2$ and $\d\Sigma^2$ Ricci-flat Euclidean
 spaces}.
\end{equation} Here ${\bf U}=\pa_t$ and ${\bf Y}=\pa_z$, and the full
space as well as the factors are Ricci flat with decomposable Weyl
tensor.~\footnote{\label{foot: RPE Ricci_flat}  Recall (see
\cite{Ficken39,PraPraOrt07}) that a direct product space is an
Einstein space iff both factors are Einstein spaces and ${\cal
R}/n={\cal R}^{(n_1)}/n_1={\cal R}^{(n_2)}/n_2$; it has decomposable
Weyl tensor iff both factors are Einstein spaces and
$n_2(n_2-1){\cal R}^{(n_1)}+n_1(n_1-1){\cal R}^{(n_2)}=0$. Hence, a
direct product is Ricci-flat iff both factor spaces are Ricci-flat
(in which case the Weyl tensor is automatically decomposable). This
applies to the factors and thus to the full space (\ref{ex:PE}).}

\subsubsection{Spacetimes with certain isotropies}

A spacetime with a high degree of symmetry clearly has a special
Weyl tensor. In particular, an isotropy of spacetime imposes
constraints on the Weyl tensor in the sense that the isotropy must
leave the Weyl tensor invariant; consequently, a non-trivial
isotropy implies that certain components of the Weyl tensor are
zero.  Recall that the isotropy group of an $n$-dimensional
spacetime must be isomorphic to a subgroup of the Lorentz group
$SO(1,n-1)$. The largest possible isotropy group is thus
$n(n-1)/2$-dimensional, in which case the spacetime must be of
constant curvature, and therefore also conformally flat (see, e.g.,
\cite{Stephanibook}). However, some (weaker) restrictions also arise
in the presence of a smaller isotropy. In the context of PE
spacetimes, an interesting result is the following:

\begin{thm}
\label{thm_isotropy} A spacetime which admits
$SO(p_1)\times ...\times SO(p_i)\times ...\times SO(p_k)$ isotropy,
where $p_i\geq 2$ and $\sum_{i=1}^kp_i=n-1$, is PE.

\end{thm}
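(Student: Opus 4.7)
The plan is first to identify a unit timelike vector $\bu$ at each point that is fixed by the isotropy, and then to show that in a $\bu$-ONF adapted to the induced block decomposition of $\bu^{\perp}$ every magnetic Weyl component $C_{uijk}$ is forced to vanish by invariant theory applied to $SO(p_a)$.

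To identify $\bu$, I would observe that $H\equiv SO(p_1)\times\cdots\times SO(p_k)$ is compact and hence, under its isotropy representation on $T_pM$, conjugate into the maximal compact subgroup $K\cong O(n-1)$ of the Lorentz group, which is the stabilizer of some unit timelike vector $\bu\in T_pM$. Since each $SO(p_a)$ with $p_a\geq 2$ admits no non-zero invariant vector in its defining representation, and since $\sum_a p_a=n-1$, the only $H$-invariant subspace of $T_pM$ is in fact $\mathbb{R}\bu$; so $\bu$ is determined up to sign and varies smoothly with $p$. Simultaneously, there is an $H$-invariant orthogonal splitting $\bu^{\perp}=V_1\oplus\cdots\oplus V_k$, $\dim V_a=p_a$, on which $H$ acts blockwise as $SO(p_a)$. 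Pick a $\bu$-ONF adapted to this splitting.

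By the criterion given at the start of $\S$\ref{subsec: Weyl Bel-Debever}, it suffices to check that $M_{ijk}\equiv C_{uijk}=0$ for every choice of spatial frame indices $i,j,k$. Since the Weyl tensor is $H$-invariant, $M_{ijk}$ is an $H$-invariant rank-$3$ tensor on $\bu^{\perp}$, antisymmetric in $(jk)$ and satisfying the Bianchi identity $M_{ijk}+M_{jki}+M_{kij}=0$. I would then case-split on how the three indices distribute among the blocks, using the standard fact that the algebra of $SO(p)$-invariants on its defining representation is generated by $\delta$ and the Levi-Civita $\epsilon_{i_1\cdots i_p}$. \emph{Case A:} some index lies alone in a block $V_a$; then $M$ must contain, as a factor, an $SO(p_a)$-invariant vector in $V_a$, which is zero since $p_a\geq 2$. \emph{Case B:} exactly two of the indices lie in the same block $V_a$ and the third in a distinct block $V_b$; by the product structure of $SO(p_a)\times SO(p_b)$-invariants on $V_a^{\otimes 2}\otimes V_b$ and $p_b\geq 2$, again an invariant vector in $V_b$ is forced, which is zero. \emph{Case C:} all three indices lie in the same block $V_a$; for $p_a\neq 3$ there is no rank-$3$ $SO(p_a)$-invariant tensor, while for $p_a=3$ the unique one is $\alpha\,\epsilon_{ijk}$, and the Bianchi identity gives $M_{ijk}+M_{jki}+M_{kij}=3\alpha\,\epsilon_{ijk}=0$, hence $\alpha=0$.

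In every case $M_{ijk}=0$, so the Weyl tensor is PE wrt $\bu$ at the given point; since this holds pointwise and $\bu$ is smooth, the spacetime is PE. The step that requires the most care is the invariant-theoretic argument in case C for $p_a=2$, where $SO(2)$ admits the extra invariant $\epsilon_{ij}$ in addition to $\delta_{ij}$; however, any candidate rank-$3$ invariant built from $\delta$ and the rank-$2$ $\epsilon$ would still have to be contracted against an $SO(2)$-invariant vector, which does not exist, so no rank-$3$ invariant is available and the case analysis is exhaustive.
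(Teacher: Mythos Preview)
Your proof is correct and follows essentially the same strategy as the paper: identify the unique $H$-fixed timelike direction $\bu$, observe that $M_{ijk}=C_{uijk}$ is an $H$-invariant rank-3 spatial tensor, invoke the classification of $SO(p)$-invariants in terms of $\delta$ and $\epsilon$, and then use the first Bianchi identity $C_{u[ijk]}=0$ to eliminate the only surviving (totally antisymmetric) candidates. The paper compresses your Cases A--C into a single sentence, citing the standard result on $SO(p)$-invariants to conclude directly that $T=\sum_i\alpha_i\,\boldsymbol{\epsilon}^i$ (only the $p_i=3$ blocks contribute) and then kills this with Bianchi; your explicit case split simply unpacks that step.
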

\begin{proof}
Consider the orthonormal frame adapted to the isotropy group as
follows: the factor $SO(p_i)$ acts on (and leaves invariant) the
$p_i$-plane spanned by ${\bm}^{a_i}$. Let $(h_i)^{a}_{~b}$ be the
corresponding projection operators onto this $p_i$-plane and define
the spatial projector $h^a_{~b}=\sum_{i=1}^k(h_i)^{a}_{~b}$.  The
action of the isotropy group can thus be put on a block-diagonal
form; explicitly, for $G=(G_i,...,G_i,...,G_k)\in  SO(p_1)\times
...\times SO(p_i)\times ...\times SO(p_k)$, the isotropy acts on a
vector $v$ as:
\[ G(v)=\sum_{i=1}^kG_ih_i(v).\] Since $\sum p_i=n-1$ there
will be a time-like vector $\bu$ so that $g_{ab}=-u_au_b+h_{ab}$.

Consider then the tensor $T_{efg}\equiv u^a
C_{abcd}h^b_{~e}h^c_{~f}h^d_{~g}$. This is a purely spatial
tensor relative to $\bu$, with components $T_{ijk}={C}_{uijk}$ in any
$\bu$-ONF, such that it is necessary and sufficient to show that
$T_{efg}=0$.
The Weyl tensor is invariant under the spacetime isotropy group, and
using the results regarding invariant tensors under the action of
$SO(p)$ groups (see  \cite{GoodmanWallach}),
the only purely spatial tensors invariant under the group in
question are {linear combinations of} tensor products of
${(h_i)}_{ab}$ and the totally antisymmetric $p_i$-tensors
${\mbold\epsilon}^i=\bigwedge_{a_i}{\bm}^{a_i}$. Since the tensor
$T$ {is} a rank 3 tensor,  it follows that $T$ must be of
the form $T=\sum_{i=1}^k \alpha_i {\mbold\epsilon}^i$,
$\alpha_i\in{\cal F}_M$. Hence, $T_{efg}=T_{[efg]}$. However, due to the
first Bianchi identity (last equation in (\ref{symm1})) we have
$u^aC_{a[bcd]}=0$, whence $T_{[efg]}=0$, which proves the
proposition.
\end{proof}

Special instances of the above isotropy are the following.
\begin{itemize}

\item Spacetimes with an isotropy group $SO(n-1)$. They are conformally flat (see,
e.g., Theorem~7.1 of \cite{ColHer09}), i.e.,
Proposition~\ref{thm_isotropy} becomes ``trivial'' if we take one
single $SO(n-1)$ factor. If the spacetime is not of constant
curvature, the $SO(n-1)$ isotropy and the conformal flatness imply
that the Ricci tensor has Segre type $\{1,(11...1)\}$. Also, the
shear, rotation and acceleration of the preferred vector field $\bu$
must vanish, while the surfaces of the foliation orthogonal to $\bu$
are maximally isotropic and thus have constant
curvature~\cite{Stephanibook}. It follows that the spacetimes with
an isotropy group $SO(n-1)$ are given by the line elements
\begin{equation}
\label{FLRW extension}
 \d s^2=-\d t ^2+a(t)^2\d\Omega^{2}_{n-1,k}(x^1,...,x^{n-1}),
\end{equation}
where $\d\Omega^{2}_{n-1,k}$ is the metric on a $(n-1)$-dimensional
``unit'' space of constant curvature with sign $k$. Notice that they
are special instances of the warped metrics (\ref{shearfree normal
line element}), case (a).
%~\footnote{We just give a sketch of the
%proof for (\ref{FLRW extension}) here. If the spacetime is not of
%constant curvature, the $SO(n-1)$ isotropy implies that the Ricci
%tensor has Segre type $\{1,(11...1)\}$ (due to the conformal
%flatness). Also, the shear, rotation and acceleration of the
%preferred vector field $\bu$ must vanish. Finally, the surfaces of
%the foliation orthogonal to $\bu$ are maximally isotropic and thus
%have constant curvature~\cite{Stephanibook}.}
For $n=4$ this gives
the Friedmann-Lema\^{i}tre Robertson-Walker (FLRW) model, which is
in fact the only possibility to satisfy the isotropy condition of
Proposition~\ref{thm_isotropy}. However, in all higher dimensions
spacetimes satisfying the assumptions of
Proposition~\ref{thm_isotropy} and admitting non-zero Weyl tensors
are possible (as is generically the case in the next examples).

\item In even dimensions, a possible isotropy is
$SO(3)\times SO(2)^{(n-4)/2}$. This is admitted, for example, by the
metric
\begin{equation}
 \d s^2=-\d t^2+a(t)^2
 %\left[\d r^2+r^2(\d\theta^2+\sin^2\theta\d\phi^2)\right]
 \d\Omega^2_{3,k}(x,y,z)
 +\sum_{i=1}^{(n-4)/2}b_i(t)^2\d\Omega^{2}_{2,k_i} (y^1_i,y^2_i) \qquad (n \mbox{
 even}),
 \label{isotr_even}
\end{equation}
where the submanifolds $\{y^1_i\,\text{constant},\,y^2_i\,\text{
constant}\}$ clearly have a four-dimensional FLRW line element.
% coordinates ($t,u,v,w$) span a FLRW line element.

\item Similarly in odd dimensions, take all $p_i=2$, i.e., the isotropy group $SO(2)^{(n-1)/2}$.
An example is the line-element
\begin{equation}
 \d s^2=-\d t^2+\sum_{i=1}^{(n-1)/2}a_i(t)^2\d\Omega^{2}_{2,k_i} (v^1_i,v^2_i)
 %(\d\theta_i^2+\sin^2\theta_i\d\phi_i^2)
 \qquad (n \mbox{ odd}).
 \label{isotr_odd}
\end{equation}
%In fact, the 2-dimensional spheres appearing as factor-spaces can be replaced by
%any ``unit'' 2-space of constant curvature.
In the case where all the $a_i(t)$ coincide, metric
(\ref{isotr_odd}) is a special subcase of~(\ref{shearfree normal
line element}) with (a) such that, in particular, Einstein
spacetimes are thus comprised.
\end{itemize}

One can easily construct other examples admitting different
isotropies compatible with Proposition~\ref{thm_isotropy}. Notice
that the above proposition could also be reexpressed in terms of
symmetries of the Weyl tensor alone, since the proof does not rely
on the presence of isometries. Other theorems regarding Weyl
tensors with large symmetry groups were deduced in \cite{ColHer09}
and serve to produce further examples of PE spacetimes.

We already mentioned that only zero Weyl tensors (and thus
conformally flat spacetimes) can admit $SO(n-1)$ isotropy
(Theorem~7.1 of \cite{ColHer09})). Next, Theorem~7.2 of
\cite{ColHer09} states that
\begin{prop}
If the Weyl tensor of a spacetime of dimension $n> 4$ admits
$SO(n-2)$ isotropy, then it is of type O or D(bcd), and thus PE.
\end{prop}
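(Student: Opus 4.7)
The plan is to exploit the isotropy directly in a null frame adapted to it. Since $SO(n-2)$ is compact, its embedding into $O(1,n-1)$ is, up to conjugation, contained in the maximal compact subgroup $SO(n-1)$ stabilising a timelike unit vector $\bu$; within $SO(n-1)$, the subgroup $SO(n-2)$ additionally fixes a spacelike unit vector $\bm_2$. Hence $SO(n-2)$ fixes pointwise the Lorentzian 2-plane $\mathcal{L}=\bu\wedge\bm_2$ and acts via the defining representation on its spacelike orthogonal complement spanned by $\bm_{\hat{i}=3,\ldots,n}$. I would then pass to the $\bu$-adapted null frame $\{\bm_0=\bl,\bm_1=\bn,\bm_{\hat i}\}$ from (\ref{null frame corr}), in which $SO(n-2)$ acts trivially on the labels $0,1$ and in the standard way on $\hat i,\hat j,\ldots$.

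Next, the $SO(n-2)$-invariance of $C_{abcd}$ forces each frame component to be an $SO(n-2)$-invariant tensor in its transverse indices. Since $n-2\geq 3$ (as $n>4$), classical invariant theory tells us that such invariants are generated by $\delta_{\hat i\hat j}$ and the volume form $\epsilon_{\hat i_1\ldots\hat i_{n-2}}$. Combining this with the algebraic Weyl symmetries and the trace-free property, the boost-weight decomposition is pinned down as follows. The b.w.\ $\pm 2$ pieces $C_{0\hat i 0\hat j},C_{1\hat i 1\hat j}$ are symmetric rank-2 invariants, hence proportional to $\delta_{\hat i\hat j}$, and the Weyl trace condition $\delta^{\hat i\hat j}C_{0\hat i 0\hat j}=0$ (coming from $g^{ab}C_{a0b0}=0$) kills the coefficient. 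The b.w.\ $\pm 1$ pieces $C_{0\hat i\hat j\hat k},C_{1\hat i\hat j\hat k}$ are rank-3 invariants antisymmetric in the last two indices; no such non-trivial invariant exists for $n-2>3$, while the lone candidate $\propto\epsilon_{\hat i\hat j\hat k}$ for $n-2=3$ is eliminated by the first Bianchi identity $C_{0[\hat i\hat j\hat k]}=0$. Among the b.w.\ $0$ pieces, $C_{01\hat i\hat j}$ is a skew rank-2 invariant on the transverse space and so vanishes (no such invariant exists for $n-2\geq 3$), while $C_{0\hat i 1\hat j}=\mu\delta_{\hat i\hat j}$, $C_{\hat i\hat j\hat k\hat l}=\kappa(\delta_{\hat i\hat k}\delta_{\hat j\hat l}-\delta_{\hat i\hat l}\delta_{\hat j\hat k})$, and $C_{0101}$ is a scalar, all mutually constrained by the Weyl trace-free identities.

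Only b.w.\ $0$ components survive, so the Weyl tensor is of type D (or O if those vanish as well); in the type D case $\bl$ and $\bn$ span double WANDs by construction. Since $C_{01\hat i\hat j}=0$, Proposition~\ref{prop Weyl_PE_PM} applied to $\bu=(\bl-\bn)/\sqrt{2}$ immediately yields that the Weyl tensor is PE wrt $\bu$. The finer identification with subtype D(bcd) of \cite{ColHer09} would follow by observing that $C_{\hat i\hat j\hat k\hat l}$ has constant-curvature form, so its transverse traceless-Weyl part (the ``a''-piece in the b.w.\ $0$ decomposition of type D) vanishes, while the remaining b,c,d pieces are tied to the single parameter $\kappa$ by the trace-free identities. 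The main delicate step is the rank-3 analysis in $n=5$, where the first Bianchi identity is essential to eliminate the $\epsilon$-invariant; the rest reduces to routine applications of invariant theory and the algebraic Weyl identities.
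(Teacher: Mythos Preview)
The paper does not supply its own proof here: the proposition is simply quoted as Theorem~7.2 of \cite{ColHer09}, with the clause ``and thus PE'' appended because D(bcd) entails D(d), which by Proposition~\ref{PE_PM_types} is exactly the PE condition for type D. So there is no in-paper argument to compare against.

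Your direct proof is correct and is essentially the natural route to the cited result. A couple of minor remarks. First, your opening step---that an $SO(n-2)$ isotropy must fix a Lorentzian 2-plane---is slightly compressed: you are using (i) that compact subgroups of $O(1,n-1)$ are conjugate into the maximal compact $O(n-1)$, and (ii) that any faithful $SO(n-2)\subset SO(n-1)$ for $n-2\geq 3$ is the standard one (the irreducible real representations of $SO(n-2)$ rule out nonstandard embeddings in this range). Both are standard, but worth stating. Second, your identification of the subtype as D(bcd) is right for the reason you indicate: in the Coley--Hervik b.w.\ 0 decomposition, the ``a'' piece is the transverse Weyl part of $C_{\hat i\hat j\hat k\hat l}$, which vanishes when $C_{\hat i\hat j\hat k\hat l}$ has constant-curvature form; the ``d'' piece is $C_{01\hat i\hat j}$, already shown to vanish; and the surviving data $(C_{0101},\,C_{0\hat i1\hat j}=\mu\delta_{\hat i\hat j},\,\kappa)$ sit in the (b),(c) slots only through their trace relations, leaving precisely the D(bcd) pattern. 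Finally, your handling of the $n=5$ b.w.\ $\pm 1$ case via $C_{0[\hat i\hat j\hat k]}=0$ killing the $\epsilon_{\hat i\hat j\hat k}$ candidate is exactly the delicate point, and you treat it correctly.
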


\begin{rem} The statement of the proposition is no longer valid for $n=4$, the counterexamples
being then precisely all non-PE Petrov type D spacetimes ({\em any}
four-dimensional type D Weyl tensor has boost isotropy in the plane
spanned by the PNDs, and spin isotropy in the plane orthogonal to it
\cite{Stephanibook}). For instance, Petrov type D Einstein
spacetimes (such as the Kerr solution), or their aligned
Einstein-Maxwell `electrovac' generalizations~\cite{Stephanibook}
are generically not PE (see also Remarks \ref{rem: PE and shearfree
normal} and \ref{rem: stationary}). We also observe that a metric
whose associated Weyl tensor is of Petrov type D (which thus admits
the above mentioned isotropies) is itself, nevertheless, generically
anisotropic. However, even if the spacetime itself (and not only the
Weyl tensor) is $SO(2)$-isotropic (i.e., LRS \cite{Stephanibook})
then it is still not necessarily PE: the LRS class I and III metrics
are generically not PE (nor PM; see  \cite{LozCarm03} for the PE and
PM conditions).

Yet, the LRS class II metrics, i.e., those admitting spherical,
hyperbolical or planar symmetry (in addition to the $SO(2)$ metric
isotropy), are {\em all PE}, just as their higher-dimensional
generalizations
\begin{equation}
\label{spher}
\d s^2=F(t,x)^2(-\d t^2+\d x^2)+G(t,x)^2\d\Omega^{2}_{n-2,k} \qquad (n\geq 4).
\end{equation}
For $n>4$ these are examples of the above proposition where (both
the Weyl tensor and) the metric itself admits $SO(n-2)$ isotropy,
and are special instances of Proposition \ref{prop: warp 2D factor}
(they include, in particular, the Schwarzschild(-Tangherlini)
metric, and its generalizations to include a cosmological constant
and/or electric charge).

\end{rem}

For arbitrary $n$ an $SO(n-3)$ isotropic Weyl tensor does {\em not}
require the spacetime to be PE (nor PM), in general: take, for
instance, the five-dimensional Myers-Perry spacetime (and, more
generally, see Theorem~7.4 of \cite{ColHer09}).

Finally, we note that the $2k+1$-dimensional spacetimes with
$U(k)$-symmetry ($k>1$) given in Theorem~7.5 of \cite{ColHer09} are
also PE and ``PM'', in the terminology of Remark \ref{rem: PM4D}.
%with moreover $C_{uiuj}=0$.

\subsubsection{Higher-dimensional ``Bianchi type I'' spacetimes} \label{BTI}
We can generalize the well-known Bianchi type I spacetimes to
$n$-dimensions by \emph{a spacetime allowing for $(n-1)$-dimensional
space-like hypersurfaces, $\Sigma_t$, possessing a transitive
isometry group equal to the Abelian $\mathbb{R}^{n-1}$}.  Such
spacetimes will also be PE:
\begin{prop}\label{prop:BTI}
An $n$-dimensional spacetime possessing an Abelian $\mb{R}^{n-1}$ group of isometries acting transitively on space-like hypersurfaces is PE.
\end{prop}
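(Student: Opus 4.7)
The plan is to exhibit a preferred unit timelike $\bu$ wrt which the magnetic Weyl part vanishes. Let $\partial_{\alpha}$ ($\alpha=1,\ldots,n-1$) denote the $n-1$ mutually commuting spacelike Killing fields generating the transitive Abelian action on each $\Sigma_t$, and take $\bu$ to be the future-directed unit normal to the foliation. Since the Killing fields are tangent to $\Sigma_t$ and commute, one may introduce synchronous coordinates $(t,x^{\alpha})$ in which the $\partial_{\alpha}$ appear as the coordinate vectors and
\[
\d s^{2}=-\d t^{2}+g_{\alpha\beta}(t)\,\d x^{\alpha}\d x^{\beta},\qquad \bu=\partial_{t};
\]
translation invariance along the orbits forces $g_{\alpha\beta}$ to depend on $t$ alone. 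Here $\bu$ is geodesic ($\dot{u}_{a}=0$) and hypersurface-orthogonal ($\omega_{ab}=0$), but its shear $\sigma_{ab}$ is typically non-zero, so Proposition~\ref{prop_shear_twist_free} cannot be invoked directly.

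The decisive observation is that, since $g_{\alpha\beta}(t)$ is independent of the $x^{\alpha}$, the Christoffel symbols of the induced spatial metric vanish identically in these coordinates. Hence each slice $(\Sigma_{t},g_{\alpha\beta}(t)\,\d x^{\alpha}\d x^{\beta})$ is intrinsically flat, and the spatial covariant derivative of any $\Sigma_t$-tensor whose components are functions of $t$ only is zero. In particular the extrinsic curvature $K_{\alpha\beta}=\tfrac{1}{2}\partial_{t}g_{\alpha\beta}$ obeys $D_{\gamma}K_{\alpha\beta}=0$. Substituting into the Codazzi equation
\[
R_{0\alpha\beta\gamma}=D_{\gamma}K_{\alpha\beta}-D_{\beta}K_{\alpha\gamma}
\]
yields $R_{0\alpha\beta\gamma}=0$, and tracing with $g^{\alpha\beta}$ gives $R_{0\alpha}=0$. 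The standard Weyl--Ricci--Riemann decomposition, together with $g_{0\alpha}=0$, then forces $C_{0\alpha\beta\gamma}=0$. Any $\bu$-ONF spatial vector has the form $\bm_{i}=e_{i}{}^{\alpha}(t)\,\partial_{\alpha}$, so this is equivalent to $C_{uijk}=0$ in the associated $\bu$-ONF, which by Definition~\ref{def PE/PM} means that the spacetime is PE wrt $\bu$.

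Equivalently, one may insert $\dot{u}_{a}=0$, $\omega_{ab}=0$ together with the vanishing of all spatial covariant derivatives of $\sigma_{ab}$ and $\tilde\Theta$ (by the same flatness/homogeneity argument) into the formula for $(C_{-})_{abcd}$ in terms of the kinematic quantities of $\bu$ given in Appendix~\ref{app_congruences}, and observe that every term drops out. The only mildly technical step is the construction of synchronous coordinates adapted to the commuting Killing fields, which follows from Frobenius applied to the orthogonal distribution $\bu^{\perp}$ together with the mutual commutativity of the $\partial_{\alpha}$; once these coordinates are at hand the rest is automatic, so no substantial obstacle arises.
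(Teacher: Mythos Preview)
Your proof is correct and essentially parallels the paper's first proof. Both choose $\bu$ to be the unit normal to the homogeneous slices (so $\dot u_a=\omega_{ab}=0$), and both conclude by showing that the spatial covariant derivative of the extrinsic curvature (equivalently, of $\sigma_{ab}$ and $\tilde\Theta$) vanishes because the Abelian symmetry kills all spatial derivatives of the relevant quantities. You package this via the Codazzi equation in synchronous coordinates, obtaining $R_{0\alpha\beta\gamma}=0$ first and then descending to the Weyl tensor; the paper instead inserts $\sigma_{a[b;c]}=0$ and $\sigma^a{}_{b;a}=0$ directly into the kinematic formula~(\ref{Weyl_u}). These are the same computation in slightly different dress. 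Your route has the mild bonus of proving the spacetime is actually RPE (since you get $R_{uijk}=0$), which the paper records separately in section~\ref{subsub: RPE spacetimes}.

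The paper also supplies a second, independent proof that you do not give: in the diagonal form $\d s^2=-\d t^2+\sum_i a_i(t)^2(\d x^i)^2$, the spatial reflection $\phi:(t,x^i)\mapsto(t,-x^i)$ is an isometry whose differential equals $-\theta$ on $T_p^*M$, forcing $\theta(C)=(-1)^4C=C$, i.e., $C=C_+$. This argument is more conceptual (it ties the PE property directly to the Cartan involution) and extends immediately to other curvature tensors of even rank, whereas your Codazzi argument is more hands-on but yields the slightly sharper RPE statement along the way.
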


\begin{proof}

Let us present two different proofs of this. First,
consider  the family of spatial hypersurfaces, $\Sigma_t$,
defined as the orbits of the Abelian $\mb{R}^{n-1}$. We choose $\bu$
to be the Gaussian normal to $\Sigma_t$. Consequently, $\bu$ is
vorticity-free, $u_{[a;b]}=0$ and geodesic, $u^{b}u_{a;b}=0$. Using
equation (\ref{Weyl_u}) in Appendix we see that the magnetic
components of Weyl reduce to:
\[   C^{dg}{}_{bc}u_dh^b_{~e}h^c_{~f}=  2h^{ag}h^b_{~e}h^c_{~f}\sigma_{a[b;c]}+\frac{2}{n-2}h^g_{~[e} h^b_{~f]}\sigma^a_{\ b;a}.
\]
Choosing a $\bu$-ONF consisting of left-invariant spatial vectors,
$\bm_{\hat i}$ in $\Sigma_t$ in the standard way
\cite{Milnor76,Hervik2010-Left,EllMac69,Hervik2002-5D}, the
commutators satisfy $[\bm_{\hat i},\bm_{\hat j}]=0$ due to the fact that
$\mb{R}^{(n-1)}$ is Abelian. In addition, $[\bu,\bm_{\hat i}]$ is
tangent to the hypersurfaces due to the fact that this is an ONF.
This further implies that the following connection coefficients are
zero: $ u_{a}\Gamma^a_{bc}u^c=\Gamma^{\hat i}_{\hat j\hat k}=0.$ An
explicit computation now gives that
$h^{ag}h^b_{~e}h^c_{~f}\sigma_{a[b;c]}=0$ and $\sigma^a_{\ b;a}=0$;
consequently, this spacetime is PE.

A second proof of Proposition \ref{prop:BTI} can also be given using
symmetries. The Abelian $\mb{R}^{n-1}$ implies also that we can, in
a suitable frame, write the metric as: \beq \d s^2=-\d
t^2+\sum_{i=1}^{(n-1)}a_i(t)(\d x^i)^2, \eeq where $\d t$ is the
dual one-form to the Gaussian normal vector $\bu$ above. Here, it is
obvious that the discrete map $\phi:(t,x^i)\mapsto (t,-x^i)$ is an
isotropy for a point with $x^i=0$. Since this space is spatially
homogeneous, this $\phi$ extends to an isotropy at any point in
space. Consider the point $p$ at the origin of $\Sigma_t$. Then it
is straightforward to see that $\phi$ gives rise to the map
$\phi^*=-\theta$ on $T^*_pM$, where $\theta$ is the Cartan
involution. Since this is an isotropy at any point, this must extend
to an isotropy of the Weyl tensor $C$ as well as all other curvature
tensors. Since $\phi^*=-\theta$, on $T^*_pM$ this implies that for a
curvature tensor, $T$, of rank $N$, we have the condition
$(-1)^N\theta(T)=T$. Hence, for the Weyl tensor, which is of rank 4,
$\theta(C)=C$, and consequently, $C=C_+$ and thus PE.

\end{proof}

Examples of such spacetimes have been considered in arbitrary
dimensions, for example, in \cite{Hervik2001-Discrete} (here, the
full group of discrete symmetries were considered).

\subsubsection{Type D spacetimes with more than two multiple WANDs}\label{subsub: morethan2WANDs}

Higher-dimensional type D spacetimes with more than two multiple
WANDs are PE (see Remark~\ref{rem:CtypeD_PE}). For instance, in
\cite{DurRea09} it was proved that all type D Einstein spacetimes
which admit a non-geodesic field of multiple WANDs over a region
necessarily posses more than two multiple WANDs at each point of
that region, and all five-dimensional such spacetimes were
explicitly listed. See also \cite{GodRea09,OrtPraPra11} for more
explicit examples.

\subsection{PM spacetimes}

\label{subsec: PM spacetimes}

Contrary to PE spacetimes, properly PM spacetimes are most elusive.
For instance, in four dimensions the only known Petrov type D
PM spacetimes are LRS and were obtained in \cite{LozCarm03}.
For $n=4$ we refer to \cite{Lozanovski07,LozWyll11} for recent
deductions of Petrov type I$(M^\infty)$ and I$(M^+)$ PM spacetimes
(cf.\ Remark \ref{remark: IJM}), and to \cite{WyllVdB06} for a
complete overview of the
% four-dimensional
PM literature prior to these investigations. Here we underline the
elusiveness of PM spacetimes in any dimension, by proving
Propositions \ref{thm: PM vacua D} and \ref{prop_PMproducts}; this
also supports the soundness of the Weyl PM definition. However, the
work of \cite{Lozanovski07} will enable us to construct examples of
higher-dimensional (non-vacuum) PM spacetimes in
section~\ref{subsubsec_RPM}.

\subsubsection{Restrictions on Einstein spacetimes}

In a frame approach to four- or higher-dimensional General
Relativity, the requirement of a spacetime to obey certain geometric
conditions
puts constraints on the closed Einstein-Ricci-Bianchi system of
equations. This may give rise to severe integrability conditions,
leading to non-existence or uniqueness results. Regarding the PM
condition in four dimensions it was shown, e.g., in \cite{VdBWyll06}
that PM, Petrov type D, {\em aligned} perfect fluids, i.e., for
which the Weyl tensor is PM wrt the fluid velocity, are necessarily
LRS and thus also comprised in the work of \cite{LozCarm03}. As
another example, aligned PM irrotational dust spacetimes have been
shown not to exist, irrespective of the Petrov
type~\cite{Wylleman06}. In the same line severe integrability
conditions arise for PM Einstein spacetimes (including the Ricci-flat
case), and up to now no such solution has been found, in any
dimensions. For $n=4$ it was therefore conjectured in
\cite{McIntoshetal94} that {\em no congruence of observers in an
Einstein spacetime exist which measures the Weyl tensor to be PM}.
Up to present a general proof has not been found, but the validity
of the conjecture was shown under a variety of additional
assumptions (see again \cite{WyllVdB06} for an overview), among
which the Weyl type D assumption~\cite{Hall73,McIntoshetal94}.
%In order to underline the elusiveness of PM spacetimes we now
%generalize this result to higher dimensions.
This last result can be generalized to arbitrary dimension:

\begin{thm}\label{thm: PM vacua D}
In any dimension, Einstein spacetimes with a type D, PM Weyl tensor
do not exist.
\end{thm}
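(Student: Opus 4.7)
The proof will be a contradiction argument, extending Hall's four-dimensional result \cite{Hall73,McIntoshetal94} to arbitrary dimension. Suppose $C_{abcd}$ is the Weyl tensor of an Einstein spacetime and is type D and properly PM. Fix a null frame $\{\bm_0=\bl,\bm_1=\bn,\bm_{\hat i=3,\ldots,n}\}$ adapted to the pair of double WANDs. The type D hypothesis kills all components of boost weight $\pm 1,\pm 2$; by Proposition~\ref{PE_PM_types}, PM together with type D means type D(abc), hence $C_{\hat i\hat j\hat k\hat l}=0$; and the trace-free identities listed in Proposition~\ref{prop Weyl_PE_PM} then force $C_{0101}=0$ and $C_{0(\hat i\hat j)1}=0$. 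Consequently the whole Weyl tensor is encoded in the single antisymmetric spatial matrix $B_{\hat i\hat j}\equiv C_{01\hat i\hat j}$, with the derived b.w.\ 0 component $C_{0\hat i 1\hat j}=-\tfrac{1}{2}B_{\hat i\hat j}$.

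I next promote $\bl,\bn$ to geodesic vectors. By \cite{DurRea09}, an Einstein type D spacetime with a non-geodesic multiple WAND necessarily admits more than two multiple WANDs; but by Remark~\ref{rem:CtypeD_PE} any such spacetime is PE, not PM. Hence a properly PM type D Einstein spacetime has exactly two multiple WANDs and both can be taken geodesic, which I arrange to be affinely parametrised. I then adopt the higher-dimensional GHP-style formalism of \cite{Durkee09,DurRea09}, which compactly packages the surviving Ricci rotation coefficients into the optical matrices $\rho^{(\bl)}_{\hat i\hat j},\rho^{(\bn)}_{\hat i\hat j}$ (and their twist/transverse companions), and apply the once-contracted second Bianchi identity $\nabla^a C_{abcd}=0$, which holds because $\nabla_a R_{bc}=0$ on an Einstein spacetime.

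The key step is to project this identity onto all index combinations. Those whose free indices have boost weight $\pm 1$ must be automatically satisfied, because by type D the corresponding Weyl components vanish identically; this yields algebraic equations of the schematic form $B_{\hat k[\hat i}\rho^{(\bl)}_{\hat j]\hat k}=0$ together with its $\bl\leftrightarrow\bn$ mirror. The projections with b.w.\ 0 free indices then deliver directional-derivative equations of the form $DB_{\hat i\hat j}\propto \mathrm{tr}(\rho^{(\bl)})\,B_{\hat i\hat j}$ and $\Delta B_{\hat i\hat j}\propto \mathrm{tr}(\rho^{(\bn)})\,B_{\hat i\hat j}$. Commuting $D$ and $\Delta$ on $B$ via the higher-dimensional Ricci identity, and exploiting the antisymmetry of $B$ together with the structural restrictions on $\rho^{(\bl)}, \rho^{(\bn)}$ in type II Einstein geometries, I expect to obtain an over-determined system whose only antisymmetric solution is $B=0$, in contradiction with the properly PM hypothesis.

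The main obstacle is the technical one of organising the higher-dimensional Bianchi and commutator relations without drowning in indices; the classical four-dimensional argument proceeds via Newman--Penrose scalars, and here I would rely on the streamlined calculus of \cite{Durkee09,DurRea09}, which is designed precisely to isolate the b.w.\ $\pm 1$ Bianchi projections where the obstruction lives. A secondary delicate point is to confirm that the conclusion is not evaded when $B$ has degenerate rank: this should be addressed using the residual $SO(n-2)$ freedom acting on the transverse frame $\{\bm_{\hat i}\}$ to block-diagonalise $B$ and apply the argument block by block. The case $n=4$ is already handled by \cite{Hall73,McIntoshetal94}, and the sketched higher-dimensional proof is its natural analogue.
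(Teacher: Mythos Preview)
Your setup and overall strategy coincide with the paper's: same null frame, same identification of the sole surviving Weyl datum as the antisymmetric $B_{\hat i\hat j}=C_{01\hat i\hat j}$ (the paper writes $\Phi^A_{\hat i\hat j}$), same use of the higher-dimensional GHP formalism (the paper cites \cite{Durkeeetal10}), and the same first move of extracting algebraic constraints on the optical matrices from the b.w.\ $\pm 1$ Bianchi equations. Your argument for geodesicity of the double WANDs via \cite{DurRea09} plus Remark~\ref{rem:CtypeD_PE} is a nice alternative to the paper's direct citation of \cite{PraPraOrt07}.

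Where your sketch departs from the paper, and where it has a gap, is the endgame. The paper does \emph{not} obtain the contradiction by commuting $D$ and $\Delta$ on $B$. Instead, it mines the b.w.\ $\pm 1$ Bianchi equations much harder than your single schematic relation $B_{\hat k[\hat i}\rho_{\hat j]\hat k}=0$ suggests: one gets $\boldsymbol{\Phi S}=\boldsymbol{S\Phi}$, $\boldsymbol{\Phi A}=-\boldsymbol{A\Phi}$, and the crucial $\boldsymbol{\Phi}(\boldsymbol{S}+2\boldsymbol{A}-\tfrac12\rho\,\boldsymbol{1})=0$, plus further relations from other Bianchi components, which together force $\boldsymbol\rho$ (and $\boldsymbol\rho'$) into an extremely rigid form and give $\boldsymbol{\rho\rho'}=\boldsymbol{\rho'\rho}$. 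A separate Bianchi equation yields $\tau_{\hat i}=\tau'_{\hat i}$. Only then does the contradiction appear, and it comes from the \emph{Sachs/NP equations} for the optical matrix --- equations (A.5)--(A.6) of \cite{Durkeeetal10} --- in which $\boldsymbol\Phi$ sits as a curvature source: subtracting one from the other, with the previously established $\boldsymbol{\rho\rho'}=\boldsymbol{\rho'\rho}$ and $\tau=\tau'$, collapses the antisymmetric part to $\boldsymbol\Phi=0$.

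Your proposed $[D,\Delta]B$ route is not obviously equivalent to this. Since $B$ has boost weight zero, the commutator acting on it does not directly produce a $\boldsymbol\Phi$-term of the right shape; the curvature enters the argument through the evolution of $\boldsymbol\rho$, not through a commutator on the Weyl component itself. So the missing idea is: use the Bianchi equations to constrain $\boldsymbol\rho,\boldsymbol\rho',\tau,\tau'$ first, and then read off the contradiction from the optical (Sachs) equations where $\boldsymbol\Phi$ is the source.
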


\begin{proof}

Assume that a PM type D Einstein spacetime exists. Take a null frame
$(\bl,\bn,\bm_{\hat i})$ for which $\bl$ and $\bn$ span the (unique)
double WANDs. We work with the generalization of the
Geroch-Held-Penrose formalism introduced in \cite{Durkeeetal10}. In
the notation of \cite{Durkeeetal10}, the PM type D Einstein space
conditions translate into the vanishing of all curvature tensor
components, except for
$\Phi^A_{\hat{i}\hat{j}}=\Phi_{[\hat{i}\hat{j}]}\neq 0$ and possibly
$\phi_{\hat{k}\hat{k}}=\phi=\Lambda$ (no summation over $\hat{k}$,
$\Lambda$ being the cosmological constant up to
normalization). We denote $\boldsymbol{\Phi}$ for the matrix
$[\Phi^A_{\hat{i}\hat{j}}]$, and
$\mathbf{S}\equiv[\rho_{(\hat{i}\hat{j})}]$ and $\mathbf{A}\equiv
[\rho_{[\hat{i}\hat{j}]}]$ for the symmetric, resp.\ antisymmetric
part of the matrix $\boldsymbol{\rho}=[\rho_{\hat{i}\hat{j}}]$.
Since $\Phi\neq 0$ both double WANDs are geodetic by Proposition 6
of \cite{PraPraOrt07},
%~\footnote{Notice that in our context we get a
%simplified proof of this fact, due to $\Phi=\Phi^S_{ij}=0$. Then
%Eq.\ (B1) in \cite{Durkeeetal10} reduces to
%$\Phi_{\hat{i}[\hat{j}}\kappa_{\hat{k}]}-\kappa_{\hat
%i}\Phi_{\hat{j}\hat{k}}=0$ (*). Tracing over $\hat{i}$ and $\hat{k}$
%yields $\Phi_{\hat{i}\hat{j}}\kappa^{\hat j}=0$; now contracting (*)
%with $\kappa^{\hat i}$ implies $\kappa_{\hat i}=0$. Priming this
%reasoning gives $\kappa'_{\hat i}=0$.}
such that we can take the
simplified Ricci (`NP') and Bianchi equations displayed in Appendix
A of \cite{Durkeeetal10}, of which we shall only need (A.5-6) and
(A.10-13).

By considering the symmetric part of (A.10), and the symmetric and
antisymmetric parts of the $jk$-contraction of (A.11) we immediately
get
\begin{eqnarray}
%\label{Phi SA comm}
&&\boldsymbol{\Phi S}=\boldsymbol{S\Phi},\qquad \boldsymbol{\Phi
A}=-\boldsymbol{A\Phi},\\
\label{Phi SA comm}&&\boldsymbol{\Phi}\left(\boldsymbol{S}+2\boldsymbol{A}-\tfrac
12 \rho \boldsymbol{1}_{n-2}\right)=0,\quad \rho\equiv \rho^{\hat
i}{}_{\hat i}=S^{\hat i}{}_{\hat i}.
\end{eqnarray}
Let $2p\geq 2$ be the rank of the antisymmetric matrix
$\boldsymbol{\Phi}$. Then, by rotation of the $\bm_{\hat i}$ we can
put $\Phi$ in normal $2\times 2$ block form
$\left[\begin{smallmatrix}x&0\\0&0\end{smallmatrix}\right]$, where
$x$ is an  antisymmetric, 2-block diagonal, invertible
$2p\times2p$ matrix. Write $\boldsymbol{S}$ and $\boldsymbol{A}$ in
the same kind of block form:
$\boldsymbol{S}=\left[\begin{smallmatrix}s_1&s_2\\s_2^t&s_3\end{smallmatrix}\right],\,
\boldsymbol{A}=\left[\begin{smallmatrix}a_1&a_2\\-a_2^t&a_3\end{smallmatrix}\right]$,
where $y^t$ is the transpose of $y$, and $s_1$ and $s_3$ are
symmetric whereas $a_1$ and $a_3$ are antisymmetric.  Performing the
matrix multiplication in (\ref{Phi SA comm}) in $2\times 2$ block
form and using the invertibility of $x$ one gets
\begin{equation}\label{conseq}
s_1+2a_1-\tfrac{\rho}{2}\boldsymbol{1}_{2p}=0,\;s_2+2a_2=0\quad\Rightarrow\quad
s_2=a_2=0,\quad a_1=0,\quad s_1=\tfrac 12 \rho\boldsymbol{1}_{2p},
\end{equation}
by taking symmetric and antisymmetric parts.
Next, taking $\hat{i}=1,\,\hat{j}=2$ and $\hat{k},\hat{l}>2p$ in
(A.11) produces $a_3=0$, whence $\mathbf{A}=0$. Now (A.12),
%gives $S_{\hat{i}[\hat{l}}\Phi_{\hat{j}\hat{k}]}=0$;
with $\hat{j}=1,\,\hat{k}=2$ gives $S_{\hat i\hat l}=0$, $\forall
\hat i$ and $\forall \hat l>2$, implying that either $\mathbf{S}=0$,
or $p=1$ and $s_3=0$ (notice that the latter is compatible with
the last equation in (\ref{conseq}) and $\rho=S^{\hat i}{}_{\hat
i}=(s_1)^{\hat i}{}_{\hat i}$). We conclude that
$\boldsymbol{\rho}=\left[\begin{smallmatrix}\tfrac\rho 2
\boldsymbol{1}_{2} & 0\\0&0\end{smallmatrix}\right]$. Priming the
above reasoning leads to
$\boldsymbol{\rho'}=\left[\begin{smallmatrix}\tfrac{\rho'}{2}
\boldsymbol{1}_{2}& 0\\0&0\end{smallmatrix}\right]$, such that
\begin{equation}
\label{rhorho'} \boldsymbol{\rho\rho'}=\boldsymbol{\rho'\rho}.
\end{equation}
Put $T_i\equiv\tau_{\hat i}-\tau'_{\hat i}$. Adding (A.13) to its
prime dual gives
\begin{equation}\label{B5}
\Phi_{\hat{i}[\hat{j}}T_{\hat{k}]}-T_{\hat
i}\Phi_{\hat{j}\hat{k}}=0.
\end{equation}
Tracing over $\hat{i}$ and $\hat{k}$ leads to $T^{\hat
i}\Phi_{\hat{i}\hat j}=0$; contracting now (\ref{B5}) with $T^{\hat
i}$ implies $T_{\hat i}=0$, i.e., $\tau_{\hat i}=\tau'_{\hat i}$.
Finally, subtracting (A.6) from (A.5), using (\ref{rhorho'}) and
taking the antisymmetric part yields the desired contradiction
$\boldsymbol{\Phi}=0$.
\end{proof}

\subsubsection{PM direct products}\label{subsub: PM direct products}

Similarly as in the PE case above, we now deduce necessary and
sufficient conditions for a product spacetime
$M^{(n)}=M^{(n_1)}\times M^{(n_2)}$ to be (properly) PM. Again, the
results can be translated immediately to, e.g., warped spacetimes. We
use the notation and conventions of $\S$ \ref{subsubsec_PEwarps}.

Firstly, recall that for $n_1\leq 2$ a direct product is PE, and
thus cannot be properly PM by the first sentence of
%unless conformally flat
Proposition \ref{prop uniqueness}. Secondly, suppose that $M^{(n)}$
is properly PM wrt $\bu=\cosh\gamma{\mbox{{$\bf
U$}}}+\sinh\gamma{\mbox{{$\bf Y$}}}$, where ${\bf U}$ and ${\bf Y}$
live in the respective factor spaces. Consider the vector
${\mbox{{$\bf y$}}}=\cosh\gamma{\mbox{{$\bf
Y$}}}+\sinh\gamma{\mbox{{$\bf U$}}}$ and the composite $\bu$-ONF
$\{\bu,\bm_A,{\bf y},\bm_{\tilde I}\}$. By requiring $C_{y\tilde
IA\tilde J}=0$ and $C_{yABC}=0$ and using eqs.~(8)--(10) of
\cite{PraPraOrt07} one finds
$\sinh(\gamma)R^{(n_1)}_{UA}=\sinh(\gamma)C^{(n_1)}_{UABC}=0$. If
$\sinh\gamma\neq 0$ it would follow from
Propostion~\ref{prop_products u-in-M1} that $M^{(n)}$ is PE wrt
${\mbox{{$\bf U$}}}$, again in contradiction with Proposition
\ref{prop uniqueness}. Thus $\bu={\bf U}$ (i.e., $\gamma=0$). Thirdly, we state the
following lemma, which is proved by simple substitution in eqs.\
(9)--(11) of \cite{PraPraOrt07}; here and henceforth a composite
${\bf U}$-ONF $\{{\bf U},\bm_i\}=\{{\bf U},\bm_A,\bm_I\}$ is used.

\begin{lem} Let $M^{(n)}=M^{(n_1)}\times M^{(n_2)}$ be a direct
product spacetime with $n_1\geq 3$ and ${\bf U}$ a unit timelike
vector living in $M^{(n_1)}$. If
\begin{itemize}
\item
the Ricci tensor of $M^{(n_1)}$ is of the form
\begin{equation}\label{Rn1 conds}
R^{(n_1)}_{ab}=\frac{\Rc^{(n_1)}}{n_1}g^{(n_1)}_{ab}+U_{(a}q_{b)},\quad
U^aq_a=0\qquad \left(\mbox{i.e., } R^{(n_1)}_{AB}=\frac{\Rc^{(n_1)}}{n_1}\delta_{AB}
,\quad R^{(n_1)}_{UU}=-\frac{\Rc^{(n_1)}}{n_1}\right);
\end{equation}
\item
$M^{(n_2)}$ is an Einstein space:
\begin{equation}\label{Rn2 conds}
R^{(n_2)}_{ab}=\frac{\Rc^{(n_2)}}{n_2}g^{(n_2)}_{ab}\qquad
\left(R^{(n_2)}_{IJ}=\frac{\Rc^{(n_2)}}{n_2}\delta_{IJ}\right);
\end{equation}
\item the Ricci scalars of the factors are related by~(\ref{Riccisc_Weyldecomp}),
%\begin{equation}\label{Riccisc_Weyldecomp}
%n_2 (n_2-1)\Rc^{(n_1)}+n_1(n_1-1) \Rc^{(n_2)}=0,
%\end{equation}
\end{itemize}
then the only possibly non-zero Weyl components of $M^{(n)}$ are
\begin{eqnarray}
&&C_{UIAJ}=-\frac{1}{n-2}R^{(n_1)}_{UA}\delta_{IJ},\qquad
C_{UABC}=C^{(n_1)}_{UABC}-\frac{2n_2}{(n-2)(n_1-2)}g^{(n_1)}_{A[B}R^{(n_1)}_{C]U},\label{PMcomps}\\
&&C_{UAUB}=C^{(n_1)}_{UAUB},\qquad C_{ABCD}=C^{(n_1)}_{ABCD},\qquad
C_{IJKL}=C^{(n_2)}_{IJKL}.\label{PMconds}
\end{eqnarray}
\end{lem}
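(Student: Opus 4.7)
The plan is to exploit the explicit decomposition formulas for the Weyl tensor of a direct product manifold that are given by eqs.~(8)--(11) of \cite{PraPraOrt07}. Since the lemma is effectively a substitution exercise, the task reduces to carefully inspecting each admissible class of Weyl components $C_{ijkl}$ in the composite frame $\{{\bf U},\bm_A,\bm_I\}$ and simplifying it under the hypotheses on $R^{(n_1)}_{ab}$, $R^{(n_2)}_{ab}$ and the scalar-curvature relation.

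First, I would list the possible Weyl components in a frame adapted to the product structure. Because the Riemann tensor of a direct product is block-decomposable, only components with indices belonging to a single factor space contribute from the factor Riemann parts, while the ``mixed'' components arise solely from the trace-removing terms that produce the Weyl tensor from the Riemann tensor. Concretely, the nontrivial classes are $C_{ABCD}$, $C_{IJKL}$, $C_{UABC}$, $C_{UAUB}$, $C_{ABIJ}$, $C_{AIBJ}$, $C_{UIAJ}$, plus those obtainable by symmetries; all components with an odd number of indices from $M^{(n_2)}$ (such as $C_{UAIJ}$, $C_{IJKA}$, etc.) vanish identically by the product structure.

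Next, I would substitute (\ref{Rn1 conds}), (\ref{Rn2 conds}) and (\ref{Riccisc_Weyldecomp}) in the formulas of \cite{PraPraOrt07}. The key algebraic observation is that (\ref{Riccisc_Weyldecomp}) is precisely the condition that makes the purely ``mixed'' components $C_{AIBJ}$ vanish in the Einstein--Einstein setting, and condition (\ref{Rn2 conds}) makes all the mixed blocks containing an index $I$ collapse to terms proportional to $\delta_{IJ}$ multiplying factor-$M^{(n_1)}$ quantities. After this substitution one sees immediately that $C_{IJKL}=C^{(n_2)}_{IJKL}$ and $C_{ABCD}=C^{(n_1)}_{ABCD}$ with no extra trace corrections (since the trace corrections cancel precisely by (\ref{Riccisc_Weyldecomp})), while $C_{UIAJ}$ and $C_{UABC}$ retain the indicated trace-type contributions produced by the off-diagonal piece $U_{(a}q_{b)}$ of $R^{(n_1)}_{ab}$; finally $C_{UAUB}=C^{(n_1)}_{UAUB}$ follows because the Ricci ambiguity in the $UU$-direction is reabsorbed into $C^{(n_1)}_{UAUB}$ via the trace-free property.

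The only mildly delicate step, and therefore the main place to be careful, is tracking the numerical coefficients in the trace-subtraction terms: both the factor dimensions $n_1,n_2$ and the total dimension $n=n_1+n_2$ appear, and one must verify that the $\delta_{IJ}$-coefficient in $C_{UIAJ}$ really is $-R^{(n_1)}_{UA}/(n-2)$ and that the antisymmetric $g^{(n_1)}_{A[B}R^{(n_1)}_{C]U}$ term in $C_{UABC}$ carries the coefficient $2n_2/[(n-2)(n_1-2)]$ as stated. This is a straightforward but bookkeeping-heavy computation using the identities expressing the Weyl tensor of a direct product in terms of the factor Weyl and factor Ricci tensors; once the traces are consistently removed no further curvature information is needed, and the claim follows.
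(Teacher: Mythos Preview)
Your proposal is correct and follows essentially the same approach as the paper, which states that the lemma ``is proved by simple substitution in eqs.\ (9)--(11) of \cite{PraPraOrt07}.'' Your write-up merely spells out in more detail which component classes arise and which cancellations occur under the three hypotheses, but the underlying method is identical.
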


Notice that under the conditions of the lemma $\Rc^{(n_1)}$ is
constant, as actually follows from the decomposability of the Ricci
scalar~\cite{Ficken39} and eq.\ (\ref{Riccisc_Weyldecomp}) an se. We
can now prove:

\begin{prop}\label{prop_PMproducts} A direct product spacetime $M^{(n)}=M^{(n_1)}\times M^{(n_2)}$ is PM wrt
a unit timelike vector ${\bf U}$ that lives in $M^{(n_1)}$ iff the
following conditions hold:
\begin{itemize}
\item[(a)] $M^{(n_1)}$ is PM wrt ${\bf U}$ and has Ricci tensor of the form
(\ref{Rn1 conds});
\item[(b)] $M^{(n_2)}$ is of constant curvature;
\item[(c)] the Ricci scalars of the factors are related by
(\ref{Riccisc_Weyldecomp}), i.e., $n_2 (n_2-1)\Rc^{(n_1)}+n_1(n_1-1) \Rc^{(n_2)}=0$.
\end{itemize}
In this case the Weyl (magnetic) components are given by
(\ref{PMcomps}). Moreover, if $M^{(n_1)}\times M^{(n_2)}$ is
properly PM wrt $\bu$
then $n_1\geq 3$ and $\bu$ necessarily lives in %the timelike factor
$M^{(n_1)}$.
%A direct product cannot be PM wrt a timelike vector not
%living in the timelike factor.
\end{prop}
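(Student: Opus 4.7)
My plan is to establish the three parts of the proposition in order: first the ``moreover'' clause (which determines in which direction the timelike vector $\bu$ may point), then the sufficiency of (a)--(c), then the necessity. The overall strategy is to reduce everything, as quickly as possible, to a direct application of the lemma. For the ``moreover'' clause I would simply formalise what has already been argued in the two paragraphs preceding the lemma: (i) by Proposition~\ref{prop: warp 2D factor} a product with $n_1\leq 2$ is either conformally flat or type D and PE, so properly PM forces $n_1\geq 3$; (ii) writing $\bu=\cosh\gamma\,{\bf U}+\sinh\gamma\,{\bf Y}$ in the composite $\bu$-ONF $\{\bu,{\bf m}_A,{\bf y},{\bf m}_{\tilde I}\}$ and imposing $C_{y\tilde I A\tilde J}=0=C_{yABC}$, the product formulas~(8)--(10) of \cite{PraPraOrt07} give $\sinh\gamma\,R^{(n_1)}_{UA}=\sinh\gamma\,C^{(n_1)}_{UABC}=0$; if $\sinh\gamma\neq 0$ then Proposition~\ref{prop_products u-in-M1} would make $M^{(n)}$ also PE wrt ${\bf U}$, contradicting Proposition~\ref{prop uniqueness}. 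Hence $\gamma=0$, i.e., $\bu={\bf U}$ lives in $M^{(n_1)}$, and the rest of the argument is carried out under that assumption.

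For the sufficiency of (a)--(c) I would apply the lemma. Condition (b) implies $M^{(n_2)}$ is Einstein, so (\ref{Rn2 conds}) holds; combined with (a) and (c) all the hypotheses of the lemma are met and the only possibly non-zero Weyl components of $M^{(n)}$ are those in~(\ref{PMcomps})--(\ref{PMconds}). The components~(\ref{PMcomps}) are magnetic. Of the components~(\ref{PMconds}), $C^{(n_2)}_{IJKL}=0$ because a constant-curvature factor has vanishing Weyl tensor, $C^{(n_1)}_{ABCD}=0$ because $M^{(n_1)}$ is PM wrt ${\bf U}$, and then $C^{(n_1)}_{UAUB}=0$ by applying the trace identity~(\ref{N2 vs N0}) inside $M^{(n_1)}$. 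Thus every electric Weyl component of $M^{(n)}$ vanishes and $M^{(n)}$ is PM wrt ${\bf U}$.

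For the necessity direction my plan is to impose, one by one, the vanishing of every electric Weyl component of $M^{(n)}$ in the composite ${\bf U}$-ONF, using the generic product-space identities (8)--(11) of \cite{PraPraOrt07} to re-express each such component in terms of factor curvatures, and then to take suitable traces over $A$- and $I$-indices. I expect the following separation to emerge: the $C_{IJKL}=0$ relation together with the trace-free property of $C^{(n_2)}$ should force $M^{(n_2)}$ to be Einstein with $C^{(n_2)}_{IJKL}=0$, i.e., of constant curvature; the $C_{ABCD}=0$ and $C_{AIBJ}=0$ relations should force the Ricci form~(\ref{Rn1 conds}) on $M^{(n_1)}$ together with the PM property $C^{(n_1)}_{UABC}=0$ (whence (a)); and a residual scalar identity, obtained by contracting $C_{UAUB}=0$ or $C_{UIUJ}=0$ over $A$ or $I$, should yield the balance~(\ref{Riccisc_Weyldecomp}) between the two Ricci scalars (whence (c)). Once (a)--(c) are recovered, the explicit form~(\ref{PMcomps}) of the surviving magnetic components follows again from the lemma. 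The main obstacle I anticipate is the bookkeeping needed to disentangle these mixed contributions: a single electric Weyl component of the product generically couples $R^{(n_1)}_{UU}$, $R^{(n_2)}_{IJ}$ and the factor Weyl tensors, so the factor-wise conditions on $M^{(n_1)}$, $M^{(n_2)}$ and on the Ricci-scalar relation only emerge after exactly the right sequence of traces, together with repeated use of the trace-free conditions on each factor Weyl tensor.
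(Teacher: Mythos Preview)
Your proposal is correct and follows essentially the same route as the paper: the ``moreover'' clause and the sufficiency direction are handled exactly as the paper does (via the preceding paragraphs and the lemma), and the necessity argument uses the same ingredients (impose $C_{ijkl}=0$ componentwise through eqs.~(8)--(11) of \cite{PraPraOrt07} and take traces). The paper's ordering for necessity is slightly more streamlined---it first extracts the scalar relation~(\ref{Riccisc_Weyldecomp}) from the single double trace $C^{IJ}{}_{IJ}=0$, then obtains the Einstein conditions on \emph{both} factors simultaneously from $C_{AIBJ}=0$, and only then invokes the lemma to read off $C^{(n_1)}_{ABCD}=0$ and $C^{(n_2)}_{IJKL}=0$---and note that in your necessity sketch the ``PM property'' of $M^{(n_1)}$ should read $C^{(n_1)}_{ABCD}=0$, not $C^{(n_1)}_{UABC}=0$.
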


\begin{proof} Above we already proved the last sentence. Conditions
(a)-(c) are precisely those of the lemma, augmented by the vanishing
of the right hand sides in (\ref{PMconds}); hence (a)-(c) is
sufficient for the spacetime to be PM. Conversely, suppose that
$M^{(n)}$ is PM wrt ${\bf U}$, i.e., $C_{ijkl}=0$.
%that $n_1\geq 3$ and $\bu$ necessarily lives in $M^{(n_1)}$.
Expressing
$C^{IJ}{}_{IJ}=0$ and using eq.\ (11) of \cite{PraPraOrt07} one
immediately finds (\ref{Riccisc_Weyldecomp}). Next, $C_{AIBJ}=0$ and
eq.\ (9) of \cite{PraPraOrt07} yield for $A\neq B$, $I\neq J$ and
$A=B,\,I=J$ that $R^{(n_1)}_{AB}=0$, $R^{(n_2)}_{IJ}=0$ and
%respectively, while for $A=B,\,I=J$ this gives
$(n-1)(R^{(n_1)}_{AA}+R^{(n_2)}_{II})=\Rc^{(n_1)}+\Rc^{(n_2)}$,
respectively. Summing the last relation over $I$ and separately over
$A$, and using (\ref{Riccisc_Weyldecomp}), one arrives at (\ref{Rn1
conds}) and (\ref{Rn2 conds}). This proves (c) and the Ricci part of
(a) and (b). Using the lemma, $M^{(n_1)}$ must be PM and $M^{(n_2)}$
conformally flat
%of constant curvature
since the left hand sides in (\ref{PMconds}) vanish, and the
remaining Weyl magnetic components are (\ref{PMcomps}).
\end{proof}

%\begin{rem}\label{rem: PEPMdirect}
Notice that the PM condition for a direct product (or conformally
related) spacetime is much more stringent than the PE condition,
cf.\ Propositions \ref{prop_products u-in-M1} and \ref{prop_products
u-not-in-M1}. In addition, from footnote~\ref{foot: RPE Ricci_flat}
and (\ref{Riccisc_Weyldecomp}) it follows that {\em a product space
$M^{(n)}=M^{(n_1)}\times M^{(n_2)}$ is a properly PM Einstein space
iff it is the direct product of a properly PM Ricci flat spacetime
$M^{(n_1)}$ and a flat $M^{(n_2)}$}.
%\end{rem}

\section{The Ricci and Riemann tensors: Riemann purely electric (RPE) or magnetic  (RPM) spacetimes}

\label{PE_PM_Riemann}

Similarly as done above for the Weyl tensor, one can naturally
define the electric and magnetic parts of the Ricci and Riemann
tensors, and  deduce properties
%and in particular analyze null alignment properties
of spacetimes which possess a purely electric or magnetic (Ricci or)
Riemann tensor. This is studied in the present section.

\subsection{Definitions and PE/PM conditions}

In accordance with Definition 3.3
we define:

\begin{defn}\label{def RPE/RPM} The electric part of the Ricci [Riemann]
tensor wrt $\bu$ is the tensor $(R_+)_{ab}$ [$(R_+)_{abcd}$]. The
Ricci tensor is called PE  (wrt $\bu$) if $R_{ab}=(R_+)_{ab}$. The
Riemann tensor or a spacetime is called {\em Riemann purely
electric} or {\em RPE} (wrt $\bu$) if $R_{abcd}=(\tilde
R_+)_{abcd}$. The definitions of a PM Ricci tensor and a {\em
Riemann purely magnetic (RPM)} Riemann tensor or spacetime are
analogous, replacing $+$ by $-$.
\end{defn}

Based on (\ref{Riemann decomp}) we have, in any $\bu$-ONF,
the following component relations between the different parts:
\begin{eqnarray}
 & & (C_+)_{ijkl}=(R_+)_{ijkl}-\frac{2}{n-2}(\delta_{i[k}(R_+)_{l]j}-\delta_{j[k}(R_+)_{l]i})+\frac{2{\cal R}}{(n-1)(n-2)}\delta_{i[k}\delta_{l]j}, \label{Cijkl} \\
 & & (C_+)_{uiuj}=(R_+)_{uiuj}+\frac{1}{n-2}\left\{(R_+)_{ij}-\left[(R_+)_{uu}+\frac{{\cal R}}{n-1}\right]\delta_{ij}\right\}, \label{Cuiuj} \\
 & & (C_-)_{uijk}=(R_-)_{uijk}-\frac{2}{n-2}\delta_{i[k}(R_-)_{j]u}. \label{Cuijk}
\end{eqnarray}

It is easy to see (cf. also \cite{Senovilla01}) that the independent
electric Riemann components consist of $n(n-1)/2$ components
$(R_+)_{uiuj}$ and $n(n-1)^2(n-2)/12$ components $(\tilde
R_+)_{ijkl}$, while the magnetic ones of $n(n-1)(n-2)/3$ components
$(R_-)_{ujik}$ (recall the index symmetries and the cyclicity).

From (\ref{Cijkl})--(\ref{Cuijk}) [or (\ref{comm +- contr}) and
(\ref{Riemann decomp})] it follows that
%~\footnote{The electric
%part of this was proven in four dimensions in \cite{Trumper65}. For
%the magnetic part, the focus in four dimensions has usually been on
%the weaker condition mentioned in remark \ref{rem RPEPM 4D} below,
%for which the theorem does not hold in this form.}

\begin{prop}\label{prop RPE/PM} The Riemann tensor is RPE (RPM) wrt $\bu$
iff both the corresponding Weyl and Ricci tensors are PE (PM) wrt
$\bu$, or vanish (but not both at the same time).
\end{prop}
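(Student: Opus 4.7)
My plan hinges on the Ricci decomposition of the Riemann tensor,
\[ R_{abcd}=C_{abcd}+\tfrac{2}{n-2}\bigl(g_{a[c}R_{d]b}-g_{b[c}R_{d]a}\bigr)-\tfrac{2\mathcal{R}}{(n-1)(n-2)}g_{a[c}g_{d]b}, \]
of which the component relations \eqref{Cijkl}--\eqref{Cuijk} are the $\bu$-ONF representatives. The Cartan involution $\theta$ fixes the metric ($\theta(g_{ab})=g_{ab}$, by construction; see Appendix~\ref{subsec: Pre Cartan invol}) and commutes with every tensorial contraction performed with $g$; moreover $\mathcal{R}$ is a $\theta$-invariant scalar. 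I would therefore apply the $\pm$-projection to the displayed identity and let all metric and scalar-curvature factors pass through $\theta$ unaltered.

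This yields the two identities
\[ (R_+)_{abcd}=(C_+)_{abcd}+\tfrac{2}{n-2}\bigl(g_{a[c}(R_+)_{d]b}-g_{b[c}(R_+)_{d]a}\bigr)-\tfrac{2\mathcal{R}}{(n-1)(n-2)}g_{a[c}g_{d]b}, \]
\[ (R_-)_{abcd}=(C_-)_{abcd}+\tfrac{2}{n-2}\bigl(g_{a[c}(R_-)_{d]b}-g_{b[c}(R_-)_{d]a}\bigr), \]
where on the left $R$ denotes the Riemann tensor and on the right it denotes the Ricci tensor; the scalar-curvature term lives entirely in the $+$ piece since $\mathcal{R}\,g_{a[c}g_{d]b}$ is $\theta$-invariant. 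These identities are the essential output of the computation, and from here the two implications become purely algebraic.

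For $(\Leftarrow)$, if Weyl and Ricci are both PE wrt $\bu$ (or if one vanishes and the other is PE) then $(C_-)_{abcd}=0$ and $(R_-)_{bd}=0$, and the magnetic identity above immediately gives $(R_-)_{abcd}=0$, so the Riemann tensor is RPE; the PM case is identical after exchanging $+\leftrightarrow-$. For $(\Rightarrow)$, assume $(R_-)_{abcd}=0$; contracting with $g^{ac}$ and using $g^{ac}R_{abcd}=R_{bd}$ (which, since $\theta(g_{ab})=g_{ab}$, transports the $\pm$-splitting from the Riemann to the Ricci level) forces $(R_-)_{bd}=0$, so Ricci is PE. Feeding this back into the same magnetic identity yields $(C_-)_{abcd}=0$, so Weyl is PE as well. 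The clause ``but not both at the same time'' only excludes the trivial degenerate case in which Weyl and Ricci vanish simultaneously, which would give $R_{abcd}=0$. I anticipate no genuine obstacle; the only point demanding a little care is the notational bookkeeping between the Riemann and Ricci roles of $R$ in the two identities above.
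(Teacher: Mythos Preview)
Your proof is correct and follows precisely the approach indicated by the paper, which simply refers to eqs.~(\ref{Cijkl})--(\ref{Cuijk}) [equivalently, (\ref{comm +- contr}) and (\ref{Riemann decomp})] without spelling out the details. You have made explicit what the paper leaves implicit; the only minor point is that in the PM direction the scalar-curvature term does not quite disappear ``by symmetry'' but rather because $(R_+)_{ab}=0$ forces $\mathcal{R}=0$, which your argument handles correctly via the contraction step.
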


In four dimensions, the RPE part of this proposition was proven in
\cite{Trumper65}. For the RPM part, the focus has usually been on
the weaker condition mentioned in remark \ref{rem RPEPM 4D} below,
for which the theorem does not hold in this form.

Just as for the Weyl tensor (Proposition~\ref{prop_Bel-Debever}) one
can easily derive PE/PM Bel-Debever criteria for the Ricci and
Riemann tensors. Regarding the latter, one can either make a
separate analysis (the only difference with the Weyl tensor being
that (\ref{tracefree}) does not hold, or use Propositions
\ref{prop_Bel-Debever} and \ref{prop RPE/PM}).

\begin{prop}(Ricci and Riemann PE/PM Bel-Debever criteria) Let $\bu$ be a unit timelike
vector.
%\be\label{unit} g^{ab}u_au_b=-1. \ee
Then a Ricci tensor $R_{ab}\neq
0$ is
\begin{itemize}
\item PE wrt $\bu$ iff $\;R_{ui}=0\;$ in a $\bu$-ONF, i.e.,
\begin{equation}\label{Bel Ricci PE}
    u_{[a}R_{b]c}u^c=0; %\qquad (\Leftrightarrow R_{ui}=0 \mbox{ \bf in a $\bu$-ONF});
\end{equation}
\item PM wrt $\bu$ iff $\;R_{uu}=R_{ij}=0\;$ in a $\bu$-ONF, i.e.,
\begin{equation}\label{Bel Ricci PM}
    R_{ab}u^au^b=u_{[a}R_{b][c}u_{d]}=0. % \qquad (\Leftrightarrow R_{uu}=R_{ij}=0 \mbox{ \bf in a $\bu$-ONF}).
\end{equation}
\end{itemize}
A Riemann tensor $\Rt_{abcd}\neq 0$ is
\begin{itemize}
\item RPE wrt $\bu$ iff $\;\Rt_{uijk}=0\Leftrightarrow C_{uijk}=R_{ui}=0\;$ in a
$\bu$-ONF, i.e.,
\begin{equation}\label{Bel Riemann PE}
    u_ag^{ab}\Rt_{bc[de}u_{f]}=0 \quad\Leftrightarrow \quad
    u_ag^{ab}C_{bc[de}u_{f]}=u_{[a}R_{b]c}u^c=0;%\qquad (\Leftrightarrow \Rt_{uijk}=0 \mbox{ \bf in a $\bu$-ONF});
\end{equation}
\item RPM wrt $\bu$ iff $\;\Rt_{ijkl}=\Rt_{uiuj}=0\Leftrightarrow C_{ijkl}=R_{uu}=R_{ij}=0\;$ in a
$\bu$-ONF, i.e.,
\begin{equation}\label{Bel Riemann PM}
   u_{[a}\Rt_{bc][de}u_{f]}=\Rt_{abcd}u^bu^d=0 \quad\Leftrightarrow \quad
   u_{[a}C_{bc][de}u_{f]}=R_{ab}u^au^b=u_{[a}R_{b][c}u_{d]}=0. %\qquad (\Leftrightarrow \Rt_{ijkl}=\Rt_{uiuj}=0 \mbox{ \bf in a $\bu$-ONF}).
\end{equation}
\end{itemize}
\label{prop_Rie/Ric Bel-Debever}
\end{prop}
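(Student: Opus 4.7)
My plan is to reduce each claimed equation to its ONF-component form and invoke the general principle from $\S$\ref{subsec: Pre Cartan invol} that a tensor $T$ equals $T_+$ (respectively $T_-$) iff, in a $\bu$-ONF, only those components with an even (respectively odd) number of indices equal to $u$ are non-zero. The translation between component conditions and covariant tensor equations then proceeds by inserting $u^a$'s and the spatial projector $h^a{}_b=\delta^a{}_b+u^au_b$, exploiting $u^u=1$, $u_u=-1$ and $u^i=u_i=0$ in any $\bu$-ONF.

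For the Ricci tensor, the symmetric rank-two components split into the even set $\{R_{uu},R_{ij}\}$ and the odd set $\{R_{ui}\}$, so PE iff $R_{ui}=0$ and PM iff $R_{uu}=R_{ij}=0$. Setting $v_b\equiv R_{bc}u^c$ yields $v_u=-R_{uu}$ and $v_i=R_{iu}$; antisymmetrising as $u_{[a}v_{b]}$ kills the $u$-parallel part, leaving an expression whose only independent ONF components are proportional to $R_{iu}$, which yields~(\ref{Bel Ricci PE}). In the PM case, $R_{ab}u^au^b=R_{uu}$ isolates $R_{uu}$, while a direct expansion shows that $u_{[a}R_{b][c}u_{d]}$ has as its only non-trivial ONF component $u_{[u}R_{i][u}u_{j]}=-\tfrac{1}{4}R_{ij}$; conjoining the two equations then gives~(\ref{Bel Ricci PM}).

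For the Riemann tensor, the independent components in a $\bu$-ONF (modulo the algebraic symmetries~(\ref{symm1})) split into the even (electric) set $\{R_{ijkl},R_{uiuj}\}$ and the odd (magnetic) set $\{R_{uijk}\}$; hence RPE iff $R_{uijk}=0$ and RPM iff $R_{ijkl}=R_{uiuj}=0$. The covariant expression $u_ag^{ab}R_{bc[de}u_{f]}$ is structurally identical to the Weyl analogue in Proposition~\ref{prop_Bel-Debever} and, by the same short computation, isolates exactly $R_{uijk}$ (its only surviving ONF component is proportional to $R_{ujkl}$). Analogously, $u_{[a}R_{bc][de}u_{f]}$ picks out $R_{ijkl}$, while $R_{abcd}u^bu^d=R_{aucu}$ reduces under Riemann antisymmetry to the purely spatial part $R_{iuju}=R_{uiuj}$. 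The stated equivalence with the conjunction of the Weyl and Ricci Bel-Debever criteria is then immediate from Proposition~\ref{prop RPE/PM}, or can alternatively be verified directly by substituting the decomposition~(\ref{Cijkl})--(\ref{Cuijk}) into the Riemann component conditions.

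No single step presents a genuine obstacle---the argument is essentially a bookkeeping of which ONF components are annihilated by each candidate covariant expression. The only point requiring care is that, in the RPM case, the single tensor equation $u_{[a}R_{bc][de}u_{f]}=0$ does not constrain the electric components $R_{uiuj}$ (the $u$-index in each slot is used up by the antisymmetrisations), which is why the additional scalar-type equation $R_{abcd}u^bu^d=0$ must be adjoined; this is the direct analogue of the coupling of $R_{ab}u^au^b=0$ to $u_{[a}R_{b][c}u_{d]}=0$ in the PM Ricci case.
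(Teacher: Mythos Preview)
Your proposal is correct and follows essentially the same approach as the paper. The paper does not give a detailed proof but merely states that the criteria ``can easily be derived'' either by a separate analysis analogous to the Weyl case in Proposition~\ref{prop_Bel-Debever} (the only difference being that the tracefree property~(\ref{tracefree}) is absent), or by combining Propositions~\ref{prop_Bel-Debever} and~\ref{prop RPE/PM}; you carry out both of these routes explicitly, including the careful observation that in the RPM case the equation $u_{[a}R_{bc][de}u_{f]}=0$ leaves $R_{uiuj}$ unconstrained, which is why the extra condition $R_{abcd}u^bu^d=0$ is needed.
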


The PE/PM criteria for the Ricci tensor can be stated alternatively
in terms of conditions on the {\em Ricci operator} on tangent space:
\begin{equation}
{\sf R}:\;v^a\mapsto R^a{}_bv^b,
\end{equation}
for which the ${\sf R}_+$ and ${\sf R}_-$ parts wrt a unit timelike
$\bu$ have the following $1+(n-1)$ block form representations in any
$\bu$-ONF ${\cal F}_u=\{\bu,\bm_{i=2,...,n}\}$:
\begin{equation}\label{Ricci_op+-}
[{\sf R}_+]_{{\cal
F}_u}=\left[\begin{smallmatrix}-R_{uu}&0\\0&R^{sp}\end{smallmatrix}\right],\qquad
[{\sf R}_-]_{{\cal F}_u}=\left[\begin{smallmatrix}0&\alpha
q^t\\-\alpha q&0\end{smallmatrix}\right].
\end{equation}
Here $\alpha\in {\cal F}_M$, $R^{sp}$ is a real symmetric matrix
with components $(R^{sp})^i_j=R_{ij}$ and $q$ a unit column vector
($q^tq=1$). Write ${\bf q}\equiv q^i\bm_i\leftrightarrow q^a\equiv
q^im_{i}{}^a$. From Definition \ref{def RPE/RPM}, (\ref{Ricci_op+-})
and the classification of Ricci-like tensors into types A1, A2, A3
and B (see appendix A), we readily obtain
\begin{prop} A Ricci tensor $R_{ab}\neq 0$ is
 \begin{itemize}
    \item PE wrt $\bu$ iff ${\sf R}$ has $\bu$ as an eigenvector, ${\sf
    R}(\bu)=-R_{uu}\bu$. In this case all eigenvalues of ${\sf R}$
    %diagonalizable of type A1 (cf.\ Appendix A)
    are real.
    \item PM wrt $\bu$ iff it has the structure
    \begin{equation}\label{Ricci PM}
R_{ab}=2\alpha\, u_{(a}q_{b)}, \quad\alpha\neq 0,\quad
q^aq_a=1,\quad q^au_a=0,
\end{equation}
In this case ${\sf R}$ has eigenvalues 0 (multiplicity $n-2$) and
$\pm i\alpha\neq 0$, with corresponding eigenvectors $\bu\pm i{\bf
q}$. In particular the Ricci scalar vanishes, $\Rc=0$ (in agreement
with the general result of Proposition \ref{prop: PM contr}.)
    \end{itemize}
\label{prop_PEPM_Ricci}
\end{prop}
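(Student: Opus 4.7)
The proof is essentially a translation between the component-level Bel-Debever characterization (Proposition~\ref{prop_Rie/Ric Bel-Debever}) and the block-form representation of the Ricci operator in (\ref{Ricci_op+-}), followed by standard linear algebra. The plan is to treat the PE and PM cases separately, in each direction, and then extract the eigenvalue information directly from the explicit matrix form.

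For the PE part, I would work in a $\bu$-ONF and note that $R^a{}_b u^b$ has $u$-component $-R_{uu}$ and $i$-components $R_{iu}$. Hence the eigenvector condition ${\sf R}(\bu)=-R_{uu}\bu$ is equivalent to $R_{iu}=0$ for all $i$, which by (\ref{Bel Ricci PE}) is precisely the PE condition. In that frame the operator is block-diagonal of the form in (\ref{Ricci_op+-}), with the $(n-1)\times(n-1)$ block $R^{\text{sp}}$ being a real symmetric matrix; reality of the full spectrum of ${\sf R}$ then follows immediately.

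For the PM part, the Bel-Debever conditions $R_{uu}=0=R_{ij}$ of (\ref{Bel Ricci PM}) leave only the off-diagonal mixed components $R_{ui}\neq 0$ (non-zero since $R\neq 0$). I would set $\alpha\equiv(\sum_i R_{ui}^2)^{1/2}>0$, define the purely spatial unit vector $q^a$ by $q_a=R_{ua}/\alpha$ (so $q^au_a=0$, $q^aq_a=1$), and verify by direct substitution that this yields $R_{ab}=2\alpha u_{(a}q_{b)}$. The converse (any tensor of that form is PM) is immediate since in a $\bu$-ONF adapted to ${\bf q}=\bm_2$ only $R_{u2}=\alpha$ is non-zero. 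For the spectrum, I would compute $R^a{}_b u^b=-\alpha q^a$ and $R^a{}_b q^b=\alpha u^a$ (using $u_au^a=-1$, $q_au^a=0$), whence
\begin{equation}
R^a{}_b(u^b\pm iq^b)=\pm i\alpha(u^a\pm iq^a),
\end{equation}
identifying $\bu\pm i{\bf q}$ as eigenvectors with eigenvalues $\pm i\alpha$; the kernel is then the $(n-2)$-dimensional orthogonal complement of $\bu\wedge{\bf q}$. Finally, $\Rc=R^a{}_a=2\alpha\,u^aq_a=0$ gives the Ricci scalar statement, consistent with the general Proposition~\ref{prop: PM contr}.

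There is no serious obstacle here: the argument is a direct unpacking of the definitions, and the only things to watch are the sign conventions from raising a $u$-index with the mostly-plus metric, and the legitimacy of the normalization $\alpha>0$, $q^aq_a=1$ (which requires $R\neq 0$ in the PM case, already assumed). The real content beyond the Bel-Debever criteria is the eigenvector identification $\bu\pm i{\bf q}\leftrightarrow\pm i\alpha$, but this is a one-line calculation once the explicit form $R_{ab}=2\alpha u_{(a}q_{b)}$ is in hand.
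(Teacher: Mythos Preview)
Your proposal is correct and follows essentially the same route the paper intends: the paper does not spell out a proof but simply says the result follows ``readily'' from Definition~\ref{def RPE/RPM}, the block form~(\ref{Ricci_op+-}), and the Ricci classification in Appendix~\ref{app_minimal}, which is exactly what you unpack explicitly via the Bel-Debever criteria and direct eigenvalue computation. One minor caution: with your normalization $\alpha>0$ and $q_i=R_{ui}/\alpha$, a careful check of $2\alpha u_{(u}q_{i)}$ (using $u_u=-1$) gives $-R_{ui}$ rather than $+R_{ui}$, so you should either take $q_i=-R_{ui}/\alpha$ or exploit the sign freedom $(\alpha,q)\mapsto(-\alpha,-q)$ in the form~(\ref{Ricci PM}); you already flag this as a point to watch, and it does not affect the argument.
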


\begin{rem}
\label{rem_Einstein_PE}
The resemblance with Proposition \ref{th_Weyl_PE} is striking.
From appendix A we still have that {\em a Ricci tensor is PE iff it
has a timelike eigenvector, i.e., iff is of type R1.} In particular, in any dimension
{\em all proper Einstein spacetimes ($0\neq R_{ab}\sim g_{ab}$) have a
properly PE Ricci tensor.} In four dimensions this is also true for perfect fluids (Segre type $\{1,(111)\})$ and
Einstein-Maxwell fields (Segre type $\{(1,1)(11)\}$). However, Ricci tensors of types R3 or R4
(see Appendix A) have only real eigenvalues but are not PE.

A properly PM
Ricci tensor is a special instance of type R2. Referring to Remark
\ref{remark: IJM}, the mentioned eigenvalue properties are
equivalent with a characteristic equation for ${\sf R}$ of the form
$(x^2+\alpha^2)x^{n-2}=0$, which is equivalent to
\begin{equation}
\label{PM trace conds}
\text{tr}({\sf R}^{2k-1})=0,\qquad \text{tr}({\sf
R}^{2k})=\frac{\left(\text{tr}({\sf
R}^{2})\right)^k}{2^{k-1}},\qquad \text{tr}({\sf R}^{2})\neq 0
\qquad (k=1,...,\lfloor (n+1)/2 \rfloor).
\end{equation}
Conversely, if a Ricci operator satisfies these conditions, then it
is of type R2 with eigenvalues 0 (multiplicity $n-2$) and $\pm
i\alpha,\,\alpha=\sqrt{-\text{tr}({\sf R}^{2})/2}\neq 0$. Let ${\bf
v}_{\pm}\equiv \bu\pm i{\bf q}$ be corresponding eigenvectors of
$\pm i\alpha$ where, by multiplication with a complex scalar, we can
normalize $\bu$ to be unit timelike. Taking the real and imaginary
parts of ${\sf R}({\bf v}_{\pm})=\pm i\alpha {\bf v}_\pm$ we get
$R_{ab}u^b=-\alpha q^a$ and $R_{ab}q^b=\alpha u_a$. By considering
$q^aR_{ab}u^b$ and in view of the symmetry of $R_{ab}$ we obtain
$q^aq_a=1$.
%By considering $v_+^aR_{ab}v_-^b=i\alpha (v_+)_bv_-^b=-i\alpha
%v_+^a(v_-)_a$ we see that $v_+^a(v_-)_a=0$, which gives $q^aq_a=1$.
However, in general $\bu$ and ${\bf q}$ are {\em not orthogonal},
but if they are then (\ref{Bel Ricci PM}) holds. We conclude that
{\em a Ricci tensor is properly PM iff (\ref{PM trace conds}) holds and the
real and imaginary parts of an eigenvector with non-zero eigenvalue
are orthogonal.}
\end{rem}

\begin{rem}
Replacing the Weyl by the Riemann tensor in (\ref{Weyl operator})
one gets the definition of the {\em Riemann bivector operator}. From
(\ref{Riemann decomp}) and the above results for the Ricci operator
it is easy to check that Proposition \ref{th_Weyl_PE} still holds
when replacing the Weyl by the Riemann tensor (the proof of the
proposition being independent of the tracefree property
(\ref{tracefree})).
\end{rem}

\begin{rem}\label{rem RPEPM 4D} (a) In the four-dimensional literature, a spacetime has been called ``RPM'' or ``RPE'' (the quotes being part of the name) if
\begin{equation}\label{RPM/PE 4D}
\Rt_{acbd}u^cu^d=0,\qquad \tfrac 12
\varepsilon_{acef}\Rt^{ef}{}_{bd}u^cu^d=0,
\end{equation}
respectively (see, e.g., \cite{Arianrhodetal94,Haddow95}). In a
$\bu$-ONF these become the respective sets of conditions
$\Rt_{uiuj}=0$ and $\Rt_{uijk}=0$. Hence, ``RPE'' coincides with our
RPE notion, whereas this is not the case for ``RPM'': there are no
restrictions on $\Rt_{ijkl}$ in the first of (\ref{RPM/PE 4D}),
i.e., it does not cover the $u_{[a}\Rt_{bc][de}u_{f]}=0$ part of
(\ref{Bel Riemann PM}), whence ``RPM'' is weaker than RPM  (in the
terminology of \cite{Senovilla00,Senovilla01}, the EE part of the
Riemann (Weyl) tensor vanishes, but not necessarily the HH part, cf.
Remark~\ref{remark: Senovilla}). This is analogous to the ``PM''
notion for the Weyl tensor (Remark \ref{rem: PM4D}).
%~\footnote{In the terminology of
%\cite{Senovilla00,Senovilla01}, the EE part of the Riemann (Weyl)
%tensor vanishes, but not necessarily the HH part, cf.
%Remark~\ref{remark: Senovilla}.}

(b) From~(\ref{Cuiuj}) one immediately deduces the following generalization
of Theorem~5 in \cite{Haddow95} from four to arbitrary dimensions,
wherein we also define a ``PM'' Ricci tensor.
\begin{prop}
Any two of the following three conditions imply the third:
\begin{enumerate}
\item[(i)] the Riemann tensor is ``RPM'', i.e., $\Rt_{acbd}u^cu^d=0$  ($\Rt_{uiuj}=0$);
\item[(ii)] the Weyl tensor is ``PM'', i.e., $C_{acbd}u^cu^d=0$  ($C_{uiuj}=0$);
\item[(iii)] the Ricci tensor is ``PM'', i.e., it has the form
\begin{equation}\label{Ricci 4D PM Haddow}
R_{ab}=u_{(a}q_{b)}+\frac{\Rc}{n-1}h_{ab},\quad u^aq_a=0\qquad
\left(R_{uu}=0,\,R_{ij}=\frac{\Rc}{n-1}\delta_{ij}\right).
\end{equation}
\end{enumerate}
\end{prop}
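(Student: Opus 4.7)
The plan is to use equation~(\ref{Cuiuj}) as the master identity and exploit the fact that in a $\bu$-ONF the $(\cdot)_+$ components just equal the corresponding tensor components (no minus signs arise since the two free indices come in a $uiuj$ pattern). First I would note that conditions (i)--(iii) can all be read off directly in such a frame as the vanishing of $\tilde R_{uiuj}$, $C_{uiuj}$, and the pair $(R_{uu},R_{ij}-\frac{\mathcal{R}}{n-1}\delta_{ij})$, respectively, which is what makes (\ref{Cuiuj}) the only relation one needs.

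Next I would dispose of two of the three implications by mere substitution. Assuming (i) and (iii), the right-hand side of (\ref{Cuiuj}) becomes $0+\frac{1}{n-2}\{\frac{\mathcal{R}}{n-1}\delta_{ij}-[0+\frac{\mathcal{R}}{n-1}]\delta_{ij}\}=0$, giving (ii); symmetrically, assuming (ii) and (iii) yields $\tilde R_{uiuj}=0$, i.e.\ (i).

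The one case that requires a tiny extra step is \mbox{(i)+(ii)$\Rightarrow$(iii)}. Plugging both vanishing conditions into (\ref{Cuiuj}) leaves
\begin{equation}
R_{ij}=\left[R_{uu}+\frac{\mathcal{R}}{n-1}\right]\delta_{ij}. \label{plan:Rij}
\end{equation}
Taking the trace over $i,j=2,\ldots,n$ gives $\sum_i R_{ii}=(n-1)R_{uu}+\mathcal{R}$, while in the $\bu$-ONF one also has $\mathcal{R}=g^{ab}R_{ab}=-R_{uu}+\sum_i R_{ii}$. Combining these yields $(n-2)R_{uu}=0$, so $R_{uu}=0$ (recall $n\geq 4$), and (\ref{plan:Rij}) then reduces to $R_{ij}=\frac{\mathcal{R}}{n-1}\delta_{ij}$. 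Together with $R_{ui}=0$ (which is already built into (i) through Proposition~\ref{prop_Rie/Ric Bel-Debever} and the splitting, or can be noted separately as part of the PE/PM structure of $R_{ab}$ being encoded by $R_{uu}$ and $R_{ij}$ only in the ``PM'' sense of (iii)), this is exactly (iii).

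The only mild obstacle is making clear that the ``PM'' condition (iii) on $R_{ab}$ genuinely means just the two scalar/tensorial statements $R_{uu}=0$ and $R_{ij}=\frac{\mathcal{R}}{n-1}\delta_{ij}$ in a $\bu$-ONF (with $R_{ui}$ automatically accommodated by the $u_{(a}q_{b)}$ piece), so that the trace argument above actually delivers (iii) in full. Once that identification is made, the whole proposition reduces to one algebraic identity plus one trace.
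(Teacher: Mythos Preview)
Your approach is essentially the same as the paper's, which simply says the result follows ``immediately'' from~(\ref{Cuiuj}); you have correctly unpacked the one nontrivial direction (i)+(ii)$\Rightarrow$(iii) via the trace.

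One cleanup: your parenthetical about $R_{ui}=0$ is a red herring and slightly wrong. Condition (iii) does \emph{not} require $R_{ui}=0$; the term $u_{(a}q_{b)}$ with $u^aq_a=0$ gives precisely $R_{ui}=\tfrac12 q_i$, so whatever $R_{ui}$ happens to be is absorbed into $q$. Neither (i) nor (ii) constrains $R_{ui}$, and none is needed. Your last paragraph actually gets this right (``$R_{ui}$ automatically accommodated by the $u_{(a}q_{b)}$ piece''), so just drop the earlier claim that $R_{ui}=0$ follows from (i) via Proposition~\ref{prop_Rie/Ric Bel-Debever}; it doesn't, and you don't need it.
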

\noindent Hence, by comparison of (\ref{Ricci PM}) with (\ref{Ricci
4D PM Haddow}) a Ricci tensor is PM iff it is ``PM'' with vanishing
Ricci scalar.
In four dimensions, the spacetimes satisfying (i)-(iii), dubbed
`Haddow magnetic'~\cite{Lozanovski07}, are Weyl PM and ``RPM'', but
not RPM (in general). In \cite{Lozanovski07} a family of such
spacetimes was deduced, the RPM members
being given by the metrics (\ref{RPM metric}) below and giving rise
to RPM spacetimes in higher dimensions (section
\ref{subsubsec_RPM}). Examples satisfying (i) but not (ii), and vice
versa, were discussed in \cite{Arianrhodetal94}.
%\item

(c) Whereas the conjunction of the RPE and RPM conditions (even wrt
different timelike directions, see Propositions \ref{prop
uniqueness} and \ref{prop RPE/PM} below) only leads to flat
spacetime, the
%``RPE'' (RPE)
RPE and ``RPM'' conditions {\em can} be realized wrt the same $\bu$.
This occurs iff
\begin{equation}\label{Ruabc=0}
 \Rt_{abcd}u^d=0 \quad (\Leftrightarrow \Rt_{uijk}=0=\Rt_{uiuj}\; \mbox{   in a $\bu$-ONF}).
\end{equation}
Vacuum spacetimes $R_{ab}=0$ satisfying (\ref{Ruabc=0}) are flat in
four dimensions (since $C_{uijk}=0\Leftrightarrow H_{ab}=0$ and
$C_{uiuj}=0\Leftrightarrow E_{ab}=0$) but can be non-trivial in
{five} or higher dimensions (see \cite{Ortaggio09} and section
\ref{subsub: RPE spacetimes} below). Finally notice that the trace
of (\ref{Ruabc=0}) gives $R_{ua}=0$ such that proper Einstein
spacetimes ($0\neq R_{ab}\sim g_{ab}$) are not allowed. (See also point~\ref{item_RPERPM} of section~\ref{subsub: RPE spacetimes}.)
\end{rem}

\subsection{Null alignment types}

For the Ricci and Riemann tensors we immediately get the following
analogue of Proposition \ref{prop Weyl_PE_PM}:

\begin{thm}
\label{prop_Ricci_PE_PM}
A Ricci tensor is PE/PM wrt $\bu=(\bl-\bn)/\sqrt{2}$ if in some $\bu$-adapted null frame
$\{\bm_0=\bl,\bm_1=\bn,\bm_{\hat{i}=3,...,n}\}$ the following
relations hold (in which case they hold in {any} such frame):
\begin{eqnarray}
\label{RicPE cond}&\text{PE}:&\quad R_{00}=R_{11},\qquad R_{0\hat
i}=R_{1\hat i}; \\
\label{RicPM cond}&\text{PM}:&\quad R_{00}=-R_{11},\qquad R_{0\hat
i}=-R_{1\hat i},\qquad R_{\hat i\hat j}=R_{01}=0.
\end{eqnarray}
A Riemann tensor is RPE (RPM) wrt $\bu=(\bl-\bn)/\sqrt{2}$ if in
some $\bu$-adapted null frame the relations (\ref{PE cond}) and
(\ref{RicPE cond}) ((\ref{PM cond}) and (\ref{RicPM cond})) hold (in
which case they hold in {any} such frame).
\end{thm}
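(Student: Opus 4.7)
My plan is to mimic the proof of Proposition~\ref{prop Weyl_PE_PM} for the Weyl tensor, exploiting the observation (\ref{ln interchange}) that in any $\bu$-adapted null frame $\{\bm_0=\bl,\bm_1=\bn,\bm_{\hat{i}=3,\ldots,n}\}$ the Cartan involution $\theta$ associated with $\bu$ acts simply as the swap $\bl\leftrightarrow\bn$ fixing each $\bm_{\hat i}$. Passively, this means the action of $\theta$ on any tensor component is nothing but the interchange of the frame labels $0\leftrightarrow 1$, hatted indices being untouched. Since $T=T_+\Leftrightarrow \theta(T)=T$ and $T=T_-\Leftrightarrow\theta(T)=-T$, the PE (PM) property is equivalent to the statement that all components are invariant (change sign) under this label swap in some $\bu$-adapted null frame; by naturality of the splitting this is then true in every such frame.

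First I would apply this observation to the Ricci tensor. Using $R_{ab}=R_{ba}$, the independent null-frame components are $R_{00}$, $R_{11}$, $R_{01}$, $R_{0\hat i}$, $R_{1\hat i}$ and $R_{\hat i\hat j}$. Under $0\leftrightarrow 1$ the pairs $(R_{00},R_{11})$ and $(R_{0\hat i},R_{1\hat i})$ are exchanged, while $R_{01}$ and $R_{\hat i\hat j}$ are individually invariant. Imposing invariance therefore produces exactly (\ref{RicPE cond}), with no constraint on $R_{01}$ and $R_{\hat i\hat j}$; imposing that every component changes sign forces the exchanged pairs to be negatives of each other and kills the invariant ones, giving (\ref{RicPM cond}). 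This handles the Ricci half.

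For the Riemann tensor I would simply invoke Proposition~\ref{prop RPE/PM}, which reduces the RPE (RPM) property wrt $\bu$ to the joint PE (PM) character wrt $\bu$ of the Weyl and Ricci tensors (with the caveat that one of them is allowed to vanish). The RPE conditions are then the conjunction of (\ref{PE cond}) from Proposition~\ref{prop Weyl_PE_PM} with (\ref{RicPE cond}) just established, and analogously (\ref{PM cond}) together with (\ref{RicPM cond}) in the RPM case. The frame-independence of the resulting conditions is automatic, because $R=R_\pm$ and $\Rt=\Rt_\pm$ are intrinsic properties; the null-frame relations are merely their component translation, and any two $\bu$-adapted null frames are related by a Lorentz transformation of the spatial $\bm_{\hat i}$ together with a boost in the $\bl$-$\bn$ plane, both of which commute with the $0\leftrightarrow 1$ swap.

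No genuine obstacle is anticipated; the only bookkeeping to double-check is the claim that the conditions, once verified in one $\bu$-adapted null frame, hold in every such frame, and this follows as above from the covariant nature of the $\pm$-splitting. Alternatively one can bypass Proposition~\ref{prop RPE/PM} and give a fully parallel direct proof by enumerating the independent Riemann components $\Rt_{0\hat i0\hat j}$, $\Rt_{1\hat i1\hat j}$, $\Rt_{0\hat i\hat j\hat k}$, $\Rt_{1\hat i\hat j\hat k}$, $\Rt_{01\hat i\hat j}$, $\Rt_{\hat i\hat j\hat k\hat l}$, $\Rt_{00}$, $\Rt_{11}$, $\Rt_{0\hat i}$, $\Rt_{1\hat i}$, $\Rt_{01}$, $\Rt_{\hat i\hat j}$ and repeating the swap argument verbatim; this redundancy can be used as a sanity check.
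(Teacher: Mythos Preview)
Your approach is correct and is precisely what the paper intends: the theorem is stated as the immediate ``analogue of Proposition~\ref{prop Weyl_PE_PM}'', with no separate proof given, so your argument via the $0\leftrightarrow 1$ swap (\ref{ln interchange}) and the component enumeration for $R_{ab}$, followed by the appeal to Proposition~\ref{prop RPE/PM} for the Riemann part, is exactly the intended filling-in.

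One small correction: your claim that any two $\bu$-adapted null frames differ by a spatial rotation \emph{together with a boost in the $\bl$--$\bn$ plane} is wrong. A boost $b_\lambda$ sends $(\bl-\bn)/\sqrt{2}$ to $(e^\lambda\bl-e^{-\lambda}\bn)/\sqrt{2}\neq\bu$ for $\lambda\neq 0$, so boosts do not preserve the $\bu$-adapted condition. The $\bu$-adapted null frames are related precisely by elements of the maximal compact subgroup $K\cong O(n-1)$ acting on $\bu^\perp$ (equivalently, by choosing a different $\bm_2$ and rotating the remaining $\bm_{\hat i}$), and these commute with $\theta$ by (\ref{k and theta commute}). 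Your first justification---that $R=R_\pm$ is a covariant statement and the component conditions are merely its translation in any $\bu$-adapted frame---is already sufficient and is the clean argument; simply drop the boost remark. Also, in your closing ``alternative'' list you have written rank-2 symbols $\Rt_{00},\Rt_{0\hat i},\ldots$ among the Riemann components; if you keep that paragraph, replace these by the genuine rank-4 components $\Rt_{0101}$, $\Rt_{010\hat i}$, $\Rt_{0\hat i1\hat j}$, etc.
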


From the beginning of section \ref{subsub:admittedtypes} it follows
that the only admitted alignment types of a PE or PM Ricci tensor,
and of a RPE or RPM Riemann tensor, are G, I$_i$, D or O (in the
terminological convention at the end of section \ref{subsec: null
alignment}). However, we will see that this can be further
constrained in the (R)PM case. Also, if the type is I$_i$ or G then
the vector $\bu$ realizing the (R)PE/(R)PM property is unique,
whereas it can be any vector in the Lorentzian space ${\cal
L}_{k+2}$ if the type is D. Moreover, the properties of being
properly PE and PM cannot be realized at the same time (cf.\
Proposition \ref{prop uniqueness} and Remark \ref{rem: uniqueness}).

A properly PE Ricci tensor can be of any of the types G, I$_i$ or D.
For instance, it is easy to see that $R_{ab}=\alpha(l_al_b+n_an_b)$
(with $\alpha\neq 0$) is of type G, while
\beq
 R_{ab}=\alpha_{\hat i}[(l_a+n_a)m^{\hat i}_{b}+m^{\hat
i}_{a}(l_b+n_b)]+R_{01}(l_an_b+n_al_b)\qquad (R_{01}\neq 0,\quad
\text{at least one}\; \alpha_{\hat i}\neq0)
\label{example_Ricci_Ii}\eeq
is of type I$_i$.~\footnote{Checking this is trivial for the type G
example. For the type I$_i$ example, consider a generic null
rotation~(\ref{null_rot}). In the new frame one finds
$R_{0'0'}=-z_{\hat i} R_{0'\hat i'}-(|z|^2/2+1)z_{\hat
i}\alpha_{\hat i}$ and $R_{0'\hat i'}=\alpha_{\hat
i}(-z^2/2+1)+z_{\hat i}(R_{01}-z_{\hat j}\alpha_{\hat j})$. The
existence of a doubly aligned null direction $\bl'$ requires
$R_{0'0'}=0=R_{0'\hat i'}$, which leads to $\alpha_{\hat i}z_{\hat
i}=0=2z_{\hat i}R_{01}+\alpha_{\hat i}(2-|z|^2)$. By contracting the
latter equation with $z_{\hat i}$ gives $|z|^2=0$, i.e., all
$z_{\hat i}=0$. This then implies $\alpha_{\hat i}=0$, leading to a
contradiction. Therefore in this case there do not exist any doubly
aligned null directions, so that the type is indeed I$_i$, as
claimed.} We also have

\begin{prop}\label{prop Ricci type I is Ii} If a Ricci-like tensor (over a vector space of dimension $n$) is of type I at a point, then it is of type
I$_i$ and possesses at least a $(n-3)$-dimensional surface of single
ANDs.
\end{prop}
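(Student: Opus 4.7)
The plan is to work locally in the space of null directions near the AND $\bl$ provided by the hypothesis and apply the implicit function theorem to produce a smooth $(n-3)$-dimensional family of single ANDs through $\bl$. The strict type I assumption will supply exactly the nondegeneracy condition the IFT needs.

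First I would fix a null frame $\{\bl,\bn,\bm_{\hat i}\}$ adapted to the given AND (so $R_{00}=0$), and parametrize null directions near $[\bl]$ by
\[
\bl'(C)=\bl+\tfrac12|C|^2\bn+C^{\hat i}\bm_{\hat i},\qquad C=(C^{\hat i})\in\R^{n-2},
\]
which is null for every $C$ and sweeps out an open neighbourhood of $[\bl]$ on the projective null cone. A routine expansion of $R_{ab}\bl'^a\bl'^b$ in terms of the boost-weight components of $R_{ab}$ yields
\[
R_{ab}\bl'^a\bl'^b=2R_{0\hat i}C^{\hat i}+(R_{\hat i\hat j}+R_{01}\delta_{\hat i\hat j})C^{\hat i}C^{\hat j}+|C|^2 C^{\hat i}R_{1\hat i}+\tfrac14|C|^4 R_{11},
\]
a polynomial of degree $4$ in $C$ whose gradient at the origin equals $2R_{0\hat i}$.

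Next I would invoke the type~I hypothesis. Since the alignment type is exactly I (not II, D, III or N), no AND of $R_{ab}$ can have boost order $\leq 0$; in the chosen frame this says precisely that the row $R_{0\hat i}$ is nonzero, as otherwise $\bl$ would itself be an AND of boost order $\leq 0$ and the tensor would be at least type II. Hence the gradient above is nonzero and the IFT gives a smooth codimension-1 submanifold of $\R^{n-2}$ through $C=0$, i.e.\ a smooth $(n-3)$-dimensional family of null directions $[\bl'(C)]$ on which $R_{ab}\bl'^a\bl'^b$ vanishes. Each of these is an AND, and each is single, because a double AND would upgrade the alignment type beyond I.

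To conclude that the tensor is type I$_i$, I would observe that $g(\bl,\bl'(C))=\tfrac12|C|^2\neq 0$ whenever $C\neq 0$ on this hypersurface, so $\bl$ and $\bl'(C)$ span a timelike $2$-plane and are thus ``paired'' ANDs in the I$_i$ sense. The only delicate step of the whole argument is the implication ``strict type~I $\Rightarrow (R_{0\hat i})\neq 0$'', but this is immediate from the definition of boost order; the rest is a standard IFT computation applied to a quartic equation on the projective null cone.
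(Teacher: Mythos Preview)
Your argument is essentially correct and takes a genuinely different route from the paper's. One computational slip: in the paper's convention $l^an_a=1$, your $\bl'(C)=\bl+\tfrac12|C|^2\bn+C^{\hat i}\bm_{\hat i}$ is not null; the sign of the $\bn$-term should be reversed (equivalently, take $z^{\hat i}=-C^{\hat i}$ in the paper's null rotation formula). This flips a few signs in the quadratic and cubic terms of your expansion but leaves the linear term $2R_{0\hat i}C^{\hat i}$ unchanged, so your IFT step and the I$_i$ conclusion go through exactly as written.

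The paper proceeds quite differently. It places the given AND at $\bn$ (so $R_{11}=0$) and then null-rotates the \emph{other} frame vector $\bl$ about $\bn$. Because $R_{11}=0$, the quartic term in $R_{0'0'}$ drops out and the alignment equation is only cubic in $z$; after rotating the spatial frame so that $R_{13}\neq 0$ is the only nonzero $R_{1\hat i}$, one observes that for every fixed $(z_4,\ldots,z_n)\in\R^{n-3}$ the resulting real cubic in $z_3$ has at least one real root, producing the $(n-3)$-parameter family. Your approach instead keeps the AND at $\bl$ and deforms $\bl$ itself, using precisely the strict type~I condition $(R_{0\hat i})\neq 0$ as the nondegeneracy needed for the implicit function theorem at $C=0$. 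Your method is cleaner and sidesteps any worry about how roots of a one-parameter family of cubics vary; the trade-off is that it yields only a \emph{local} $(n-3)$-surface of ANDs near $\bl$, whereas the paper's cubic argument in principle ranges over all of $\R^{n-3}$.
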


\begin{proof} Recall that under a null rotation about a null vector $\bn$ with parameter
$z\equiv (z_{\hat i})\equiv (z^{\hat i})\in\R^{n-2}$, a null frame
$\{\bm_0=\bl,\bm_1=\bn,\bm_{\hat i}\}$ transforms to
$\{\bm_{0'}=\bl',\bm_{1'}=\bn,\bm_{\hat{i}'=3,...,n}\}$, where
($|z|^2=z^{\hat i}z_{\hat i}$):
\begin{equation}
 \bl'=\bl-z^{\hat i}\bm_{\hat i}-\frac{|z|^2}{2}\bn, \qquad \bn'=\bn, \qquad \bm_{\hat i}'=\bm_{\hat i}+z_{\hat
 i}\bn.
 \label{null_rot}
\end{equation}
In the new frame, the b.w.\ +2 component is given by
\begin{equation}\label{R0'0'}
R_{0'0'}=R_{00}-2z^{\hat i}R_{\hat i0}+z^{\hat i}z^{\hat j}R_{\hat
i\hat j}-|z|^2R_{01}+|z|^2 z^{\hat i} R_{1\hat
i}+\frac{|z|^4}{4}R_{11}.
\end{equation}
If $R_{ab}$ is of type I it possesses a single AND. Let it be
spanned by $\bn$. Hence, we have $R_{11}=0$ and
$\{R_{1\hat{i}}\}\neq \{0\}$. We may rotate the spatial frame
vectors such that $R_{13}\neq 0$ while all other $R_{1\hat{i}}$'s
vanish. The null vector $\bl'$ spans another single AND iff it
satisfies the alignment equation $R_{0'0'}=0$. By (\ref{R0'0'}) this
is a cubic equation in $z$, where the cubic term is $|z|^2 z_3
R_{1\hat 3}$. For any fixed value $(z^0_4,z^0_5,\ldots,z^0_{n})\in
\R^{n-3}$ we get a cubic equation in $z_3$, which has thus at least
one real solution $z^0_3$ depending continuously on
$(z^0_4,z^0_5,\ldots,z^0_{n})$.
\end{proof}

\begin{prop}\label{prop: PMpeculiar} Any type D Ricci-like tensor $R_{ab}$ is PE. For a type D PM Weyl
tensor $C_{abcd}$, any symmetrized rank 2 contraction of an odd
power vanishes:  $\text{Tr}_{4m+1}(C^{2m+1})_{(ab)}=0$.
\end{prop}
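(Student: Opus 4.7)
The plan is to exploit, for both claims, the interplay between the null-frame boost-weight structure of type~D and the Cartan involution $\theta$ associated with a timelike vector $\bu$ in the span of the two aligned null directions. Fix an adapted null frame $\{\bl,\bn,\bm_{\hat i}\}$ and choose $\bu=(\bl-\bn)/\sqrt{2}$, so that $\theta$ swaps $\bl\leftrightarrow\bn$ and fixes the $\bm_{\hat i}$. For the first claim, a type~D Ricci-like tensor $R_{ab}$ has only b.w.~$0$ components in such a frame: $R_{00}=R_{11}=R_{0\hat i}=R_{1\hat i}=0$, whereas only $R_{01}$ and $R_{\hat i\hat j}$ are unrestricted. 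The PE criterion $R_{00}=R_{11}$, $R_{0\hat i}=R_{1\hat i}$ of Proposition~\ref{prop_Ricci_PE_PM} is therefore trivially satisfied, so $R_{ab}$ is PE wrt $\bu$.

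For the second claim, let $T_{ab}$ denote a generic rank-2 tensor produced from $C^{\otimes(2m+1)}$ by some fixed pattern of $4m+1$ pairwise metric contractions. I would prove $T_{(ab)}=0$ by combining two independent inputs. First, since $C$ is PM wrt $\bu$, Proposition~\ref{prop Weyl_PE_PM} gives $\theta(C)=-C$, and because $\theta$ preserves the metric it commutes with every contraction; hence $\theta(T)=(-1)^{2m+1}T=-T$. Translated to a $\bu$-ONF this forces
\begin{equation*}
T_{uu}=0,\qquad T_{ij}=0,
\end{equation*}
leaving only the mixed $T_{ui}$-components potentially nontrivial. Second, because $C$ is type~D and the metric itself has only b.w.~$0$ components in the chosen null frame, $C^{\otimes(2m+1)}$ inherits this, and hence so does every contracted descendant $T$; in particular $T_{00}=T_{11}=T_{0\hat i}=T_{1\hat i}=T_{\hat i 0}=T_{\hat i 1}=0$ in the null frame.

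To finish, I would express the surviving $T_{ui}$-components in terms of null-frame components using $\bm_2=(\bl+\bn)/\sqrt{2}$. A one-line computation then yields
\begin{equation*}
T_{u2}=\tfrac12(T_{01}-T_{10})=-T_{2u},\qquad T_{u\hat i}=\tfrac{1}{\sqrt 2}(T_{0\hat i}-T_{1\hat i})=0,\qquad T_{2\hat i}=\tfrac{1}{\sqrt 2}(T_{0\hat i}+T_{1\hat i})=0,
\end{equation*}
so that $T_{(u2)}=0$ comes from type~D alone and $T_{(u\hat i)}=T_{(2\hat i)}=0$ follow likewise; combined with $T_{(uu)}=T_{(ij)}=0$ from the involution step, this gives $T_{(ab)}=0$. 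The main conceptual point to recognise is that each component of the symmetrised tensor is killed by exactly one of the two hypotheses: the ``diagonal'' parts $T_{(uu)}$, $T_{(ij)}$ vanish via the involution (PM), whereas the ``off-diagonal'' $T_{(u2)}$ vanishes through type~D, because the null-to-ONF change of basis extracts precisely the antisymmetric combination $T_{01}-T_{10}$ from the only surviving b.w.~$0$ pair.
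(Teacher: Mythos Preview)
Your proof is correct. For the first claim it coincides with the paper's argument. For the second claim you take a more elementary route than the paper: the paper observes that, by Proposition~\ref{prop: PM contr} and Corollary~\ref{cor: type II-D prod-contract}, the symmetrized contraction $S_{ab}=\text{Tr}_{4m+1}(C^{2m+1})_{(ab)}$ is simultaneously in $V_-$ (being a contraction of an odd power of a PM tensor) and of type~D (being a contraction of a power of a type~D tensor), and then invokes the first statement to conclude that $S_{ab}$, as a type~D Ricci-like tensor, is PE; since a nonzero tensor cannot be both PE and PM wrt the same $\bu$, $S_{ab}=0$. Your argument instead unpacks this directly in components, splitting the vanishing of $T_{(ab)}$ between the involution condition ($T_{uu}=T_{ij}=0$) and the boost-weight condition ($T_{(u2)}$, $T_{(u\hat i)}$ via $T_{0\hat i}=T_{1\hat i}=0$). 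This is a concrete instantiation of the paper's abstract reasoning; the paper's approach has the advantage of modularity (it would immediately extend to other ranks or tensor symmetries for which an analogue of the first statement holds), while yours is self-contained and makes the mechanism transparent without appealing to the auxiliary propositions. Note a small redundancy in your write-up: $T_{(2\hat i)}=0$ is already contained in $T_{ij}=0$ from the involution step, so you need not list it separately under the type~D column.
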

\begin{proof} The first statement is trivially seen by taking a null frame $\{\bl,\bn,\bm_{\hat i}\}$ where
 $\bl$ and $\bn$ are double aligned null vectors, such that $R_{ab}=R_{01}(l_an_b+n_al_b)+R_{\hat i\hat
j}m^{\hat i}_{a}m^{\hat i}_{b}$, with  $(R_{01},\{R_{\hat i\hat
j}\})\neq (0,\{0\})$. The second statement is a consequence of
Proposition \ref{prop: PM contr}, Corollary \ref{cor: type II-D
prod-contract} and the first statement.
\end{proof}

From Remark \ref{rem: uniqueness} it thus follows that a properly
PM Ricci tensor cannot be of type D. More specifically we have

\begin{prop}
\label{prop: PM Ricci} A PM Ricci tensor is of alignment type I$_i$
(i.e., types G and D are forbidden) and has a $(n-3)$-dimensional
sphere of single ANDs paired up by the unique unit timelike $\bu$
realizing the PM condition.
\end{prop}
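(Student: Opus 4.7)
The plan is to exploit the explicit form $R_{ab} = 2\alpha u_{(a}q_{b)}$ (with $\alpha\neq 0$, $u^au_a=-1$, $q^aq_a=1$, $u^aq_a=0$) established by Proposition~\ref{prop_PEPM_Ricci}. First, type D is immediately forbidden: by the first statement of Proposition~\ref{prop: PMpeculiar} every type D Ricci-like tensor is PE, and a properly PM Ricci tensor cannot simultaneously be PE by the first sentence of Remark~\ref{rem: uniqueness}.

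For the remaining analysis, I would choose a $\bu$-ONF in which ${\bf q}=\bm_2$, and take the adapted null frame $\bl=(\bu+{\bf q})/\sqrt{2}$, $\bn=(-\bu+{\bf q})/\sqrt{2}$. A short substitution rewrites the tensor as $R_{ab}=\alpha(l_al_b-n_an_b)$, so that its only non-vanishing components in this null frame are $R_{00}=-\alpha$ and $R_{11}=\alpha$. To locate the single ANDs, parameterise an arbitrary null direction as $\bl'\sim A\bl+B\bn+z^{\hat i}\bm_{\hat i}$ subject to $2AB+z^{\hat i}z_{\hat i}=0$; then $R_{ab}l'^al'^b=\alpha(B^2-A^2)$, which vanishes exactly when $B=\pm A$. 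Combined with the null constraint, only $B=-A$ survives, giving $z^{\hat i}z_{\hat i}=2A^2$; after normalising $A=1$, the single ANDs sweep out an $(n-3)$-sphere, as claimed.

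The presence of these ANDs rules out type G. To exclude type II (and thus pin the type as I$_i$), I would perform the null rotation~(\ref{null_rot}) with $|z|^2=2$ that takes $\bl$ to such an aligned $\bl'$: a short calculation yields the b.w.\ $+1$ components $R_{0'\hat i'}\propto \alpha z_{\hat i}$, which cannot simultaneously vanish; equivalently, Proposition~\ref{prop Ricci type I is Ii} immediately upgrades type I to type I$_i$. Finally, the pairing is read off from the involution $\theta$ that swaps $\bl\leftrightarrow\bn$ while fixing $\bm_{\hat i}$: it sends an AND with parameters $(A,-A,z^{\hat i})$ to the distinct AND $(-A,A,z^{\hat i})$, and the identity $\bl'-\theta(\bl')=2A(\bl-\bn)=2\sqrt{2}A\,\bu$ shows that $\bu$ lies in the 2-plane spanned by each such pair. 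The uniqueness of $\bu$ realising the PM condition then follows from Proposition~\ref{prop uniqueness} in the type I$_i$ case. The only delicate point is the type II exclusion, but it reduces to the one-line computation of $R_{0'\hat i'}$ just sketched.
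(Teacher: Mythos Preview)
Your proof is correct and follows essentially the same approach as the paper: both rewrite the PM Ricci tensor as $R_{ab}=\alpha(l_al_b-n_an_b)$ in a suitable null frame, identify the single ANDs as the $(n-3)$-sphere via the vanishing of the b.w.~$+2$ component, and verify they are only single by checking the b.w.~$+1$ components $R_{0'\hat i'}\propto \alpha z_{\hat i}\neq 0$. The paper uses null rotations about $\bn$ (eq.~(\ref{null_rot})) rather than your direct parametrisation $A\bl+B\bn+z^{\hat i}\bm_{\hat i}$, but these are equivalent; your explicit pairing argument via $\theta$ and your separate invocation of Proposition~\ref{prop: PMpeculiar} for the type D exclusion are slightly more detailed than the paper's brief treatment, but the substance is the same.
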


\begin{proof}
Putting $q_a=\bm^2_a$ in (\ref{Ricci PM}) and
defining $\bl$ and $\bn$ by (\ref{null frame corr}) we have
\[
R_{ab}=\alpha (l_al_b-n_an_b),\quad \a\neq 0,
\]
cf.\ (\ref{R2 can form}). In a null frame
$\{\bm_0=\bl,\bm_1=\bn,\bm_{\hat{i}=3,...,n}\}$ the only surviving
components are $R_{11}=-R_{00}=\alpha\neq 0$. Under a null rotation
about $\bn$ with parameter $z\in\R^{n-2}$ the
components of positive b.w.\ in the new frame are given by
\[
R_{0'0'}=R_{00}+\frac{|z|^4}{4}R_{11}=\frac{\alpha}{4}(|z|^4-4),\qquad
R_{0'\hat{i}'}=-\alpha\frac{|z|^2}{2}z_{\hat i}\neq 0,
\]
such that $R_{ab}$ has a continuous infinity of single ANDs given by
all $\bl'=\bl'(z)$ satisfying the alignment equation $R_{0'0'}=0$,
i.e., $|z|=\sqrt{2}$. Thus the set of single ANDs is a
$(n-3)$-dimensional sphere. The vector $\bu=(\bl-\bn)/\sqrt{2}$
appearing in (\ref{Ricci PM}) is the unique vector realizing the PM
condition and pairs up this sphere of single ANDs.
\end{proof}

In general, the alignment types of a RPE/RPM Riemann tensor are
subject to Corollary \ref{cor: Riemann align types}. In the RPM
case, combination hereof with Proposition~\ref{prop: PM Ricci}
immediately implies:

\begin{cor} If, at a point, a spacetime
is RPM wrt $\bu$ and the Ricci tensor is non-zero, then the RPM
Riemann tensor is of alignment type I$_i$ or G (i.e., type D is
forbidden). In particular, $\bu$ realizing the RPM property is always unique. \label{cor_RPM}
\end{cor}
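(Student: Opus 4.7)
The plan is to reduce the claim to the already-established rigidity results for a properly PM Ricci tensor (Proposition~\ref{prop: PM Ricci}). First I would use Proposition~\ref{prop RPE/PM} to unpack the RPM hypothesis as the statement that each of $C_{abcd}$ and $R_{ab}$ is either PM with respect to $\bu$ or vanishes, with not both vanishing. Since by assumption $R_{ab}\neq 0$, the Ricci tensor must be properly PM wrt $\bu$; Proposition~\ref{prop: PM Ricci} then yields that $R_{ab}$ is of alignment type I$_i$ and that the unit timelike $\bu$ realizing its PM character is unique up to sign.

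Next I would determine the admissible alignment type of $R_{abcd}$. The preliminary observation right after Theorem~\ref{prop_Ricci_PE_PM} restricts an RPM Riemann tensor a priori to the types G, I$_i$, D or O, so the only task is to exclude type D. For this I would invoke Corollary~\ref{cor: Riemann align types}, which constrains the multiple ANDs of $R_{abcd}$ in terms of those of $C_{abcd}$ and $R_{ab}$; in particular, a multiple AND of $R_{abcd}$ would be a common multiple AND of both pieces. Since a type I$_i$ Ricci tensor admits no multiple AND whatsoever, the Riemann tensor can have none either, ruling out type D. Because $R_{ab}\neq 0$ forces $R_{abcd}\neq 0$, type O is excluded as well, leaving only types I$_i$ and G.

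The uniqueness of $\bu$ is then immediate: any other unit timelike $\bu'$ realizing the RPM property of $R_{abcd}$ would, by Proposition~\ref{prop RPE/PM} applied to the non-zero Ricci tensor, also realize its PM property, and the uniqueness clause of Proposition~\ref{prop: PM Ricci} forces $\bu'=\pm\bu$. The only mildly non-routine step is the correct application of Corollary~\ref{cor: Riemann align types} to exclude type~D; once that corollary is granted, the whole argument is a direct chase through the PM Ricci classification, with no computations beyond what Proposition~\ref{prop: PM Ricci} already provides.
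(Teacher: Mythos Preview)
Your proposal is correct and follows essentially the same route as the paper: the paper simply says the corollary follows by combining Corollary~\ref{cor: Riemann align types} with Proposition~\ref{prop: PM Ricci}, and you have spelled out exactly that combination in detail. Your uniqueness argument via the PM Ricci tensor is a minor variant of (and equivalent to) the paper's implicit appeal to the general uniqueness of $\bu$ for tensors of type I$_i$ or G.
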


 More specifically, Proposition \ref{prop: Riemann boost order}
holds. For instance, if a Riemann tensor is PM wrt $\bu$ and  of
type I$_i$, then $\bu$ can be written as (\ref{u_WANDs}), where
$b_{\text{Rie}}(\bl)=b_{\text{Rie}}(\bn)=1$, and both the
corresponding Ricci and Weyl tensors are PM wrt $\bu$, where
$\max(b_{\text{Ric}}(\bl),b_C(\bl))=\max(b_{\text{Ric}}(\bn),b_C(\bn))=1$.
If a Riemann tensor is PE wrt $\bu$ and of type D, and the Ricci and
Weyl tensors are non-zero, then vectors $\bl$ and $\bn$ exist for
which (\ref{u_WANDs}) holds and along which the boost orders of the
Riemann, Ricci and Weyl tensors are all zero. Finally, from
Proposition \ref{prop: Riemann boost order}, Remark
(\ref{rem:CtypeD_PE}) and the fact that a type D Ricci tensor is
automatically PE, it follows that {\em if a type D Riemann tensor
has more than two double aligned null directions then it is RPE}.

\subsection{Direct products and explicit examples}

The first part of the following proposition is a restatement of
Proposition~\ref{prop_products u-in-M1}, while the second part is an
immediate consequence of Proposition~\ref{prop_PMproducts}.

\begin{prop}
\label{prop_RPE_RPMproducts}
Let $M^{(n)}=M^{(n_1)}\times M^{(n_2)}$
be a direct product spacetime and ${\bf U}$  a timelike vector that
lives in $M^{(n_1)}$. Then
\begin{itemize}
 \item $M^{(n)}$ is RPE wrt ${\bf U}$ iff $M^{(n_1)}$ is RPE wrt ${\bf
 U}$ (which is the case iff $M^{(n)}$ is PE wrt ${\bf U}$);
 \item $M^{(n)}$ is RPM wrt ${\bf U}$ iff $M^{(n_1)}$ is RPM wrt ${\bf
 U}$ and $M^{(n_2)}$ is flat.
\end{itemize}
\end{prop}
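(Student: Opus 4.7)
The plan is to reduce each claim to Proposition \ref{prop RPE/PM} (Riemann is RPE/RPM iff both Weyl and Ricci are PE/PM) and then invoke Propositions \ref{prop_products u-in-M1} and \ref{prop_PMproducts}, using the fact that the Ricci tensor of a direct product is block-decomposable: in a composite ${\bf U}$-ONF $\{{\bf U},{\bf m}_A,{\bf m}_I\}$ one has $R_{UA}=R^{(n_1)}_{UA}$, $R_{UU}=R^{(n_1)}_{UU}$, $R_{AB}=R^{(n_1)}_{AB}$, $R_{IJ}=R^{(n_2)}_{IJ}$, and $R_{UI}=R_{AI}=0$.

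For the RPE bullet I would first apply Proposition \ref{prop RPE/PM} to rewrite RPE of $M^{(n)}$ wrt ${\bf U}$ as the conjunction of Weyl-PE and Ricci-PE of $M^{(n)}$ wrt ${\bf U}$. By (\ref{Bel Ricci PE}) and the above block decomposition, Ricci-PE of the product (i.e., $R_{Ui}=0$ in a ${\bf U}$-ONF) collapses to $R^{(n_1)}_{UA}=0$, which is precisely Ricci-PE of $M^{(n_1)}$ wrt ${\bf U}$. Combined with Proposition \ref{prop_products u-in-M1}, whose characterization of Weyl-PE of $M^{(n)}$, namely (\ref{RPE cond Mn1}), is exactly Ricci-PE together with Weyl-PE of $M^{(n_1)}$ wrt ${\bf U}$ (hence RPE of $M^{(n_1)}$ wrt ${\bf U}$), the first equivalence follows. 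The parenthetical ``iff $M^{(n)}$ is PE wrt ${\bf U}$'' is a direct consequence of the last sentence of Proposition \ref{prop_products u-in-M1}, since Weyl-PE of $M^{(n)}$ wrt ${\bf U}$ automatically forces $R_{Ui}=0$.

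For the RPM bullet, assume $M^{(n)}$ is RPM wrt ${\bf U}$. By Proposition \ref{prop RPE/PM} both Weyl and Ricci of $M^{(n)}$ are PM wrt ${\bf U}$. Weyl-PM of the product, via Proposition \ref{prop_PMproducts}, yields: $M^{(n_1)}$ is Weyl-PM wrt ${\bf U}$, its Ricci takes the form (\ref{Rn1 conds}), $M^{(n_2)}$ is of constant curvature, and the Ricci scalars satisfy (\ref{Riccisc_Weyldecomp}). Ricci-PM of the product, together with the block decomposition and (\ref{Bel Ricci PM}), forces $R^{(n_1)}_{UU}=R^{(n_1)}_{AB}=R^{(n_2)}_{IJ}=0$; combined with (\ref{Rn1 conds}) this gives $\Rc^{(n_1)}=0$, and then (\ref{Riccisc_Weyldecomp}) gives $\Rc^{(n_2)}=0$, so the constant-curvature $M^{(n_2)}$ is flat. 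The surviving Ricci of $M^{(n_1)}$ is then $R^{(n_1)}_{ab}=U_{(a}q_{b)}$ with $U^a q_a=0$, i.e.\ the PM form (\ref{Ricci PM}) on $M^{(n_1)}$; together with Weyl-PM this is RPM of $M^{(n_1)}$. For the converse, RPM of $M^{(n_1)}$ gives Ricci of the form (\ref{Ricci PM}) (implying (\ref{Rn1 conds}) with $\Rc^{(n_1)}=0$) and Weyl-PM of $M^{(n_1)}$, while flatness of $M^{(n_2)}$ supplies constant curvature with $\Rc^{(n_2)}=0$, so (\ref{Riccisc_Weyldecomp}) holds. Proposition \ref{prop_PMproducts} then delivers Weyl-PM of $M^{(n)}$ wrt ${\bf U}$, and the block structure of the product Ricci (equal to $R^{(n_1)}_{ab}$ extended trivially) preserves the PM form on $M^{(n)}$.

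Since everything reduces to routine verifications once the structural results in the cited propositions are in hand, no genuine obstacle is expected. The only slightly delicate point is combining Ricci-PM of the product with the Ricci form (\ref{Rn1 conds}) mandated by Weyl-PM to simultaneously conclude $\Rc^{(n_1)}=\Rc^{(n_2)}=0$, thereby upgrading ``$M^{(n_2)}$ of constant curvature'' to ``$M^{(n_2)}$ flat''.
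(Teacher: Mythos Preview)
Your proposal is correct and follows essentially the same approach as the paper, which simply states that the first bullet is a restatement of Proposition~\ref{prop_products u-in-M1} and the second an immediate consequence of Proposition~\ref{prop_PMproducts}. You have merely (and correctly) unpacked those references via Proposition~\ref{prop RPE/PM} and the block decomposability of the product Ricci tensor; the ``delicate point'' you flag is handled exactly as intended.
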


Recall that ${\bf U}\in M^{n_1}$ is not a restriction in the (R)PE
case, and is the only possibility in the PM or RPM cases. It is thus
evident that RPE/RPM spacetimes have a special significance in the
construction of higher-dimensional (R)PE or (R)PM spacetimes, e.g.,
from those already known in four dimensions.

\begin{rem} Regarding Proposition \ref{prop_products u-not-in-M1}, one may define the electric and magnetic parts of
the Weyl, Ricci and Riemann tensors of the Riemannian space
$M^{(n_2)}$ relative to a spacelike vector ${\bf Y}$, analogously as
for timelike vectors.
%~\footnote{In fact, one could define the $+$
%and $-$ parts of any tensor relative to ${\bf Y}$. Moreover, the
%space doesn't need to be Riemannian for this purpose.}
Doing so, the
duality in eqs.\ (\ref{RPE not Mn1 a})- (\ref{RPE not Mn1 b})  is
manifest. The first two equations in (\ref{RPE not Mn1 a}) and
(\ref{RPE not Mn1 b}) tell that a direct product
$M^{(n)}=M^{(n_1)}\times M^{(n_2)}$ which is (R)PE wrt a unit
timelike vector $\bu$ not living in $M^{(n_1)}$ must have factors
which are RPE wrt the respective normalized projections of $\bu$;
the last two equations are relations between electric tensors and
can be covariantly rewritten as
\[
U^cU^d\left(\Rt^{(n_1)}_{acbd}-\tfrac{1}{n_1-1}
h^{(n_1)}_{ab}R^{(n_1)}_{cd}\right)=0,\qquad
Y^cY^d\left(\Rt^{(n_2)}_{acbd}-\tfrac{1}{n_2-1}
h^{(n_2)}_{ab}R^{(n_2)}_{cd}\right)=0,
\]
where $(h^{(n_i)})^a{}_b$ is the projectors in $M^{(n_i)}$
orthogonal to ${\bf U}$ ($i=1$) or ${\bf Y}$ ($i=2$).
\end{rem}

\subsubsection{RPE spacetimes}

\label{subsub: RPE spacetimes}

We mention generic conditions under which spacetimes are RPE,
thereby taking section \ref{subsec: PE spacetimes} as a thread.

\begin{enumerate}

\item Spacetimes admitting a shear-free normal unit
timelike vector field $\bu$ are RPE wrt $\bu$ iff moreover the
expansion scalar of $\bu$ is spatially homogeneous, i.e.,
$h_a{}^b\tilde\Theta_{,b}=0$. This follows from (\ref{Riem_u}) and
(\ref{Ricci_u}). Referring to (\ref{shearfree normal line element})
this is the additional condition $\tilde\Theta=\tilde\Theta(t)$;
integrating the first equation in (\ref{kinem quant shearfree}) this
is precisely the case if $P=e^{\int V(t,x^\gamma)\tilde\Theta(t)dt}$
(after absorbing the function of integration into $\xi(x^\gamma)$).
Then $\bu=\partial_t/V$ is an eigenvector of the Ricci tensor with
eigenvalue
$-\dot{u}^a{}_{;a}+(n-1)(\dot{\tilde\Theta}+\tilde\Theta^{2})$, see
(\ref{Ruu}). Special instances are spacetimes admitting a
non-rotating rigid $\bu$ ($\sigma_{ab}=\tilde\Theta=\omega_{ab}=0$)
and the warped spacetimes with a one-dimensional timelike factor,
i.e., cases (a)-(c) in section \ref{subsubsec_sheartwistfree}. In
particular, {\em all static spacetimes are RPE}. In contrast,
doubly-warped spacetimes with a one-dimensional timelike factor
(case (d)) are PE {\em but never RPE} wrt $\bu$.

\item \label{item_RPERPM}
Spacetimes which satisfy (\ref{Ruabc=0}) are RPE and ``RPM'',
cf.\ Remark \ref{rem RPEPM 4D}(c). Within the warped class (a) of
\ref{subsubsec_sheartwistfree}, where $\dot{u}^a=0$ additionally,
this is realized iff $\dot{\tilde{\Theta}}=-\tilde\Theta^2$, see
(\ref{Riem_u}). Examples hereof are the direct product spacetimes of
the subclass (c), i.e., those spacetimes admitting a covariantly
constant unit timelike vector field $\bu$
($\sigma_{ab}=\tilde\Theta=\omega_{ab}=\dot{u}^a=0$), and the $n\geq
5$ warped spacetimes (\ref{brinkmann_metric}) with $\lambda=0$
(vacuum case).

\item {\em All direct or
warped products (\ref{warped metric}), with a RPE timelike factor
and with $\theta:M^{(n_2)}\rightarrow\R$, are RPE}. This follows
from Proposition \ref{prop_RPE_RPMproducts} and, e.g., eqs.\ (25) in
\cite{CarotdaCosta93} or (D.8) in \cite{Waldbook}. For $n_1=1$ (giving
case (a) of section \ref{subsubsec_sheartwistfree}) and $n_1=2$, the
RPE condition on the timelike factor is automatically satisfied
since then $R^{(n_1)}\sim\bg^{(n_1)}$ (see also the top of page 4415
of \cite{PraPraOrt07}, and cf.\ Remark \ref{rem: PE direct}). As an
instance of $n_1=4$ RPE spacetimes we may mention aligned perfect fluids
(for which the Weyl tensor is PE wrt the
fluid velocity $\bu$) and their Einstein space limits;
for instance, all examples mentioned at the end of
Remark \ref{rem: PE and shearfree normal} can be lifted by the above
direct or warped product construction.

\item All spacetimes with an isotropy group mentioned in Proposition
\ref{thm_isotropy} are in fact RPE (in the proof no use was made of
the tracefree property (\ref{tracefree}) of the Weyl tensor, just as
the $2k+1$-dimensional spacetimes with $U(k)$ isotropy ($k>1$)). The
spacetimes (\ref{spher}) with spherical, hyperbolical or planar
symmetry are RPE
%under a mild condition.~\footnote{Namely, they are RPE
iff the matrix
$\left[\begin{smallmatrix}R_{01}&R_{11}\\R_{00}&R_{01}\end{smallmatrix}\right]$
is of type R1 in appendix \ref{subsec: Ricci app}, relative to a
null frame $\{\bm_0=\bl,\bm_1=\bn,\bm_{\hat i}\}$ where $\bl$ and
$\bn$ live in the timelike factor $M^{(n_1)}$, $n_1=2$ (see, e.g.,
eqs.\ (25) in \cite{CarotdaCosta93}). This is precisely the case when $R_{00}R_{11}>0$ or $R_{00}=R_{11}=0$, the latter case implying Ricci type D.
%}

\item Higher-dimensional ``Bianchi type I'' spacetimes, studied in section \ref{BTI}, are also RPE spacetimes. Again, this can be shown in two different ways; however, restricting to the second proof in section \ref{BTI} one sees that the discrete symmetry implies for the Ricci tensor, $\theta(R)=R=R_+$; consequently, the spacetime is  RPE.

\item All PE Einstein spacetimes are obviously also RPE (cf. Remark~\ref{rem_Einstein_PE}).

\end{enumerate}

\subsubsection{RPM spacetimes}

\label{subsubsec_RPM}

Evidently, RPM spacetimes are even more elusive than Weyl PM
spacetimes. The only (2-parameter) class of RPM spacetimes known so
far was derived in four dimensions by Lozanovski~\cite{Lozanovski07} (cf.\ also Remark~\ref{rem RPEPM 4D}(b)), the line element being (up to a
constant rescaling).
\begin{equation}
    \label{RPM metric}
    \d s_L^2=\exp(-2bz)\left[-dt^2+dz^2\right]+\exp(2ay)\left[dx^2+t^2x^2dy^2\right] \qquad (a,b\in\mathbb{R}) .
\end{equation}
This {{spacetime, which contains an ``imperfect fluid'' \cite{Lozanovski07},}} is RPM wrt $\bu=\exp(bz)\pa_t$, and of Petrov type
I$(M^+)$ for all values of $a$ and $b$, except when {$ab=0$}, in which
case the type is I$(M^\infty)$, cf.\ Remark~\ref{remark: IJM}.
According to Proposition~\ref{prop_RPE_RPMproducts}, explicit
examples of higher-dimensional RPM spacetimes can be produced by
taking direct products with flat Euclidean spaces. Additionally,
Weyl PM (but not RPM) spacetimes  can be generated from such direct
products by simply performing a (non-trivial) conformal
transformation (under which the Weyl tensor is invariant while the
Ricci tensor will loose its PM character, in general). For the sake
of definiteness, consider the five-dimensional line-element
\begin{equation}
 \d s^2=e^{kz}(\d s_L^2+\d w^2) ,
 \label{5D_PM}
\end{equation} with $\d s_L^2$ given by (\ref{RPM metric}). This is a spacetime
PM wrt $\bu=e^{(2b-k)z/2}\pa_t$.  It is, additionally, RPM (wrt the
same $\bu$) iff it is a direct product, i.e., $k=0$ (the necessity of this follows from the last statement of
Proposition~\ref{prop_PEPM_Ricci} and computation of the Ricci scalar
$\Rc=-3k^2e^{z(2b-k)}$, while the sufficiency follows from the second part of Proposition \ref{prop_RPE_RPMproducts}). In the latter case one has
$\Rt_{\alpha\beta\gamma w}=0$, so that $\exp(bz)\pa_t\pm\pa_w$ are
null directions aligned with the Riemann tensor when $k=0$ (thus,
the Riemann tensor is of type I$_i$ in this case). {\em A fortiori},
these are also WANDs (cf.\ Proposition \ref{prop: Riemann boost
order}), so that the Weyl tensor can not be of type G. Moreover, a direct computation shows that for this metric, the
symmetric rank 2 tensor
\begin{equation*}
T_{ag}=T_{(ag)}\equiv C_{abcd}C^{cd}{}_{ef}C^{efb}{}_g
\end{equation*}
does not vanish. Hence, by Proposition \ref{prop: PMpeculiar} the
Weyl tensor cannot be type D. Since the case $k\neq 0$ is just
obtained by a conformal transformation, it follows that {\em all
metrics (\ref{5D_PM}) are of Weyl type I$_i$, and thus PM uniquely wrt $\bu=e^{(2b-k)z/2}\pa_t$.}

To our knowledge, such products are the only examples of
higher-dimensional (R)PM spacetimes found so far.

\section{Conclusion and discussion} We introduced and elaborated a two-fold decomposition of any tensor at a point of a spacetime of arbitrary
dimension, relative to a unit timelike vector $\bu$. The splitting
is based on considering a (time) reflection of $\bu$, which itself
is a special instance of a Cartan involution (when applied to the
Lorentz group). We saw that this leads to a generalization, from
four to arbitrary dimensions, of the electric/magnetic decomposition
of the Maxwell and Weyl tensors. That this generalization is natural
has been confirmed by the extension of many four-dimensional results
regarding purely electric and magnetic curvature tensors to higher
dimensions.

In particular, we derived a close connection between purely
electric/magnetic properties and the existence of preferred null
directions. Hereby we focussed on the curvature
tensors, so crucial in (four- or higher dimensional) General
Relativity as well as in other gravity theories. However, many of these
properties generalize to arbitrary tensors and operators; as such
they are applicable to any physical theory governed by tensor
objects defined over a spacetime (manifold with Lorentzian metric),
with the potential of leading to novel interesting viewpoints and
results in such contexts.

Tensors for which one of the two parts in the splitting wrt $\bu$
vanishes are examples of tensors which are minimal wrt $\bu$, in the
sense that the sum of squares of the tensor components in any
$\bu$-adapted orthonormal frame is not larger than for any other
$\bu'$. Via a new proof of the alignment theorem we made an
intriguing connection with both the null alignment and polynomial
invariants properties of such tensors: these are precisely the
tensors characterized by their invariants or, still, the tensors
which do {\em not} possess a unique aligned null direction of boost
order $\leq 0$. Future inquiries on these facts may be important for
shedding new light on the invariant content of many modern theories
(string theory, brane world models, quantum cosmology, etc). In
particular, the classification of spacetimes themselves makes use of
the Riemann tensor and its covariant derivatives via the
Cartan-Karlhede algorithm, and thus may highly benefit from such
investigations.

This paper also demonstrates the interesting link between special classes of spacetimes and invariant theory. This link is explicitly demonstrated by the connection between the Cartan involution, which is important in the classification of Lie algebras, and a simple time-reflection. This enabled us to connect these seemingly distinct areas and use the best from both worlds to prove deep results about the existence/non-existence of certain solutions. It is believed that this bond will continue to bear fruits in investigations to come.

\section*{Acknowledgments}

We thank  Alan Barnes for reading the manuscript and Jos\'{e} M M Senovilla for useful comments and references.
M.O. has been supported by
research plan RVO: 67985840 and research grant GA\v CR P203/10/0749.
L.W.\ has been supported by an Yggdrasil mobility grant No 211109 to
Stavanger University, a BOF research grant of Ghent University, and
a FWO mobility grant No V4.356.10N to Utrecht University, where
parts of this work were performed.  M.O.\ and L.W.\ thank the
Faculty of Science and Technology of the University of Stavanger for
its hospitality during a research stay.

\appendix

%\section{Preliminaries}

\setcounter{equation}{0}
\renewcommand{\theequation}{A\arabic{equation}}

\section{Orbits of tensors, Cartan involutions, and null alignment theory}

\label{section: Preliminaries}

\subsection{Orbits of tensors; tensors characterized by their invariants}\label{subsec: Pre orbits and invchar}

Let us review some results from invariant theory and define the
appropriate concepts which we need. Furthermore,  we will consider
polynomial invariants
of tensors, and so in what follows `invariants' is to be
understood as '\emph{polynomial invariants}'.

The idea is to consider a group $G$ acting on a vector space $V$. In
our case we will consider a real $G$ and a real vector space $V$.
However, it is advantageous to review the complex case with a
complex group $\GC$ acting on a complex vector space $\VC$.  For a
vector $X\in \VC$, the \emph{orbit} of $X$ under the action of $\GC$
is defined as \be\label{OC(X)}
 \mathcal{O}_{\mathbb{C}}(X)\equiv \{  g(X)\in\VC ~\big{|}~g\in \GC \}\subset \VC.
\ee
Then (\cite{Procesi07}, p555-6):
\begin{thm}\label{thm Procesi}
If $\GC$ is a linearly reductive group acting on an affine variety
$\VC$, then the ring of invariants is finitely generated. Moveover,
the quotient $\VC/ \GC$ parameterises the closed orbits of the
$\GC$-action on $\VC$ and the invariants separate closed orbits.
\end{thm}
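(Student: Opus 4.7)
The statement is a classical result from Geometric Invariant Theory cited from Procesi's textbook, so I will sketch the standard proof strategy rather than give a fully rigorous argument.

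The plan is to build the proof on a single key tool: the \emph{Reynolds operator}. Since $\GC$ is linearly reductive, every rational $\GC$-module decomposes into isotypic components, and in particular the coordinate ring $\mathcal{O}(\VC)$ decomposes $\GC$-equivariantly as $\mathcal{O}(\VC) = \mathcal{O}(\VC)^{\GC} \oplus W$, where $W$ is the sum of all non-trivial isotypic components. The projection $R:\mathcal{O}(\VC)\to\mathcal{O}(\VC)^{\GC}$ along this decomposition is $\GC$-equivariant and, crucially, $\mathcal{O}(\VC)^{\GC}$-linear. Constructing and verifying the properties of $R$ is the main technical input; everything else follows formally.

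For finite generation (Hilbert--Nagata), I would let $I\subset\mathcal{O}(\VC)$ be the ideal generated by homogeneous invariants of positive degree. By Hilbert's basis theorem, $I$ is generated (as an ideal of $\mathcal{O}(\VC)$) by finitely many such elements $f_1,\ldots,f_N\in\mathcal{O}(\VC)^{\GC}$. A standard induction on degree, using the $\mathcal{O}(\VC)^{\GC}$-linearity of $R$ to convert an expression $f=\sum h_i f_i$ into $f=\sum R(h_i)f_i$, then shows that $\{f_1,\ldots,f_N\}$ generate the whole algebra $\mathcal{O}(\VC)^{\GC}$.

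For the quotient and orbit separation, I would define $\VC/\!\!/\GC\equiv\mathrm{Spec}\,\mathcal{O}(\VC)^{\GC}$ (well-defined by the previous step) and the canonical morphism $\pi:\VC\to \VC/\!\!/\GC$. The heart of the argument is the separation lemma: if $Z_1,Z_2\subset\VC$ are disjoint, closed, $\GC$-invariant subsets, then there exists $f\in\mathcal{O}(\VC)^{\GC}$ with $f|_{Z_1}=0$ and $f|_{Z_2}=1$. One produces such an $f$ by first finding any regular function $\tilde f$ separating $Z_1$ and $Z_2$ (possible since $Z_1,Z_2$ are closed and disjoint), and then averaging via $R$; the isotypic-decomposition properties of $\GC$ ensure that $R(\tilde f)$ still separates them. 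From this lemma the remaining claims follow in the usual way: each fiber $\pi^{-1}(y)$ contains at most one closed orbit (two disjoint closed orbits in a fiber would contradict separation by invariants), and each fiber contains \emph{at least} one closed orbit (take a minimal non-empty closed invariant subset inside the fiber), so $\VC/\!\!/\GC$ parametrises closed orbits and invariants separate them.

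The main obstacle is the separation lemma, since this is exactly where linear reductivity is used in an essential way; without complete reducibility of representations one cannot guarantee that an arbitrary separating function can be averaged to a separating invariant. In the real setting $G=O(1,n-1)$ relevant to the rest of the paper one must additionally invoke a complexification argument (passing to $\GC=O(n,\mathbb{C})$, which is reductive) together with the observation in the text that the complex orbit of a real point intersects the real locus in finitely many real orbits, so that closedness of the complex orbit transfers to the corresponding statement for real orbits of $G$, yielding Corollary~\ref{cor: closed is invarchar} used repeatedly in the main body.
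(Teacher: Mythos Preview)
Your sketch is correct and follows the standard Geometric Invariant Theory argument (Reynolds operator, Hilbert--Nagata for finite generation, and the separation lemma for closed invariant subsets). Note, however, that the paper does not actually prove this statement: it is quoted verbatim as a classical result from Procesi's textbook (\cite{Procesi07}, pp.~555--556) and used as a black box. So there is no ``paper's own proof'' to compare against; your outline is essentially the argument one finds in the cited reference.

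One small comment: your final paragraph about the real case and complexification is not part of the proof of this theorem itself, but rather anticipates how the paper applies it. In the paper that step is handled separately via the result of Eberlein--Jablonski \cite{EbeJab09} (stated as the subsequent theorem, that $\mathcal{O}(X)$ is closed in $V$ iff $\mathcal{O}_{\mathbb{C}}(X)$ is closed in $\VC$), rather than by an ad hoc finiteness argument, so it would be cleaner to keep the two issues separate as the paper does.
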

Here the term \emph{closed} refers to \emph{topologically closed}
with respect to the standard vector space topology and henceforth,
closed will mean topologically closed. This implies that given two
distinct closed orbits $A_1$ and $A_2$, then there is an invariant
with value $1$ on $A_1$ and $0$ on $A_2$. This enables us to define
the set of {closed} orbits: \begin{eqnarray}\label{CC(X)}
\mf{C}_{\mathbb{C}}\equiv \{\mathcal{O}_{\mathbb{C}}(X)\subset \VC
~\big{|}~ \mathcal{O}_{\mathbb{C}}(X) \text{ closed}.  \}
\end{eqnarray} Based on the above proposition we can thus say that the
invariants separate elements of $\mf{C}_{\mb{C}}$ and hence we will
say that an element of  $\mf{C}_{\mb{C}}$ is \emph{characterised by
its invariants}.

In our case we  consider the real case where we have the Lorentz
group $G=O(1,n-1)$ which is a real semisimple group. For real
semisimple groups acting on a real vector space $V$ we do not have
the same uniqueness result as for the complex case \cite{EbeJab09},
see also Remark~\ref{rem_wick}. However, by complexification,
$[G]^{\mathbb{C}}=\GC$ we have
$[O(1,n-1)]^{\mathbb{C}}=O(n,\mathbb{C})$, and by complexification
of the real vector space $V$ we get $\VC\cong V+iV$. The
complexification thus lends itself to the above theorem.

Concretely, we study tensors, $T$, belonging to some tensor space
${\cal T}^r_{s}\equiv (T_pM)^{\otimes r}\otimes(T^*_pM)^{\otimes
s}$, where $p$ is a point of a $n$-dimensional manifold $M$ with
Lorentzian metric $\bg$. Let
%${\mbold\omega}\equiv\{ {\mbold
%e}_1,...,{\mbold e}_n\}$
$\{\bm_{\alpha=1,\ldots, n}\}$ be a basis of
%(e.g.\ orthonomal or null)
vectors of $T_p M$. Let $g\in G$ be a Lorentz transformation, with
representation matrix $(M^\alpha_{~\beta})$ wrt ${\mbold\omega}$,
i.e., in the natural action of $g$ on $T_p M$ we have $g(
{\bm}_\beta)=M^\alpha{}_\beta  {\bm}_\alpha$. Consider now the
following action on the components of $T$ wrt ${\mbold\omega}$: \be
 T^{\alpha_1...\alpha_r}{}_{\beta_1...\beta_s}\mapsto
 (M^{-1})^{\alpha_1}_{~~\gamma_1}...(M^{-1})^{\alpha_r}_{~~\gamma_r}T^{\gamma_1...\gamma_r}{}_{\delta_1...\delta_s}
 M^{\delta_1}_{~~\beta_1}\ldots M^{\delta_s}_{~~\beta_s}.
 \label{g(v)}
\ee
As is well-known, the real numbers on the right hand side may be interpreted as either
\begin{enumerate}
\item the components of the original tensor $T$ wrt a new basis $\{g({\mbold e}_\beta)\}$
of $T_p M$ (and the dual basis of $T_p^*M$), or
\item the components wrt the original basis $\{{\mbold e}_\beta\}$ of a new tensor $T'$,
which is the result of $g^{-1}$ acting as a tensor map on $T\in{\cal
T}^r_s$.
\end{enumerate}
In the former case one puts the components of $T$ in a vector $v\in
V=\R^m$, $m=n^{r+s}$, and one speaks about the {\em passive action}
of $O(1,n-1)$ on $V$; notice that $V$ has an $(r,s)$-tensor
structure as well here (over $\R^n$ instead of $T_p M$). In the
latter case one considers $T\mapsto T'$, referred to as the {\em
active action} of $g^{-1}\in O(1,n-1)$ on $V={\cal T}^r_s$. It is
clear that both viewpoints are essentially equivalent, although one
of them may be more natural in a specific context. In either picture
we may consider a collection (or direct sum) of tensors instead of a
single one (which just changes $V$ accordingly).

Based on the above, tensors `characterized by  invariants' are
defined as follows, in the passive viewpoint.
\begin{defn}\label{defn characterized by invariants}
Consider a (real) tensor, $T$, or a direct sum of tensors, and let
$\Tt\in V$ be the corresponding vector of components wrt a certain
basis. If the orbit of $\Tt$ under the complexified Lorentz group
$G^{\mathbb{C}}$ is an element of $\mf{C}_{\mb{C}}$, i.e.,
$\mathcal{O}_{\mathbb{C}}(\Tt) \in \mf{C}_{\mathbb{C}}$, then we say
that $T$ is \emph{characterised by its invariants}.
\end{defn}
\noindent As the invariants parametrise the set $\mf{C}_{\mb{C}}$
and since the group action defines an equivalence relation between
elements in the same orbit this definition makes sense.

In analogy with (\ref{OC(X)}) and (\ref{CC(X)}) let us define the real orbit through $X$ and the set of real closed orbits:
\begin{eqnarray}
\mathcal{O}(X)\equiv \{ g(X)\in V ~\big{|}~g\in G \}\subset V,\\
\mf{C}\equiv\{\mathcal{O}(X)\subset V ~\big{|}~ \mathcal{O}(X)
\text{ closed}  \}. \end{eqnarray} How do the results of
Proposition~\ref{thm Procesi} translate to the real case? A real
orbit $\mathcal{O}(X)$ is a real section of the complex orbit
$\mathcal{O}_{\mathbb{C}}(X)$. However, there might be more than one
such real section having the same complex orbit. Using the results
of \cite{EbeJab09}, these real closed orbits are disjoint, moreover:
\begin{thm}
$\mathcal{O}(X)$ is closed in $V$ $\Leftrightarrow$ $\mathcal{O}_{\mathbb{C}}(X)$ is closed in $\VC$.
\end{thm}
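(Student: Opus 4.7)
The plan is to recast both closedness conditions in terms of the existence of a minimal vector and then to exhibit a bijective correspondence between $G$- and $\GC$-minimality on real vectors. The reduction mirrors the argument already invoked in the proof of Proposition \ref{th_minimal_ODIG}: the criterion \eqref{closed 2} of \cite{RicSlo90} applies equally well in the complex setting (this is essentially the Kempf--Ness theorem), so that $\mathcal{O}_{\mathbb{C}}(X)$ is closed iff it contains a $\GC$-minimal vector, where $\GC$-minimality is measured using the Hermitian extension $\inn{v_1+iv_2}{w_1+iw_2}_{\mathbb{C}} = \inn{v_1}{w_1}+\inn{v_2}{w_2}+i[\inn{v_2}{w_1}-\inn{v_1}{w_2}]$ of the Euclidean product on $V$, together with the complex Cartan involution obtained by extending $\theta$ complex-linearly to $\mf{o}(n,\mathbb{C})$. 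Under this extension, $K \subset G$ sits inside the maximal compact $K_{\mathbb{C}} \subset \GC$, and the $(-1)$-eigenspace of the induced real Cartan involution on $\mf{o}(n,\mathbb{C})$, viewed as a real Lie algebra, is $\mf{B} \oplus i\mf{K}$.

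The central compatibility step to be proved is $\mf{M} = \mf{M}_{\mathbb{C}} \cap V$, that is, a real vector is $G$-minimal iff it is $\GC$-minimal. The complex analog of Proposition \ref{prop: crit minimal} characterises $\GC$-minimality as $\inn{\mathcal{Y}(X)}{X}_{\mathbb{C}} = 0$ for all $\mathcal{Y} \in \mf{B} \oplus i\mf{K}$. Writing $\mathcal{Y} = \mathcal{X}_1 + i\mathcal{X}_2$ with $\mathcal{X}_1 \in \mf{B}$ and $\mathcal{X}_2 \in \mf{K}$, and observing that $\mathcal{X}_j(X) \in V$ whenever $X \in V$, one obtains $\inn{\mathcal{Y}(X)}{X}_{\mathbb{C}} = \inn{\mathcal{X}_1(X)}{X} + i\inn{\mathcal{X}_2(X)}{X}$. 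The imaginary part vanishes identically, since elements of $\mf{K}$ are antisymmetric with respect to $\inn{-}{-}$ (the observation made just before Proposition \ref{prop: crit minimal}), while the real part vanishes precisely under the $G$-minimality criterion \eqref{crit_minimal}. This establishes the equivalence.

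Combining the two ingredients, the $(\Rightarrow)$ direction follows immediately: a $G$-minimal vector in $\mathcal{O}(X)$ is also $\GC$-minimal by the compatibility lemma and lies in $\mathcal{O}_{\mathbb{C}}(X)$, which is therefore closed. The main obstacle is the reverse direction, since a $\GC$-minimal vector $Z \in \mathcal{O}_{\mathbb{C}}(X)$ produced by Kempf--Ness need not be real. Here I would appeal to the Kempf--Ness uniqueness statement that $\mathcal{O}_{\mathbb{C}}(X) \cap \mf{M}_{\mathbb{C}}$ forms a single $K_{\mathbb{C}}$-orbit, combined with the fact that the squared norm $g \mapsto \|g \cdot X\|^2$ descends to a strictly convex, proper function on the Riemannian symmetric space $\GC/K_{\mathbb{C}}$; the real structure of $X$ then forces the minimum to be attained on the totally geodesic submanifold $G/K \subset \GC/K_{\mathbb{C}}$, yielding a $G$-minimal representative in $\mathcal{O}(X)$ and closedness of the real orbit by \eqref{closed 2}. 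Alternatively, and more economically, this last step can be invoked as a black box via Birkes's theorem, as packaged in the reference \cite{EbeJab09} already cited in the excerpt.
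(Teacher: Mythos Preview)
The paper does not prove this statement at all: it simply attributes the result to \cite{EbeJab09} and moves on. Your proposal therefore goes well beyond what the paper provides.

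Your argument is essentially the Kempf--Ness/Richardson--Slodowy route and is sound in outline. The key compatibility step $\mf{M} = \mf{M}_{\mathbb{C}} \cap V$ is correct, but the phrase ``extending $\theta$ complex-linearly'' is misleading: the Cartan involution of $\GC=O(n,\mathbb{C})$ as a \emph{real} Lie group is conjugate-linear (it is complex conjugation with respect to the compact form $O(n,\mathbb{R})=\mf{K}\oplus i\mf{B}$ at the Lie algebra level), not the complex-linear extension of $\theta$. You nonetheless arrive at the correct $(-1)$-eigenspace $\mf{B}\oplus i\mf{K}$, and the Hermitian extension of $\inn{-}{-}$ is indeed $O(n,\mathbb{R})$-invariant (it is the sum-of-squares form in a $\bu$-ONF), so the criterion applies. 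The automatic vanishing of the $i\mf{K}$-part for real $X$ via antisymmetry is exactly right.

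For the reverse direction, your convexity/totally-geodesic argument on $\GC/K_{\mathbb{C}}$ is the standard way to force a real minimum and is correct; note that this is precisely what \cite{RicSlo90} and \cite{EbeJab09} package, so your ``alternative'' via Birkes/\cite{EbeJab09} is not really an alternative but the same argument cited rather than reproved. In short: your proof is correct and self-contained where the paper merely cites, at the cost of the minor terminological slip about the involution.
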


Combining this with Proposition~\ref{thm Procesi} and Definition
\ref{defn characterized by invariants} we thus have

\begin{cor}\label{cor: closed is invarchar}
A tensor $T$ is characterised by its invariants iff its orbit is closed in $V$, $\mathcal{O}(\Tt)\in\mf{C}$.
\end{cor}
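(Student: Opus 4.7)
The plan is to obtain the corollary as an immediate chain of two equivalences already stated in the excerpt, so the proof is essentially a one-line concatenation. First I would invoke Definition~\ref{defn characterized by invariants}: by definition, a tensor $T$ with component vector $\Tt \in V$ is characterised by its invariants precisely when its complexified orbit is closed, i.e., $\mathcal{O}_{\mathbb{C}}(\Tt) \in \mf{C}_{\mathbb{C}}$. Then I would invoke the theorem immediately preceding the corollary (the result of Eberlein--Jablonski), which asserts the equivalence $\mathcal{O}(\Tt)$ closed in $V$ $\Leftrightarrow$ $\mathcal{O}_{\mathbb{C}}(\Tt)$ closed in $V^{\mathbb{C}}$.

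Concatenating these two equivalences yields
\[
 T \text{ characterised by its invariants} \;\Leftrightarrow\; \mathcal{O}_{\mathbb{C}}(\Tt)\in\mf{C}_{\mathbb{C}} \;\Leftrightarrow\; \mathcal{O}(\Tt)\in\mf{C},
\]
which is exactly the statement of the corollary.

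There is no real obstacle here: both of the non-trivial ingredients — the theorem of Procesi on separation of closed orbits by invariants (hence Definition~\ref{defn characterized by invariants} makes sense via $\mf{C}_{\mathbb{C}}$), and the Eberlein--Jablonski equivalence between real and complex closedness of orbits — are quoted from the literature and stated earlier. The only thing one might want to make explicit, for clarity, is that the equivalence class of $T$ being characterised by its invariants does not depend on the choice of the basis used to form $\Tt$: changing basis replaces $\Tt$ by some $g(\Tt)$ with $g \in G$, leaving both $\mathcal{O}(\Tt)$ and $\mathcal{O}_{\mathbb{C}}(\Tt)$ invariant, so neither the hypothesis nor the conclusion is affected.
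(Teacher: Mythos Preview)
Your proof is correct and follows exactly the same approach as the paper: the corollary is stated immediately after the Eberlein--Jablonski theorem with the remark ``Combining this with Proposition~\ref{thm Procesi} and Definition~\ref{defn characterized by invariants} we thus have'', and your chain of equivalences is precisely that combination. Your additional remark on basis-independence is a reasonable clarification but not required.
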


\begin{rem}
The case of a direct sum of curvature tensors (i.e.,  the Riemann
tensor and its covariant derivatives)  is of particular importance
for the equivalence problem of metrics (of arbitrary signature). Let
$\tilde X=\tilde{R}_{\mbold
\omega}\equiv[{R}_{\alpha\beta\gamma\delta},{R}_{\alpha\beta\gamma\delta;\epsilon},...,{R}_{\alpha\beta\gamma\delta;\epsilon_1...\epsilon_k}]\in
\R^{m(k)}$ be the vector of components wrt a (for instance
orthonormal) frame ${\mbold \omega}=\{\bm_{\alpha=1,\ldots, n}\}$,
at a point $p$ of a manifold $M$ with metric $\bg$, of the curvature
tensors up to the $k$th derivative, where
$m(k)=n^4(n^{k+1}-1)/(n-1)$.
%~\footnote{We note that there is no
%requirement that the components of ${R}_{abcd}$ etc.\ are
%independent (nor is the order significant). One way this can be
%thought of is that we have a metric from which we compute the
%{curvature} tensors with respect to an orthonormal frame ${\mbold
%\omega}$. We then construct the $m(k)$-tuple $R_{\mbold \omega}$
%from these components. These components may be dependent (which
%would be the case if we just naively write down all components of
%Riemann without considering the indices symmetries). If the metric
%is of signature $(1,n-1)$, then the action of the corresponding
%orthogonal group $O(1,n-1)$ will preserve these symmetries and thus
%this has no consequence for our result.}
Then the action of $g\in
O(1,n-1)$ on $\tilde{X}$ is
\[ g(\tilde{X})=\left[ M^{\kappa}_{~\alpha}...M^{\nu}_{~\delta}{R}_{\kappa...\nu},
M^{\kappa}_{~\alpha}...M^{\nu}_{~\delta}M^{\xi}_{~\epsilon}{R}_{\kappa...\nu;\xi},\,\ldots\,,
M^{\kappa}_{~\alpha}...M^{\nu}_{~\delta}M^{\xi_1}_{~\epsilon_1}...M^{\xi_k}_{~\epsilon_k}{R}_{\kappa...\nu;\xi_1...\xi_k}\right].
\]
Let $\tilde{Y}=\tilde{R}'_{{\mbold \omega}'}\in \R^{m(k)}$ be the
analogous curvature vector for a metric $\bg'$ on $M$, wrt a frame
${\mbold \omega}'$ at $p$.  Then, if $\tilde{X}$ and $\tilde{Y}$
are in the same {\em real} orbit, we have $\tilde{Y}=g(\tilde{X})$
for certain $g\in O(1,n-1)$, i.e.,
%. Hence, ${\mbold\omega}'=g({\mbold\omega})$
%and thus $R$ and $R'$ are equivalent as curvature tensors, because
the respective representation vectors $\tilde X$ and $\tilde Y$ are
separated by a mere rotation of frame,
${\mbold\omega}'=g({\mbold\omega})$. If this holds for $k=n(n+1)/2$
at every point $p$ of a local neighbourhood $U$ of $M$,
%and if both $\bg$ and $\bg'$ are both curvature regular on $U$, then...
%{\bf Suppose now that this is true for a sufficiently large $k$ (where
%$k=k(n)$ but, in any case, $k\leq n(n+1)/2$), and that the metrics are
%analytic and
%curvature regular~\cite{MilWyll12}}.
then a result of Cartan
%and Karlhede
(see e.g.\ \cite{Stephanibook}) tells that $\bg$ and $\bg'$ are
equivalent on $U$.
%in a neighborhood of $p$.
In this way the equivalence problem is reduced to a question of
classifying the various orbits.
\end{rem}

\begin{rem}
\label{rem_wick} As pointed out, different closed real orbits
$\mathcal{O}(T)$ may have the same invariants (in line with the
comments in \cite{ColHerPel09a,HerCol10}). An example of this is
given by the pair of metrics, clearly related by a double Wick
rotation \cite{HerCol10}:
\begin{eqnarray}
\d s^2_1&=& -\d t^2+\frac{1}{x^2}\left(\d x^2+\d y^2+\d z^2\right),\nonumber \\
\d s^2_2&=& \d \tau^2+\frac{1}{x^2}\left(\d x^2+\d y^2-\d
\zeta^2\right) . \end{eqnarray} These metrics are symmetric
(${R}_{abcd;e}=0$) and conformally flat; hence, the Riemann tensor
is the only non-zero curvature tensor and is equivalent to the Ricci
tensor. In both cases, at any space-time point, the Ricci operator
$R^a{}_b$ acting on tangent space has a single eigenvalue 0 and a
triple eigenvalue -2 (the space-times being homogeneous), such that
the respective Ricci tensors have the same polynomial invariants and
belong to the same complex orbit $\mathcal{O}_{\mathbb{C}}(T)$.
However, the Segre type of $R^a{}_b$ is $\{1,(111)\}$ for the former
and $\{(1,11)1\}$ for the latter metric; thus the respective Ricci
tensors lie in separate real orbits $\mathcal{O}(T)$.
\end{rem}

\subsection{Cartan involutions of the Lorentz group}

\label{subsec: Pre Cartan invol}

{\em \ref{subsec: Pre Cartan invol}.1 {\bf Representation on tensor spaces}.}
Consider the full Lorentz group $G=O(1,n-1)$.  %{\bf of a Lorentzian metric $\bg$, at a point $p$ of a space-time $M$}.
Let $K\cong O(n-1)$ be a maximal compact `spin' subgroup of
$O(1,n-1)$. Then there exists a {unique} Cartan involution $\theta$
of $O(1,n-1)$ with the following properties \cite{RicSlo90}:
\begin{enumerate}
\item[(i)] $\theta$ is invariant under the adjoint action of $K$:
\begin{equation}\label{k and theta commute}
Ad_{K}(\theta)=\theta,\quad\text{i.e.,}\quad k\theta=\theta
k,\;\forall k\in K;
\end{equation}
\item[(ii)] $O(1,n-1)$ is $\theta$-stable, $\theta(O(1,n-1))=O(1,n-1)$;
\item[(iii)] $\theta$ is
%{\bf induced, via the exponential map, from} the restriction to $\mf{o}(1,n-1)$
%of the universal
the automorphism $X\mapsto -X^*$ of the Lie algebra
$\mf{g}\mf{l}(n,\mb{R})$, where $^*$ denotes the adjoint (or
transpose, since the coefficients are real).
\end{enumerate}

 In general, the maximal subgroups of a semi-simple Lie group
$G$ are all conjugate, such that two Cartan involutions
%, related to two vectors $\bu_1$ and $\bu_2$,
are related by
$\theta_2=\mathrm{Int}(g)\theta_1\mathrm{Int}(g^{-1})$, where
$\mathrm{Int}(g)$ is the inner automorphism by a certain $g\in G$.

In our case, consider the natural representation of $G=O(1,n-1)$ on
the tangent space $T_p M$ at a point $p$ of a Lorentzian manifold
$(M,\bg)$. Then, any maximal compact subgroup $K$ is in biunivocal
relation with the timelike direction which is invariant under the
action of $K$. If this direction is spanned by the unit timelike
vector $\bu$,
%and $\bu^\bot$ denotes its orthogonal complement,
then it is easy to see
that the unique Cartan involution corresponding to $K$ is simply the
reflection
\begin{equation}\label{u reflection}
\theta:\quad\bu\mapsto -\bu,\quad {\mbox{{$\bf x$}}}\mapsto {\mbox{{$\bf x$}}},\;\;\forall {\mbox{{$\bf x$}}}\bot\bu ,
\end{equation}
acting as an inner automorphism on $G$.~\footnote{If we had taken
the special Lorentz group $G=SO(1,n-1)$ instead of the full one,
then (\ref{u reflection}) would still give the Cartan involutions
for this case, but these do {\em not} have an inner action.} Thus
$\theta$ can be seen as a Lorentz transformation itself, with action
(\ref{u reflection}) on $T_p M$. In any $\bu$-ONF ${\cal
F}_u=\{\bm_1={\bf u},{\bf m}_{i=2,...,n}\}$ we have the matrix
representation:
%\be\label{theta on TpM}
\[ [\theta]_{{\cal F}_u}=(\theta^\alpha_{~\beta})=\diag (-1,1,...,1).\]
Obviously in such a frame $\theta$ is simply a {\em time reversal
transformation}. In abstract index notation we have
\begin{equation}
    \label{g and theta}
    \delta^a_b=g^a{}_b\equiv h^a{}_b-u^au_b, \qquad \theta^a{}_b=h^a{}_b+u^au_b,
\end{equation}
where the first part defines the projector $h^a{}_b$ of $T_pM$ orthogonal to $\bu$,
$\delta^a_b$ being the identity transformation.

Through the tensor map construction  $\theta$ acts as a reflection
($\theta^2={\sf 1}$) on any tensor space $V={\cal T}^r_{s}$  by
(\ref{g(v)}), with
$M^\alpha_{~\beta}=(M^{-1})^\alpha_{~\beta}=\theta^\alpha_{~\beta}$
(we adopt the active viewpoint here and, with a slight abuse of
notation, denote any representation of $\theta$ with $\theta$).
Denote $N_{\boldsymbol\alpha\boldsymbol\beta}$ for the number of
indices `$u$' in the tensor component
$T^{\boldsymbol\alpha}_{~\boldsymbol\beta}$ wrt ${\cal F}_u$. Notice
that $N_{\boldsymbol{\alpha}\boldsymbol{\beta}}$ is well-defined:
any other $\bu$-ONF is related to ${\cal F}$ by an $O(n-1)$-spin
preserving the number of `$u$'-indices.
%{\bf preserving $\bu$ up to reflection}, leaves the number of indices
%`$u$' invariant.
%}
Then, $\theta(T)^{\boldsymbol\alpha}_{~\boldsymbol\beta}$ equals
$+T^{\boldsymbol\alpha}_{~\boldsymbol\beta}$ if
$N_{\boldsymbol{\alpha}\boldsymbol{\beta}}$ is even, and
$-T^{\boldsymbol\alpha}_{~\boldsymbol\beta}$ if
$N_{\boldsymbol{\alpha}\boldsymbol{\beta}}$ is odd.

The following properties are immediate from the above definition:
\begin{enumerate}
\item $\theta$ commutes with any tracing $\text{Tr}_k$ over $k$ covariant and $k$ contravariant indices of a type $(r,s)$ tensor $T$, $r,s\geq k$:
\be\label{comm theta contr}
\theta(\text{Tr}_k(T))=\text{Tr}_k(\theta(T))
\ee
\item for tensors $S\in {\cal T}^{r_1}_{s_1}$ and $T\in {\cal T}^{r_2}_{s_2}$ one has
\be\label{tensor map} \theta(S\otimes T)=\theta(S)\otimes \theta(T).
\ee
\item $\theta$ commutes with lowering or raising indices of a tensor (by contraction with $g_{ab}$ or $g^{ab}$), as follows from properties 1 and 2.
\end{enumerate}

\vspace{.2cm} \noindent{\em \ref{subsec: Pre Cartan invol}.2
Orthogonal splitting.}
Since $\theta^2={\sf 1}$, we can split the
vector space $V$ into $\pm 1$ eigenspaces, $V=V_+\oplus V_-$:
\be\label{V+-} V_+=\{ T\in V~\big{|}~ \theta(T)=+T\}, \qquad  V_-=\{
T\in V~\big{|}~ \theta(T)=-T\}. \ee Consequently, for any $T\in V$,
we get the split:
\begin{equation}\label{orthsplit}
 T=T_+ + T_-, \qquad T_{\pm}=\frac 12[T\pm\theta(T)] \in V_\pm .
\end{equation}
Thus
$(T_+)^{\boldsymbol\alpha}_{~\boldsymbol\beta}=T^{\boldsymbol\alpha}_{~\boldsymbol\beta}$
when $N_{\boldsymbol{\alpha}\boldsymbol{\beta}}$ is even and
$(T_+)^{\boldsymbol\alpha}_{~\boldsymbol\beta}=0$ when
$N_{\boldsymbol{\alpha}\boldsymbol{\beta}}$ is odd, and vice versa
for $T_-$ (cf.\ supra). In covariant language, $T_+$ ($T_-$) is
constructed from $T$ by adding all possible contractions with an
even (odd) number of $-u^au_b$ projectors, completed with $h^a{}_b$
projections.
\begin{ex}\label{ex:Tabc}
For a rank 3 covariant tensor $T_{abc}=T_{a[bc]}$ we get
\begin{eqnarray}
&(T_+)_{abc}&=(h_a^{~d}h_b^{~e}h_c^{~f}+u_au^du_bu^eh_c^{~f}+u_au^dh_b^{~e}u_cu^{f}+h_a^{~d}u_bu^eu_cu^f)T_{def}\nonumber\\
&&=h_a^{~d}h_b^{~e}h_c^{~f}T_{def}+2u_au_{[b}h_{c]}^{~f}T_{uuf}\label{Tex1}\\
&(T_-)_{abc}&=-(h_a^{~d}h_b^{~e}u_cu^{f}+h_a^{~d}u_bu^eh_c^{~f}+u_au^dh_b^{~e}h_c^{~f}+u_au^{d}u_bu^eu_cu^f)T_{def}\nonumber\\
&&=2h_a^{~d}h_{[b}^{~e}u_{c]}T_{deu}+u_au^dh_b^{~e}h_c^{~f}T_{def}.\label{Tex2}
\end{eqnarray}
\end{ex}
Since $\theta$ is a Lorentz transformation we have
$\theta(\bg)=\bg$, whence $\bg=\bg_+$. As $\theta$ acts trivially on
scalars $f$ we also have $f=f_+$. Other immediate properties of this
split are the following
\begin{enumerate}
\item Recall that the metric inner product of $S,\,T\in V$ is defined by
\begin{equation}\label{innerprod on V}
\bg(S,T)=g_{a_1b_1}\cdots g_{a_rb_r}g^{c_1d_1\cdots
c_{s}d_{s}}S^{a_1\ldots a_r}{}_{c_1\ldots c_{s}}T^{b_1\ldots
b_r}{}_{d_1\ldots d_{s}}=S^{a_1\ldots a_r}{}_{c_1\ldots
c_{s}}T_{a_1\ldots a_r}{}^{c_1\ldots c_{s}}.
\end{equation}
Since
\[
\bg(S_+,T_-)=
\theta(\bg(S_+,T_-))=\theta(\bg)(\theta(S_+),\theta(T_-))=-\bg(S_+,T_-)
\]
it follows that the split (\ref{orthsplit}) is $\bg$-orthogonal,
$\bg(S_+,T_-)=0$.
%~\footnote{This can be easily seen in a $\bu$-ONF,
%or directly by $\bg(S_+,T_-)=
%\theta(\bg(S_+,T_-))=\theta(\bg)(\theta(S_+),\theta(T_-))=-\bg(S_+,T_-)$.
%The minus sign in (\ref{gST ONF}) is due to an odd number of
%contractions of $\bu$ with itself.}
Hence,
\begin{eqnarray}
\bg(S,T)&=&\bg(S_+,T_+)+\bg(S_-,T_-)\label{gST}\\
&=&\left(\sum_{N_{\boldsymbol{\alpha}\boldsymbol{\beta}}=even}-\sum_{N_{\boldsymbol{\alpha}\boldsymbol{\beta}}=odd}\right) S^{\alpha_1\alpha_2\ldots}{}_{\beta_1\beta_2\ldots}T^{\alpha_1\alpha_2\ldots}{}_{\beta_1\beta_2\ldots}.\label{gST ONF}
\end{eqnarray}
\item From (\ref{orthsplit}) and properties 1--3 of $\theta$ it follows that taking the $+$ and $-$ parts
of a tensor commutes with any tracing $\text{Tr}_k$, \be\label{comm
+- contr} \text{Tr}_k(T)_\pm=\text{Tr}_k(T_\pm), \ee as well as with
lowering and raising indices, and that for $S\in {\cal
T}^{r_1}_{s_1}$ and $T\in {\cal T}^{r_2}_{s_2}$ we have
\be
    \label{prop +- tensorprod}
    (S\otimes T)_+=S_+\otimes T_++S_-\otimes T_-,\qquad (S\otimes T)_-=S_+\otimes T_-+S_-\otimes T_+.
\ee
As a consequence of (\ref{prop +- tensorprod}) we get
\begin{equation}\label{ST+-}
S=S_\pm,\,T=T_\pm\,\Rightarrow\,S\otimes T=(S\otimes T)_+,\qquad S=S_\pm,\,T=T_\mp\,\Rightarrow\,S\otimes T=(S\otimes T)_-.
\end{equation}
In combination with (\ref{comm +- contr}) and $f=f_+$ for scalars we
thus get in particular:
\begin{prop}\label{prop: PM contr}
If $T=T_-$ then also $T^{2m+1}=(T^{2m+1})_-$ and
$\text{Tr}_k(T^{2m+1})=\text{Tr}_k(T^{2m+1})_-$ for any odd power.
In particular, if $T$ is a type $(r,r)$ tensor then
$\text{Tr}_{(2m+1)r}(T^{2m+1})=0$.
\end{prop}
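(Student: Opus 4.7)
The strategy is to reduce everything to the tensor-product decomposition rule (\ref{ST+-}) and the trace commutation (\ref{comm +- contr}), both of which are already at our disposal. The plan is to first establish, by induction on $k$, that if $T=T_{-}$ then the $k$-fold tensor power $T^{k}\equiv T\otimes\cdots\otimes T$ satisfies $T^{k}=(T^{k})_{+}$ when $k$ is even and $T^{k}=(T^{k})_{-}$ when $k$ is odd. The base case $k=1$ is the hypothesis. For the inductive step I would write $T^{k+1}=T^{k}\otimes T$ and apply the two implications in (\ref{ST+-}): if the inductive hypothesis places $T^{k}$ in $V_{\pm}$ according to the parity of $k$, and $T=T_{-}$, then the ``same sign'' clause of (\ref{ST+-}) applies for odd $k$ (giving $T^{k+1}\in V_{+}$) while the ``opposite sign'' clause applies for even $k$ (giving $T^{k+1}\in V_{-}$); either way the sign toggles as required. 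Specializing to $k=2m+1$ yields the first assertion $T^{2m+1}=(T^{2m+1})_{-}$.

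The second assertion follows immediately by applying (\ref{comm +- contr}) to $S=T^{2m+1}$, which gives $\mathrm{Tr}_{k}(T^{2m+1})=\mathrm{Tr}_{k}((T^{2m+1})_{-})=(\mathrm{Tr}_{k}(T^{2m+1}))_{-}$ for any admissible $k$. For the concluding ``in particular'' statement, observe that when $T$ is of type $(r,r)$, the power $T^{2m+1}$ is of type $((2m+1)r,(2m+1)r)$, so the full contraction $\mathrm{Tr}_{(2m+1)r}(T^{2m+1})$ is a scalar. Every scalar $f$ lies in $V_{+}$ (as remarked in the text, since $\theta$ acts trivially on scalars); combining with the just-established fact that this scalar also equals its own $-$ part places it in $V_{+}\cap V_{-}=\{0\}$, hence it vanishes.

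There is no real obstacle — the proof is purely mechanical once (\ref{comm +- contr}), (\ref{prop +- tensorprod}) and (\ref{ST+-}) are in hand. The only point requiring a little care is the interpretation of $T^{2m+1}$ as an iterated tensor product (rather than a composition of endomorphisms), but this is unambiguous from the type count in the final clause. The argument extends verbatim to any statement of the form ``product of an odd number of $-$ tensors is $-$'' via the same parity induction on the number of factors.
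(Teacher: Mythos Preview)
Your argument is correct and is precisely the route the paper takes: the proposition is stated as an immediate consequence of (\ref{ST+-}), (\ref{comm +- contr}) and the fact $f=f_+$ for scalars, and you have simply made the parity induction on the number of tensor factors explicit. There is nothing to add.
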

\end{enumerate}

\vspace{.2cm} \noindent{\em \ref{subsec: Pre Cartan invol}.3
Euclidean inner product.} The Cartan involution $\theta$ induces an
inner product $\inn{-}{-}$ on $V$:
\begin{equation}
    \label{def_inn}
 \inn{S}{T}\equiv \bg(\theta(S),T)=\bg(S,\theta(T))=\bg(S_+,T_+)-\bg(S_-,T_-).
\ee
In any {$\bu$-ONF} %$\{{\bf u}\equiv \bm_1,{\bf m}_{i=2,...,n}\}$
we get \be\label{innST}
 \inn{S}{T}=\left(\sum_{N_{\boldsymbol{\alpha}\boldsymbol{\beta}}=even}+\sum_{N_{\boldsymbol{\alpha}\boldsymbol{\beta}}=odd}\right) S^{\alpha_1\ldots\alpha_p}{}_{\beta_1\ldots\beta_q}T^{\alpha_1\ldots\alpha_p}{}_{\beta_1\ldots\beta_q}=\sum_{\boldsymbol{\alpha}\boldsymbol{\beta}} S^{\alpha_1\ldots\alpha_p}{}_{\beta_1\ldots\beta_q}T^{\alpha_1\ldots\alpha_p}{}_{\beta_1\ldots\beta_q}.
\end{equation}
Compare with (\ref{gST}) and (\ref{gST ONF}). As is clear from (\ref{innST}), $\inn{-}{-}$ is Euclidean ($\inn{T}{T}\geq 0,\, \inn{T}{T}=0 \Leftrightarrow T=0$).
Notice that the norm $||T||=\inn{T}{T}^{1/2}$ associated to this
inner product is $K$-invariant, i.e., for $k\in K$ one has
$||k(T)||=||T||$,~\footnote{\label{kT=T} This is an immediate
consequence of (\ref{def_inn}), the property
(\ref{k and theta commute}) and the fact that $k$ is a Lorentz
transformation.} but it is clearly {\em not} invariant under the
full Lorentz group.

\begin{rem}\label{rem: SE density} The norm $||T||$ corresponds to the
{\em super-energy density} of the tensor $T$ relative to $\bu$ (see
\cite{Senovilla00}, pp.\ 2806, and \cite{Senovilla06}). Also compare
with \cite{syngespec}, chapter IX, for the case of Maxwell-like
tensors, and with \cite{Bel62,Robinson97} for the Bel-Robinson tensor.
\end{rem}

\begin{rem}\label{rem selfadjoint} For later use, we note that
if ${\cal O}$ is a symmetric
(self-adjoint)/antisymmetric(anti-self-adjoint) linear
transformation of $V$ wrt the inner product $\bg$, i.e., $\bg({\cal
O}(S),T)=\pm\bg(S,{\cal O}(T))$, then ${\cal O}_+$ (resp.\ ${\cal
O}_-$) is the symmetric/antisymmetric (resp.\
antisymmetric/symmetric) part of ${\cal O}$ wrt the Euclidean inner
product $\inn{-}{-}$. This follows immediately from
\[
\inn{({\cal O}_++{\cal O}_-)(S)}{T}=\inn{{\cal O}(S)}{T} =
\pm\bg(S,{\cal O}(\theta(T)))=\pm\inn{S}{\theta({\cal
O})(T)}=\pm\inn{S}{({\cal O}_+-{\cal O}_-)(T)}.
\]
\end{rem}

\subsection{Null alignment theory }
\label{subsec: null alignment}

We briefly revise the null alignment theory for tensors over a
Lorentzian space developed in \cite{Milsonetal05} (see \cite{OrtPraPra12rev} for a recent review).
Let $T_{a_1\ldots a_p}$ be a covariant rank $p$ tensor and ${\cal
F}_\ell=\{\bm_\alpha\}=\{\bm_0=\bl,\bm_1=\bn,\bm_{\hat i=3...n}\}$ a null
frame of $T_p M$. Under a positive boost
\begin{eqnarray}
b_\lambda:\; \bl\mapsto \bl'=e^\l\bl,\quad \bn\mapsto \bn'=e^{-\l}\bn,\quad
{\bf m}_{\hat i}\mapsto \bm_{\hat i}'= {\bf m}_{\hat i}\qquad (\l\in {\cal F}_M)\label{boost}
\end{eqnarray}
in the $\bl\wedge \bn$-plane, the tensor components transform
according to
\begin{equation}\label{boost tensor transform }
T_{\alpha_1\ldots \alpha_p}\mapsto T'_{\alpha_1\ldots \alpha_p}=e^{\l b_{\alpha_1\ldots \alpha_p}}T_{\alpha_1\ldots
\alpha_p},\qquad b_{\alpha_1\ldots \alpha_p}\equiv \sum_{i=1}^p
(\delta_{\alpha_i0}-\delta_{\alpha_i1}),
\end{equation}
where $\delta_{\alpha\beta}$ is the Kronecker delta symbol.
%$b_0=1$, $b_i=0$ and $b_1=-1$~\cite{Milsonetal05}.
Thus the integer $b_{\alpha_1\ldots \alpha_p}$ is the difference between the
numbers of 0- and 1-indices, and is called the {\em boost weight}
(henceforth abbreviated to b.w.) of the frame component
$T_{\alpha_1\ldots \alpha_p}$ or, rather, of the $p$-tuple $(\alpha_1,\ldots,\alpha_p)$.
The maximal b.w.\ of the non-vanishing components of $T$,
%(`leading terms')
in its decomposition wrt ${\cal F}_\ell$, is an invariant of Lorentz
transformations preserving the null direction spanned by
$\bl$~\cite{Milsonetal05}; it is called the {\em boost order},
$b_T(\bl)$, of $T$ along $\bl$. Let
\begin{equation}\label{def bmax}
\text{b}_\text{max}(T)\equiv \text{max}_{\{\mbold\ell\}}\,
b_T(\mbold\ell)
\end{equation}
denote the maximal value of $b_T(\bl)$ taken over all null vectors
$\bl$, based on the antisymmetries of $T$. For a generic $\bl$ one
has $b_T(\bl)=\text{b}_\text{max}(T)$; if, however,
$b_T(\bl)<\text{b}_\text{max}(T)$ then $\bl$ is said to span an {\em
aligned null direction} (AND) of {\em alignment order
$\text{b}_\text{max}(T)-b_T(\bl)$}.
%~\footnote{In \cite{Milsonetal05}
%the alignment order is defined to be
%$\text{b}_\text{max}(T)-b_T(\bl)-1$.}
An AND of alignment order 1,
2, 3,... is called single, double, triple,... . Defining
\begin{equation}
\text{b}_\text{min}(T)\equiv \text{min}_{\{\mbold\ell\}}\,
b_T(\mbold\ell),
\end{equation}
the integer
\begin{equation}
\label{def zeta} p_T\equiv
\text{b}_\text{max}(T)-\text{b}_\text{min}(T)
\end{equation}
defines the {\em primary alignment type} of $T$. Let $\bl$ be a
vector of maximal alignment ($b_T(\bl)=\text{b}_\text{min}(T)$),
then
\begin{equation}
    s_T\equiv \text{b}_\text{max}(T)-\chi_T,\qquad \chi_T \equiv
    \text{min}_{\{{\bf n}| b_T(\mbold\ell)=\text{b}_\text{min}(T),\,n^al_a=1\}}\,
b_T({\bf n})
\end{equation}
is the {\em secondary alignment type} of $T$, and the couple
$(p_T,s_T)$ the {\em (full) alignment type}.
%~\footnote{Evidently,
%$s_T\leq p_T$. Hence, if $p_T=0$  (no aligned directions,
%$\text{b}_\text{min}(T)=\text{b}_\text{max}(T)$) then also $s_T=0$.
%}

In agreement with terminology given to the Weyl tensor (see also
below), we call a tensor $T$ of {\em type G} if it has no ANDs
($p_T=0$) and of {\em type I} if it only has one or more ANDs
{($p_T\ge1$)}. It is of {\em type II or more special} if
$\zeta_T\leq 0$ ($p_T\geq \text{b}_\text{max}$), i.e., if in a
suitable null frame only components of non-positive b.w.\ are
non-vanishing; as a particular case it is of {\em type D} if
$\zeta_T=\chi_T=0$ ($p_T=s_T=\text{b}_\text{max}$), i.e., only
components of zero boost weight are non-vanishing in some null frame
$\{\bl,\bn,\bm_{\hat i}\}$, which is then called {\em canonical}.
%For instance, the metric tensor is of type D and has boost order 0
%along {\em any} null direction.~\footnote{The only non-vanishing
%components in any null frame $\{\bm_0=\bl,\bm_1=\bn,\bm_{\hat i}\}$
%are $g_{01}=g_{10}=g_{\hat i\hat i}=1$ (no summation over $\hat
%i$).}
We define $T$ to be of {\em type III} if only components of negative
b.w. are non-zero (i.e., $p_T\geq \text{b}_\text{max}+1$). A further
special case occurs when a null vector $\bl$ exists such that
$b_T(\bl)=-\text{b}_\text{max}$; then $\bl$ spans the unique AND of
$T$ which is thus of type $(p_T,s_T)=(2\text{b}_\text{max},0)$, also
called {\em type N}. {According to these definitions type N is a
subcase of type III, which is a subcase of type II, which is, in
turn, a subcase of type I.} Of course, for tensors with many indices
and few antisymmetries there are a lot of intermediate cases, which
may be given specific names if relevant. The trivial case of $T=0$
is dubbed with type O; then one can formally define
$b_T(\bl):=-\text{b}_\text{max}-1$ or $b_T(\bl):=-\infty$.

The following properties are immediate consequences of the above
definitions.

\begin{prop}\label{prop boost order} Let $\bl$ be a null vector,
and $S\neq 0$ and $T\neq 0$ covariant tensors of arbitrary ranks $p$ and $q$,
respectively.

\begin{itemize}

\item For arbitrary $\alpha,\,\beta\in {\cal F}_M$ we have
\begin{equation}\label{boost order lincomb} b_{X}(\bl)\leq
\max(b_S(\bl),b_T(\bl)),\qquad X\equiv \alpha S+\beta T.
\end{equation}

\item For the tensor product of $S$ and $T$,
\begin{equation}\label{boost order product}
b_{S\otimes T}(\bl)=b_S(\bl)+b_T(\bl).
\end{equation}

\item If $\text{Tr}_k$ is any tracing over $k$ covariant and $k$
contravariant indices ($2k\leq q$) then
\begin{equation}\label{boost order trace}
b_{\text{Tr}_k(T)}(\bl)\leq b_T(\bl).
\end{equation}

\item If $\bn$ is a second null vector not aligned with $\bl$ ($n^al_a\neq 0$),
then $b_T(\bl)+b_T(\bn)\geq 0$.
\end{itemize}
\end{prop}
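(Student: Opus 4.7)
The plan is to handle the four items in turn by working component-wise in a single well-chosen null frame, relying on the basic fact~\eqref{boost tensor transform } that the b.w.\ of a frame component is determined purely by the number of $0$- versus $1$-labels.

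For (i), in any null frame adapted to $\bl$ every component of $X=\alpha S+\beta T$ is just $\alpha S_{\alpha_1\ldots\alpha_p}+\beta T_{\alpha_1\ldots\alpha_p}$; hence any component of b.w.\ strictly greater than $\max(b_S(\bl),b_T(\bl))$ is a combination of two zeros, giving~\eqref{boost order lincomb}. For (ii), a component of the tensor product factorises as $S_{\alpha_1\ldots\alpha_p}T_{\beta_1\ldots\beta_q}$, whose b.w.\ is $b_{\alpha_1\ldots\alpha_p}+b_{\beta_1\ldots\beta_q}$; the supremum over non-zero components is thus attained by pairing two individually maximal non-zero factors, giving the equality~\eqref{boost order product}.

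For (iii) I would use the fact that the inverse metric in a null frame has only the non-zero components $g^{01}=g^{10}=1$ and $g^{\hat i\hat j}=\delta^{\hat i\hat j}$, so each contracted index pair contributes total b.w.\ zero. Thus every frame component of $\mathrm{Tr}_k(T)$ is a finite sum of components of $T$ sharing the same b.w.\ as its free slots, so the maximum non-zero b.w.\ cannot exceed that of $T$ itself.

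The fourth item is the only one requiring a little thought, and is the main obstacle. First rescale $\bn$ so that $l^an_a=1$, which is harmless since $b_T(\bn)$ depends only on the null direction of $\bn$. Complete to a null frame ${\cal F}=\{\bm_0=\bl,\bm_1=\bn,\bm_{\hat i}\}$. By the frame-independence of boost order (invariance under the Lorentz subgroup fixing the direction of $\bl$) we have $b_T(\bl)=\max\{b_{\alpha_1\ldots\alpha_p}:T_{\alpha_1\ldots\alpha_p}\neq 0\}$ computed in ${\cal F}$. The frame ${\cal F}'=\{\bm_{0'}=\bn,\bm_{1'}=\bl,\bm_{\hat i}\}$ obtained by interchanging $\bm_0\leftrightarrow\bm_1$ is a null frame adapted to $\bn$, and this relabelling reverses the sign of every $b_{\alpha_1\ldots\alpha_p}$ by the very definition following~\eqref{boost tensor transform }. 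Hence $b_T(\bn)=-\min\{b_{\alpha_1\ldots\alpha_p}:T_{\alpha_1\ldots\alpha_p}\neq 0\}$ in the same frame ${\cal F}$, and adding the two identities yields $b_T(\bl)+b_T(\bn)=\max-\min\geq 0$, as required.
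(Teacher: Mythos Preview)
Your proof is correct on all four points. The paper itself does not give an explicit proof, simply declaring the properties ``immediate consequences of the above definitions''; your argument is precisely the natural unpacking of that claim, working component-wise in a null frame adapted to $\bl$ and, for the last item, exploiting the $0\leftrightarrow 1$ swap to read off $b_T(\bn)$ in the same frame. One small remark on (iii): since the proposition concerns a purely covariant tensor $T$, the trace $\mathrm{Tr}_k$ really means contraction of $k$ pairs of covariant indices with $g^{ab}$, which is exactly how you treat it; your observation that the non-zero components $g^{01},g^{10},g^{\hat i\hat j}$ of the inverse metric all carry b.w.\ zero is the key point and is stated correctly.
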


By taking $\bl$ and $\bn$ maximally aligned ($b_T(\bl)=\zeta_T$ and
$b_T(\bn)=\chi_T$), (\ref{boost order product}) and (\ref{boost
order trace}) imply:
\begin{cor}\label{cor: type II-D prod-contract} The properties `type II or more special' and `type D' are preserved by
taking powers of or contractions within a tensor.
\end{cor}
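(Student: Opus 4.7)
The plan is to derive both preservation claims as direct consequences of items (\ref{boost order product}) and (\ref{boost order trace}) of Proposition~\ref{prop boost order}, together with the non-aligned-sum inequality $b_T(\bl)+b_T(\bn)\geq 0$. The unifying observation is that both ``type II or more special'' and ``type D'' can be phrased purely in terms of the existence of null vectors realising particular values of $b_T$: type II means there exists a null $\bl$ with $b_T(\bl)\leq 0$, while type D means there exists a pair $(\bl,\bn)$ with $n^a l_a=1$ and $b_T(\bl)=b_T(\bn)=0$ (automatically minimal by property~4).

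First, for a tensor power $T^{\otimes k}$ I would take any $\bl$ with $b_T(\bl)\leq 0$ (which exists by the type II assumption) and apply (\ref{boost order product}) iteratively to get $b_{T^{\otimes k}}(\bl)=k\,b_T(\bl)\leq 0$, so the type II property is inherited. If in addition $T$ is type D, I pick the canonical pair $(\bl,\bn)$ realising $b_T(\bl)=b_T(\bn)=0$; then (\ref{boost order product}) gives $b_{T^{\otimes k}}(\bl)=b_{T^{\otimes k}}(\bn)=0$. To upgrade this to actual type D (rather than just ``type II or more special'') I would invoke property~4 of Proposition~\ref{prop boost order} applied to $T^{\otimes k}$: if any $\bl''$ satisfied $b_{T^{\otimes k}}(\bl'')<0$, then every non-aligned $\bn''$ would have $b_{T^{\otimes k}}(\bn'')>0$, contradicting $b_{T^{\otimes k}}(\bn)=0$ for $\bn$ non-aligned with $\bl''$ (since $\bl$ and $\bn$ themselves are non-aligned, one of them is non-aligned with $\bl''$). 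Hence $\zeta_{T^{\otimes k}}=0=\chi_{T^{\otimes k}}$, i.e.\ $T^{\otimes k}$ is type D (or O).

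For contractions the argument is identical, substituting (\ref{boost order trace}) for (\ref{boost order product}). Namely, if $b_T(\bl)\leq 0$ then $b_{\mathrm{Tr}_k(T)}(\bl)\leq b_T(\bl)\leq 0$, giving type II for $\mathrm{Tr}_k(T)$; and if $b_T(\bl)=b_T(\bn)=0$ on a non-aligned pair then $b_{\mathrm{Tr}_k(T)}(\bl)\leq 0$ and $b_{\mathrm{Tr}_k(T)}(\bn)\leq 0$, so property~4 forces both to equal zero, and the same ``upgrading'' argument as above concludes that $\mathrm{Tr}_k(T)$ is type D.

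There is essentially no hard step here: the proof is entirely mechanical once the definitions are rewritten in terms of $b_T(\bl)$. The only subtlety worth flagging is the upgrading step in the type D case, where one must rule out the possibility that the tensor power or contraction might become \emph{more} special (say, type III), which is precisely what property~4 prevents as long as we have a \emph{pair} of non-aligned null vectors at boost order zero to start with. Of course, one should also note the convention that the trivial case in which the resulting tensor vanishes falls under type~O, which is customarily treated as a degenerate subcase of type~D.
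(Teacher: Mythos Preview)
Your proposal is correct and follows the same line as the paper, which simply notes that the corollary follows from (\ref{boost order product}) and (\ref{boost order trace}) upon taking $\bl$ and $\bn$ maximally aligned. You are more explicit than the paper in spelling out the ``upgrading'' step for type~D via property~4 of Proposition~\ref{prop boost order}; this is a genuine detail the paper leaves implicit, and your treatment of it is sound (though for powers it is in fact redundant, since $b_{T^{\otimes k}}(\bl)=b_{T^{\otimes k}}(\bn)=0$ on a non-aligned pair already exhibits a null frame in which only boost-weight~$0$ components survive, which is the operational definition of type~D).
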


Specifically, the Weyl tensor $C_{abcd}$ of an $n$-dimensional
spacetime
obeys the Riemann-like symmetries
\begin{eqnarray}
&&C_{(ab)cd}=C_{ab(cd)}=0,\qquad C_{abcd}=C_{cdab},\qquad
C_{a[bcd]}=0\label{symm1}
\end{eqnarray}
and the tracefree property
\begin{eqnarray}
&&C^a{}_{bad}=0.\label{tracefree}
\end{eqnarray}
In terms of the Riemann and Ricci tensors and the Ricci scalar
it is given by
\begin{equation}
\label{Riemann decomp}
C_{abcd}=\Rt_{abcd}-\frac{2}{n-2}(g_{a[c}R_{d]b}-g_{b[c}R_{d]a})+\frac{2{\cal
R}}{(n-1)(n-2)}g_{a[c}g_{d]b}.
\end{equation}
For the Riemann, Weyl and Ricci tensors we have
$\text{b}_{\text{max}}=2$. Let $b_{\text{Ric}}(\bl)$ and
$b_{\text{Rie}}(\bl)$ symbolize the boost orders along $\bl$ of the
Ricci and Riemann tensor, respectively. Further consequences of
Proposition \ref{prop boost order} are:

\begin{prop}\label{prop: Riemann boost order} For any null vector
$\bl$:
\begin{equation}\label{boost order Rie}
b_{\text{Rie}}(\bl)=\max(b_C(\bl),b_{\text{Ric}}(\bl)\geq
b_C(\bl),\,b_{\text{Ric}}(\bl).
\end{equation}
\end{prop}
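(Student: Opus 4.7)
The plan is to exploit the algebraic decomposition (\ref{Riemann decomp}) together with the three elementary rules of Proposition~\ref{prop boost order} (boost order of linear combinations, tensor products, and traces). The key observation is that in any null frame $\{\bl,\bn,\bm_{\hat i}\}$ the metric $g_{ab}$ has only non-vanishing components $g_{01}=1$ and $g_{\hat i\hat j}=\delta_{\hat i\hat j}$, all of boost weight $0$; consequently $b_g(\bl)=0$ for every null $\bl$.

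First I would establish the inequality $b_{\text{Rie}}(\bl)\le\max(b_C(\bl),b_{\text{Ric}}(\bl))$. Reading (\ref{Riemann decomp}) as $\Rt=C+T^{(1)}+T^{(2)}$, with $T^{(1)}\sim g\otimes R$ (suitably antisymmetrized) and $T^{(2)}\sim\mathcal{R}\,g\otimes g$, I would apply (\ref{boost order product}) and (\ref{boost order trace}) to get $b_{T^{(1)}}(\bl)\le b_g(\bl)+b_{\text{Ric}}(\bl)=b_{\text{Ric}}(\bl)$ and $b_{T^{(2)}}(\bl)\le 0$ (with equality unless $\mathcal{R}=0$). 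Then (\ref{boost order lincomb}) gives
\[
b_{\text{Rie}}(\bl)\le\max\bigl(b_C(\bl),\,b_{\text{Ric}}(\bl),\,0_{\text{if }\mathcal{R}\ne0}\bigr).
\]
For the reverse direction, tracing (\ref{Riemann decomp}) shows $R_{ab}=g^{cd}\Rt_{cadb}$, so (\ref{boost order trace}) and (\ref{boost order product}) yield $b_{\text{Ric}}(\bl)\le b_{\text{Rie}}(\bl)$. Rewriting (\ref{Riemann decomp}) to solve for $C$ and applying the same lemmas, one gets $b_C(\bl)\le\max(b_{\text{Rie}}(\bl),b_{\text{Ric}}(\bl),0_{\text{if }\mathcal{R}\ne0})=\max(b_{\text{Rie}}(\bl),0_{\text{if }\mathcal{R}\ne0})$.

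The only non-routine step is to remove the stray ``$0$'' coming from the $\mathcal{R}\,g\otimes g$ contribution when both $b_C(\bl)$ and $b_{\text{Ric}}(\bl)$ happen to be strictly negative. The key remark here is that $\mathcal{R}=g^{ab}R_{ab}$ is a trace, so by (\ref{boost order trace}) one has $b_{\mathcal{R}}(\bl)\le b_{\text{Ric}}(\bl)$. But $\mathcal{R}$ is a scalar, whose boost order is either $0$ or $-\infty$; hence $b_{\text{Ric}}(\bl)<0$ forces $\mathcal{R}=0$, and the $T^{(2)}$-term disappears from (\ref{Riemann decomp}). Exactly the same argument, applied to $b_{\text{Rie}}(\bl)<0$ (which via the first reverse inequality already implies $b_{\text{Ric}}(\bl)<0$), handles the other direction.

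Putting the two cases together (the ``generic'' case where one of $b_C(\bl),b_{\text{Ric}}(\bl)$ is $\ge 0$, and the ``degenerate'' case where both are negative and $\mathcal{R}$ vanishes), one obtains the two-sided bound $b_{\text{Rie}}(\bl)=\max(b_C(\bl),b_{\text{Ric}}(\bl))$, and the final inequalities $\ge b_C(\bl),\,b_{\text{Ric}}(\bl)$ stated in the proposition are immediate from the definition of the maximum. I expect the vanishing-Ricci-scalar argument to be the only subtlety worth flagging; everything else is a direct application of Proposition~\ref{prop boost order}.
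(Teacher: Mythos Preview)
Your proof is correct and follows exactly the route the paper intends: the paper does not spell out a proof at all, merely stating that the proposition is a ``further consequence of Proposition~\ref{prop boost order}'', i.e., of the boost-order rules for linear combinations, tensor products and traces applied to the decomposition~(\ref{Riemann decomp}). Your treatment of the only genuine subtlety---that the $\mathcal{R}\,g\otimes g$ term could in principle raise the bound to~$0$, but cannot because $b_{\text{Ric}}(\bl)<0$ forces $\mathcal{R}=0$ via~(\ref{boost order trace})---is more careful than anything the paper makes explicit.
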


\begin{cor}\label{cor: Riemann align types} The alignment types $(p_R,s_R)$ and $(p_C,s_C)$
of the Ricci and Weyl tensors at a spacetime point are at least as
high as that of the Riemann tensor, i.e., $\max(p_R,p_C)\geq
p_{\Rt}$ and $\max(s_R,s_C)\geq s_{\Rt}$. In particular, if a
Riemann tensor is of type D then the Ricci and Weyl tensors are of
type D or O (but not both type O).
\end{cor}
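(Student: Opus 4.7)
The whole corollary is a direct consequence of Proposition~\ref{prop: Riemann boost order}, which asserts $b_{\text{Rie}}(\bl)=\max(b_C(\bl),b_{\text{Ric}}(\bl))$ along every null $\bl$. In particular, for every $\bl$ one has the separate bounds $b_C(\bl),b_{\text{Ric}}(\bl)\le b_{\text{Rie}}(\bl)$, with equality attained by at least one of the two at each $\bl$. This dominance of the Riemann boost order over those of its Weyl and Ricci pieces is what will upgrade ``maximally aligned for $\Rt$'' to ``maximally aligned for at least one of $C, R$''.

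For the primary alignment inequality, I would pick a null $\bl_0$ realizing $b_{\text{Rie}}(\bl_0)=\text{b}_\text{min}(\Rt)=2-p_\Rt$. The inequalities above immediately give $\text{b}_\text{min}(C)\le 2-p_\Rt$ and $\text{b}_\text{min}(R)\le 2-p_\Rt$, so in fact both $p_C\ge p_\Rt$ and $p_R\ge p_\Rt$ hold individually, and the max inequality follows a fortiori. For the secondary alignment I would enhance the choice to a pair $(\bl_0,\bn_0)$ with $n_0^a(l_0)_a=1$ and $b_{\text{Rie}}(\bn_0)=2-s_\Rt$, and note that at each of $\bl_0,\bn_0$ at least one of Weyl/Ricci attains the Riemann boost order. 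A short case analysis, splitting on which of $C$ or $R$ realizes the max at $\bl_0$ (say $b_C(\bl_0)=2-p_\Rt$) and whether $p_C=p_\Rt$, yields $\chi_C\le b_C(\bn_0)\le 2-s_\Rt$ and hence $s_C\ge s_\Rt$; the symmetric branch handles $R$. The main obstacle lies in the remaining subcase where $p_C,p_R>p_\Rt$ strictly, so that neither of $\bl_0,\bn_0$ is individually maximally aligned for $C$ or $R$; there one has to work with alternative null vectors realizing the $C$- or $R$-minima, invoking the max formula again to transfer control to the other tensor.

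The ``in particular'' statement about Riemann type D follows cleanly, bypassing the above subtlety. Here $p_\Rt=s_\Rt=2$, so one may take $\bl_0,\bn_0$ with $b_{\text{Rie}}(\bl_0)=b_{\text{Rie}}(\bn_0)=0$ and $n_0^a(l_0)_a=1$. The above dominance bounds give $b_C(\bl_0),b_C(\bn_0)\le 0$. When $C\ne 0$, the pairing inequality $b_C(\bl_0)+b_C(\bn_0)\ge 0$ provided by the last item of Proposition~\ref{prop boost order} forces $b_C(\bl_0)=b_C(\bn_0)=0$, so that the Weyl tensor is type D with $\bl_0,\bn_0$ as double ANDs. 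Identical reasoning applied to the Ricci tensor shows $R$ is type D whenever it is non-zero. Finally, if both $C=0$ and $R=0$ then $\Rt=0$ by \eqref{Riemann decomp}, contradicting $\Rt$ being of type D; hence $C$ and $R$ cannot simultaneously vanish.
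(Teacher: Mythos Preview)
Your approach is the same as the paper's, which states the corollary as an immediate consequence of Proposition~\ref{prop: Riemann boost order} without spelling out a proof. Your primary-type argument is correct and in fact proves the stronger individual bounds $p_C\ge p_{\Rt}$ and $p_R\ge p_{\Rt}$, and your type~D argument using the pairing inequality of Proposition~\ref{prop boost order} is complete and clean.

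For the secondary type you flag a ``remaining subcase'', but this worry is unnecessary and the argument can be closed directly, again yielding the individual bounds $s_C\ge s_{\Rt}$ and $s_R\ge s_{\Rt}$. The observation you are missing is that in Lorentzian signature any two non-parallel null vectors have non-zero inner product (a totally null $2$-plane is excluded by the signature), so \emph{any} null direction distinct from a given $\bl_C$ can serve, after scaling, as an admissible $\bn$ in the definition of $\chi_C$. Concretely: choose any $\bl_C$ with $b_C(\bl_C)=\zeta_C$, and let $(\bl_0,\bn_0)$ be the $\Rt$-optimal pair. If $\bn_0\not\parallel\bl_C$ then $(\bl_C,\bn_0)$ is admissible for $\chi_C$, whence $\chi_C\le b_C(\bn_0)\le b_{\text{Rie}}(\bn_0)=\chi_{\Rt}$. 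If $\bn_0\parallel\bl_C$ then $\bl_0$ is compatible with $\bl_C$ instead, and $\chi_C\le b_C(\bl_0)\le b_{\text{Rie}}(\bl_0)=\zeta_{\Rt}\le\chi_{\Rt}$. In either case $s_C\ge s_{\Rt}$, with no need to transfer control between Weyl and Ricci or to treat $p_C>p_{\Rt}$ separately.
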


For a non-zero Weyl tensor in particular, an AND is called a {\em
WAND}. If $p_C=0,\,1,\,2,\,3,\,4$ the primary type has been
respectively symbolized by G, I, II, III,
N~\cite{Coleyetal04,Milsonetal05}; type O symbolizes a zero Weyl
tensor. If $s_C=1,2$ this is denoted by $i$, $ii$ in subscript to
the primary symbol. In this paper we will explicitly use or meet
types G, I$_i$, II$_{ii}\equiv$ D, O and N.
%~\footnote{Except for type G, the mentioned types are
%generalizations of the corresponding Petrov types in four
%dimensions. Type G does not occur for $n=4$ (i.e., WANDs always
%exist) whereas it is the generic Weyl type for
%$n>4$~\cite{Milsonetal05}.}
In the type D case, the subtypes D(abc) and D(d) as described in
\cite{Coleyetal04,Ortaggio09} will be relevant, where the former is
the conjunction of types D(a), D(b) and D(c). Here a type D Weyl
tensor is said to be of type D(abc) (D(d)) if in some Weyl canonical null frame $\{\bm_0=\bl,\bm_1=\bn,\bm_{\hat
i}\}$ the components $C_{\hat i\hat j\hat k\hat l}$ ($C_{01\hat
i\hat j}$) all vanish (in which case they in fact vanish in {\em any} such frame).~\footnote{\label{foot: Dabcd property} It is
easy to show that the Lorentz transformations which convert a Weyl
canonical null frame into another one subjects the separate
component sets $[C_{\hat i\hat j\hat k\hat l}]$ and
$[C_{01\hat{i}\hat{j}}]$ to an invertible transformation. Hence the
vanishing of such a set is a well-defined property. The same holds
regarding the separate subtypes D(a), D(b) and D(c).}

\section{Minimal Ricci-and Maxwell-like tensors}

\label{app_minimal}

\renewcommand{\theequation}{B\arabic{equation}}
\setcounter{equation}{0}

In Example \ref{ex:minimal} we saw that, given any unit timelike
vector $\bu$, a minimal vector $\bv$ is either proportional
($\bv\sim\bu$) or orthogonal ($\bv\bot\bu$) to $\bu$ and, in
particular, cannot be null (or ``type N''). Conversely, a given
vector $\bv$ is minimal wrt the unit vector parallel to it if $\bv$
is timelike (or ``type G''), and wrt any $\bu\bot\bv$ when $\bv$
is spacelike (or ``type D''). This provides an explicit proof for
Proposition~\ref{th_minimal_ODIG} in the case of vectors. Likewise, we
give here more explicit proofs in the case of Ricci- and
Maxwell-like rank 2 tensors.

\subsection{Ricci-like tensors}\label{subsec: Ricci app}

Let $(V_n,\bg)$ be a vector space of arbitrary dimension $n$,
equipped with a (non-degenerate) metric $\bg$ of arbitrary signature
$s$. Petrov~\cite{petrov} deduced canonical forms for Ricci-like
%(i.e., covariant rank 2 symmetric)
tensors $R_{ab}=R_{(ab)}$ over
$(V_n,\bg)$, connected to the Jordan canonical forms of
$R^a{}_b\equiv g^{ac}R_{cb}$. For Lorentzian signature $s=n-2$ there
are four distinct possibilities (see also
\cite{Hallbook,RebSanTei04}), where the Segre types (but not
possible eigenvalue degeneracies) are indicated between brackets:
\begin{eqnarray*}
\text{Type R1} \quad (\{1,1\ldots 1\}):&& R_{ab}=\rho_u
u_au_b+\sum_{i=2}^n \rho_i m^i_am^i_b;\\
\text{Type R2} \quad (\{z\bar{z}1\ldots 1\}):&&
R_{ab}=2\alpha u_{(a}m^2_{b)}+\beta
(u_au_b-m^2_am^2_b)+\sum_{\hat i=3}^n \rho_i m^{\hat i}_am^{\hat i}_b,\quad \alpha\neq 0;\\
\text{Type R3} \quad (\{21\ldots 1\}):&& R_{ab}=2\alpha
l_{(a}n_{b)}\pm l_al_b+\sum_{\hat i=3}^n \rho_i m^{\hat i}_am^{\hat i}_b;\\
\text{Type R4} \quad (\{31\ldots 1\}):&&
R_{ab}=\alpha(2l_{(a}n_{b)}+m^3_am^3_b)+2l_{(a}m^3_{b)} +\sum_{\hat
i=4}^n \rho_i m^{\hat i}_am^{\hat i}_b ,
\end{eqnarray*}
where, as usual,  $\bl$ and $\bn$ are null, $\bu$ is unit timelike and the $\bm^i$ ($\bm^{\hat{i}}$) are unit spacelike.
For our purposes it is enough to mention that:
\begin{enumerate}
\item  Types R3 and R4 have, while types R1 and R2 {\em do not} have, a unique null
eigendirection (spanned by $l^a$). But for a symmetric tensor, null
eigendirections are precisely ANDs of boost order $\leq 0$
% doubly aligned null directions
(since the equation $l_{[a}R_{b]c}l^c=0$ expresses both conditions
at the same time). Hence, {\em types R3 and R4 precisely cover the
alignment types `II or more special, but not D nor O'}. Type R1
comprises the alignment types O and D (without loss of generality
for $\rho_2=-\rho_u$, see also the proof of Proposition \ref{prop:
PMpeculiar}), while types I and G are distributed over types R1 and
R2, where type I implies type I$_i$ and at least a
$(n-3)$-dimensional surface of single ANDs (see Proposition
\ref{prop Ricci type I is Ii}). As an example, the Ricci tensor
given in eq. (\ref{example_Ricci_Ii}) is of type R1 and of alignment
type I$_i$.
\item Type R1 is the only type having one or more timelike eigendirections (one of them spanned by
$u^a$). Type R2 has two complex eigenvectors $\bu\pm i \bm_2$
corresponding to the eigenvalues $-\beta\pm i\alpha$. In the adapted
canonical null frame ${\cal F}_c=\{\bm_0=\bl,\bm_1=\bn,\bm_{\hat
i}\}$, where $\bl$ and $\bn$ are defined by (\ref{null frame corr}),
the R2 canonical form becomes
\begin{equation}\label{R2 can form}
R_{ab}=\alpha (l_al_b-n_an_b)-2\beta l_{(a}n_{b)} +\sum_{\hat i=3}^n
\rho_i m^{\hat i}_am^{\hat i}_b,\quad \alpha\neq 0.
\end{equation}
\end{enumerate}

In view of point 1 we need to show that eq.\ (\ref{minimal 2tensor
symm})
%\begin{equation}\label{minimal 2tensor symm appendix}
%R_{ij}R_{u}{}^j+R_{iu}R_{uu}=0,
%\end{equation}
admits a solution precisely for types R1 and R2. In a $\bu$-ONF
$\{\bu,\bm_2,\bm_{\hat i}\}$, where the vector $\bm_2$ has been
isolated,
%this equation
(\ref{minimal 2tensor symm}) splits into
\begin{equation}\label{minimal 2tensor symm u2}
R_{2u}(R_{uu}+R_{22})+R_{u\hat j}R_{2}{}^{\hat j}=0,\quad
R_{uu}R_{u\hat i}+R_{u2}R_{2\hat i}+R_{u\hat j}R_{\hat i}{}^{\hat
j}=0.
\end{equation}

\begin{itemize}
\item In type R1 there is at least one eigenvector $\bu$, which
satisfies $R_{iu}=0,\,\forall i$, and thus (\ref{minimal 2tensor
symm}).
\item For type R2 we take the $\bu$-ONF $\{\bu,\bm_2,\bm_{\hat i}\}$
from the canonical form. Then $R_{uu}=-R_{22}=\alpha$ and
$R_{u2}=R_{u\hat i}=R_{2\hat i}=0$, such that eq.\ (\ref{minimal
2tensor symm u2}) is satisfied and $R_{ab}$ is minimal wrt $\bu$.
\item In any null frame $\{\bm_0=\bl,\bm_1=\bn,\bm_{\hat i}\}$ adapted to $\{\bu,\bm_2,\bm_{\hat i}\}$,
the first equation of (\ref{minimal 2tensor symm u2}) becomes
\[
R_{00}^2+\sum_{\hat i=3}^n R_{0\hat i}^2=R_{11}^2+\sum_{\hat i=3}^n
R_{1\hat i}^2.
\]
We see that if $R_{ab}$ is minimal wrt a certain $\bu$ and has an
AND of boost order $\leq 0$ spanned by $\bl$ (i.e., $R_{00}=R_{0\hat
i}=0,\,\forall \hat i$), then the vector $\bn$ defined by
(\ref{u_WANDs}) necessarily spans an AND of boost order $\leq 0$ as
well. By point 1 this excludes types R3 and R4, for which there is
only one double AND (spanned by $\bl$ in their canonical forms).
\end{itemize}

This shows that Ricci-like tensors of types R1 and R2 (alignment
types G, I, D and O) are minimal wrt a certain unit timelike vector
$\bu$, whereas those of types R3 and R4 (alignment types II (not D),
III or N) are not.

We observe also that type R1 is precisely the case of a PE Ricci tensor, while type R2 contains the purely magnetic case where we can take $\beta=0=\rho_i$, cf.\ Proposition \ref{prop_PEPM_Ricci} and Remark \ref{rem_Einstein_PE}.

\subsection{Maxwell-like tensors}\label{subsec: Maxwell app}

Maxwell-like tensors $F_{ab}=F_{[ab]}$ have $\text{b}_{\text max}=1$
and can be of alignment types G, D, O, II and N (we assume hereafter $n>2$ since any non-zero bivector is trivially of type D in two dimensions). Type G (no aligned
null direction) can only occur when $n$ is odd
\cite{BergSen01,Milson04} (see also Remark~\ref{rem: antisymm class}
below). Type O is the trivial case $F_{ab}=0$. Types II and N allow
for precisely one AND (of boost order 0 and -1, respectively); in four \cite{Stephanibook,Hallbook} and higher \cite{Coleyetal04vsi} dimensions {the $F_{ab}$'s of type N are {\em null} Maxwell-like tensors in the sense that all polynomial invariants vanish. For
type D there are two or more ANDs.

Let $F_{ab}\neq 0$ and consider the symmetric tensor
$(F^2)_{ab}\equiv F_{ac}F^{c}{}_b$. In view of the minimal criterion
(\ref{minimal 2tensor antisymm}) for $F_{ab}$, we need to show that
$(F^2)^a{}_b$ has a timelike eigenvector iff $F_{ab}$ has no unique
AND (i.e., it is not of type II or N). This will follow immediately
from:

\begin{prop} $\bl$ is an AND of $F_{ab}\neq 0$ iff it is an AND for
$(F^2)_{ab}$ of boost order $\leq 0$. The symmetric tensor
$(F^2)_{ab}$ is of type R1 or R3.
\end{prop}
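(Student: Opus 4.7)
The plan is to prove the two assertions in turn: first the AND equivalence by a direct computation in a null frame, then the Segre-type statement by combining a spectral argument with the classical canonical form of bivectors.

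For the AND equivalence, I would work in a null frame $\{\bm_0=\bl,\bm_1=\bn,\bm_{\hat i}\}$ and expand $(F^2)_{ab}=F_{ac}g^{cd}F_{db}$ using the fact that in such a frame the only nonzero inverse-metric components are $g^{01}=g^{10}=1$ and $g^{\hat i\hat j}=\delta^{\hat i\hat j}$. Since $F_{00}=0$ automatically, only the spatial contractions survive in the boost-weight $+2$ component, giving
\[
(F^2)_{00}=-\sum_{\hat i}(F_{0\hat i})^{2},
\]
which vanishes iff all $F_{0\hat i}$ vanish, i.e., iff $b_F(\bl)\le 0$. A parallel expansion gives $(F^2)_{0\hat i}=F_{01}F_{0\hat i}+F_{0\hat j}F^{\hat j}{}_{\hat i}$, which also vanishes once the $F_{0\hat i}$ do. This establishes the desired biconditional: $\bl$ is an AND of $F$ iff $(F^2)_{00}=(F^2)_{0\hat i}=0$, i.e., iff $b_{F^2}(\bl)\le 0$.

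For the Segre-type assertion I would first observe that $(F^2)_{ab}$ is symmetric (immediate from $F_{ab}=-F_{ba}$), so $(F^2)^a{}_b$ belongs to one of the four Petrov--Lorentzian types R1--R4 listed above in Appendix~\ref{subsec: Ricci app}. Type R2 is excluded by a spectral argument: as an antisymmetric operator wrt $g$, the complex eigenvalues of $F^a{}_b$ occur in the patterns $\pm i\beta$, $\pm\alpha$, or $0$, so the eigenvalues of $F^2$ are $-\beta^2,\,\alpha^2$, or $0$---all real. Type R4 is excluded by invoking the classical canonical decomposition of bivectors in Lorentzian signature: up to a Lorentz rotation, $F$ is an orthogonal direct sum of spacelike ``rotation'' blocks $\beta_i{\bf p}_i\wedge{\bf q}_i$ together with either a single boost block $\alpha\bu\wedge\bv$ (non-null case) or a single null block $\bl\wedge{\bf p}$ (null case); see, e.g., \cite{BergSen01,Milson04} and Remark~\ref{rem: antisymm class} below. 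Squaring block by block, each rotation/boost block contributes diagonalizably with real eigenvalues $-\beta_i^2,\,\alpha^2$, whereas the null block contributes $-l^a l_b$, which carries a single Jordan chain of length $2$ at eigenvalue $0$. Hence $(F^2)^a{}_b$ has Segre type $\{1,\ldots,1\}$ in the non-null case (R1) and $\{2,1,\ldots,1\}$ in the null case (R3).

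The main obstacle lies in the appeal to the canonical-form result for bivectors in arbitrary dimension, rather than in the computations themselves, which are elementary. A purely algebraic alternative bypasses the canonical form by noting that $F^a{}_b$ can only have Jordan blocks of size at most $3$, occurring only at eigenvalue $0$ via the null chain $\bn\mapsto-{\bf p}\mapsto-\bl\mapsto 0$; since squaring sends a $J_3$ block to $J_2\oplus J_1$, $F^2$ cannot harbour any Jordan block of size $\ge 3$, directly ruling out R4.
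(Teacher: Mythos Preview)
Your proof of the first assertion (the AND equivalence) is essentially identical to the paper's: both compute $(F^2)_{00}=-\sum_{\hat i}F_{0\hat i}^{\,2}$ and $(F^2)_{0\hat i}=F_{01}F_{0\hat i}+F_{0\hat j}F^{\hat j}{}_{\hat i}$ in a null frame and read off the equivalence.

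For the second assertion your route is genuinely different. The paper does not invoke spectral or Jordan-block properties of $F$ at all; instead it reuses the component identities from the first part. It also computes $(F^2)_{11}=-\sum_{\hat j}F_{1\hat j}^{\,2}\le 0$, so that $(F^2)_{00}(F^2)_{11}\ge 0$ in \emph{every} null frame. In the R2 canonical frame one has $(F^2)_{00}(F^2)_{11}=-\alpha^2<0$, a contradiction. For R4 the canonical frame gives $(F^2)_{11}=0$, hence all $F_{1\hat i}=0$; plugging this into $(F^2)_{1\alpha}=F_{10}F_{1\alpha}+F_{1\hat j}F^{\hat j}{}_{\alpha}$ forces $(F^2)_{13}=0$, contradicting the R4 value $(F^2)_{13}=1$. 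This argument is entirely self-contained and needs nothing about the structure of $F$ beyond antisymmetry.

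Your approach, by contrast, imports the bivector classification (Remark~\ref{rem: antisymm class}) to control the eigenvalues and Jordan structure of $F$, and then squares. This is correct---your eigenvalue claim (no eigenvalue $a+bi$ with $ab\neq 0$) can indeed be justified directly in Lorentzian signature via the ``orthogonal null vectors are parallel'' trick, and the $J_3\mapsto J_2\oplus J_1$ observation is fine---but it makes the proposition logically depend on the subsequent Remark (or an equivalent argument). The trade-off: the paper's proof is shorter and keeps the proposition independent of the bivector canonical forms; yours is more structural and explains \emph{why} only R1 and R3 can occur in terms of the spectral theory of $F$ itself.
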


\begin{proof} Take an arbitrary null frame ${\cal F}=\{\bm_\alpha\}=\{\bm_0=\bl,\bm_1=\bn,\bm_{\hat
i}\}$. Then
\begin{equation}\label{ref1}
(F^2)_{1\alpha}=F_{1a}F^a{}_\alpha=F_{10}F_{1\alpha}+F_{1\hat
j}F^{\hat j}{}_{\alpha},\qquad
(F^2)_{0\beta}=F_{0a}F^a{}_\b=F_{01}F_{0\b}+F_{0\hat j}F^{\hat
j}{}_{\b}.
\end{equation}
When applied to $\alpha=1$, $\b=0$ and $\b=\hat i$ this gives
\begin{equation}\label{reff}
(F^2)_{11}=-\sum_{\hat j=3}^n F_{1\hat j}^2\leq 0,\qquad
(F^2)_{00}=-\sum_{\hat j=3}^n F_{0\hat j}^2\leq 0,\qquad
(F^2)_{0\hat i}=F_{01}F_{0\hat i}+F_{0\hat j}F^{\hat j}{}_{\hat i}.
\end{equation}
It follows from the last two equations that $F_{0\hat i}=0,\,\forall
\hat i\;\Leftrightarrow\;(F^2)_{00}=(F^2)_{0\hat i}=0,\,\forall \hat
i$, which proves the first statement. Suppose now that
$R_{ab}=(F^2)_{ab}$ were of type R2 and take the null canonical form
(\ref{R2 can form}) associated to the canonical null frame ${\cal
F}_c$. We would have $(F^2)_{11}=-(F^2)_{00}=\alpha\neq 0$, whence
$(F^2)_{00}(F^2)_{11}< 0$, in contradiction with the first two
equations of (\ref{reff}). Finally, suppose that $R_{ab}=(F^2)_{ab}$
were of type R4. In the canonical null frame associated to the
canonical form we have, in particular, $(F^2)_{13}=1$ and
$(F^2)_{11}=0$. From the latter equation and the first equation in
(\ref{reff}) we get $F_{1\hat i}=0,\,\forall \hat i$, but the first
equation of (\ref{ref1}), with $\alpha=3$, then leads to the
contradiction $(F^2)_{13}=0$.
\end{proof}

From this proposition and points 1 and 2 in section \ref{subsec:
Ricci app} we conclude: if $(F^2)_{ab}$ is of type R1 it possesses a
timelike eigenvector and not a unique AND of boost order $\leq 0$,
i.e., $F_{ab}$ doesn't have a unique AND; if $(F^2)_{ab}$ is of type
R3 it possesses {\em no} timelike eigenvector but does have a unique
AND of boost order $\leq 0$, i.e., $F_{ab}$ has a unique AND. It
follows that $F_{ab}$ is minimal wrt a certain $\bu$ iff it does not
possess a unique AND, which is the case iff $(F^2)_{ab}$ is of type
R1.

\begin{rem}\label{rem: antisymm class} In fact, these results can be shown more directly by
considering the classification of Maxwell-like tensors $F_{ab}$ into
three different types and their corresponding canonical forms. {We also indicate the Segre type; degeneracy of the eigenvalue 0 is
indicated by round brackets, but additional degeneracies may occur
in the $z\bar{z}$ parts.}
\begin{eqnarray*}
\text{Type F1}\quad (\{(1,1\ldots 1)z\bar{z}\ldots z\bar{z}\}):&&
F_{ab}=\sum_{k=1}^r 2f_k
v^{2k-1}_{[a}v^{2k}_{b]},\qquad f_k\neq 0;\\
\text{Type F2}\quad (\{11(1\ldots 1)z\bar{z}\ldots z\bar{z}\}):&&
F_{ab}=\sum_{k=1}^r 2f_k
v^{2k-1}_{[a}v^{2k}_{b]}+2\sigma l_{[a}n_{b]},\qquad f_k\neq 0\neq \sigma;\\
\text{Type F3}\quad (\{(31\ldots 1)z\bar{z}\ldots z\bar{z}\}):&&
F_{ab}=\sum_{k=1}^r 2f_k
v^{2k-1}_{[a}v^{2k}_{b]}+2l_{[a}v^{2r+1}_{b]},\qquad f_k\neq 0.
\end{eqnarray*}
Here $r\leq \lfloor \frac{n-i}{2}\rfloor$ for type F$i$. The vectors
$\bl$, $\bn$ and $\bv^l$ are part of a null frame ($\bl$ and $\bn$
being real null and the $\bv^l$ unit spacelike). A scalar $f_k$
corresponds to a complex conjugate pair of eigenvalues $\pm if_k$,
with complex null eigenvectors $v^{2k-1}\pm i v^{2k}$ and the
corresponding elementary divisors being linear.  Analogously as for
the Ricci-like (symmetric) case, this classification can be easily
derived based on the antisymmetry of $F_{ab}$ and the fact that for
Lorentzian signature orthogonal null vectors are parallel;
%~\footnote{As is the case for Ricci-like (symmetric) rank 2
%tensors, resulting in the types R1-R4 and the corresponding
%canonical forms listed in section \ref{subsec: Ricci app}.}
see also
\cite{Milson04}. The possible numbers of independent (real) null
eigendirections (ANDs) were discussed in \cite{BergSen01}, pp.\
5313; notice that a null vector $v^b$ is an eigenvector of $F^a{}_b$
iff it is an AND (joint condition $v_{[a}F_{b]c}v^c=0$); hence, in
particular, all null vectors of the kernel span ANDs.
\begin{itemize}
\item Type F1 tensors $F_{ab}$ are precisely the purely magnetic ones ($F=F_+$ wrt a certain $\bu$).
The null alignment type is G if and only if $n$ is odd and
$r=(n-1)/2$; in this case the (one-dimensional) kernel is spanned by
a unique unit timelike vector $\bu$ wrt which $F=F_+$. In all other
cases the alignment type is D (or O, corresponding to $r=0$), the
ANDs and the $\bu$ spanning precisely the null and timelike
directions of the kernel (in accordance with Remark \ref{rem:
uniqueness}). In any case $\bu$ belongs to the kernel of
$(F^2)^a{}_b$ (which is type R1) and thus $F_{ab}$ is minimal wrt
$\bu$. Notice that type G can not occur in cases F2 and F3 below, so
that all type G tensors $F_{ab}$ are necessarily PM.
\item
Type F2 tensors are all of alignment type D. There are precisely two
(real) ANDs, spanned by $\bl$ and $\bn$ and corresponding to the
real eigenvalues $+\sigma$ and $-\sigma$, respectively. We have
$F=F_-$ iff $r=0$ (this is automatically true when $n=3$). If $n\geq
4$ and when there is at least one pair of imaginary eigenvalues $\pm
i f_k$ this gives (the only) examples of minimal Maxwell-like
tensors for which $F_+\neq F\neq F_-$. In any case the
$\bl\wedge\bn$ plane is a timelike eigenplane of $(F^2)^a{}_b$
(which is type R1) such that $F_{ab}$ is minimal wrt any unit
timelike $\bu$ in this plane.
\item For type F3 tensors $F_{ab}$, $\bl$ spans the unique AND (corresponding to a cubic elementary divisor $x^3$).
Thus $F_{ab}$ is of type F3 iff it is of alignment type II or (when
$r=0$) N. The Ricci-like tensor $(F^2)_{ab}$ is of type R3 (with, in
particular, $\alpha=0$ in the corresponding canonical form); thus it
has no timelike eigenvectors and cannot be minimal wrt a unit
timelike $\bu$.
\end{itemize}
\end{rem}

\section{Timelike unit vector fields: expansion, rotation, shear, and Raychaudhuri equation}

\label{app_congruences}

\renewcommand{\theequation}{C\arabic{equation}}
\setcounter{equation}{0}

We consider a timelike unit vector field $\bu$, $u_au^a=-1$, and
follow the notation of Chapter~6 of \cite{Stephanibook}. {The
purpose here is to write parts of the Riemann and Weyl tensors in
terms of the kinematic quantities of $\bu$, as defined in (\ref{def
hab}--\ref{uab}) below (see \cite{Ellis71} for a comprehensive overview of results in four dimensions).} We first define the projector
\begin{equation}\label{def hab}
 h_{ab}=g_{ab}+u_au_b ,
\end{equation}
such that $h_{ab}u^b=0$. This enables us to define the rotation, expansion and shear tensors as
\be
 \omega_{ab}=h_a^{\,c}h_b^{\,d}u_{[c;d]} , \qquad \Theta_{ab}=h_a^{\,c}h_b^{\,d}u_{(c;d)} ,
\qquad \sigma_{ab}=\Theta_{ab}-\tilde\Theta h_{ab} ,
\label{optics}
\ee
where
$\tilde\Theta$ is a normalized (volume) expansion scalar defined by
\be
\label{def theta}
 (n-1)\tilde\Theta=\Theta\equiv h^{ab}\Theta_{ab}=u^a_{\ ;a},
\ee
 and the acceleration vector
\be
 \dot u_a=u_{a;b}u^b.
 \label{acceleration}
\ee The tensors (\ref{optics}) and (\ref{acceleration}) are all
spatial, i.e.,
$\omega_{ab}u^a=\Theta_{ab}u^a=\sigma_{ab}u^a=\dot{u}_au^a=0$. One
can write the covariant derivative of $\bu$ in the standard way,
namely \be\label{uab}
 u_{a;b}=-\dot u_au_b+\omega_{ab}+\sigma_{ab}+\tilde\Theta h_{ab} .
\ee

Using this, the Ricci identity $2u_{a;[bc]}=R^d_{\ abc}u_d$ becomes
\begin{eqnarray}
 \frac{1}{2}R^d_{\ abc}u_d= & & -\dot u_{a;[c}u_{b]}+(-\dot u_a+\tilde\Theta u_a)(-\dot u_{[b}u_{c]}+\omega_{bc})
+\omega_{a[b;c]}+\sigma_{a[b;c]} +h_{a[b}h_{c]}{}^d\tilde\Theta_{,d} \nonumber \label{Riem_u} \\
 & & {}+\tilde\Theta(\omega_{a[c}+\sigma_{a[c})u_{b]} +\left(\dot{\tilde\Theta}+\tilde\Theta^2\right) h_{a[c}u_{b]},
\end{eqnarray}
By contraction this gives
\be\label{Ricci_u}
 R^d_{\ b}u_d=-\dot u^a_{\ ;a}u_{b}+\dot u^a(\omega_{ab}-\sigma_{ab})+\omega^a_{\ b;a}+\sigma^a_{\ b;a}
-(n-2)h_b{}^c\tilde\Theta_{,c}+(n-1)\left(\dot{\tilde\Theta}+\tilde\Theta^2\right)u_b,
\ee where a dot denotes a derivative along $\bu$.

We now multiply~(\ref{Riem_u}) by $u^b$. The symmetric part of the
resulting equation can be written as
\begin{equation}
 R^d_{\ abc}u_du^b=\dot u_a\dot u_c-\omega_{ab}\omega^b_{\ c}
 -\sigma_{ab}\sigma^b_{\ c}-2\tilde\Theta\sigma_{ac}-(\dot{\tilde\Theta}+\tilde\Theta^2)h_{ac}+h_a^{\,d}h_c^{\,e}(\dot u_{(d;e)}-\dot\sigma_{de}) ,
 \label{Riem_uu} \\
\end{equation}
where we used the identities
$h_a^{\,d}h_c^{\,e}\dot u_{(d;e)}=h^b_{\ (c}\dot u_{a);b}+\dot u^bu_{(c}(\omega_{a)b}-\sigma_{a)b})-\tilde\Theta\dot u_{(a}u_{c)}$
and $h_a^{\,d}h_c^{\,e}\dot\sigma_{de}=\dot\sigma_{ac}+2u^bu_{(c}\dot\sigma_{a)b}$, while the antisymmetric part reads
\be\label{B9}
 h_a^{\,d}h_c^{\,e}\dot\omega_{de}=2\sigma^b_{\ [a}\omega_{c]b}-2\tilde\Theta\omega_{ac}+h_a^{\,d}h_c^{\,e}\dot u_{[d;e]} ,
\ee
in which the identities $h_a^{\,d}h_c^{\,e}\dot u_{[d;e]}=h^b_{\ [c}\dot u_{a];b}-\dot u^b(\omega_{b[c}+\sigma_{b[c})u_{a]}+\tilde\Theta\dot u_{[a}u_{c]}$ and $h_a^{\,d}h_c^{\,e}\dot\omega_{de}=\dot\omega_{ac}+2\dot u^b\omega_{b[a}u_{c]}$ have been employed.

Further, the trace of~(\ref{Riem_uu}) gives the Raychaudhuri equation
\be
 R^d_{\ b}u_du^b=\dot u^d_{\ ;d}+\omega_{ab}\omega^{ab}-\sigma_{ab}\sigma^{ab}-(n-1)(\dot{\tilde\Theta}+\tilde\Theta^2 ).
 \label{Ruu}
\ee

Substituting in~(\ref{Riem_uu}) the standard definition of the Weyl
tensor and using~(\ref{Ruu}) and the identities
$h_a^{\,d}h_c^{\,e}R_{de}=R_{ac}+2u_{(a}R_{c)u}+u_au_cR_{uu}$ and
$h^{de}R_{de}=R+R_{uu}$, we can write the (electric) components
$C^d_{\ abc}u_du^b$ of the Weyl tensor as
\begin{eqnarray}\label{Cuu}
 C^d_{\ abc}u_du^b= & & \dot u_a\dot u_c-\omega_{ab}\omega^b_{\ c}-\sigma_{ab}\sigma^b_{\ c}-2\tilde\Theta\sigma_{ac}+h_a^{\,d}h_c^{\,e}\left(\dot u_{(d;e)}-\dot\sigma_{de}+\frac{R_{de}}{n-2}\right) \nonumber \label{Weyl_uu} \\
 & & {}-h_{ac}\frac{1}{n-1}\left(\dot u^d_{\ ;d}+\omega_{de}\omega^{de}-\sigma_{de}\sigma^{de}+\frac{h^{de}R_{de}}{n-2}\right) .
\end{eqnarray}

The magnetic components can be expressed in terms of

\begin{eqnarray}\label{Weyl_u}
 C^{dg}{}_{bc}u_dh^b_{~e}h^c_{~f}= & & 2h^{ag}h^b_{~e}h^c_{~f}(-\dot u_a\omega_{bc}+\omega_{a[b;c]}+\sigma_{a[b;c]}) \nonumber \\
                & &  {}+\frac{2}{n-2}h^g_{~[e}\left[(\omega^a_{~f]}-\sigma^a_{~f]})\dot u_a+h^b_{~f]}(\omega^a_{\ b;a}+\sigma^a_{\ b;a})\right] .
\end{eqnarray}

The above equations reduce to formulae (6.26)--(6.30) in
\cite{Stephanibook} when $n=4$.~\footnote{Formula (\ref{B9}) is not
displayed in \cite{Stephanibook}, but agrees with (4.22) of
\cite{HawEll73}. However, in equation (4.27) of this last standard
reference the $\dot u_a\dot u_c$ term appearing in (\ref{Cuu}),
necessary for tracelessness since $h^{de}\dot u_{(d;e)}=\dot u^d_{\
;d}-\dot u^d\dot u_d$, is missing.} Remember, however, that the
electric part of the Weyl tensor consists also of $C_{ijkl}$, which
is not described by~(\ref{Weyl_uu}) for $n\geq 5$. Note that in the
special case of a geodesic $\bu$ one has ${\mbox{{$\bf\dot u$}}}=0$
and the above equations get a simpler form, cf., e.g.,
\cite{KarSen07}.

%\bibliographystyle{JHEP}
%\bibliography{bibl2_Lodeb}

\begin{thebibliography}{10}

\bibitem{Hervik11}
S.~Hervik, {\it A spacetime not characterized by its invariants is of aligned
  type {II}},  {\em Class. Quantum Grav.} {\bf 28} (2011) 215009.

\bibitem{syngespec}
J.~L. Synge, {\em Relativity: the Special Theory}.
\newblock North-Holland, Amsterdam, 1955.

\bibitem{Matte53}
A.~Matte, {\it Sur de nouvelles solutions oscillatoires de \'equations de la
  gravitation},  {\em Canadian J. Math.} {\bf 5} (1953) 1--16.

\bibitem{Ellis71}
G.~F.~R. Ellis, {\it Relativistic cosmology},  in {\em Proceedings of the
  International School of Physics ``Enrico Fermi'', Course 47: General
  relativity and cosmology} (R.~K. Sachs, ed.), vol.~16, p.~104�182.
\newblock Academic Press, New York and London, 1971.
\newblock Reprinted in {\em Gen. Rel. Grav.} {\bf 41} (2009) 581--600.

\bibitem{McIntoshetal94}
C.~B.~G. McIntosh, R.~Arianrhod, S.~T. Wade, and C.~Hoenselaers, {\it Electric
  and magnetic {W}eyl tensors: classification and analysis},  {\em Class.
  Quantum Grav.} {\bf 11} (1994) 1555--1564.

\bibitem{Geroch77}
R.~Geroch, {\it Asymptotic structure of space-time},  in {\em Asymptotic
  structure of space-time} (F.~P. Esposito and L.~Witten, eds.), pp.~1--105.
\newblock Plenum Press, New York, 1977.

\bibitem{AshHan78}
A.~Ashtekar and R.~O. Hansen, {\it A unified treatment of null and spatial
  infinity in general relativity. {I} - {U}niversal structure, asymptotic
  symmetries, and conserved quantities at spatial infinity},  {\em J. Math.
  Phys.} {\bf 19} (1978) 1542--1566.

\bibitem{Asthekar80}
A.~Ashtekar, {\it Asymptotic structure of gravitational field at spatial
  infinity},  in {\em General Relativity and Gravitation: One Hundred Years
  After the Birth of {A}lbert {E}instein} (A.~Held, ed.), vol.~2, pp.~37--69.
\newblock Plenum Press, New York, 1980.

\bibitem{Stephanibook}
H.~Stephani, D.~Kramer, M.~MacCallum, C.~Hoenselaers, and E.~Herlt, {\em Exact
  Solutions of {E}instein's Field Equations}.
\newblock Cambridge University Press, Cambridge, second~ed., 2003.

\bibitem{Senovilla00}
J.~M.~M. Senovilla, {\it Super-energy tensors},  {\em Class. Quantum Grav.}
  {\bf 17} (2000) 2799--2841.

\bibitem{Senovilla01}
J.~M.~M. Senovilla, {\it General electric-magnetic decomposition of fields,
  positivity and {R}ainich-like conditions},  in {\em Reference Frames and
  Gravitomagnetism} (J.~F. Pascual-S\'anchez, L.~Flor\'{\i}a, A.~San~Miguel,
  and V.~F., eds.), pp.~145--164.
\newblock World Sicentific, Singapore, 2001.

\bibitem{ManMar06}
R.~B. Mann and D.~Marolf, {\it Holographic renormalization of asymptotically
  flat spacetimes},  {\em Class. Quantum Grav.} {\bf 23} (2006) 2927--2950.

\bibitem{TanTanShi09}
K.~Tanabe, N.~Tanahashi, and T.~Shiromizu, {\it Asymptotic flatness at spatial
  infinity in higher dimensions},  {\em J. Math. Phys.} {\bf 50} (2009) 072502.

\bibitem{RicSlo90}
R.~W. Richardson and P.~Slodowy, {\it Minimum vectors for real reductive
  algebraic groups},  {\em J. London Math. Soc.} {\bf 42} (1990) 409--429.

\bibitem{Ortaggio09}
M.~Ortaggio, {\it {B}el-{D}ebever criteria for the classification of the {W}eyl
  tensor in higher dimensions},  {\em Class. Quantum Grav.} {\bf 26} (2009)
  195015.

\bibitem{ColHer09}
A.~Coley and S.~Hervik, {\it Higher dimensional bivectors and classification of
  the {W}eyl operator},  {\em Class. Quantum Grav.} {\bf 27} (2010) 015002.

\bibitem{WyllVdB06}
L.~Wylleman and N.~Van~den Bergh, {\it Complete classification of purely
  magnetic, nonrotating, nonaccelerating perfect fluids},  {\em Phys. Rev. {\rm
  D}} {\bf 74} (2006) 084001.

\bibitem{ArianrhodMcIntosh92}
R.~Arianrhod and C.~B.~G. McIntosh, {\it Principal null directions of {P}etrov
  type {I} {W}eyl spinors: geometry and symmetry},  {\em Class. Quantum Grav.}
  {\bf 9} (1992) 1969--1982.

\bibitem{Coleyetal12}
A.~Coley, S.~Hervik, M.~Ortaggio, and L.~Wylleman, {\it Refinements of the
  {W}eyl tensor classification in five dimensions},  {\em Class. Quantum Grav.}
  {\bf 29} (2012) 155016.

\bibitem{MacDonaldbook}
I.~G. Macdonald, {\em Symmetric functions and Hall polynomials}.
\newblock The Clarendon Press, Oxford University Press, Oxford, 1979.

\bibitem{ColHer11}
A.~Coley and S.~Hervik, {\it Discriminating the {W}eyl type in higher
  dimensions using scalar curvature invariants},  {\em Gen. Rel. Grav.} {\bf
  43} (2011) 2199--2207.

\bibitem{Trumper65}
M.~Tr{\"{u}}mper, {\it On a special class of type-{I} gravitational fields},
  {\em J. Math. Phys.} {\bf 6} (1965) 584--589.

\bibitem{Haddow95}
B.~M. Haddow, {\it Purely magnetic space-times},  {\em J. Math. Phys.} {\bf 36}
  (1995) 5848--5854.

\bibitem{Coleyetal04}
A.~Coley, R.~Milson, V.~Pravda, and A.~Pravdov\'a, {\it Classification of the
  {W}eyl tensor in higher dimensions},  {\em Class. Quantum Grav.} {\bf 21}
  (2004) L35--L41.

\bibitem{Milsonetal05}
R.~Milson, A.~Coley, V.~Pravda, and A.~Pravdov\'a, {\it Alignment and
  algebraically special tensors in {L}orentzian geometry},  {\em Int. J. Geom.
  Meth. Mod. Phys.} {\bf 2} (2005) 41--61.

\bibitem{LozCarm02}
C.~Lozanovski and J.~Carminati, {\it Purely magnetic locally rotationally
  symmetric spacetimes},  {\em Gen. Rel. Grav.} {\bf 34} (2002) 853--863.

\bibitem{VdBWyll06}
N.~Van~den Bergh and L.~Wylleman, {\it An exhaustive classification of aligned
  {P}etrov type {D} purely magnetic perfect fluids},  {\em Class. Quantum
  Grav.} {\bf 23} (2006) 3353--3359.

\bibitem{GodRea12}
M.~Godazgar and H.~S. Reall, {\it Peeling of the {W}eyl tensor and
  gravitational radiation in higher dimensions},  {\em Phys. Rev. {\rm D}} {\bf
  D85} (2012) 084021.

\bibitem{Bel62}
L.~Bel, {\it Les \'etats de radiation et le probl\`eme de l'\'energie en
  relativit\'e g\'en\'erale},  {\em Cahiers de Physique} {\bf 16} (1962)
  59--80. Reprinted in {\em Gen. Rel. Grav.} {\bf 32} (2000) 2047.

\bibitem{Barnes04}
A.~Barnes, {\it Purely magnetic spacetimes},  in {\em Proceedings of the 27th
  Spanish Relativity Meeting, Alicante, Spain. Sept. 2003} (J.~Miralles,
  J.~Font, and J.~Pons, eds.).
\newblock Univ. Alicante Press, Alicante, 2004.
\newblock [gr-qc/0401068].

\bibitem{GodRea09}
M.~Godazgar and H.~S. Reall, {\it Algebraically special axisymmetric solutions
  of the higher-dimensional vacuum {E}instein equation},  {\em Class. Quantum
  Grav.} {\bf 26} (2009) 165009.

\bibitem{Durkee09}
M.~Durkee, {\it Type {II} {E}instein spacetimes in higher dimensions},  {\em
  Class. Quantum Grav.} {\bf 26} (2009) 195010.

\bibitem{DurRea09}
M.~Durkee and H.~S. Reall, {\it A higher-dimensional generalization of the
  geodesic part of the {G}oldberg-{S}achs theorem},  {\em Class. Quantum Grav.}
  {\bf 26} (2009) 245005.

\bibitem{OrtPraPra11}
M.~Ortaggio, V.~Pravda, and A.~Pravdov\'a, {\it On higher dimensional
  {E}instein spacetimes with a warped extra dimension},  {\em Class. Quantum
  Grav.} {\bf 28} (2011) 105006.

\bibitem{WyllemanWANDs}
L.~Wylleman, {\it On {W}eyl type {II} or more special spacetimes in higher
  dimensions},  {\em in preparation}.

\bibitem{Wyllemancomment}
L.~Wylleman, {\it Notes on the null alignment classification of spacetime
  tensors},  {\em in preparation}.

\bibitem{PraPra05}
V.~Pravda and A.~Pravdov\'a, {\it {WAND}s of the black ring},  {\em Gen. Rel.
  Grav.} {\bf 37} (2005) 1277--1287.

\bibitem{Narain70}
U.~Narain, {\it Some properties of gravitational fields of magnetic type},
  {\em Phys. Rev. {\rm D}} {\bf 2} (1970) 278--280.

\bibitem{Trumper62}
M.~Tr{\"u}mper, {\it Zur {B}ewegung von {P}robek\"orpern in {E}insteinschen
  {G}ravitations-{V}akuumfeldern},  {\em Z. Physik} {\bf 168} (1962) 55--60.

\bibitem{WyllBeke10}
L.~Wylleman and D.~Beke, {\it Expanding perfect fluid generalizations of the
  {C} metric},  {\em Phys. Rev. {\rm D}} {\bf 81} (2010) 104038.

\bibitem{Barnes73}
A.~Barnes, {\it On {B}irkhoff's theorem in general relativity},  {\em Commun.
  Math. Phys.} {\bf 33} (1973) 75--82.

\bibitem{Wylleman08}
L.~Wylleman, {\it A {P}etrov-type {I} and generically asymmetric rotating dust
  family},  {\em Class. Quantum Grav.} {\bf 25} (2008) 172001.

\bibitem{PraPraOrt07}
V.~Pravda, A.~Pravdov\'a, and M.~Ortaggio, {\it Type {D} {E}instein spacetimes
  in higher dimensions},  {\em Class. Quantum Grav.} {\bf 24} (2007)
  4407--4428.

\bibitem{RamVaz03}
M.~P.~M. Ramos and E.~G. L.~R. Vaz, {\it Double warped space�times},  {\em J.
  Math. Phys.} {\bf 44} (2003) 4839--4865.

\bibitem{Brinkmann25}
H.~W. Brinkmann, {\it Einstein spaces which are mapped conformally on each
  other},  {\em Math. Ann.} {\bf 94} (1925) 119--145.

\bibitem{Pravdaetal04}
V.~Pravda, A.~Pravdov\'a, A.~Coley, and R.~Milson, {\it Bianchi identities in
  higher dimensions},  {\em Class. Quantum Grav.} {\bf 21} (2004) 2873--2897.
  See also V. Pravda, A. Pravdov\'a, A. Coley and R. Milson {\em Class. Quantum
  Grav.} {\bf 24} (2007) 1691 (corrigendum).

\bibitem{EhlersKundt}
J.~Ehlers and W.~Kundt, {\it Exact solutions of the gravitational field
  equations},  in {\em Gravitation: An introduction to current research}
  (L.~Witten, ed.), pp.~49--101.
\newblock John Wiley and Sons, New York, 1962.

\bibitem{Collins84}
C.~B. Collins, {\it Shear-free perfect fluids with zero magnetic {W}eyl
  tensor},  {\em J. Math. Phys.} {\bf 25} (1984) 995--1000.

\bibitem{Sklavenites85}
D.~Sklavenites, {\it Relativistic, stationary, axisymmetric perfect fluids.
  {II}. {S}olutions with vanishing magnetic {W}eyl tensor},  {\em J. Math.
  Phys.} {\bf 26} (1985) 2279--2281.

\bibitem{Senovilla87}
J.~M.~M. Senovilla, {\it On {P}etrov type-{$D$} stationary axisymmetric rigidly
  rotating perfect-fluid metrics},  {\em Class. Quantum Grav.} {\bf 4} (1987)
  L115--L119.

\bibitem{Senovilla92}
J.~M.~M. Senovilla, {\it New family of stationary and axisymmetric
  perfect-fluid solutions},  {\em CQG} {\bf 9} (1992) L167--169.

\bibitem{Arianrhodetal94}
R.~Arianrhod, A.~W.-C. Lun, C.~B.~G. McIntosh, and Z.~Perj\'es, {\it Magnetic
  curvatures},  {\em Class. Quantum Grav.} {\bf 11} (1994) 2331--2335.

\bibitem{StewartEllis}
J.~M. Stewart and G.~F.~R. Ellis, {\it Solutions of {E}instein's equations for
  a fluid which exhibits local rotational symmetry},  {\em J. Math. Phys.} {\bf
  9} (1968) 1072--1082.

\bibitem{LozCarm03}
C.~Lozanovski and J.~Carminati, {\it Purely magnetic locally rotationally
  symmetric spacetimes},  {\em Class. Quantum Grav.} {\bf 20} (2003) 215--238.

\bibitem{Bonnor95b}
W.~B. Bonnor, {\it The magnetic {W}eyl tensor and the van~{S}tockum solution},
  {\em Class. Quantum Grav.} {\bf 12} (1995) 1483--1489.

\bibitem{MyePer86}
R.~C. Myers and M.~J. Perry, {\it Black holes in higher dimensional
  space-times},  {\em Ann. Phys. (N.Y.)} {\bf 172} (1986) 304--347.

\bibitem{OrtPraPra09}
M.~Ortaggio, V.~Pravda, and A.~Pravdov\'a, {\it Higher dimensional
  {K}err-{S}child spacetimes},  {\em Class. Quantum Grav.} {\bf 26} (2009)
  025008.

\bibitem{EmpRea02prl}
R.~Emparan and H.~S. Reall, {\it A rotating black ring solution in five
  dimensions},  {\em Phys. Rev. Lett.} {\bf 88} (2002) 101101.

\bibitem{Ficken39}
F.~A. Ficken, {\it The {R}iemannian and affine differential geometry of
  product-spaces},  {\em Ann. Math.} {\bf 40} (1939) 892--913.

\bibitem{GoodmanWallach}
R.~Goodman and N.~R. Wallach, {\em Symmetry, representations and invariants}.
\newblock Springer, New York, 2009.

\bibitem{Milnor76}
J.~Milnor, {\it {Curvatures of Left Invariant Metrics on Lie Groups}},  {\em
  Adv.Math.} {\bf 21} (1976) 293--329.

\bibitem{Hervik2010-Left}
S.~Hervik, {\it {Negatively curved left-invariant metrics on Lie groups}},
  \href{http://xxx.lanl.gov/abs/1002.2106}{{\tt arXiv:1002.2106}}.

\bibitem{EllMac69}
G.~Ellis and M.~A. MacCallum, {\it {A Class of homogeneous cosmological
  models}},  {\em Commun.Math.Phys.} {\bf 12} (1969) 108--141.

\bibitem{Hervik2002-5D}
S.~Hervik, {\it {Multidimensional cosmology: Spatially homogeneous models of
  dimension (4+1)}},  {\em Class.Quant.Grav.} {\bf 19} (2002) 5409--5428.

\bibitem{Hervik2001-Discrete}
S.~Hervik, {\it {Discrete symmetries in translation invariant cosmological
  models}},  {\em Gen.Rel.Grav.} {\bf 33} (2001) 2027--2048.

\bibitem{Lozanovski07}
C.~Lozanovski, {\it Szekeres-type mappings of {K}asner and {P}etrov type
  {$I(M^+)$} purely magnetic spacetimes},  {\em Class. Quantum Grav.} {\bf 24}
  (2007) 1169--1188.

\bibitem{LozWyll11}
C.~Lozanovski and L.~Wylleman, {\it Complex windmill transformation producing
  new purely magnetic fluids},  {\em Class. Quantum Grav.} {\bf 28} (2011)
  075015.

\bibitem{Wylleman06}
L.~Wylleman, {\it Anti-{N}ewtonian universes do not exist},  {\em Class.
  Quantum Grav.} {\bf 23} (2006) 2727--2740.

\bibitem{Hall73}
G.~Hall, {\it On the {P}etrov classification of gravitational fields},  {\em J.
  Phys. A} {\bf 6} (1973) 619--623.

\bibitem{Durkeeetal10}
M.~Durkee, V.~Pravda, A.~Provdov\'{a}, and H.~S. Reall, {\it Generalization of
  the the {G}eroch-{H}eld-{P}enrose formalism to higher dimensions},  {\em
  Class. Quantum Grav.} (2010) 215010.

\bibitem{CarotdaCosta93}
Carot, {\it On the geometry of warped spacetimes},  {\em Class. Quantum Grav.}
  {\bf 10} (1993) 461--482.

\bibitem{Waldbook}
R.~M. Wald, {\em General Relativity}.
\newblock The University of Chicago Press, Chicago, 1984.

\bibitem{Procesi07}
C.~Procesi, {\em Lie Groups: An approach through Invariants and
  Representations}.
\newblock Springer, 2007.

\bibitem{EbeJab09}
P.~Eberlein and M.~Jablonski, {\it Closed orbits of semisimple group actions
  and the real {H}ilbert-{M}umford function},  {\em Contemp. Math.} {\bf 491}
  (2009) 283--321.

\bibitem{ColHerPel09a}
A.~Coley, S.~Hervik, and N.~Pelavas, {\it Spacetimes characterized by their
  scalar curvature invariants},  {\em Class. Quantum Grav.} {\bf 26} (2009)
  025013.

\bibitem{HerCol10}
S.~Hervik and A.~Coley, {\it Curvature operators and scalar curvature
  invariants},  {\em Class. Quantum Grav.} {\bf 27} (2010) 095014.

\bibitem{Senovilla06}
J.~M.~M. Senovilla, {\it The universal `energy' operator},  {\em Class. Quantum
  Grav.} {\bf 23} (2006) 7143--7147.

\bibitem{Robinson97}
I.~Robinson, {\it On the {B}el-{R}obinson tensor},  {\em CQG} {\bf 14} (1997)
  A331--A333.

\bibitem{OrtPraPra12rev}
M.~Ortaggio, V.~Pravda, and A.~Pravdov\'a, {\it Algebraic classification of
  higher dimensional spacetimes based on null alignment},  {\em Class. Quantum
  Grav.} {\bf 30} (2013) 013001.

\bibitem{petrov}
A.~Z. Petrov, {\em Einstein Spaces}.
\newblock Pergamon Press, Oxford, translation of the 1961 {R}ussian~ed., 1969.

\bibitem{Hallbook}
G.~Hall, {\em Symmetries and Curvature Structure in General Relativity}.
\newblock World Scientific, Singapore, 2004.

\bibitem{RebSanTei04}
M.~J. Reboucas, J.~Santos, and A.~F.~F. Teixeira, {\it Classification of energy
  momentum tensors in {$n > 5$} dimensional space-times: a review},  {\em Braz.
  J. Phys.} {\bf 34} (2004) 535--543.

\bibitem{BergSen01}
G.~Bergqvist and J.~M.~M. Senovilla, {\it Null cone preserving maps, causal
  tensors and algebraic {R}ainich theory},  {\em CQG} {\bf 18} (2001)
  5299--5325.

\bibitem{Milson04}
R.~Milson, {\it Alignment and the classification of {L}orentz-signature
  tensors},  \href{http://xxx.lanl.gov/abs/gr-qc/0411036}{{\tt gr-qc/0411036}}.

\bibitem{Coleyetal04vsi}
A.~Coley, R.~Milson, V.~Pravda, and A.~Pravdov\'a, {\it Vanishing scalar
  invariant spacetimes in higher dimensions},  {\em Class. Quantum Grav.} {\bf
  21} (2004) 5519--5542.

\bibitem{HawEll73}
S.~W. Hawking and G.~F.~R. Ellis, {\em The Large Scale Structure of
  Space-Time}.
\newblock Cambridge University Press, Cambridge, 1973.

\bibitem{KarSen07}
S.~Kar and S.~SenGupta, {\it The {R}aychaudhuri equations: A brief review},
  {\em Pramana} {\bf 69} (2007) 49--76.

\end{thebibliography}

\providecommand{\href}[2]{#2}\begingroup\raggedright\endgroup

\end{document}